\colorlet{linkcolor}{ACMPurple}
\newcommand{\citeposessive}[2][]{%
  \def\@yearlist{#2\ifx\empty#1\else,#1\fi}%
  \citeauthor{#2}'s~[\citeyear{\@yearlist}]}
\newcounter{numlevels}
\NewDocumentCommand{\newMLP}{sO{l}mO{0}om}{
  \IfBooleanTF{#1}{
    \IfValueTF{#5}{
      \WithSuffix\newcommand#3*[#4][#5]
    }{
      \WithSuffix\newcommand#3*[#4]
    }
  }{
    \IfValueTF{#5}{
      \newcommand{#3}[#4][#5]
    }{
      \newcommand{#3}[#4]
    }
  }
  {\ifthenelse{\value{numlevels} > 0}{\begin{array}[t]{@{}#2@{}}}{\begin{array}{#2}}\addtocounter{numlevels}{1}#6\addtocounter{numlevels}{-1}\end{array}}
}
\newcommand{\para}[1]{%
  \@startsection{paragraph}{4}{\z@}%
                {1ex plus 0.25ex minus 0.25ex}{-1ex plus -0.25ex minus -0.5ex}%
                {\normalfont\upshape\normalsize\bfseries}*{#1.}%
}
\newcommand{\helamsm}{\rotatebox[origin=c]{15}{\textphnc{e}}\kern-1pt\textsubscript{$\lambda$small}}
\newcommand{\zkstrudul}{\textsc{zkStruDul}\xspace}
\newcommand{\zkstrudel}{\zkstrudul}
\newcommand{\corelang}{\ensuremath{\textrm{FJ}_\textsc{nizk}}\xspace}
\newcommand{\DefineTypingRule}[3]{\DefineRule[T#1Rule]{T-#1}[T-#1]{#2}{#3}}
\newcommand{\DefineCTypingRule}[3]{\DefineRule[TC#1Rule]{TC-#1}[TC-#1]{#2}{#3}}
\newcommand{\DefineEvalRule}[3]{\DefineRule[E#1Rule]{E-#1}[E-#1]{#2}{#3}}
\newcommand{\DefineCEvalRule}[3]{\DefineRule[EC#1Rule]{EC-#1}[EC-#1]{#2}{#3}}
\newcommand{\programfont}[1]{\ensuremath{\mathsf{#1}}\xspace}
\def\subst@inner#1#2#3\_nil{#1 \mapsto #2\ifx\relax#3\relax\else,\mkern2mu\subst@inner#3\_nil\fi}%
\newcommand{\raw@subst}[3]{
  \mathchoice{#1{\left[#2\subst@inner#3\_nil\right]}}
             {#1[#2\subst@inner#3\_nil]}
             {#1[#2\subst@inner#3\_nil]}
             {#1[#2\subst@inner#3\_nil]}%
}
\newcommand{\subst}[3]{\raw@subst{#1}{}{{#2}{#3}}}
\newcommand\subst*[2]{\raw@subst{#1}{}{#2}}
\DeclareSymbolFont{bbsymbol}{U}{bbold}{m}{n}
\DeclareMathSymbol{\bbexcl}{\mathbin}{bbsymbol}{"21}
\newcommand{\fatexclamation}{\bbexcl}
\DeclareMathSymbol{\bbE}{\mathbin}{bbsymbol}{`E}
\DeclareMathSymbol{\bbV}{\mathbin}{bbsymbol}{`V}
\DeclareMathSymbol{\bbU}{\mathbin}{bbsymbol}{`U}
\DeclareMathSymbol{\bbColon}{\mathbin}{bbsymbol}{"3A}
\newcommand{\alt}{\mkern3mu\mid\mkern3mu}
\newcommand{\divi}{\mkern1mu\mid\mkern1mu}
\newcommand{\ty}{\mkern2mu{:}\mkern2mu}
\newcommand{\proves}{\vdash}
\newcommand{\cproves}{\Vdash}
\newcommand{\produces}{\dashv}
\newcommand{\cproduces}{~\scalebox{1}{\rotatebox[origin = c]{180}{$\Vdash$}}~}
\newcommand{\dom}{\operatorname{dom}}
\newcommand{\defeq}{\ensuremath{\stackrel{\text{\normalfont\tiny def}}{=}}}
\newcommand{\ow}{\text{o.w.}}
\newcommand{\nestedok}{~\operatorname{nested-ok}}
\newcommand{\bang}{\mkern1mu{!}\mkern1mu}
\newcommand{\fatbang}{\mkern1mu{\fatexclamation}\mkern1mu}
\newcommand{\link}{\bowtie}
\newcommand{\subsetl}{\subseteq}
\newcommand{\diverges}{\uparrow}
\newcommand{\witassign}{\leftarrow}
\newcommand{\witderef}{\fatbang}
\newcommand{\refunreachN}{\ensuremath{\operatorname{\it no-refs}}\xspace}
\newcommand{\refunreach}[1]{\refunreachN(#1)}
\newcommand{\mtype}{\ensuremath{\operatorname{\it mtype}}\xspace}
\newcommand{\mbody}{\ensuremath{\operatorname{\it mbody}}\xspace}
\newcommand{\fields}{\ensuremath{\operatorname{\it fields}}\xspace}
\newcommand{\ptypes}{\ensuremath{\operatorname{\it p-types}}\xspace}
\newcommand{\canoverride}{\ensuremath{\operatorname{\it can-override}}\xspace}
\newcommand{\EvalN}{E}
\newcommand{\Eval}[1]{\EvalN[#1]}
\newcommand{\cupdot}{\mathbin{\mathaccent\cdot\cup}}
\newcommand{\tplab}[2]{{#1}^{#2}}
\newcommand{\targetconcsem}{\mathrel{\TrgCol{\Rightarrow}}}
\newcommand{\targetconcsemSW}{\mathrel{\TrgCol{\hspace*{0.6em}\scalebox{1}{\rotatebox[origin=c]{225}{$\targetconcsem$}}}}}
\newcommand{\targetconcsemSE}{\mathrel{\TrgCol{\hspace*{0.6em}\scalebox{1}{\rotatebox[origin=c]{315}{$\targetconcsem$}}}}}
\newlength{\@overset@width}
\newcommand{\xtwoheadedrightarrow}[2][]{\mathrel{\mathord{%
  \setbox0=\hbox{$\scriptstyle#2\mkern2mu$}%
  \setlength{\@overset@width}{\wd0}
  \ifdim\@overset@width<6pt
    \xrightarrow[#1]{\makebox[6pt]{$\scriptstyle#2\mkern2mu$}}%
  \else
    \xrightarrow[#1]{#2\mkern2mu}%
  \fi%
  }\mathllap{\rightarrow~}%
}}
\newcommand{\def@stretch@arrow}[5][0.75em]{%
  \NewDocumentCommand{#2}{so}{\mathrel{#3{
    \IfValueTF{##2}{%
      \setbox0=\hbox{$\scriptstyle##2$}%
      \setlength{\@overset@width}{\wd0}%
      \ifdim\@overset@width<#1
        \mathord{\overset{{\color{black}##2}\mkern2mu}{#4}}%
      \else
        \mathord{#5{{\color{black}##2}}}%
      \fi%
  }{#4}}\IfBooleanT{##1}{{}^*}}}
}
\DeclareRobustCommand{\rvdots}{%
  \vbox{
    \baselineskip4\p@\lineskiplimit\z@
    \kern-\p@
    \hbox{.}\hbox{.}\hbox{.}
  }}
\definecolor{cpcolor}{HTML}{b51963}
\newcommand\Compute{\texttt{\color{cpcolor}C}\xspace}
\newcommand\Prove{\texttt{\color{cpcolor}P}\xspace}
\newcommand\ComputeProve{\texttt{\color{cpcolor}CP}\xspace}
\colorlet{sourcecolor}{RoyalBlue}
\colorlet{targetcolor}{RedOrange}
\newcommand{\SrcCol}[1]{{\color{sourcecolor}#1}}
\newcommand{\TrgCol}[1]{{\color{targetcolor}#1}}
\newcommand{\SrcProg}[1]{\programfont{\SrcCol{#1}}}
\newcommand{\TrgProg}[1]{\ensuremath{\mathtt{\TrgCol{#1}}}\xspace}
\definecolor{darkgreen}{RGB}{30, 130, 30}
\definecolor{purpleish}{RGB}{30, 130, 30}
\definecolor{cornellred}{rgb}{0.7, 0.11, 0.11}
\colorlet{middlecolor}{purpleish}
\newcommand{\annotated}[1]{%
  {\color{darkgreen}#1}%
}
\newcommand\annell{\annotated{\ell}}
\newcommand\annellp{\annotated{\ell'}}
\newcommand\annellpp{\annotated{\ell''}}
\newcommand\annc{\annotated{\texttt{C}}}
\newcommand\annp{\annotated{\texttt{P}}}
\newcommand\anncp{\annotated{\texttt{CP}}}
\newcommand{\conf}[2]{\mathchoice%
  {\left\langle #1 \mid #2 \right\rangle}
  {\langle #1 \mid #2 \rangle}
  {\langle #1 \mid #2 \rangle}
  {\langle #1 \mid #2 \rangle}}
\newcommand\conf*[1]{\conf{#1}{\sigma}}
\newcommand{\cpconf}[4]{\conf{#1}{(#2, #3, #4)}}
\newcommand\cpconf*[1]{\cpconf{#1}{\sigmaC}{\sigmaP}{\rho}}
\newcommand{\pconf}[5]{\conf{(#1, #3) \mid (#2, #4)}{#5}}
\newcommand\pconf*[2]{\pconf{#1}{#2}{\sigmaC}{\sigmaP}{\rho}}
\newcommand{\sigmaC}{\sigma_\Compute}
\newcommand{\sigmaP}{\sigma_\Prove}
\newcommand{\SigmaC}{\Sigma_\Compute}
\newcommand{\SigmaP}{\Sigma_\Prove}
\newcommand{\SigmaCP}{\Sigma_\ComputeProve}
\newcommand{\Lower}[2]{\lfloor{#2}\rfloor_{#1}}
\newcommand\Lower*[1]{\Lower{\ell}{#1}}
\newcommand{\Lift}[2]{\lceil{#2}\rceil^{#1}}
\newcommand\Lift*[1]{\Lift{\ell}{#1}}
\newcommand{\LiftAnn}[2]{\Lift{#1}{#2}_\texttt{ann}}
\newcommand\LiftAnn*[1]{\LiftAnn{\ell}{#1}}
\newcommand{\LiftTheta}[2]{\Lift{#1}{#2}_\Theta}
\newcommand\LiftTheta*[1]{\LiftTheta{\ell}{#1}}
\newcommand{\LiftAnnTheta}[2]{\Lift{#1}{#2}_{\texttt{ann}\Theta}}
\newcommand\LiftAnnTheta*[1]{\LiftAnnTheta{\ell}{#1}}
\newcommand{\FAnn}[1]{\hat{f}(#1)}
\newcommand\storetuple{\overline{\sigma}}
\newcommand\lessthan{\mathrel{\preceq}}
\newcommand\traceq{\equiv_\Nulltrace}
\newcommand\stuckpred[2]{\operatorname{stuck-allowed}(#1, #2)}
\newcommand\stuckpredzk[4]{\operatorname{stuck-allowed-combined}(#1, #2, #3, #4)}
\newcommand{\den}[1]{{\color{black} \llbracket #1 \rrbracket}}
\newcommand{\denC}[2][\Compute]{\den{#2}_{#1}}
\newcommand{\denP}[2][\Prove]{\den{#2}_{#1}}
\newcommand{\denL}[2][\ell]{\den{#2}_{#1}}
\newcommand{\denG}[2][\ell]{\den{#2}_{#1}}
\newcommand{\denFull}[1]{\denG{#1}}
\newcommand{\fullComp}[2][]{\mathcal{F}\den{#2}^{#1}}
\newcommand{\witden}[1]{{\color{black} \{\mkern-4mu \mid \mkern-2mu #1 \mkern-2mu \mid \mkern-4mu\}}}
\newcommand{\witdenR}[2][\varphi]{\witden{#2}^{#1}}
\newcommand{\witdenC}[2][\Compute]{\witdenR{#2}_{#1}}
\newcommand{\witdenP}[2][\Prove]{\witdenR{#2}_{#1}}
\newcommand{\witdenL}[2][\ell]{\witdenR{#2}_{#1}}
\newcommand{\seq}{\mathrel{;}}
\newcommand{\opalloc}{\operatorname{alloc}}
\newcommand{\opset}{\operatorname{set}}
\newcommand{\opgen}{\operatorname{gen}}
\newcommand{\opverif}{\operatorname{verif}}
\newcommand{\emit}{\mathcal{E}}
\newcommand{\trace}{t}
\newcommand{\Nulltrace}{\bullet}
\newcommand{\sopalloc}{\SrcCol{\operatorname{alloc}}}
\newcommand{\sopset}{\SrcCol{\operatorname{set}}}
\newcommand{\sopgen}{\SrcCol{\operatorname{gen}}}
\newcommand{\sopverif}{\SrcCol{\operatorname{verif}}}
\newcommand{\strace}{\SrcCol{\trace}}
\newcommand{\semit}{\SrcCol{\emit}}
\newcommand{\topalloc}{\TrgCol{\operatorname{alloc}}}
\newcommand{\topset}{\TrgCol{\operatorname{set}}}
\newcommand{\topgen}{\TrgCol{\operatorname{gen}}}
\newcommand{\topverif}{\TrgCol{\operatorname{verif}}}
\newcommand{\ttrace}{\TrgCol{\trace}}
\newcommand{\temit}{\TrgCol{\emit}}
\newcommand{\TContextN}{\mathbf{\TrgCol{C_T}}}
\newcommand{\SContextN}{\SrcCol{C_S}}
\newcommand{\BehavN}{\programfont{Behav}}
\newcommand{\Behav}[1]{\BehavN(#1)}
\newcommand{\TrgBehavN}{\TrgCol{\mathbf{Behav}}}
\newcommand{\SrcBehavN}{\SrcCol{\BehavN}}
\newcommand{\TBehav}[1]{\TrgBehavN(#1)}
\newcommand{\SBehav}[1]{\SrcBehavN(#1)}
\newcommand{\PartialP}{\SrcCol{P}}
\newcommand{\FullProg}{W}
\newcommand{\SFullProg}{\SrcCol{\FullProg}}
\newcommand{\Tlink}{\mathbin{\TrgCol{\link}}}
\newcommand{\TTau}{\tau}
\newcommand\TBool{\programfont{bool}}
\newcommand\TInt{\programfont{int}}
\newcommand\TUnit{\programfont{unit}}
\newcommand\TRefN{\programfont{ref}}
\newcommand\TRefTau[1]{\TRefN\mkern4mu{#1}}
\newcommand\TRefTau*[1][\TTau]{\TRefTau{#1}}
\newcommand\TProofOfN{\TrgProg{proof \mhyphen of}}
\newcommand\TProofOf[1]{\ProofOfN~{#1}}
\newcommand\TProofOf*[1][\alpha]{\TProofOf{#1}}
\newcommand\TProof{\programfont{proof}}
\newcommand{\TVal}{v}
\newcommand\TVals[1]{\overline{#1}}
\newcommand\TVals*{\TVals{\TVal}}
\newcommand\TVar[1]{#1}
\newcommand\TVar*[1][x]{\TVar{#1}}
\newcommand\TUnitVal{\TrgCol{()}}
\newcommand\TTrue{\TrgProg{true}}
\newcommand\TFalse{\TrgProg{false}}
\newcommand\TLoc{{\iota}}
\newcommand\TNewN{\TrgProg{new}}
\newcommand\TNew[2]{\TNewN\mkern4mu{#1}(#2)}
\newcommand\TNew*[1][\TVals*]{\TNew{C}{#1}}
\newcommand\TUsingN{\TrgProg{using}}
\newcommand\TProofOfUsing[2]{\TProofOfN~#1~\TUsingN~#2}
\newcommand\TProofOfUsing*{\TProofOfUsing{\alpha}{\TVals*}}
\newcommand{\TExp}{e}
\newcommand\TRefVal[1]{\TrgProg{ref}\mkern4mu{#1}}
\newcommand\TRefVal*[1][\TVal]{\TRefVal{#1}}
\newcommand\TDeref[1]{\TrgProg{\bang}{#1}}
\newcommand\TDeref*[1][\TVal]{\TDeref{#1}}
\newcommand\Tcoloneq{\TrgCol{\coloneqq}}
\newcommand\TAssign[2]{#1~\Tcoloneq~#2}
\newcommand\TAssign*{\TAssign{v}{w}}
\newcommand\TCast[2]{\TrgCol{(}#1\TrgCol{)}#2}
\newcommand\TCast*{\TCast{C}{\TVal}}
\newcommand\TField[2]{#1.#2}
\newcommand\TField*{\TField{\TVal}{f}}
\newcommand\TCall[3]{#1.#2(#3)}
\newcommand\TCall*{\TCall{\TVal}{m}{\overline{w}}}
\newcommand\TAlloc[1]{\TrgProg{alloc}(#1)}
\newcommand\TLetN{\TrgProg{let}}
\newcommand\TInN{\TrgProg{in}}
\newcommand\TLetIn[3]{\TLetN~{#1}~\TrgCol{=}~{#2}~\TInN~{#3}}
\newMLP\TLetInMlp[3]{\TLetN~{#1}~\TrgCol{=}~{#2} \\\TInN~{#3}}
\newcommand\TLetIn*{\TLetIn{x\ty\TTau}{\TExp_1}{\TExp_2}}
\newcommand\TIfN{\TrgProg{if}}
\newcommand\TThenN{\TrgProg{then}}
\newcommand\TElseN{\TrgProg{else}}
\newcommand\TIfThenElse[3]{\TIfN~#1~\TThenN~#2~\TElseN~#3}
\newcommand\TIfThenElse*{\TIfThenElse{\TVal}{\TExp_1}{\TExp_2}}
\newcommand\TProveN{\TrgProg{prove}}
\newcommand\TProveUsing[3]{\TProveN~{#1}~\TrgCol{=}~{#2}~\TUsingN~{#3}}
\newcommand\TProveUsing*[1][\TExp]{\TProveUsing{\alpha}{\CircuitDefault[#1]}{\overline{v_x}[\overline{v_y}]}}
\newcommand\TVerifyN{\TrgProg{verify}}
\newcommand\TProvesN{\TrgProg{proves}}
\newcommand{\TVerify}[3]{\TVerifyN~{#1}~\TProvesN~{#2}~\TUsingN~{#3}}
\newcommand\TVerify*[1][\TVal]{\TVerify{#1}{\alpha}{\TVals*}}
\newcommand{\TCnpN}{\TrgProg{cnp}}
\newcommand{\TCnpAdmin}[8][\alpha]{\TCnpN^{(#1,#2,#3)}_{(#4,#5,#6)} \{#7\}~\TUsingN~{#8}}
\newcommand\TCnpAdmin*[3]{\TCnpAdmin{#2}{#3}{\overline{x_p}}{\overline{x_s}}{\varphi}{#1}{\overline{v_p}[\overline{v_s}]}}
\newcommand{\TComputeN}{\TrgProg{compute}}
\newcommand{\TWithN}{\TrgProg{with}}
\newcommand{\TCnp}[8]{\TComputeN~({#1}[{#2}])~\TrgProg{and}~\TProveN~{#3}~\TWithN~({#4}[{#5}])~\{{#6}\}~\TUsingN~{#7}[{#8}]}
\newcommand\TCnp*[1][\SExp]{%
  \TCnp{\overline{x_p}\ty\overline{\TTau_{x_p}}}{\overline{x_s}\ty\overline{\TTau_{x_s}}}{\alpha}
  {\overline{y_p}\ty\overline{\TTau_{y_p}}}{\overline{y_s}\ty\overline{\TTau_{y_s}}}
  {#1}{\overline{v_p}}{\overline{v_s}}%
}
\newcommand\Tseq{~\TrgCol{\seq}~}
\newcommand\TClassN{\TrgProg{class}}
\newcommand\TExtendsN{\TrgProg{extends}}
\newcommand\TSuperN{\TrgProg{super}}
\newcommand\TThisN{\mathit{this}}
\newcommand\TClassListSyntax{\TClassN~C_{\{\ell\}}~\TExtendsN~C_{\{\ell\}}~\{\overline{f} \ty \overline{\TTau} \seq K \seq \overline{M} \}}
\newcommand\TConstructor{C(\overline{f} \ty \overline{\TTau}) \{ \TSuperN(\overline{f}) \seq \TThisN.\overline{f} \coloneq \overline{f} \}}
\newcommand{\Slink}{\mathbin{\SrcCol{\link}}}
\newcommand{\STau}{\tau}
\newcommand\SBool{\programfont{bool}}
\newcommand\SInt{\programfont{int}}
\newcommand\SUnit{\programfont{unit}}
\newcommand\SRefN{\programfont{ref}}
\newcommand\SRefTau[1]{\SRefN\mkern4mu{#1}}
\newcommand\SRefTau*[1][\TTau]{\SRefTau{#1}}
\newcommand\SProofOfN{\programfont{proof \mhyphen of}}
\newcommand\SProofOf[1]{\SProofOfN~{#1}}
\newcommand\SProofOf*[1][\alpha]{\SProofOf{#1}}
\newcommand\SProof{\programfont{proof}}
\newcommand\SWit{x}
\newcommand{\SVal}{v}
\newcommand\SVals[1]{\overline{#1}}
\newcommand\SVals*{\SVals{\SVal}}
\newcommand\SVar[1]{#1}
\newcommand\SVar*[1][x]{\SVar{#1}}
\newcommand\SUnitVal{\SrcCol{()}}
\newcommand\STrue{\SrcProg{\True}}
\newcommand\SFalse{\SrcProg{\False}}
\newcommand\SNewN{\SrcProg{new}}
\newcommand\SNew[3][\ell]{\SNewN\mkern4mu{#2}(#3)}
\newcommand\SNew*[1][\ell]{\SNew[#1]{C}{\SVals*}}
\newcommand\SUsingN{\SrcProg{using}}
\newcommand\SProofOfUsing[2]{\SrcProg{\SProofOfN}~#1~\SUsingN~#2}
\newcommand\SProofOfUsing*{\SProofOfUsing{\alpha}{\SVals*}}
\newcommand\SLocC[1]{\SrcProg{r}_\Compute(#1)}
\newcommand\SLocC*{\SLocC{\iota}}
\newcommand\SLocP[1]{\SrcProg{r}_\Prove(#1)}
\newcommand\SLocP*{\SLocP{\iota}}
\newcommand\SLocCP[2]{\SrcProg{r}_\ComputeProve(#1, #2)}
\newcommand\SLocCP*{\SLocCP{\iota_1}{\iota_2}}
\newcommand{\SExp}{e}
\newcommand\SRefVal[2]{\SrcCol{\SRefN}_{#1}\mkern4mu{#2}}
\newcommand\SRefVal*[1][\ell]{\SRefVal{#1}{\SVal}}
\newcommand\SDeref[1]{\SrcProg{\bang}{#1}}
\newcommand\SDeref*[1][\SVal]{\SDeref{#1}}
\newcommand\Scoloneq{\SrcCol{\coloneqq}}
\newcommand\SAssign[2]{#1~\Scoloneq~#2}
\newcommand\SAssign*{\SAssign{\SVal_1}{\SVal_2}}
\newcommand\SCast[2]{\SrcCol{(}#1\SrcCol{)}#2}
\newcommand\SCast*{\SCast{C}{\SVal}}
\newcommand\SField[2]{#1.#2}
\newcommand\SField*{\SField{\SVal}{f}}
\newcommand\SCall[4]{#1.#2_{#3}(#4)}
\newcommand\SCall*[1][\ell]{\SCall{\SVal}{m}{#1}{\overline{w}}}
\newcommand\SAlloc[1]{\SrcCol{\AllocN}(#1)}
\newcommand\SLetN{\SrcProg{let}}
\newcommand\SInN{\SrcProg{in}}
\newcommand\SLetIn[3]{\SLetN~{#1}~\SrcCol{=}~{#2}~\SInN~{#3}}
\newcommand\SLetIn*[1][\ell]{\SLetIn{x\ty\STau}{\SExp_1}{\SExp_2}}
\newMLP*{\SLetInBreak}[3]{\SLetN~{#1} = {#2}~\SInN \\ {#3}}
\newcommand\SIfN{\SrcProg{if}}
\newcommand\SThenN{\SrcProg{then}}
\newcommand\SElseN{\SrcProg{else}}
\newcommand\SIfThenElse[3]{\SIfN~#1~\SThenN~#2~\SElseN~#3}
\newcommand\SIfThenElse*{\SIfThenElse{\SVal}{\SExp_1}{\SExp_2}}
\newMLP*{\SIfThenElseBreak}[3]{\SIfN~{#1}~\SThenN \\ \mkern30mu{#2} \\ \SElseN \\ \mkern30mu{#3}}
\newMLP*{\SSeqBreak}[2]{#1~\Sseq \\ #2}
\newcommand\SProveN{\SrcProg{prove}}
\newcommand\SProveUsing[3]{\SProveN~{#1}~\SrcCol{=}~{#2}~\SUsingN~{#3}}
\newcommand\SProveUsing*[1][\SExp]{\SProveUsing{\alpha}{\CircuitDefault[#1]}{\overline{v_x}[\overline{v_y}]}}
\newcommand\SVerifyN{\SrcProg{verify}}
\newcommand\SProvesN{\SrcProg{proves}}
\newcommand{\SVerify}[3]{\SVerifyN~{#1}~\SProvesN~{#2}~\SUsingN~{#3}}
\newcommand\SVerify*[1][\SVal]{\SVerify{#1}{\alpha}{\SVals*}}
\newcommand\SWitAssign[2]{#1~\SrcCol{\witassign}~#2}
\newcommand\SWitAssign*{\SWitAssign{\SWit}{\SVal}}
\newcommand\SWitDeref[1]{\SrcCol{\witderef}#1}
\newcommand\SWitDeref*{\SWitDeref{\SWit}}
\newcommand{\SCnpN}{\SrcProg{cnp}}
\newcommand{\SCnpAdmin}[8][\alpha]{\SCnpN^{(#1,#2,#3)}_{(#4,#5,#6)} \{#7\}~\SUsingN~{#8}}
\newcommand\SCnpAdmin*[3]{\SCnpAdmin{#2}{#3}{\overline{x_p}}{\overline{x_s}}{\varphi}{#1}{\overline{v_p}[\overline{v_s}]}}
\newcommand{\SComputeN}{\SrcProg{compute}}
\newcommand{\SWithN}{\SrcProg{with}}
\newcommand{\SCnp}[8]{\SComputeN~({#1}[{#2}])~\SrcProg{and}~\SProveN~{#3}~\SWithN~({#4}[{#5}])~\{{#6}\}~\SUsingN~{#7}[{#8}]}
\newcommand\SCnp*[1][\SExp]{%
  \SCnp{\overline{x_p}\ty\overline{\TTau_{x_p}}}{\overline{x_s}\ty\overline{\TTau_{x_s}}}{\alpha}
  {\overline{y_p}\ty\overline{\TTau_{y_p}}}{\overline{y_s}\ty\overline{\TTau_{y_s}}}
  {#1}{\overline{v_p}}{\overline{v_s}}%
}
\newcommand\Sseq{~\SrcCol{\seq}~}
\newcommand\Bool{\programfont{bool}}
\newcommand\Int{\programfont{int}}
\newcommand\Unit{\programfont{unit}}
\newcommand\RefN{\programfont{ref}}
\newcommand\Reft{\RefN\mkern4mu}
\newcommand\Refsub[1]{\RefN_{#1}\mkern4mu}
\newcommand\Refl{\Refsub{\ell}}
\newcommand\ThisN{\programfont{this}}
\newcommand\This{\mkern4mu\ThisN}
\newcommand\True{\programfont{true}}
\newcommand\False{\programfont{false}}
\newcommand\NewN{\programfont{new}}
\newcommand\AllocN{\programfont{alloc}}
\newcommand\New{\NewN\mkern4mu}
\newcommand\Newsub[1]{\NewN\mkern4mu}
\newcommand\Newl{\Newsub{\ell}}
\newcommand\IfN{\programfont{if}}
\newcommand\ThenN{\programfont{then}}
\newcommand\ElseN{\programfont{else}}
\newcommand\If{\IfN\mkern4mu}
\newcommand\Then{\mkern4mu\ThenN\mkern4mu}
\newcommand\Else{\mkern4mu\ElseN\mkern4mu}
\newcommand\LetN{\programfont{let}}
\newcommand\Let{\LetN\mkern4mu}
\newcommand\InN{\programfont{in}}
\newcommand\In{\mkern4mu\InN\mkern4mu}
\newcommand\IfThenElse[4][]{\IfN\if\relax\detokenize{#1}\relax\else_{#1}\fi~#2~\ThenN~#3~\ElseN~#4}
\newcommand{\LetIn}[3]{\LetN~{#1} = {#2}~\InN~{#3}}
\newMLP*{\LetIn}[3]{\LetN~{#1} = {#2}~\InN \\ {#3}}
\newcommand{\Alloc}[1]{\AllocN(#1)}
\newcommand{\UsingN}{\programfont{using}}
\newcommand{\ComputeN}{\programfont{compute}}
\newcommand\Proof{\programfont{proof}}
\newcommand\ProveN{\programfont{prove}}
\newcommand\Provet{\ProveN\mkern4mu}
\newcommand{\ProveWith}[3]{\ProveN~{#1} = {#2} ~\UsingN~ {#3}}
\newcommand\ProveWith*[1][e]{\ProveWith{\alpha}{\CircuitDefault[#1]}{\overline{v_x}[\overline{v_y}]}}
\newcommand\ProvesN{\programfont{proves}}
\newcommand\Proves{\mkern4mu\ProvesN\mkern4mu}
\newcommand\ProofOfN{\programfont{proof \mhyphen of}}
\newcommand\ProofOf[1]{\ProofOfN~{#1}}
\newcommand\ProofOf*{\ProofOf{\alpha}}
\newcommand{\ProofWith}[2]{\ProofOfN~{#1}~\UsingN~{#2}}
\newcommand\ProofWith*{\ProofWith{\alpha}{\overline{v}}}
\newcommand\CircuitDefault[1][e]{\exists \overline{x} \ty \overline{\TTau_x}[\overline{y} \ty \overline{\TTau_y}].#1}
\newcommand\WithN{\programfont{with}}
\newcommand\With{\mkern4mu\WithN\mkern4mu}
\newcommand\VerifyN{\programfont{verify}}
\newcommand\Verify{\VerifyN\mkern4mu}
\newcommand\Verifyt{\Verify}
\mathchardef\mhyphen="2D
\newcommand{\VerifyWith}[3]{\VerifyN~{#1}~\ProvesN~{#2}~\UsingN~{#3}}
\newcommand\VerifyWith*[1]{\VerifyWith{#1}{\alpha}{\overline{v}}}
\newcommand\CombInputs[2]{#1[#2]}
\newcommand\CombInputs*{\CombInputs{\overline{x}\ty\overline{\tau_x}}{\overline{y}\ty\overline{\tau_y}}}
\newcommand{\CnpN}{\programfont{cnp}}
\newcommand{\CnpAdmin}[8][\alpha]{\CnpN^{(#1,#2,#3)}_{(#4,#5,#6)} \{#7\}~\UsingN~{#8}}
\newcommand\CnpAdmin*[3]{\CnpAdmin{#2}{#3}{\overline{x_p}}{\overline{x_s}}{\varphi}{#1}{\overline{v_p}[\overline{v_s}]}}
\newcommand{\Cnp}[8]{\ComputeN~({#1}[{#2}])~\programfont{and}~\ProveN~{#3}~\WithN~({#4}[{#5}])~\{{#6}\}~\UsingN~{#7}[{#8}]}
\newcommand\Cnp*[1][e]{%
  \Cnp{\overline{x_p}\ty\overline{\tau_{x_p}}}{\overline{x_s}\ty\overline{\tau_{x_s}}}{\alpha}
  {\overline{y_p}\ty\overline{\tau_{y_p}}}{\overline{y_s}\ty\overline{\tau_{y_s}}}
  {#1}{\overline{v_p}}{\overline{v_s}}%
}
\newcommand{\rone}{(\emph{i})\xspace}
\newcommand{\rtwo}{(\emph{ii})\xspace}
\newcommand{\rthree}{(\emph{iii})\xspace}
\newcommand{\rfour}{(\emph{iv})\xspace}
\newcommand{\rfive}{(\emph{v})\xspace}
\newcommand{\rsix}{(\emph{vi})\xspace}
\newcommand{\rseven}{(\emph{vii})\xspace}
\newcommand{\ProveHL}[1]{\colorbox{pink}{{$\displaystyle\scriptstyle{#1}$}}}
\newcommand{\ComputeHL}[1]{\colorbox{yellow}{{$\displaystyle\scriptstyle{#1}$}}}
\newcommand{\changes}[1]{#1}
\newcommand{\undercolor}[1]{%
  \bgroup
  \markoverwith{%
    \textcolor{#1}{\rule[-0.5ex]{0.2pt}{0.6pt}}%
  }%
  \ULon
}
\newcommand{\undergreen}{\undercolor{green}}
\newcommand{\underred}{\undercolor{red}}
\newcommand{\underblue}{\undercolor{blue}}
\newcommand{\codesize}{\footnotesize}
\definecolor{codeBlue}{rgb}{0.1,.23,0.85}
\definecolor{orangeGold}{rgb}{0.67, 0.33, 0.1}
\colorlet{numColor}{Red}
\lstdefinelanguage{exlang}{
  keywords=[1]{
    class, interface, extends, implements,
    if, then, else, for, while,
    compute, and_prove, with, using,
    assert,
    return,
    prove, verify, proves,
    switch, case, break, continue,
    fail,
  },
  keywordstyle=[1]\color{codeBlue},
  keywords=[2]{
    hash, int, proof, proofOf,
    new,
    witness
  },
  keywordstyle=[2]\color{orangeGold},
  keywords=[3]{
    input,
    sha3,
    pub, priv,
  },
  keywordstyle=[3]\color{Plum},
  keywords=[4]{C, P, CP},
  keywordstyle=[4]\color{cpcolor},
  literate={0}{{{\color{numColor}0}}}{1}
           {1}{{{\color{numColor}1}}}{1}
           {2}{{{\color{numColor}2}}}{1}
           {[-0-]}{0}{1},
  sensitive=true,
  morecomment=[l][\color{ForestGreen}]{//}, 
  morecomment=[s]{/*}{*/}, 
  string=[b]", 
  stringstyle=\upshape\color{Red},
  escapeinside={(*}{*)},
}
\scriptsize\color{gray},
\tikzset{
  mapsto/.style={{Bar[width=4pt]}-{Stealth[length=5pt,width=4pt]},semithick},
  src-arrow/.style={-{{Straight Barb}[scale=0.85]{Straight Barb}[scale=0.85]}, draw=sourcecolor, semithick},
  conc-arrow/.style={-implies, double equal sign distance, draw=targetcolor, semithick},
  trg-arrow/.style={-{{Straight Barb}[scale=0.85]}, draw=targetcolor, semithick},
}
\theoremstyle{theorem}
\theoremstyle{definition}
\title{\zkstrudul: Programming zkSNARKs with Structural Duality}
\author{Rahul Krishnan}
\affiliation{
  \institution{University of Wisconsin--Madison}
 \city{Madison}
 \state{Wisconsin}
 \country{USA}
}
\email{rahulk@cs.wisc.edu}
\author{Ashley Samuelson}
\affiliation{
  \institution{University of Wisconsin--Madison}
  \city{Madison}
  \state{Wisconsin}
  \country{USA}
}
\email{ashley.samuelson@wisc.edu}
\author{Emily Yao}
\affiliation{
  \institution{University of Wisconsin--Madison}
 \city{Madison}
 \state{Wisconsin}
 \country{USA}
}
\email{eyao3@wisc.edu}
\author{Ethan Cecchetti}
\affiliation{
  \institution{University of Wisconsin--Madison}
 \city{Madison}
 \state{Wisconsin}
 \country{USA}
}
\email{cecchetti@wisc.edu}
\begin{document}

\begin{abstract}
Non-Interactive Zero Knowledge (NIZK) proofs, such as zkSNARKS, let one prove knowledge of private data without revealing it or interacting with a verifier. 
While existing tooling focuses on specifying the predicate to be proven, real-world applications optimize predicate definitions
to minimize proof generation overhead, but must correspondingly transform predicate inputs.
Implementing these two steps separately duplicates logic that must precisely match to avoid catastrophic security flaws.
We address this shortcoming with \zkstrudul, a language that unifies input transformations and predicate definitions into a single \emph{combined abstraction}
from which a compiler can \emph{project} both procedures, eliminating duplicate code and problematic mismatches.
\zkstrudul provides a high-level abstraction to layer on top of existing NIZK technology and supports important features like recursive proofs.
We provide a source-level semantics and prove its behavior is identical to the projected semantics, allowing straightforward standard reasoning.

\end{abstract}

\maketitle

\DefineEvalRule{Eval} {
  \conf*{e} \tstep[\ttrace] \conf{e'}{\sigma'}
}{
  \conf*{\Eval{e}} \tstep[\ttrace] \conf{\Eval{e'}}{\sigma'}
}

\DefineEvalRule{Let}{ }{\conf*{\TLetIn{x}{v}{e}} \tstep \conf*{\subst{e}{x}{v}}}

\DefineEvalRule{IfT} { } {
  \conf*{\TIfThenElse{\TTrue}{e_1}{e_2}} \tstep \conf*{e_1}
}
\DefineEvalRule{IfF} { } {
  \conf*{\TIfThenElse{\TFalse}{e_1}{e_2}} \tstep \conf*{e_2}
}
\DefineEvalRule{Ref} {
    \iota \not\in \dom(\sigma)
}{\conf*{\TRefVal*} \tstep[\topalloc(\iota, v)] \conf{\iota}{\subst{\sigma}{\iota}{v}}}

\DefineEvalRule{Deref}{
  v = \sigma(\iota) \\
  v \neq \bot
}{
  \conf*{\TDeref{\iota}} \tstep \conf*{v}
}

\DefineEvalRule{Assign} {
    \iota \in \dom(\sigma)
}{\conf*{\TAssign{\iota}{\TVal}} \tstep[\topset(\iota,v)] \conf{\TUnitVal}{\subst{\sigma}{\iota}{v}}}

\DefineEvalRule{Cast} {
    D <: C
} {
  \conf*{\TCast{C}{(\TNew{D}{\TVals*})}} \tstep \conf*{\TNew{D}{\TVals*}}
}
\DefineEvalRule{Field}{ }{\conf*{\TField{\TNew*}{f_i}} \tstep \conf*{v_i}}

\DefineEvalRule{Call} {
    \operatorname{mbody}(m,C) = (e, \overline{x}, \tau) \\
}{
  \conf*{\TCall{(\TNew*)}{m}{\overline{w}}} \tstep \conf*{\subst*{e}{{\overline{x}}{\overline{w}}{\TThisN}{\TNew*}}}
}
\DefineEvalRule{Alloc} {
  \iota \not\in \dom(\sigma)
} {
  \conf*{\TAlloc{\TTau}} \tstep[\topalloc(\iota, \bot)] \conf{\iota}{\sigma[\iota \mapsto \bot]}
}
\DefineEvalRule{Prove} {
  \conf{\subst*{e}{{\overline{x}}{\overline{v_x}}{\overline{y}}{\overline{v_y}}}}{\varnothing} \tstep* \conf{\TTrue}{\_}
} {
  \conf*{\TProveUsing{\alpha}{\exists \overline{x}\ty\overline{\tau_x}[\overline{y}\ty\overline{\tau_y}].e}{\overline{v_x}[\overline{v_y}]}}
  \tstep[\topgen(\alpha, \overline{v_x}, \overline{v_y})]
  \conf*{\TProofOfUsing*}
}
\DefineEvalRule{VerifyT}{ }{\conf*{\TVerify*[(\TProofOfUsing*)]} \tstep[\topverif(\alpha,\overline{v}, \top)] \conf*{\TTrue}}

\DefineEvalRule{VerifyF}{
  \TProofOfUsing{\beta}{\overline{w}} \neq \TProofOfUsing*
}{\conf*{\TVerify*[(\TProofOfUsing{\beta}{\overline{w}})]} \tstep[\topverif(\alpha,\overline{w},\bot)] \conf*{\TFalse}}

\newtoggle{EComputeAndProveToggle}
\DefineEvalRule{ComputeAndProveInit} {
    \varphi = \subst*{}{{\overline{x_p}}{\overline{\iota_{x_p}}}{\overline{x_s}}{\overline{\iota_{x_s}}}} \\
    \overline{\iota_{x_p}}, \overline{\iota_{x_s}} \not\in \dom(\sigma) \\
    \sigma' = \sigma[\overline{\iota_{x_p}} \mapsto \bot, \overline{\iota_{x_s}} \mapsto \bot]
} {
  {\begin{array}{c}
    \conf*{\iftoggle{EComputeAndProveToggle}{\SCnp*}{\TCnp*}} \\
    \tstep[\topalloc(\overline{\iota_{x_p}}, \bot),~\topalloc(\overline{\iota_{x_s}}, \bot)]
    \conf{\iftoggle{EComputeAndProveToggle}{\SCnpAdmin*{\subst*{e}{{\overline{y_p}}{\overline{v_p}}{\overline{y_s}}{\overline{v_s}}}}{\varnothing}{\varnothing}}{\TCnpAdmin*{\subst*{e}{{\overline{y_p}}{\overline{v_p}}{\overline{y_s}}{\overline{v_s}}}}{\varnothing}{\varnothing}}}{\sigma'}
  \end{array}}
}
\DefineEvalRule{ComputeAndProveStep} {
    \cpconf{e}{\sigma}{\sigmaP}{\rho} \cpstep[\strace] \cpconf{e'}{\sigma'}{\sigmaP'}{\rho'}
} {
    \conf*{\iftoggle{EComputeAndProveToggle}{\SCnpAdmin*{e}{\sigmaP}{\rho}}{\TCnpAdmin*{e}{\sigmaP}{\rho}}} \tstep[\witdenR{\strace}] \conf{\iftoggle{EComputeAndProveToggle}{\SCnpAdmin*{e'}{\sigmaP'}{\rho'}}{\TCnpAdmin*{e'}{\sigmaP'}{\rho'}}}{\sigma'}
}
\DefineEvalRule{ComputeAndProveTrue} {
  \sigma' = \subst*{\sigma}{{\varphi(\overline{x_p})}{\rho(\overline{x_p})}{\varphi(\overline{x_s})}{\rho(\overline{x_s})}}
} {
  \conf*{\iftoggle{EComputeAndProveToggle}{\SCnpAdmin*{\STrue}{\sigmaP}{\rho}}{\TCnpAdmin*{e}{\sigmaP}{\rho}}} \tstep[\topgen(\alpha, \overline{v_p} :: \rho(\overline{x_p}), \overline{v_s} :: \rho(\overline{x_s}))] \conf{\TProofOfUsing{\alpha}{\overline{v_p}{::}\rho(\overline{x_p})}}{\sigma'}
}

\DefineCEvalRule{Lift}{
  \conf{e}{\varnothing} \tstep[\ttrace] \conf{e'}{\varnothing}
}{\cpconf*{\Lift{\ell}{e}} \cpstep[\strace] \cpconf*{\Lift{\ell}{e'}}}

\DefineCEvalRule{Eval} {
  \cpconf*{e} \cpstep[\strace] \cpconf{e'}{\sigmaC'}{\sigmaP'}{\rho'}
} {
  \cpconf*{\Eval{e}} \cpstep[\strace] \cpconf{\Eval{e'}}{\sigmaC'}{\sigmaP'}{\rho'}
}
\DefineCEvalRule{Let} { } {
  \cpconf*{\SLetIn{x\ty\TTau}{v}{e}} \cpstep \cpconf*{\subst{e}{x}{v}}
}
\newtoggle{ECRefCToggle}
\DefineCEvalRule{RefC} {
  \iota \notin \dom(\sigmaC)
} {
  {\begin{array}{@{}l@{}}
    \cpconf*{\SRefVal{\Compute}{v}}
    \iftoggle{ECRefCToggle}{ \\ \hspace*{1em} }{}
    \cpstep[\sopalloc(\iota, \Lower{\Compute}{v})] \cpconf{\SLocC*}{\subst{\sigmaC}{\iota}{\Lower{\Compute}{v}}}{\sigmaP}{\rho}
    \end{array}}  
}
\DefineCEvalRule{RefP} {
  \iota \notin \dom(\sigmaP)
} {
  \cpconf*{\SRefVal{\Prove}{v}} \cpstep \cpconf{\SLocP*}{\sigmaC}{\subst{\sigmaP}{\iota}{\Lower{\Prove}{v}}}{\rho}
}
\DefineCEvalRule{RefCP} {
  \iota_1 \notin \dom(\sigmaC) \\
  \iota_2 \notin \dom(\sigmaP) \\
} {
  \cpconf*{\SRefVal{\ComputeProve}{v}} \cpstep[\sopalloc(\iota_1, \Lower{\Compute}{v})]
  \cpconf{\SLocCP*}{\subst{\sigmaC}{\iota_1}{\Lower{\Compute}{v}}}{\subst{\sigmaP}{\iota_2}{\Lower{\Prove}{v}}}{\rho}
}
\DefineCEvalRule{DerefC} {
  v = \sigmaC(\iota) \\
  v \neq \bot
} {
  \cpconf*{\SDeref{\SLocC*}} \cpstep \cpconf*{\Lift{\Compute}{v}}
}
\DefineCEvalRule{DerefP} {
} {
  \cpconf*{\SDeref{\SLocP*}} \cpstep \cpconf*{\Lift{\Prove}{\sigmaP(\iota)}}
}
\DefineCEvalRule{DerefCP} {
  v = \Lift{\Compute}{\sigmaC(\iota_1)} \sqcup \Lift{\Prove}{\sigmaP(\iota_2)}
} {
  \cpconf*{\SDeref{\SLocCP*}} \cpstep \cpconf*{v}
}
\DefineCEvalRule{Alloc} {
  \iota \not\in \dom(\sigmaC)
} {
  \cpconf*{\SAlloc{\TTau}} \cpstep[\sopalloc(\iota,\bot)] \cpconf{\SLocC*}{\subst{\sigmaC}{\iota}{\bot}}{\sigmaP}{\rho}
}
\DefineCEvalRule{AssignC} {
  \iota \in \dom(\sigma_\Compute)
} {
  \cpconf*{\SAssign{\SLocC*}{v}} \cpstep[\sopset(\iota, \Lower{\Compute}{v})] \cpconf{\SUnitVal}{\subst{\sigmaC}{\iota}{\Lower{\Compute}{v}}}{\sigmaP}{\rho}
}
\DefineCEvalRule{AssignP} {
  \iota \in \dom(\sigma_\Prove)
} {
  \cpconf*{\SAssign{\SLocP*}{v}} \cpstep \cpconf{\SUnitVal}{\sigmaC}{\subst{\sigmaP}{\iota}{\Lower{\Prove}{v}}}{\rho}
}
\DefineCEvalRule{AssignCP} {
  \iota_1 \in \dom(\sigma_\Compute)\\
  \iota_2 \in \dom(\sigma_\Prove)
} {
  \cpconf*{\SAssign{\SLocCP*}{v}}
  \cpstep[\sopset(\iota_1, \Lower{\Compute}{v})]
  \cpconf{\SUnitVal}{\subst{\sigmaC}{\iota_1}{\Lower{\Compute}{v}}}{\subst{\sigmaP}{\iota_2}{\Lower{\Prove}{v}}}{\rho}
}
\DefineCEvalRule{InputAssign} {
  x \notin \dom(\rho)
} {
  \cpconf*{\SWitAssign{x}{v}} \cpstep[\sopset(x,v)] \cpconf{\SUnitVal}{\sigmaC}{\sigmaP}{\subst{\rho}{x}{v}}
}
\DefineCEvalRule{InputDeref} {
} {
  \cpconf*{\SWitDeref{x}} \cpstep \cpconf*{\rho(x)}
}
\DefineCEvalRule{ComputeAndProveInit} {
    \varphi = \subst*{}{{\overline{x_p}}{\overline{\iota_{xp}}}{\overline{x_s}}{\overline{\iota_{xs}}}} \\
    \overline{\iota_{xp}}, \overline{\iota_{xs}} \not\in \dom(\sigmaC) \\
    \sigmaC' = \sigmaC[\overline{\iota_{xp}} \mapsto \bot, \overline{\iota_{xs}} \mapsto \bot]
} {
  {\begin{array}{c}
    \cpconf*{\SCnp*} \\
    \cpstep[\sopalloc(\overline{\iota_{xp}}, \bot), \sopalloc(\overline{\iota_{xs}}, \bot)]
    \cpconf{\SCnpAdmin*{\subst*{e}{{\overline{y_p}}{\overline{v_p}}{\overline{y_s}}{\overline{v_s}}}}{\varnothing}{\varnothing}}{\sigmaC'}{\sigmaP}{\rho}
  \end{array}}
}
\DefineCEvalRule{ComputeAndProveStep} {
  \cpconf*{e} \cpstep[\strace] \cpconf{e'}{\sigmaC'}{\sigmaP'}{\rho'}
} {
  \cpconf{\SCnpAdmin*{e}{\sigmaP}{\rho}}{\sigmaC}{\tilde{\sigma}_\Prove}{\tilde{\rho}}
  \cpstep[\witdenR{\strace}]
  \cpconf{\SCnpAdmin*{e'}{\sigmaP'}{\rho'}}{\sigmaC'}{\tilde{\sigma}_\Prove}{\tilde{\rho}}
}
\DefineCEvalRule{ComputeAndProveTrue} {
  \sigmaC' = \subst*{\sigmaC}{{\varphi(\overline{x_p})}{\rho(\overline{x_p})}{\varphi(\overline{x_s})}{\rho(\overline{x_s})}}
} {
  {\begin{array}{l}
    \cpconf{\SCnpAdmin*{\STrue}{\sigmaP}{\rho}}{\sigmaC}{\tilde{\sigma}_\Prove}{\tilde{\rho}} \\
    \qquad\qquad \cpstep[\sopgen(\alpha,\overline{v_p}::\rho(\overline{x_p}), \overline{v_s}::\rho(\overline{x_s}))]
    \cpconf{\SProofOfUsing{\alpha}{\overline{v}{::}\rho(\overline{x_p})}}{\sigmaC'}{\tilde{\sigma}_\Prove}{\tilde{\rho}}
  \end{array}}
}

\DefineTypingRule{Var} {
    \Gamma(x) = \TTau
} {
    \Sigma ; \Gamma \proves x : \TTau
}
\DefineTypingRule{Unit} { } {
    \Sigma ; \Gamma \proves \TUnitVal : \Unit
}
\DefineTypingRule{True} { } {
    \Sigma ; \Gamma \proves \TTrue : \Bool
}
\DefineTypingRule{False} { } {
    \Sigma ; \Gamma \proves \TFalse : \Bool
}
\DefineTypingRule{Location} {
    \Sigma(\iota) = \TTau
}{
    \Sigma ; \Gamma \proves \iota : \TRefTau*
}
\DefineTypingRule{ProofOf} {
    \ptypes(\alpha) = \overline{\TTau} \\
    \proves \overline{v} : \overline{\TTau}
} {
    \Sigma ; \Gamma \proves \TProofOfUsing* : \ProofOf*
}
\DefineTypingRule{New} {
    \Sigma ; \Gamma \proves \overline{v} : \overline{\TTau} \\\\
    \fields(C) = \overline{f} : \overline{\TTau}
} {
    \Sigma ; \Gamma \proves \TNew* : C
}
\DefineTypingRule{Val} {
    \Sigma ; \Gamma \proves v : \TTau
} {
    \Sigma ; \Gamma \proves v : \TTau \changes{\produces \ell}
}
\DefineTypingRule{Cast} {
    \Sigma ; \Gamma \proves v : D
} {
    \Sigma ; \Gamma \proves \TCast* : C \changes{\produces \ell}
}
\DefineTypingRule{Field} {
    \Sigma ; \Gamma \proves v : C \\\\
    \fields(C) = \overline{f} : \overline{\TTau} 
} {
    \Sigma ; \Gamma \proves \TField{v}{f_i} : \TTau_i \changes{\produces \ell}
}
\DefineTypingRule{Call} {
    \mtype(m, C) = \overline{\TTau} \xrightarrow{\changes{\ell}} \TTau_1 \\\\
    \Sigma ; \Gamma \proves v : C \\
    \Sigma ; \Gamma \proves \overline{w} : \overline{\TTau}
} {
    \Sigma ; \Gamma \proves \TCall{v}{m}{\overline{w}} : \TTau_1 \changes{\produces \ell}
}
\DefineTypingRule{Ref} {
    \Sigma ; \Gamma \proves v : \TTau
} {
    \Sigma ; \Gamma \proves \TRefVal* : \TRefTau* \changes{\produces \ell}
}
\DefineTypingRule{Deref} {
    \Sigma ; \Gamma \proves v : \TRefTau*
} {
    \Sigma ; \Gamma \proves \TDeref* : \TTau \changes{\produces \ell}
}
\DefineTypingRule{Assign} {
    \Sigma ; \Gamma \proves v_1 : \TRefTau* \\\\
    \Sigma ; \Gamma \proves v_2 : \TTau
} {
    \Sigma ; \Gamma \proves \TAssign{v_1}{v_2} : \Unit \changes{\produces \ell}
}
\DefineTypingRule{If} {
    \Sigma ; \Gamma \proves v : \Bool \\
    \changes{\ell_1 \cap \ell_2 \neq \varnothing} \\\\
    \Sigma ; \Gamma \proves e_1 : \TTau \changes{\produces \ell_1} \\
    \Sigma ; \Gamma \proves e_2 : \TTau \changes{\produces \ell_2}
} {
    \Sigma ; \Gamma \proves \TIfThenElse* : \TTau \changes{\produces \ell_1 \cap \ell_2}
}
\DefineTypingRule{Let}{
    \Sigma ; \Gamma \proves e_1 : \TTau_1 \changes{\produces \ell_1} \\
    \changes{\ell_1 \cap \ell_2 \neq \varnothing} \\\\
    \Sigma ; \Gamma, x \ty \TTau_1 \proves e_2 : \TTau_2 \changes{\produces \ell_2}
} {
    \Sigma ; \Gamma \proves \TLetIn{x\ty\TTau_1}{e_1}{e_2} \changes{\produces \ell_1 \cap \ell_2}
}
\DefineTypingRule{Subtype} {
    \Sigma ; \Gamma \proves e : \TTau_1 \changes{\produces \ell}\\
    \TTau_1 <: \TTau_2 
} {
    \Sigma ; \Gamma \proves e : \TTau_2 \changes{\produces \ell}
}
\DefineTypingRule{Prove} {
    \cdot ; \overline{x} \ty \overline{\TTau_x}, \overline{y} \ty \overline{\TTau_y} \proves e : \Bool \produces \Prove \\
    \Sigma ; \Gamma \proves \overline{v} : \overline{\TTau_x} \\\\
    \Sigma ; \Gamma \proves \overline{w} : \overline{\TTau_y} \\
    \refunreach{\overline{\TTau_x}} \\
    \refunreach{\overline{\TTau_y}} \\
} {
    \Sigma ; \Gamma \proves \TProveUsing* : \ProofOf* \produces \Compute
}
\DefineTypingRule{Verify} {
    \Sigma ; \Gamma \proves v : \Proof\\\\
    \ptypes(\alpha) = \overline{\TTau} \\
    \Sigma ; \Gamma \proves \overline{v} : \overline{\TTau}
} {
    \Sigma ; \Gamma \proves \TVerify*[v] : \Bool \produces \ell
}
\DefineTypingRule{Alloc} {
} {
    \Sigma ; \Gamma \proves \TAlloc{\TTau} : \TRefTau{\TTau} \produces \Compute
}
\newtoggle{TCombinedToggle}
\DefineTypingRule{ComputeAndProve} {
    \refunreach{\overline{\TTau_{x_p}}, \overline{\TTau_{x_s}}, \overline{\TTau_{y_p}}, \overline{\TTau_{y_s}}} \\
    \Sigma ; \Gamma \proves \overline{v_p} : \overline{\TTau_{y_p}} \\
    \Sigma ; \Gamma \proves \overline{v_s} : \overline{\TTau_{y_s}} \\\\
    \Gamma_\ComputeProve = \overline{y_p} \ty \tplab{\overline{\TTau_{y_p}}}{\ComputeProve}, \overline{y_s} \ty \tplab{\overline{\TTau_{y_s}}}{\ComputeProve} \\
    \Delta = \overline{x_p} \ty \overline{\TTau_{x_p}}, \overline{x_s} \ty \overline{\TTau_{x_s}} \\
    (\Sigma, \varnothing, \varnothing) ; \tplab{\Gamma}{\Compute}, \Gamma_\ComputeProve ; \Delta ; \varnothing \cproves e : \tplab{\Bool}{\Prove} \cproduces \{\overline{x_p}, \overline{x_s}\}
} {
    \Sigma ; \Gamma \proves \iftoggle{TCombinedToggle}{\SCnp*}{\TCnp*} : \TProofOf* \produces \Compute
}
\DefineTypingRule{Combined} {
    \dom(\Delta) = \overline{w_p},\overline{w_s} \\
    \Sigma'; \varnothing; \Delta; A' \cproves e : \tplab{\Bool}{\Prove} \cproduces \{\overline{w_p},\overline{w_s}\}
} {
    \Sigma ; \Gamma \proves \TCnpAdmin*{e}{\sigmaP}{\rho} : \TProofOf* \produces \Compute
}

\DefineRule{SubClass} {
    CT(C) = \TClassN~C~\TExtendsN~D~\{\cdots\}
} {
    C <: D
}
\DefineRule{SubProof}{ }{\ProofOfN~\alpha <: \Proof}
\DefineRule{SubTrans} {
    \TTau_1 <: \TTau_2 \\
    \TTau_2 <: \TTau_3
} {
    \TTau_1 <: \TTau_3
}

\DefineRule[MethodOk]{Method-Ok}{
    \Sigma ; \overline{x} \ty \overline{\TTau_1}, \TThisN \ty C \proves e : \TTau \changes{\produces \ell_m} \\
    \canoverride(D, m, \overline{\TTau_1} \xrightarrow{\changes{\ell}} \TTau) \\\\
    \changes{\ell \subsetl \ell_m \cap \ell_C} \\
    CT(C) = \TClassN~C_{\changes{\{\ell_C\}}}~\TExtendsN~D_{\changes{\{\ell_D\}}}~\{\cdots\}
} {
    \Sigma \proves \TTau\changes{\{\ell\}}~m(\overline{x} \ty \overline{\TTau_1}) \{ e \} \text{ ok in }C
}
\DefineRule{Class-Ok-Ref}[ClassOkRef] {
    \fields(D) = \overline{g} : \overline{\TTau_g} \\
    \Sigma \proves \overline{M} \text{ ok in }C \\\\
    K = C(\overline{g} \ty \overline{\TTau_g}, \overline{f} \ty \overline{\TTau_f})\{ \TSuperN (\overline{g}) \seq \This .\overline{f} = \overline{f} \} 
} {
    \Sigma \proves \TClassN~C_{\changes{\{\Compute\}}}~\TExtendsN~D_{\changes{\{\Compute\}}}~\{\overline{f} \ty \overline{\TTau_f} \seq K \seq \overline{M}\} \text{ ok}
}
\DefineRule{Class-Ok-NoRef}[ClassOkNoRef] {
    \changes{\refunreach{\overline{\TTau_g}, \overline{\TTau_f}}} \\
    \changes{\ell_D \subsetl \ell_C} \\\\
    \fields(D) = \overline{g} : \overline{\TTau_g} \\
    \Sigma \proves \overline{M} \text{ ok in }C \\\\
    K = C(\overline{g} \ty \overline{\TTau_g}, \overline{f} \ty \overline{\TTau_f})\{ \TSuperN (\overline{g}) \seq \This .\overline{f} = \overline{f} \} 
} {
    \Sigma \proves \TClassN~C_{\changes{\{\ell_C\}}}~\TExtendsN~D_{\changes{\{\ell_D\}}}~\{\overline{f} \ty \overline{\TTau_f} \seq K \seq \overline{M} \} \text{ ok}
}
\DefineRule{CT-Ok}[CTOk] {
    C \text{ referenced } \Rightarrow C \in \dom(CT) \\\\
    \forall C \in \dom(C).~\Sigma \proves CT(C) \text{ ok}
} {
    \Sigma \proves CT \text{ ok}
}

\DefineCTypingRule{LiftVal} {
    \ell \in \{\Compute, \Prove\} \\\\
    \Sigma_\ell ; \Gamma |_\ell \proves v : \TTau 
} {
    \Sigma ; \Gamma \cproves \Lift{\ell}{v} : \tplab{\TTau}{\ell}
}

\DefineCTypingRule{LiftValCP} {
    \Theta : loc \times loc \rightharpoonup loc \\\\
    \Theta \text{ injective on} \dom(\SigmaCP) \subseteq \dom(\Theta) \\\\
    \Theta(\SigmaCP) ; \Gamma |_\ComputeProve \proves v : \TTau
} {
    \Sigma ; \Gamma \cproves \Lift{\ComputeProve}{v}_\Theta : \tplab{\TTau}{\ComputeProve}
}

\DefineCTypingRule{Val}{
    \Sigma ; \Gamma \cproves v : \tplab{\TTau}{\ell}
}{
    \Sigma ; \Gamma ; \Delta ; A \cproves v : \tplab{\TTau}{\ell} \cproduces A
}

\DefineCTypingRule{Lift} {
    \Sigma_\ell ; \Gamma |_\ell \proves e : \TTau \produces \ell \\
    \ell \in \{\Compute, \Prove\}
} {
    \Sigma ; \Gamma ; \Delta ; A \cproves \Lift{\ell}{e} : \tplab{\TTau}{\ell} \cproduces A
}

\newtoggle{LiftCPToggle}
\DefineCTypingRule{LiftCP} {
    {\begin{array}{@{}c@{}}
        \Theta(\SigmaCP) ; \Gamma |_\ComputeProve \proves e : \TTau \produces \ComputeProve \\
        \Theta : loc \times loc \rightharpoonup loc
        \iftoggle{LiftCPToggle}{ ~\text{injective on}~ \dom(\SigmaCP)}{}
        \iftoggle{LiftCPToggle}{ }{\\ \Theta ~\text{injective on}~ \dom(\SigmaCP) \subseteq \dom(\Theta)}
    \end{array}}
} {
    \Sigma ; \Gamma ; \Delta ; A \cproves \Lift{\ComputeProve}{e}_\Theta : \tplab{\TTau}{\ComputeProve} \cproduces A
}

\DefineCTypingRule{InputAssign} {
    \Delta(x) = \TTau \\
    x \not\in A \\
    \Sigma ; \Gamma \cproves v : \tplab{\TTau}{\Compute}
} {
    \Sigma ; \Gamma ; \Delta ; A \cproves \SWitAssign{x}{v} : \tplab{\Unit}{\ell} \cproduces A \cup \{x\}
}
\DefineCTypingRule{InputDeref} {
    \Delta(x) = \TTau \\ 
    x \in A
} {
    \Sigma ; \Gamma ; \Delta ; A \cproves \SWitDeref{x} : \tplab{\TTau}{\ComputeProve} \cproduces A
}

\DefineCTypingRule{If} {
    \Sigma ; \Gamma \cproves v : \tplab{\Bool}{\ell'} \\\\
    \Sigma ; \Gamma ; \Delta ; A \cproves e_1 : \tplab{\TTau}{\ell} \cproduces A' \\
    \Sigma ; \Gamma ; \Delta ; A \cproves e_2 : \tplab{\TTau}{\ell} \cproduces A'
} {
    \Sigma ; \Gamma ; \Delta ; A \cproves \SIfThenElse* : \tplab{\TTau}{\ell} \cproduces A'
}
\DefineCTypingRule{Let} {
    \tplab{\TTau_1'}{\ell_1'} <: \tplab{\TTau_1}{\ell_1} \\
    \Sigma ; \Gamma ; \Delta ; A \cproves e_1 : \tplab{\TTau_1'}{\ell_1'} \cproduces A_1 \\\\
    \Sigma ; \Gamma, x \ty \tplab{\TTau_1}{\ell_1} ; \Delta ; A_1 \cproves e_2 : \tplab{\TTau_2}{\ell_2} \cproduces A_2
} {
    \Sigma ; \Gamma ; \Delta ; A \cproves \SLetIn{x\ty\tplab{\TTau_1}{\ell_1}}{e_1}{e_2} : \tplab{\TTau_2}{\ell_2} \cproduces A_2
}

\DefineCTypingRule{ComputeAndProve} {
    \refunreach{\overline{\TTau_{x_p}}, \overline{\TTau_{x_s}}, \overline{\TTau_{y_p}}, \overline{\TTau_{y_s}}} \\
    \cdot ; \Gamma \cproves \overline{v_p} : \tplab{\overline{\TTau_{y_p}}}{\overline{\ell_{y_p}}} \\
    \cdot ; \Gamma \cproves \overline{v_s} : \tplab{\overline{\TTau_{y_s}}}{\overline{\ell_{y_s}}} \\\\
    \Compute \subsetl \overline{\ell_{y_p}} \cap \overline{\ell_{y_s}} \\
    \Gamma_\Compute = \tplab{(\Gamma |_\Compute)}{\Compute} \\
    \Gamma_\ComputeProve = \overline{y_p} \ty \tplab{\overline{\TTau_{y_p}}}{\ComputeProve}, \overline{y_s} \ty \tplab{\overline{\TTau_{y_s}}}{\ComputeProve} \\\\
    \Delta' = \overline{x_p} \ty \overline{\TTau_{x_p}}, \overline{x_s} \ty \overline{\TTau_{x_s}} \\
    (\SigmaC, \varnothing, \varnothing) ; \Gamma_\Compute, \Gamma_\ComputeProve ; \Delta' ; \varnothing \cproves e : \tplab{\Bool}{\Prove} \cproduces \{\overline{x_p}, \overline{x_s}\}
} {
  \Sigma;\Gamma;\Delta; A \cproves \SCnp* : \tplab{\ProofOf*}{\Compute} \cproduces A
}
\DefineCTypingRule{Combined}{
    \dom(\Delta') = \overline{x_p},\overline{x_s} \\
    (\SigmaC, \SigmaP', \SigmaCP'); \varnothing; \Delta'; A' \cproves e : \tplab{\Bool}{\Prove} \cproduces \{\overline{x_p},\overline{x_s}\}
}{
    \Sigma ; \Gamma ; \Delta ; A \cproves \SCnpAdmin*{e}{\sigmaP}{\rho} : \tplab{\SProofOf*}{\Compute} \cproduces A
}

\DefineRule[EPLetOneC]{EP-Let1C}{
    \pconf*{e_{\Compute 1}}{e_\Prove} \pstep[\trace] \pconf{e_{\Compute 1}'}{e_\Prove}{\sigmaC'}{\sigmaP}{\rho'}
} {
    \pconf*{\LetIn{x\ty\TTau}{e_{\Compute 1}}{e_{\Compute 2}}}{e_\Prove} \pstep[\trace] \pconf{\LetIn{x\ty\TTau}{e_{\Compute 1}'}{e_{\Compute 2}}}{e_\Prove}{\sigma_\Compute'}{\sigma_\Prove}{\rho'}
}
\DefineRule[EPLetOneP]{EP-Let1P}{
    \pconf*{e_\Compute}{e_{\Prove 1}} \pstep \pconf{e_\Compute}{e_{\Prove 1}'}{\sigmaC}{\sigmaP'}{\rho}
} {
    \pconf*{e_\Compute}{\LetIn{x\ty\TTau}{e_{\Prove 1}}{e_{\Prove 2}}} \pstep \pconf{e_\Compute}{\LetIn{x\ty\TTau}{e_{\Prove 1}'}{e_{\Prove 2}}}{\sigmaC}{\sigmaP'}{\rho}
}
\DefineRule[EPStepC]{EP-StepC}{
    \conf{e_\Compute}{\sigmaC} \tstep[\trace] \conf{e_\Compute'}{\sigmaC'} \\
} {
    \pconf*{e_\Compute}{e_\Prove} \pstep[\trace] \pconf{e_\Compute'}{e_\Prove}{\sigmaC'}{\sigmaP}{\rho}
}
\DefineRule[EPStepP]{EP-StepP}{
    \conf{e_\Prove}{\sigmaP} \tstep[\trace] \conf{e_\Prove'}{\sigmaP'} \\
} {
    \pconf*{e_\Compute}{e_\Prove} \pstep \pconf{e_\Compute}{e_\Prove'}{\sigmaC}{\sigmaP'}{\rho}
}
\newtoggle{EPWitAssignLinebreak}
\DefineRule[EPWitAssign]{EP-WitAssign}{ 
    w \not\in \dom(\rho)
} {
    {\begin{array}{@{}l@{}}
    \pconf*{w \witassign v}{e_\Prove}
    \iftoggle{EPWitAssignLinebreak}{ \\ \hspace*{3em} }{}
    \pstep[\opset(w, v)]
    \pconf{()}{e_\Prove}{\sigmaC}{\sigmaP}{\rho[w \mapsto v]}
    \end{array}}
}
\DefineRule[EPWitDerefC]{EP-WitDerefC}{
} {
    \pconf*{\witderef w}{e_\Prove} \pstep \pconf*{\rho(w)}{e_\Prove}
}
\DefineRule[EPWitDerefP]{EP-WitDerefP}{
} {
    \pconf*{e_\Compute}{\witderef w} \pstep \pconf*{e_\Compute}{\rho(w)}
}

\DefineRule{EA-Let1}[EALetOne] {
    \cpconf*{e_1} \cpstep[\trace] \cpconf{e_1'}{\sigmaC'}{\sigmaP'}{\rho'}
} {
    \cpconf*{\LetIn{x\ty\tplab{\TTau}{\ell}}{e_1}{e_2}} \cpstep[\trace] \cpconf{\LetIn{x\ty\tplab{\TTau}{\ell}}{e_1'}{e_2}}{\sigmaC'}{\sigmaP'}{\rho'}
}

\DefineRule{EA-Let2}[EALetTwo] { } {
    \cpconf*{\SLetIn{x\ty\tplab{\TTau}{\ell}}{\SVal_\annellp}{\SExp}} \cpstep \cpconf*{\SExp[x_\annell \mapsto \SVal_\annell]}
}

\DefineRule{EA-IfT}[EAIfT] { } {
    \cpconf*{\IfThenElse{(\True)_\annellp}{e_1}{e_2}} \cpstep \cpconf*{e_1}
}

\DefineRule{EA-IfF}[EAIfF] { } {
    \cpconf*{\IfThenElse{(\False)_\annellp}{e_1}{e_2}} \cpstep \cpconf*{e_2}
}

\DefineRule{EA-Ref-C}[EARefC] {
    \iota \not\in \dom(\sigma_\Compute)
} {
    \cpconf*{\Refsub{\Compute} (v)_\annell} \cpstep[\opalloc(\iota, \Lower{\Compute}{v})] \cpconf{r_\Compute(\iota)_\annc}{\subst{\sigmaC}{\iota}{\Lower{\Compute}{v}}}{\sigmaP}{\rho}
}

\DefineRule{EA-Ref-P}[EARefP] {
    \iota \not\in \dom(\sigma_\Prove)
} {
    \cpconf*{\Refsub{\Prove} (v)_\annell} \cpstep \cpconf{r_\Prove(\iota)_\annp}{\sigmaC}{\subst{\sigmaP}{\iota}{\Lower{\Prove}{v}}}{\rho}
}

\DefineRule{EA-Ref-CP}[EARefCP] { 
    \iota_\Compute \not\in \dom(\sigma_\Compute) \\ 
    \iota_\Prove \not\in \dom(\sigma_\Prove)
} {
    \cpconf*{\Refsub{\ComputeProve} (v)_\annell} \cpstep[\opalloc(\iota_\Compute, \Lower{\Compute}{v})] \cpconf{r_\ComputeProve(\iota_\Compute, \iota_\Prove)}{\subst{\sigmaC}{\iota_\Compute}{\Lower{\Compute}{v}}}{\subst{\sigmaP}{\iota_\Prove}{\Lower{\Prove}{v}}}{\rho}
}

\DefineRule{EA-Deref-C}[EADerefC] {
} {
    \cpconf*{\bang (r_\Compute(\iota))_\annc} \cpstep \cpconf*{(\Lift{\Compute}{\sigmaC(\iota)})_\annc}
}

\DefineRule{EA-Deref-P}[EADerefP] {
} {
    \cpconf*{\bang (r_\Prove(\iota))_\annp} \cpstep \cpconf*{(\Lift{\Prove}{\sigmaP(\iota)})_\annp}
}

\DefineRule{EA-Deref-CP}[EADerefCP] {
    v = \Lift{\Compute}{\sigma_\Compute(\iota_\Compute)} \sqcup \Lift{\Prove}{\sigma_\Prove(\iota_\Prove)}
} {
    \cpconf*{\bang (r_\ComputeProve(\iota_\Compute, \iota_\Prove))_\anncp} \cpstep \cpconf*{(v)_\anncp}
}

\DefineRule{EA-Assign-C}[EAAssignC] {
    \iota \in \dom(\sigma_\Compute)
} {
    \cpconf*{(r_\Compute(\iota))_\annc \coloneq (v)_\annell} \cpstep[\opset(\iota, \Lower{\Compute}{v})] \cpconf{()_\annellp}{\subst{\sigmaC}{\iota}{\Lower{\Compute}{v}}}{\sigmaP}{\rho}
}

\DefineRule{EA-Assign-P}[EAAssignP] {
    \iota \in \dom(\sigma_\Prove)
} {
    \cpconf*{(r_\Prove(\iota))_\annp \coloneq (v)_\annell} \cpstep \cpconf{()_\annellp}{\sigmaC}{\subst{\sigmaP}{\iota}{\Lower{\Prove}{v}}}{\rho}
}

\DefineRule{EA-Assign-CP}[EAAssignCP] {
    \iota_\Compute \in \dom(\sigma_\Compute)\\
    \iota_\Prove \in \dom(\sigma_\Prove)
} {
    \cpconf*{(r_\ComputeProve(\iota_\Compute, \iota_\Prove))_\anncp \coloneq (v)_\anncp} \cpstep[\opset(\iota_\Compute, \Lower{\Compute}{v})] \cpconf{()_\annellp}{\subst{\sigmaC}{\iota_\Compute}{\Lower{\Compute}{v}}}{\subst{\sigmaP}{\iota_\Prove}{\Lower{\Prove}{v}}}{\rho}
}

\DefineRule{EA-Cast}[EACast] {
    D <: C
} {
    \cpconf*{(C)(\Newl D(\overline{v}))_\annell} \cpstep \cpconf*{(\Newl D(\overline{v}))_\annell}
}

\DefineRule{EA-Field}[EAField] { } {
    \cpconf*{(\Newl C(\overline{v}))_\annell.f_i} \cpstep \cpconf*{(v_i)_\annell}
}

\DefineRule{EA-Call}[EACall] {
    \operatorname{mbody}(m,C) = (e, \overline{x}, \TTau)
} {
    \cpconf*{(\Newl C(\overline{v}))_\annell.m_{\ell'}(\overline{u})} \cpstep \cpconf*{\subst*{\LiftAnn{\ell'}{e}}{{\overline{x}}{(\overline{u})_\annell}{\ThisN}{(\Newl C(\overline{v}))_\annell}}}
}

\DefineRule{EA-InputAssign}[EAInputAssign] {
    x \not\in \dom(\rho)
} {
    \cpconf*{x \witassign (v)_\annell} \cpstep[\opset(x,v)] \cpconf{()_\annellp}{\sigmaC}{\sigmaP}{\rho[x \mapsto v]}
}

\DefineRule{EA-InputDeref}[EAInputDeref] {
} {
    \cpconf*{\witderef x} \cpstep \cpconf*{\rho(x)_\anncp}
}

\DefineRule{EA-Prove}[EAProve] { 
    \conf{\subst*{e}{{\overline{x}}{\overline{v}}{\overline{y}}{\overline{w}}}}{\varnothing} \tstep^* \conf{\True}{\_} 
} {
    \cpconf*{\Provet \alpha = \CircuitDefault \With (\overline{v})_\annell[(\overline{w})_\annell]} \cpstep[\opgen(\alpha, \overline{v}, \overline{w})] \cpconf*{(\ProofOf* ~\UsingN~ \overline{v})_\annc}
}

\DefineRule{EA-VerifyT}[EAVerifyT] {
} {
    \cpconf*{\Verify (\ProofOf* ~\UsingN~ \overline{v})_\annell \Proves \alpha ~\UsingN~ \overline{v}} \cpstep[\opverif(\alpha, \overline{u}, \top)] \cpconf*{(\True)_\annell}
}
\DefineRule{EA-VerifyF}[EAVerifyF] {
    \ProofOf{\beta}~\UsingN~{\overline{u}} \neq \ProofOf*~\UsingN~\overline{V}
} {
    \cpconf*{\Verify (\ProofOf{\beta} ~\UsingN~ {\overline{u}})_\annell \Proves \alpha ~\UsingN~ \overline{v}} \cpstep[\opverif(\alpha, \overline{u}, \bot)] \cpconf*{(\True)_\annell}
}

\section{Introduction}
\label{sec:introduction}

Non-Interactive Zero Knowledge~(NIZK) proofs are powerful cryptographic primitives
that allow a prover to construct a \emph{non-interactive proof}---a stand-alone value that anyone can verify---
that a predicate~$P$ holds on inputs~$\overline{x}$ in \emph{zero-knowledge}---hiding any elements of~$\overline{x}$.
The last decade has seen rapid advancements in the theory and practice of NIZK proofs
with the introduction of zero-knowledge Succinct Non-interactive Arguments of Knowledge~(zkSNARKs),
the first NIZK proof systems to support practical proofs for an arbitrary polynomial-time~$P$~\citep{snarks14}.
Efficient implementations of this powerful primitive~\citep{libsnark,gnark,circomCode} have led to an explosion of use
across both academia~\citep{zkCNN21,SinghDP22,cinderella16,zkcreds23,snarkblock22,FangDNZ21,GreenHHKPV23,GrubbsAZBW22}
and industry~\citep{zcash,railgun,noauthor_tornado_2024,aleo,aztec}.

Existing tools, while immensely effective, focus on how to efficiently generate cryptographic operations from a high-level description of the predicate~$P$.
As an isolated implementation of a cryptographic primitive, this focus is entirely sensible.
In a full system, however, there is an additional critical step: computing the inputs~$\overline{x}$.

While zkSNARKs are far more efficient than other general-purpose NIZK proof systems,
generating a proof that~$P(\overline{x})$ holds remains orders of magnitude slower than computing~$P(\overline{x})$ natively.
Developers therefore go to great lengths to optimize their predicates,
even when the more efficient predicates require different inputs that incur substantial preprocessing to construct.
For example, say an application needs to prove $P(\sqrt{x})$.
Taking squares is much faster than square roots,
so it is much faster to precompute~$y$ such that $y = \sqrt{x}$ and then prove the logically-equivalent statement $x = y^2 \land P(y)$.
Unlike this toy example, input transformations in real systems often require nontrivial control flow and careful reasoning.
By separating the input computation code from the predicate definition code, existing tooling drastically complicates development and maintenance.
Developers must keep two entirely separate implementations perfectly in sync to avoid critical security and correctness flaws in their systems.

A key insight of this work is that the input computation and predicate definition regularly exhibit a \emph{structural duality}:
one builds up an input, while the other tears it down to ascertain its validity.
These operations follow nearly identical control flow and perform dual operations at each step. 
To leverage this connection, we introduce the zero-knowledge Structural Duality~(\zkstrudul) calculus for programming with zkSNARKs.
The core feature of \zkstrudul is the \emph{compute-and-prove} block,
a new language abstraction that combines updates to the system state, input construction, and input verification
into a single piece of code without needing to duplicate shared instructions or control flow.
To enable this abstraction, \zkstrudul includes lightweight annotations tracked by a type system.
The annotations specify which operations---including control flow---are relevant to which procedures: \emph{compute}, for system updates and input construction, \emph{prove}, for specifying the predicate to prove, or both.
The type system ensures ``compute'' can execute entirely without the ``prove'' code,
and ``prove'' needs only the computed inputs from ``compute.''
Interleaving the logic of building and verifying inputs allows developers to reason about these dual operations more locally.

Importantly, \zkstrudul does not replace existing tools for building and compiling zkSNARKs.
It provides a cleaner programming abstraction to be layered \emph{on top} of cryptographic compilers
like Circom~\citep{circom23,circomCode} or \texttt{gnark}~\citep{gnark}
and supports important features like recursive proofs---generating a proof that $P(\overline{x})$ holds
when~$P$ includes verifying a proof of~$P(\overline{y})$~\citep{recursiveSnark14}.
To accomplish this goal, we show how to compile compute-and-prove blocks to a lower-level representation.
The core of \zkstrudul's compilation procedure is a \emph{projection} operation
that leverages the procedure annotations mentioned above to project a full compute-and-prove block to either the ``compute'' or ``prove'' code.
The full semantics of a compute-and-prove block is to first run the ``compute'' projection, executing standard code and constructing inputs, 
then pass the ``prove'' code to a zkSNARK library, which constructs a NIZK proof that the predicate defined by the ``prove'' projection holds on the constructed input.

This projection-based semantics models what would execute in a real system, but it remains a challenge to reason about directly.
Executing the ``compute'' procedure to completion before beginning ``prove''
does not correspond to any reasonable order in the source code and involves executing some instructions twice.
To simplify reasoning, we also provide an in-order semantics for compute-and-prove blocks
and prove that it defines the same behavior as the projection-based semantics.
We formalize this connection in two ways.
First, we show the two semantics are sound and complete relative to each other; they always produce the same outputs.
Second, we show the compiler defined by the projection-based translation satisfies
Robust Relational Hyperproperty Preservation~(RrHP), the strongest property in the robust compilation hierarchy~\citep{journey-beyond19},
and strictly stronger than full abstraction.

Notably, the non-local structure of the projection-based compilation
prevents us from using the ``lock step'' correspondence traditionally underpinning translation adequacy proofs.
Instead, we define a concurrent semantics where separated ``compute'' and ``prove'' procedures can interleave operations.
We then show that this concurrent semantics can exhibit the behavior of both the in-order source semantics and projected target semantics,
and that every concurrent execution produces the same result, regardless of ordering choices.

The core contributions of this work are as follows.
\begin{itemize}[leftmargin=*]
  \item Section~\ref{sec:core-calculus} presents \corelang, a core calculus with simple abstractions designed to reason about systems using NIZK proofs to which \zkstrudul compiles.
  \item Section~\ref{sec:overview} describes a novel approach to interleaving code for ``compute'' and ``prove'' procedures so that they can be cleanly separated by a compiler,
    which we instantiate with compute-and-prove blocks in the \zkstrudul calculus in Section~\ref{sec:combined-abstraction}.
  \item Using techniques inspired by concurrency theory,
    we prove in Section~\ref{sec:adequacy} that \zkstrudul's projection-based semantics are sound and complete with respect to its in-order semantics, and that the compilation between them satisfies RrHP.
\end{itemize}
We then present related work in Section~\ref{sec:related-work} and Section~\ref{sec:conclusion} concludes.

\section{Background and Motivating Examples}
\label{sec:motivation}
Before specifying the technical details of \zkstrudul, we provide some background
on zkSNARKs and NIZK proofs in general and two larger motivating examples.

\subsection{Non-Interactive Zero Knowledge (NIZK) Proofs}
\label{sec:nizk-bg}
A cryptographic zero-knowledge proof system allows a prover to convince a verifier that a \emph{statement} is true without revealing the \emph{witness} proving why.
In general, the prover can demonstrate that any polynomial-time predicate~$P$ holds on inputs~$\overline{x}$ while hiding any or all elements of~$\overline{x}$.
The hidden (or \emph{secret}) inputs form the witness, while the open (or \emph{public}) inputs define the instance of the problem.
A \emph{non-interactive} zero-knowledge proof is one where the prover generates a proof object~$\pi$ without interacting with a verifier.
Anyone can then verify~$\pi$ given~$P$, any public inputs, and appropriate public cryptographic parameters.

The most common general-purpose NIZK proofs are zero-knowledge Succinct Non-interactive Arguments of Knowledge~(zkSNARKs).
They have garnered wide adoption due to their support for arbitrary polynomial-time predicates~$P$
and highly efficient verification---proof sizes and verification times can be constant
with concrete values under 250 bytes and 3ms, respectively~\citep{Groth16,bellman}.
Proof \emph{generation} performance, however, remains a concern,
as predicates must be compiled to a restrictive set of low-level circuit operations, generally represented as a set of constraints~\citep{pinocchio16,Buterin16}.
Predicates not tailored to these specific operations can quickly require enormous constraint sets,
pushing proof generation to use gigabytes of memory and minutes of time~\citep{ErnstbergerCK+24}.
As a result, applications rarely use the version of a predicate that is the simplest to write and understand.
Instead developers look for the most efficient logically-equivalent predicate,
even if it requires additional inputs that the application must compute locally before generating the proof.

\para{Probabilistic Proofs}
Like most cryptographic objects, zkSNARKs are probabilistic---relying on randomness to ensure both soundness and zero-knowledge---
but their security and correctness can be specified in relation to an \emph{ideal functionality}, $\mathcal{F}$, with information theoretic properties~\citep{DolevYao83,Canetti-UC01}.
Implementing $\mathcal{F}$ would usually require a fully trusted third party,
but the cryptographic protocol is built to be indistinguishable from~$\mathcal{F}$ to a computationally-bounded attacker.
Instead of reasoning directly about the implementation of zkSNARKs, we abstract away the probabilistic details and reason about this idealized version~$\mathcal{F}$.
In particular, we model two proofs as being equivalent if they prove the same proposition with the same inputs, even if their bit-level representations may differ due to randomness.

\subsection{Example: Merkle Proofs}
\label{sec:merkle-example}

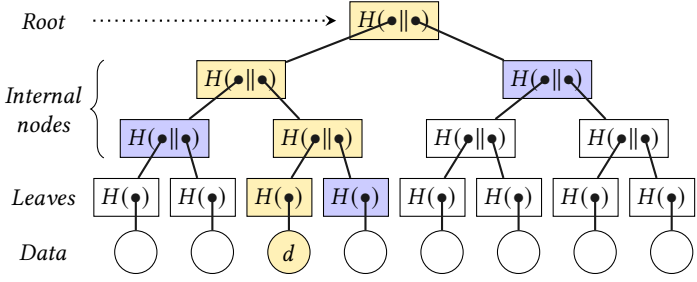
\begin{figure}
  \definecolor{bullet-col}{gray}{0.1}
  \newcommand{\bul}{{\color{bullet-col}\bullet}}
  \newlength{\lvlgap}
  \setlength{\lvlgap}{0.75em}
  \newlength{\datagap}
  \setlength{\datagap}{0.5em}
  \newcommand{\treeLabFont}{\itshape\small}
  \newlength{\treeLabWidth}
  \setlength{\treeLabWidth}{\widthof{\treeLabFont Internal\kern3pt}}
  \begin{tikzpicture}[
      every node/.style={transform shape,font=\small},
      node/.style={rectangle,draw=black,inner sep=3pt},
      path/.style={fill=Goldenrod!40},
      copath/.style={fill=blue!20},
      data/.style={circle,inner sep=0pt,minimum height=1.6em,draw=black},
      tree edge/.style={draw,bullet-col,thick},
      label/.style={align=center, font=\treeLabFont, text width=\treeLabWidth},
    ]
    \node[node] (l3-0) {$H(\bul)$};
    \node[node,right=0.4em of l3-0] (l3-1) {$H(\bul)$};
    \node[path,node,right=0.4em of l3-1] (l3-2) {$H(\bul)$};
    \node[copath,node,right=0.4em of l3-2] (l3-3) {$H(\bul)$};
    \node[node,right=0.4em of l3-3] (l3-4) {$H(\bul)$};
    \node[node,right=0.4em of l3-4] (l3-5) {$H(\bul)$};
    \node[node,right=0.4em of l3-5] (l3-6) {$H(\bul)$};
    \node[node,right=0.4em of l3-6] (l3-7) {$H(\bul)$};

    \node[data,below=\datagap of l3-0,xshift=0.36em] (d0) {};
    \node[data,below=\datagap of l3-1,xshift=0.36em] (d1) {};
    \node[data,below=\datagap of l3-2,xshift=0.36em,path] (d2) {$d$};
    \node[data,below=\datagap of l3-3,xshift=0.36em] (d3) {};
    \node[data,below=\datagap of l3-4,xshift=0.36em] (d4) {};
    \node[data,below=\datagap of l3-5,xshift=0.36em] (d5) {};
    \node[data,below=\datagap of l3-6,xshift=0.36em] (d6) {};
    \node[data,below=\datagap of l3-7,xshift=0.36em] (d7) {};

    \node[copath,node,anchor=south] (l2-0) at ($(l3-0.north)!0.5!(l3-1.north)+(0,\lvlgap)$) {$H(\bul\|\bul)$};
    \node[path,node,anchor=south] (l2-1) at ($(l3-2.north)!0.5!(l3-3.north)+(0,\lvlgap)$) {$H(\bul\|\bul)$};
    \node[node,anchor=south] (l2-2) at ($(l3-4.north)!0.5!(l3-5.north)+(0,\lvlgap)$) {$H(\bul\|\bul)$};
    \node[node,anchor=south] (l2-3) at ($(l3-6.north)!0.5!(l3-7.north)+(0,\lvlgap)$) {$H(\bul\|\bul)$};

    \node[path,node,anchor=south] (l1-0) at ($(l2-0.north)!0.5!(l2-1.north)+(0,\lvlgap)$) {$H(\bul\|\bul)$};
    \node[copath,node,anchor=south] (l1-1) at ($(l2-2.north)!0.5!(l2-3.north)+(0,\lvlgap)$) {$H(\bul\|\bul)$};

    \node[path,node,anchor=south] (root) at ($(l1-0.north)!0.5!(l1-1.north)+(0,\lvlgap)$) {$H(\bul\|\bul)$};

    \draw[tree edge] (d0) -- ($(l3-0)+(0.36em,0)$);
    \draw[tree edge] (d1) -- ($(l3-1)+(0.36em,0)$);
    \draw[tree edge] (d2) -- ($(l3-2)+(0.36em,0)$);
    \draw[tree edge] (d3) -- ($(l3-3)+(0.36em,0)$);
    \draw[tree edge] (d4) -- ($(l3-4)+(0.36em,0)$);
    \draw[tree edge] (d5) -- ($(l3-5)+(0.36em,0)$);
    \draw[tree edge] (d6) -- ($(l3-6)+(0.36em,0)$);
    \draw[tree edge] (d7) -- ($(l3-7)+(0.36em,0)$);

    \draw[tree edge] (l3-0) -- ($(l2-0)+(-0.07em,0)$);
    \draw[tree edge] (l3-1) -- ($(l2-0)+(0.8em,0)$);
    \draw[tree edge] (l3-2) -- ($(l2-1)+(-0.07em,0)$);
    \draw[tree edge] (l3-3) -- ($(l2-1)+(0.8em,0)$);
    \draw[tree edge] (l3-4) -- ($(l2-2)+(-0.07em,0)$);
    \draw[tree edge] (l3-5) -- ($(l2-2)+(0.8em,0)$);
    \draw[tree edge] (l3-6) -- ($(l2-3)+(-0.07em,0)$);
    \draw[tree edge] (l3-7) -- ($(l2-3)+(0.8em,0)$);

    \draw[tree edge] (l2-0) -- ($(l1-0)+(-0.07em,0)$);
    \draw[tree edge] (l2-1) -- ($(l1-0)+(0.8em,0)$);
    \draw[tree edge] (l2-2) -- ($(l1-1)+(-0.07em,0)$);
    \draw[tree edge] (l2-3) -- ($(l1-1)+(0.8em,0)$);

    \draw[tree edge] (l1-0) -- ($(root)+(-0.07em,0)$);
    \draw[tree edge] (l1-1) -- ($(root)+(0.8em,0)$);

    \node[label,anchor=east] (root-label) at (root-|l3-0.west) {Root};
    \draw[dotted,thick,-stealth] (root-label) -- ($(root.west)-(0.5em,0)$);
    \draw[decorate,decoration={brace,amplitude=5pt,raise=-0.4em}] (l2-0.south-|l3-0.west)
      -- node[label,left] {Internal nodes} (l1-0.north-|l3-0.west);
    \node[label,anchor=east] at (l3-0.west) {Leaves};
    \node[label,anchor=east] at (d0-|l3-0.west) {Data};
  \end{tikzpicture}
  \caption{A Merkle tree with that path to data~$d$ in yellow and the siblings highlighted in blue.}
  \label{fig:merkle-tree}
\end{figure}

To motivate our approach, the rest of the section presents two applications where we can leverage the structural duality between computing and verifying inputs.
Our first example involves building Merkle Proofs, which verify a desired property of some data within a larger dataset.
To demonstrate, we will consider a Merkle Proof for correctly updating some datum in the dataset.

The full set of data is stored as an ordered list in a commonly-used authenticated data structure called a Merkle tree~\citep{merkle89}.
In a Merkle tree, data is stored at the leaves of a binary tree,
and each node is given a value by hashing the values of its two children (or the associated data in the case of a leaf),
as shown visually in Figure~\ref{fig:merkle-tree}.
The value assigned to the root, called the ``root hash,'' is a succinct identifier to authenticate all data in the tree.

To verify a datum~$d$ is at a given location in the tree,
one can iteratively hash~$d$ with the sibling of each node along the path from~$d$ to the root,
and check that the result matches the known root hash~$h$.
The \emph{sibling path}, shown in blue in Figure~\ref{fig:merkle-tree},
serves as an efficient proof that~$d$ lies at the specified location in the tree with root hash~$h$.

One common combination of Merkle trees and zkSNARKs is to prove in zero-knowledge that data is updated one value at a time
and in accordance with some application-specific policy~\citep[e.g.,][]{zerocash14,solidus17,zexe20}.
More precisely, every update produces a new root hash~$h'$ and a proof~$\pi$
that the trees identified by~$h$ and~$h'$, respectively, differ only at a single location and that difference comports with the update policy.
While one could theoretically construct a zkSNARK to prove this fact directly,
checking every value in a large set to construct the proof would be prohibitively expensive.
The Merkle tree structure provides a critical optimization:
verifying~$d$ hashes to~$h$ and~$d'$ hashes to~$h'$ with the \emph{same sibling path}
is sufficient to show (with overwhelming probability) that no other value has changed.
To see why, notice that, when updating~$d$ to~$d'$, the hash values in the tree change only
for nodes along the path from~$d$ to the root---the yellow nodes in Figure~\ref{fig:merkle-tree}---%
leaving the blue sibling's values unchanged.
If any other value changes, however, one of the sibling node's values must change as well.
An optimized zkSNARK will thus check that (1)~$d$ and~$d'$ hash to~$h$ and~$h'$, respectively, using the same sibling list,
and (2)~changing~$d$ to~$d'$ follows the application's policies.

\begin{figure}
  \centering
  \mbox{\lstinputlisting{merkle.txt}}
  \caption{Combined code to compute a Merkle sibling path and verify that two root hashes correspond to the same tree except for a change at the leaf at \lstinline{PATH}.}
  \label{fig:merkle-code}
\end{figure}

A critical input to this proof is the sibling path.
The system must construct it before building the proof, and the proof specification must tear it down---dual operations.
Since this proof is associated with a data update, the system must also update its state to the new value.
Compute-and-prove blocks allow all three pieces of code to be implemented together, as shown in Figure~\ref{fig:merkle-code}.
Variables written in all caps are assumed to be in scope outside of the compute-and-prove block:
\lstinline{DEPTH} is the depth of the Merkle tree,
\lstinline{ROOT_HASH} is the previous root hash,
\lstinline{PATH} specifies the location of the data in the tree and is ordered leaf-to-root,
and \lstinline{OLD_DATA} and \lstinline{NEW_DATA} are the old and new data values, respectively.

Lines~\ref{lst:li:merkle:cnp-open}--\ref{lst:li:merkle:inputs} open a compute-and-prove block
that computes two inputs: the new root hash, which is public, and the sibling list, which is secret.
It also includes the old root hash, tree depth, data path, and old and new data values as already in-scope inputs.
These values are available to all parts of the computation,
while in-scope variables not specified are unavailable in the proof specification.

To update the system state, the code fetches the leaf containing the old data (line~\ref{lst:li:merkle:node-def}),
updates it (line~\ref{lst:li:merkle:data-update}),
and updates the hash value of each node from that leaf to the root (lines~\ref{lst:li:merkle:hash-set-1} and~\ref{lst:li:merkle:hash-set-2}).
Notice the \Compute annotation on line~\ref{lst:li:merkle:node-def}, which indicates that the \lstinline{node} variable,
and subsequently all other operations that use it, occur only in the \emph{compute} code and are not directly part of the proof.

To compute the sibling list, the code loops over the provided input path (line~\ref{lst:li:merkle:loop-start})
moving one layer up the tree from the leaf to the root each loop.
\lstinline{PATH} indicates if the node was a right or left child, so line~\ref{lst:li:merkle:sib-select} selects the correct sibling
and line~\ref{lst:li:merkle:wit-assign} puts its hash value into the computed sibling list.
The notation ``\lstinline{<-}'' on line~\ref{lst:li:merkle:wit-assign} is a special assignment operator used for computed inputs.
It is the only way to assign to computed inputs listed on line~\ref{lst:li:merkle:cnp-open} and each input must be assigned to exactly once and before it is used.
As we will see in Section~\ref{sec:combined-abstraction},
these restrictions are critical for ensuring that in-order execution behaves the same as the projected code
that computes updates and inputs to completion before beginning the proof.

Lastly, the code specifies how to use the computed sibling list to recompute the root hashes and verify they match the expected values.
Notably, the old root hash is already available, while the new one must be computed as a fresh input.
Therefore, the derivation of the old hash is only needed for the proof itself, as indicated by the \Prove annotation on line~\ref{lst:li:merkle:old-hash-def},
while both the compute and prove code must construct the new hash, indicated by the \ComputeProve annotation on line~\ref{lst:li:merkle:new-hash-def}.
Both are initialized to the hash of the respective values (lines~\ref{lst:li:merkle:old-hash-def} and~\ref{lst:li:merkle:new-hash-def}),
and then updated each iteration of the loop based on their previous value and the computed sibling hash (lines~\ref{lst:li:merkle:wit-use-start}--\ref{lst:li:merkle:wit-use-end}).
To verify the hashes match, line~\ref{lst:li:merkle:new-root-hash} sets the new root hash on the compute side,
and line~\ref{lst:li:merkle:assert-1} asserts that the hash matches on the prove side.
Finally, line~\ref{lst:li:merkle:assert-2} asserts the update follows the application-specific update policies.

Note that both the compute and prove code need the exact same \lstinline{for} loop.
By marking the loop variable with \ComputeProve, we indicate that the loop will execute in both computations, allowing us to only write it once.
The compiled code will have two copies of the loop---and any other \ComputeProve computations---one in each generated block.
Importantly, notice how input computation (line~\ref{lst:li:merkle:wit-assign}) and input usage in proof generation (lines~\ref{lst:li:merkle:wit-use-start}-\ref{lst:li:merkle:wit-use-end}) are now adjacent.
Writing this compute-and-prove block thus enables local reasoning between the two computations.

\subsection{Example: Zero Knowledge Virtual Machines}
\label{sec:zkvm-example}
Our second motivating example is building a zero-knowledge Virtual Machine~(zkVM), which is an environment that runs a program and produces a NIZK proof that the execution was correct.
Users can prove the correctness of their computation without revealing its inputs, internal state, or control flow.
This powerful tool, which allows for general zero-knowledge computation that is impractical for monolithic circuits,
is part of a broader trend to develop high-level languages and interpreters for generating proofs with zkSNARKs~\citep[e.g.,][]{lurk23,circ22}.

Typically, a zkVM simulates a low-level instruction set and proves that each instruction adheres to its defined semantics.
Proving a single step at a time is straightforward.
As with a conventional interpreter, it updates its internal state (program counter, registers, etc.),
and then produces a zkSNARK that the new state is the correct result of applying the last instruction to the prior state.

\begin{figure}
  \centering
  \mbox{\lstinputlisting[mathescape=true]{zkvm.txt}}
  \caption{Verifying correct interpretation by recursively verifying previous steps and then one more.}
  \label{fig:zkvm-code}
\end{figure}

To handle a bounded number of steps, one could simply unroll the loop,
but zkSNARK generation cost scales with the \emph{maximum} number of steps, regardless of when the program terminates.
With any bound large enough for interesting programs, this approach becomes prohibitively expensive.
Instead, the literature suggests using \emph{recursive proofs}~\citep{recursiveSnark14} to allow incremental proof generation.
Each step inductively \emph{verifies} a proof that all previous steps were correct before proving the next step is correct.
In our example, step~$n$ produces a proof~$\pi_n$ that a program~$e$ run for~$n$ steps from configuration~$C_0$ produces configuration~$C_n$.
A simple structure for this proof would take~$e$, $n$, $C_0$, and~$C_n$ as public inputs, and~$C_{n-1}$ and $\pi_{n-1}$ as hidden inputs,
and \emph{verify}~$\pi_{n-1}$ with inputs~$e$, $n-1$, $C_0$, and~$C_{n-1}$ and then prove that a single step takes~$C_{n-1}$ to~$C_n$.

This structure, while simple, is highly inefficient.
Which step occurs determines how to transform the configuration, but that step depends on the \emph{input} configuration~$C_{n-1}$, which is a hidden input.
To assure zero-knowledge, if control flow depends on hidden inputs,
proof generation must consider every branch that could be taken in its circuit~\citep{Groth16},
incuring the computational cost of executing every one.
To avoid this overhead, the input configuration~$C_{n-1}$ must be a \emph{public} input.
This change means that verifying~$\pi_{n-1}$ requires~$C_{n-2}$, but that can safely be provided as a \emph{secret} input when constructing~$\pi_n$.

This inductive proof structure follows precisely the logic of a standard multi-step semantics and correctly verifies the whole execution.
A system wishing to hide the number of steps or (part of) the final configuration~$C_\mathit{fin}$
could simply produce a zkSNARK with \emph{hidden} inputs~$n$, $\pi_n$, the second-to-last configuration~$C_{\mathit{fin} - 1}$, and (part of)~$C_\mathit{fin}$
that checks that~$C_\mathit{fin}$ is a terminated configuration and verifies~$\pi_n$ with inputs~$e$, $n$, $C_0$, $C_{\mathit{fin} - 1}$, and $C_\mathit{fin}$.

Currently, developers of zkVMs must specify execution semantics (including state update) and proof generation separately.
As with sibling paths in the Merkle tree example above, the configuration changes are dual operations;
they are built when updating the state and examined during proof generation.
Compute-and-prove blocks allow these to be written together,
with each operation's verification logic directly adjacent to the system update code.
We show a partial implementation of precisely the algorithm above in Figure~\ref{fig:zkvm-code}.
For simplicity, we assume global constants \lstinline{MAX_STEPS}, \lstinline{ARG_LEN}, and \lstinline{REG_LEN},
specifying the maximum number of execution steps, argument length, and register length, respectively.

Note a few interesting structures.
First, the compute-and-prove block is computing the configuration~\lstinline{C_out},
as the configuration resulting from a step is precisely what is both computed and proven.
Second, line~\ref{lst:li:zkvm:recurs-step} does the recursive proof checking.
The base case (line~\ref{lst:li:zkvm:recurs-init}) is when this is the first step and $\text{\lstinline{C_in}} = \text{\lstinline{C_INIT}}$.
At each subsequent step, there must be a valid recursive proof of the previous steps.
The performance-based choice to mark~\lstinline{C_in} as public impacts this line, but otherwise adds no complication to the code.

Finally, the compute-and-prove structure allows the code to execute an instruction and the code to verify that execution to be placed immediately adjacent.
Inside the \lstinline{ADD} branch of the switch-case, line~\ref{lst:li:zkvm:input-assign} assigns the newly-computed configuration to the computed input variable~\lstinline{C_out}.
Immediately afterward is the code to verify this update, specific to addition.
Notice the \Prove label in the loop on line~\ref{lst:li:zkvm:assert-loop-open} indicates the loop only corresponds to proof generation.

As in the Merkle tree example, a compiler can take this combined code and extract two procedures, one to execute updates and construct inputs, and the second to generate a zkSNARK.
\section{Designing Compute-and-Prove}
\label{sec:overview}

The primary goal of our compute-and-prove abstraction is to allow developers to interleave the instructions for two dual procedures:
a \emph{compute} procedure to update state and compute proof inputs, and a \emph{prove} procedure defining the predicate to prove.
A compiler then needs to extract each procedure from this combined code, a process we call \emph{projection}.
Supporting this process while ensuring the projected programs do not attempt to, for example, use data they do not have access to poses several key challenges.

First, to support not only the interleaving of code, but the use of the same code across both processes, we require that both use the same syntax.
However, many NIZK proof systems restrict the language features available inside predicates, and the prove procedure must define a valid predicate.
To allow the procedures to support different sets of language features while using a common syntax, we turn to a type system.
We augment typing judgments with a label~$\ell$ indicating where a given expression can execute:
\Compute if the expression is valid in general-purpose (compute) code,
\Prove if the expression is valid inside a proof predicate (prove code),
or \ComputeProve if it is valid in both.
An expression mixing features available only in general-purpose code with features available only in proof predicates
will have no valid label and will be considered ill-typed.
This approach is highly modular with respect to different NIZK backends;
when compiling against a more restrictive backend, more operations will use a~\Compute label, rather than~\ComputeProve.

Second, to project a compute-and-prove block, a compiler must be able to determine which code corresponds to which procedure.
In the absence of a syntactic distinction, we again turn to the same procedure labels.
Inside a compute-and-prove block, however, placing labels on the judgments is no longer effective, as code for both procedures is interleaved by design.
Instead, we push the procedure labels from the judgment level into the types themselves.
Outside compute-and-prove blocks, the labels represent where an operation can execute.
Inside compute-and-prove blocks, they indicate where a value must appear, and thus where code should run.
As we will see in Section~\ref{sec:combined-types}, this structure also allows
us to build the type system inside compute-and-prove blocks directly from the type system outside, incorporating its permissions and restrictions automatically.
If an expression type-checks with type~$\tau$ and label~$\ell$ outside, it type-checks with type~$\tau^\ell$ inside.

These labels also make it easy to avoid either procedure attempting to use data only available in the other.
The type system can easily verify that a computation producing a value of type~$\tau^\Compute$ cannot use data with label~$\Prove$.
Indeed, the only interaction between the procedures must come through explicitly declared \emph{compute inputs}---precisely the inputs compute-and-prove blocks are designed to support.
Computed inputs begin uninitialized, are constructed by the compute procedure, and are examined by the prove procedure.
The type system verifies that each computed input is initialized exactly once by the compute procedure.

Finally, while general-purpose code may depend on or modify contextual information---
like memory state, which node in a system runs the code, or the time and date---proof predicates cannot.
To prove a meaningful statement anyone can verify, they must produce the same result in every context for a given input.
We therefore create a notion of \emph{context-aware} types whose behavior might depend on the context.
These types could include references into a heap or data structures containing context-aware types,
while non--context-aware types might include primitive integers and booleans or data structures containing only non--context-aware types.
Inputs to proofs---computed by compute-and-prove or provided directly---are then restricted to not be context-aware, ensuring that the predicate will mean the same thing in every context.

The calculus presented in Sections~\ref{sec:core-calculus} and~\ref{sec:combined-abstraction} shows how these principles enable
compute-and-prove blocks with semantics matching standard in-order execution
and language features sufficient to express the examples detailed above in Section~\ref{sec:motivation}, including objects and mutable references.
\section{\corelang: A Core Calculus for NIZK Systems}
\label{sec:core-calculus}
We begin by introducing the Featherweight Java NIZK~(\corelang) calculus, an object-oriented core calculus for describing systems using NIZK proofs.
We choose an object-oriented structure for two reasons.
First, it can model zero-knowledge systems by representing each party as an object with methods providing interfaces for parties to interact.
Second, it demonstrates how our approach can handle complex language features used by zero-knowledge systems including objects with inheritance and mutable state~\citep{cairo,zokrates18,o1js}.

\begin{figure}[b]
  \rulefiguresize
  \[
    \begin{array}{rcl}
      f,m,x,\alpha & \in & \mathcal{V} \quad \text{(field, method, variable, proof names)}
      \\[2pt]
      \TTau & \Coloneqq & \TInt \alt \TBool \alt \TUnit \alt \TRefTau* \alt C \alt \TProofOf* \alt \TProof
      \\[2pt]
      \mathit{CL} & \Coloneqq & \TClassListSyntax
      \\[2pt]
      K & \Coloneqq & \TConstructor
      \\[2pt]
      M & \Coloneqq & \TTau\{\ell\}~ m(\overline{x} \ty \overline{\TTau}) \{ e \}
      \\[2pt]
      \TVal & \Coloneqq & \TVar* \alt \TUnitVal \alt \TTrue \alt \TFalse \alt \TNew* \alt \iota \alt \TProofOfUsing*
      \\[2pt]
      \TExp & \Coloneqq & v \alt \TRefVal* \alt \TDeref* \alt \TAssign{\TVal_1}{\TVal_2} \alt \TCast* \alt \TField* \alt \TCall* \\
      & \alt & \TLetIn* \alt \TIfThenElse* \\
      & \alt & \TProveUsing* \alt \TVerify*
    \end{array}
  \]
    \caption{\corelang Syntax, where $\ell \in \{\Compute, \Prove, \ComputeProve\}$ is a procedure label.}
    \label{fig:core-syntax}
\end{figure}

The syntax of \corelang, shown in Figure~\ref{fig:core-syntax}, is based on Featherweight Java~(FJ)~\citep{featherweightJava01}
with standard mutable references~\citep[Chapter 13]{tapl} and two primitives abstracting the essential operations of NIZK proofs: \TProveN and \TVerifyN.
To simplify analysis, subexpressions are (open) values except for explicit sequencing through \TLetN expressions and the branches of \TIfN statements.
We use \TrgProg{orange~teletype} font for \corelang syntax to visually differentiate from the combined code of compute-and-prove blocks presented below.

Recall a NIZK proof demonstrates that~$P(\overline{x})$ holds for some predicate~$P$ while hiding any combination of values in~$\overline{x}$.
The \TProveN primitive constructs such a proof of a predicate defined by expression~$e$.
The predicate uses the same syntax as the rest of the language, which will allow us to interleave its definition with general-purpose code
in the compute-and-prove blocks defined in Section~\ref{sec:combined-abstraction}.
The inputs to the predicate are bound as formal arguments~$\overline{x}$ and~$\overline{y}$, representing public and secret inputs, respectively,
and the predicate is evaluated on inputs~$\overline{v_x}$ and~$\overline{v_y}$.
The identifier~$\alpha$ provides a name to this predicate that must be globally unique.
In addition to being convenient, the predicate body can reference this name, supporting recursive proofs like those used in the zkVM example above (Section~\ref{sec:zkvm-example}).
To verify a NIZK proof, we include the term \TVerifyN,
which takes a proof~$v$, a predicate identifier~$\alpha$, and a set of \emph{public} inputs~$\overline{v}$.

\subsection{Semantics}
\begin{figure}
    \begin{ruleset}
        \EProveRule
        \and
        \EVerifyTRule
        \and
        \EVerifyFRule
    \end{ruleset}
    \caption{Semantic Rules for NIZK Proof Terms}
    \label{fig:selected-core-semantics}
\end{figure}

We equip \corelang with a small-step operational semantics with traces.
The rules for all FJ constructs and mutable references are standard,
with creation and assignment to references creating trace events
that we will use to demonstrate that our compute-and-prove blocks behave as expected.
These rules can be found in \Cref{app:core-semantics}.
Figure~\ref{fig:selected-core-semantics} presents the semantic rules for our new \TProveN and \TVerifyN operations,
each of which also produce appropriate trace events.

To ensure \TProveN models the behavior of cryptographic proof generation,
\ruleref{E-Prove} evaluates the predicate defined by~$e$ on the supplied inputs and produces a proof, represented $\TProofOfUsing*$, if the predicate terminates with \TTrue.
This value records the information needed to properly verify the proof: the predicate identifier~$\alpha$ and public inputs~$\overline{v}$.
There are two interesting properties to note about this rule.
First, evaluation of the predicate~$e$ begins with an empty heap.
This models the need for proof generation to be context independent;
a verifier will not have access to the prover's heap, so the proof cannot safely use references.
Second, if~$e$ fails to evaluate to~\TTrue---either because it diverges or produces a different value---the expression will get stuck.
This structure correctly models the underlying cryptography which can only generate a proof when the predicate holds.

While \TProveN can only construct valid proofs, a verifier may attempt to check \emph{any} purported proof.
We therefore allow \TVerifyN to step with any proof object,
returning \TTrue if the provided proof proves the expected statement---predicate~$\alpha$ with public inputs~$\overline{v}$---(\ruleref{E-VerifyT})
and \TFalse otherwise (\ruleref{E-VerifyF}).

As noted in Section~\ref{sec:nizk-bg}, both rules treat proofs entirely symbolically, abstracting away the probabilistic nature of the underlying cryptography.

\subsection{Type System}
Recall from Section~\ref{sec:overview} that the unified syntax for general-purpose (\emph{compute}) code and predicate (\emph{prove}) code
causes us to push two important checks into the type system.
First, the typing judgment tracks where any given expression can occur: compute, prove, or both.
Second, we must prevent proofs from relying on inputs with context-aware types.

The type system for \corelang has two typing judgments, one for values and one for expressions.
The value judgment takes the form $\Sigma; \Gamma \proves v : \TTau$, where $\Sigma$ is a heap context mapping references to types, and $\Gamma$ is a typing context mapping variables to types.
The rules are entirely standard, with an additional rule giving $\TProofOfUsing*$ the type~$\ProofOf*$.

For the expression judgment, recall from Section~\ref{sec:overview} that the unified syntax
for general-purpose (\emph{compute}) code and predicate (\emph{prove}) code causes us to push two important checks into the type system.
First, the typing judgment tracks the procedure(s) in which an expression can occur: compute (\Compute), prove (\Prove), or both (\ComputeProve).
The expression judgment thus takes the form $\Sigma; \Gamma \proves e : \TTau \produces \ell$ where $\ell \in \{\Compute, \Prove, \ComputeProve\}$ is a procedure label.
We consider these labels sets of procedures with $\ComputeProve = \Compute \cup \Prove$.

Second, we must prevent proofs from relying on inputs with context-aware types.
In \corelang, the only context-aware values are memory locations, so we define the predicate $\refunreach{\TTau}$
that holds precisely when no value of type~$\TTau$ can store a pointer into the heap, directly or indirectly.
Since \corelang includes objects, that includes reference types, classes with fields of reference types,
and classes with fields of a class that might itself store a pointer into the heap.

Notably, inheritance complicates this check substantially.
Just because a class~$C$ does not declare any context-aware fields does not mean it cannot be extended by a class~$D$ that does.
In this circumstance, it is impossible to know if a value of type~$C$ is context-aware, as its runtime type could be~$D$.
We address this concern by tagging each class declaration with whether or not it is context-aware.
A class with context-aware fields may only extend another context-aware class.
Note that, given a complete full class table, these tags are simple to infer with a simple subtyping graph and reachability analysis.

\begin{figure}
    \begin{ruleset}
      \SetRuleLabelLoc{lab}
        \TProveRule
        \hspace*{0.25em}
        \TVerifyRule
        \and
        \MethodOk
        \and
        \TCallRule
      \end{ruleset}
    \caption{Selected Typing Rules for \corelang, where notable changes to existing rules from Featherweight Java with mutable references~\citep{featherweightJava01, tapl} are colored in \changes{red}.}
    \label{fig:selected-core-types}
\end{figure}

A selection of typing rules in Figure~\ref{fig:selected-core-types} illustrate how the type system combines these features
(the full type system is available in \Cref{app:core-types}).
In \ruleref{T-Prove}, since~$e$ defines the predicate to prove, it must be well-typed to return a boolean when given the specified inputs,
using only operations available to predicates---the label~$\ell$ is~$\Prove$.
It will be passed inputs~$\overline{v_x}$ and~$\overline{v_y}$, so those must have appropriate types.
Also, none of the input types can be context-aware.
Finally, we disallow \TProveN inside proof predicates, as computing a proof inside a proof would be entirely pointless, so the label on the \TProveN computation is~\Compute.

\ruleref{T-Verify} is simpler, ensuring it is given an actual proof---note that $\ProofOf*$ is a subtype of~$\Proof$ for all~$\alpha$---and that the public inputs $\overline{v}$ have the correct types.
As both general-purpose code and predicates can verify proofs, the procedure label~$\ell$ is unconstrained.
Note that it uses the auxiliary lookup function $\ptypes(\alpha)$ to identify the types of the public inputs, just as FJ uses similar lookup functions to identify the types of methods and fields of classes.

The procedure separation tracked by labels~$\ell$ also extends to methods, as the body of any method must be able to execute in whichever procedure calls it.
To enforce this restriction, each method definition carries a syntactic label~$\ell$ and its method body~$e$ is type checked against that label, as formalized by \ruleref{Method-Ok}.
The condition $\ell \subseteq \ell_m \cap \ell_C$ ensures that~$\ell$ is restricted by both the operations in its body, and the context-awareness tag on the class itself,
ensuring that context-aware fields do not leak into proof predicates.
\ruleref{T-Call} then properly restricts the caller of this method to the labeled procedure.

This type system is sound with respect to the operational semantics,
with precise exceptions for invalid casts (inherited from FJ), or \TProveN terms that do not evaluate to true.
These cases are included in the predicate $\stuckpred{e'}{\sigma}$, defined in \Cref{app:config-safety-def}.
\begin{restatable}[Type Soundness of \corelang]{theorem}{coresoundness}
  If $\cdot ; \cdot \proves e : \TTau \produces \Compute$ and $\conf{e}{\varnothing} \tstep^* \conf*{e'}$,
  then either (1)~$e'$ is a value, (2)~$\conf*{e'} \tstep \conf{e''}{\sigma'}$ can step, or (3)~$\stuckpred{e'}{\sigma}$.
\end{restatable}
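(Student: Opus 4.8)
The plan is to prove this by the standard syntactic (Wright--Felleisen) method: establish \emph{progress} and \emph{preservation} lemmas relative to a store-typing discipline, and then chain them along the reduction sequence $\conf{e}{\varnothing} \tstep^* \conf*{e'}$. Since \corelang has mutable references whose heap grows during evaluation, I would first fix the invariant linking a store typing $\Sigma$ (mapping locations to types, as in \ruleref{T-Location}) to a heap $\sigma$: write $\Sigma \models \sigma$ when $\dom(\Sigma) = \dom(\sigma)$ and for every $\iota \in \dom(\sigma)$ either $\sigma(\iota) = \bot$ (uninitialized, as produced by \ruleref{E-Alloc}) or $\Sigma; \cdot \proves \sigma(\iota) : \Sigma(\iota)$. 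The empty heap trivially gives $\varnothing \models \varnothing$, which is the base case, since the theorem starts evaluation from $\conf{e}{\varnothing}$.

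For \textbf{preservation} I would show: if $\Sigma; \cdot \proves e : \TTau \produces \ell$ with $\Compute \in \ell$, $\Sigma \models \sigma$, and $\conf{e}{\sigma} \tstep[\ttrace] \conf{e'}{\sigma'}$, then there is $\Sigma' \supseteq \Sigma$ with $\Sigma' \models \sigma'$ and $\Sigma'; \cdot \proves e' : \TTau \produces \ell'$ for some $\ell' \ni \Compute$. The proof goes by induction on the typing derivation, using the usual supporting lemmas: store weakening (typing is stable under $\Sigma \subseteq \Sigma'$), a value-substitution lemma feeding \ruleref{E-Let} and \ruleref{E-Call} (the latter also needing \ruleref{Method-Ok} to have typed the method body), and canonical-forms lemmas. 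The label bookkeeping is light because labels form a lattice under $\cap$ with $\ComputeProve = \Compute \cup \Prove$: where \ruleref{T-If} and \ruleref{T-Let} shrink the label via $\cap$, the side condition already forces $\Compute$ into both components, so the chosen branch or substituted body retains $\Compute$; allocation rules (\ruleref{E-Ref}, \ruleref{E-Alloc}) are handled by extending $\Sigma$ to $\Sigma'$. The one delicate case is \ruleref{E-Prove}: the outer heap is unchanged, so $\Sigma' = \Sigma$ suffices, and the result $\TProofOfUsing*$ is retyped at $\ProofOf*$ via the value rule \ruleref{T-ProofOf}, matching $\ptypes(\alpha)$ against the public input types supplied by \ruleref{T-Prove}.

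For \textbf{progress} I would show: if $\Sigma; \cdot \proves e : \TTau \produces \ell$ with $\Compute \in \ell$ and $\Sigma \models \sigma$, then $e$ is a value, or $\conf{e}{\sigma}$ steps, or $\stuckpred{e}{\sigma}$. Induction on the typing derivation, with canonical forms guaranteeing that a non-value in evaluation position has the right shape to fire its rule --- e.g. a value of type $C$ is $\TNew{D}{\overline{v}}$ with $D <: C$ (driving \ruleref{E-Field}, \ruleref{E-Call}, \ruleref{E-Cast}), a value of $\TRefTau*$ is a location, and a value of $\Proof$ is some $\TProofOfUsing*$ (so \ruleref{E-VerifyT}/\ruleref{E-VerifyF} always apply and \TVerifyN is total). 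The non-standard cases are exactly those folded into $\stuckpred{\cdot}{\cdot}$: a downcast $\TCast{C}{v}$ whose runtime class is not a subtype of $C$, a dereference of an uninitialized location (\ruleref{E-Deref} requires $\sigma(\iota) \neq \bot$), and a $\TProveUsing*$ whose predicate fails to reach \TTrue. For the last, \ruleref{E-Prove} fires precisely when the predicate terminates at \TTrue; divergence or termination at a non-\TTrue value leaves the term stuck, which $\stuckpred{\cdot}{\cdot}$ permits, so progress holds in that case by definition.

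The main obstacle I anticipate is the interaction of \TProveN with the metatheory rather than any individual FJ construct. Two points require care. First, \ruleref{E-Prove} runs a \emph{nested} multi-step evaluation $\conf{e[\cdots]}{\varnothing} \tstep^* \conf{\TTrue}{\_}$ on a fresh empty heap, so I must argue this inner reduction obeys the same invariants; this is where the \refunreachN premises of \ruleref{T-Prove} do real work, ensuring no location from the outer heap can leak into the substituted inputs and that the inner evaluation stays self-contained under $\varnothing \models \varnothing$. Without this, the store-typing invariant would not transfer to the inner run. Second, I must justify that ``the predicate does not reduce to \TTrue'' is the \emph{only} way a well-typed \TProveN is stuck: applying progress to the inner expression shows any inner stuck state is itself $\stuckpred{\cdot}{\cdot}$-permitted, and these are exactly the situations the outer stuck-allowed predicate absorbs. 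The remaining work --- threading the growing $\Sigma'$ through allocations and checking the label never drops below $\Compute$ under the subsumption-free label structure --- is administrative.
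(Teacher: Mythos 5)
Your proposal is correct and takes essentially the same route as the paper: the theorem is proved by chaining standard progress and preservation lemmas (each by induction, supported by store-typing, substitution, and canonical-forms reasoning), with the permitted stuck states---invalid casts, dereferencing uninitialized locations, and predicates that diverge or fail to reach \TTrue---absorbed into the stuck-allowed predicate exactly as you describe. Your observation about the \TProveN case, that the nested evaluation on an empty heap is made self-contained by the \refunreachN premises (so substituted inputs carry no outer locations and can be typed under an empty store typing), is precisely the implicit work those premises do in the paper's progress and preservation arguments.
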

The proof of this theorem follows from a standard progress and preservation argument.
\section{The \zkstrudul Language}
\label{sec:combined-abstraction}
In this section, we instantiate the principles from~\Cref{sec:overview} in \zkstrudul, which extends \corelang with the compute-and-prove block abstraction.

\subsection{Syntax}
\label{sec:combined-syntax}
A compute-and-prove block interleaves a general purpose ``compute'' procedure and a predicate ``prove'' procedure within a single programming construct that generates a resulting proof.
Formally, the syntax of the abstraction performs three functions: \rone declare the set of computed inputs, \rtwo define the inputs to the proof, and \rthree specify the interleaved procedures in a unified syntax.
To capture these requirements, we offer the following syntax, which we write in $\SrcProg{blue~sans~serif}$.
\begin{align*}
    \TrgProg{\TExp} &\Coloneqq \cdots \divi \SCnp*[\SrcProg{\SExp}]
\end{align*}
For \rone, each compute-and-prove block begins by declaring a set of computed inputs, written as $\overline{x_p}, \overline{x_s}$, along with their types.
For \rtwo, the block associates a unique proof name $\alpha$ and specifies bindings for the (non-computed) proof inputs $\overline{y_p}, \overline{y_s}$ and their values $\overline{v_p}, \overline{v_s}$.
We separate these to specify the input names to prove separately from the concrete values where substitution applies.
Lastly for \rthree, the expression $\SrcProg{\SExp}$ denotes the body of the compute-and-prove block.

Because the general-purpose and predicate languages in \corelang share a common syntax, we can express code belonging to either (or both) procedure(s) in a single syntactic form.
This unified syntax extends that of \corelang by adding operations for computed inputs.
Since computed inputs are handled separately from other references,
we include assignment $\SWitAssign*$ and dereferencing $\SWitDeref*$ operators
exclusive to computed inputs.

Certain terms require attaching a procedure label $\ell$ to the syntax.
Let bindings include the label in the type of the bound variable, e.g., $\SLetIn{x\ty\tplab{\Bool}{\Prove}}{\True}{\SExp}$ binds a boolean variable scoped exclusively to the prove procedure.
Notably, a variable annotated with $\ComputeProve$ represents a pair of variables, one in each procedure, that will be bound to matching values.
We also add labels to the syntax of reference allocations to determine where the reference should reside.
A reference with label $\ComputeProve$ models a pair of locations that we require remain in sync, so they can operate as a single value in the intuitive in-order semantics presented in~\Cref{sec:combined-semantics}.
We also add a label to the syntax of method calls, $\SCall*$, for a subtle technical reason also explained in~\Cref{sec:combined-semantics}.

\subsection{Type System}
\label{sec:combined-types}
Recall from \Cref{sec:overview} that the type system for our interleaved code pushes procedure labels into types,
allowing the type system to separate the data in each procedure.
To perform this reasoning, the typing judgment for values takes the form $(\SigmaC, \SigmaP, \SigmaCP) ; \Gamma \cproves \SVal : \tplab{\TTau}{\ell}$.
The variable context $\Gamma$ maps variables to labeled types, and the heap contexts $(\SigmaC, \SigmaP, \SigmaCP)$, abbreviated collectively as $\Sigma$, map raw locations to unlabeled types.
The first two heap contexts track locations exclusive to the compute and prove procedures.
Because \ComputeProve references represent pairs of memory locations, $\SigmaCP$ maps pairs $(\iota_\Compute, \iota_\Prove)$ to a type $\TTau$ and tracks when locations must be synchronized.

The typing judgment $\Sigma ; \Gamma ; \Delta ; A_\text{in} \cproves \SExp : \tplab{\TTau}{\ell} \cproduces A_\text{out}$ for expressions extends value judgments to track the behavior of computed inputs.
Specifically, the context~$\Delta$ ascribes computed input names to their types, and the two parameters $A_\text{in}$ and $A_\text{out}$ track the set of computed inputs which have been assigned to before and after executing $e$, respectively.

\begin{center}
  \begin{tikzpicture}[
      boxed/.style={rectangle,draw=black},
    ]
    \matrix[column sep=1em] (rule) {
      \node[boxed] (ctx) {$(\SigmaC ; \SigmaP ; \SigmaCP) ; \Gamma ; \Delta$}; &
      \node (semi) {$;$}; &
      \node[boxed] (Ain) {$A_\text{in}$}; &
      \node (proves) {$\cproves$}; &
      \node[boxed] (e) {$\SExp$}; &
      \node (ty) {$\ty$}; &
     \node[boxed] (tau) {$\tplab{\TTau}{\ell}$}; &
      \node (produces) {$\cproduces$}; &
      \node[boxed] (Aout) {$A_\text{out}$}; \\
    };

    \matrix[below=1.75em of rule,column sep=1.5em,minimum height=1.35\baselineskip]{
      \node[boxed] (ctxText) {Contexts}; &
      \node[boxed] (AinText) {In-Assigned}; &
      \node[boxed] (eText) {Expression}; &
     \node[boxed] (tauText) {Labeled Type}; &
      \node[boxed] (AoutText) {Out-Assigned}; \\
    };

    \draw[dashed] (ctx) to[out=-90,in=90] (ctxText);
    \draw[dashed] (Ain) to[out=-90,in=90] (AinText);
    \draw[dashed] (e) to[out=-90,in=90] (eText);
    \draw[dashed] (tau) to[out=-90,in=90] (tauText);
    \draw[dashed] (Aout) to[out=-90,in=90] (AoutText);
  \end{tikzpicture}
\end{center}

Since types now have labels, these labels inform subtyping.
Specifically, the following two rules lift the subtyping of \corelang types to labeled types.
\begin{ruleset}
    \infer*{
        \TTau_1 <: \TTau_2
    } {
        \tplab{\TTau_1}{\ell} <: \tplab{\TTau_2}{\ell}
    }
    \and
    \infer*{
        \TTau_1 <: \TTau_2 \\
        \ell_2 \subsetl \ell_1 \\
        \refunreach{\TTau_1}
    } {
        \tplab{\TTau_1}{\ell_1} <: \tplab{\TTau_2}{\ell_2}
    }
\end{ruleset} 
One might expect labels to be covariant for all types: if a value exists at \ComputeProve then it is available to \Compute and~\Prove.
However, we should statically reject the expression~$\SAssign{x}{v}$ if~$x$ is a~\ComputeProve reference and~$x$ is \Compute-only,
since the one-sided assignment would desynchronize a shared reference.
We therefore require procedure labels to be \emph{invariant} for context-aware types, but allow covariance for types satisfying \refunreachN.

A selection of typing rules are presented in \Cref{fig:selected-combined-types}, with the complete type system found in \Cref{app:combined-types}.
This type system is closely tied to that of \corelang, linking the labels on types to the labels in the typing judgment of \corelang.
Specifically, we lift typing derivations from \corelang to \zkstrudul for expressions which belong to a single procedure using the following rule:
\begin{mathpar}[\rulefiguresize]
  \TCLiftRule
\end{mathpar}
Formally, we capture this set of expressions labeled with a single procedure using the lifting operator $\Lift*{e}$ which attaches the label~$\ell$ throughout all subterms.
Two rules for the lifting operator are presented below, and the rest are in \Cref{app:expression-lifting-def}.

\begin{mathpar}
\Lift*{\TRefVal*} = \SRefVal{\ell}{\Lift*{v}}
\and
\Lift*{\TLetIn{x\ty\TTau}{e_1}{e_2}} = \SLetIn{x\ty\tplab{\TTau}{\ell}}{\Lift*{e_1}}{\Lift*{e_2}}
\end{mathpar}

Another aspect of lifting typing derivations in this manner is that we must enforce separation of data: values labeled exclusive to \Compute or \Prove remain isolated, and so the results of operations exclusive to \Compute or \Prove must also remain isolated.
This separation is accomplished by defining a context restriction function $\Gamma |_\ell$ that returns a subset of the variables available to the relevant procedure:
\[
  (\Gamma, x \ty \tplab{\TTau}{\ell'})|_\ell = \begin{cases}
        \Gamma |_\ell , x \ty \TTau & \tplab{\TTau}{\ell'} <: \tplab{\TTau}{\ell} \\
        \Gamma |_\ell & \ow
    \end{cases}
\]
Rather than including all variables in the namespace, this definition includes only those variables whose types can safely be used by $\ell$ without violating synchronization of \ComputeProve references.
This corresponds exactly to the reason why we disallow covariant subtyping for context-aware types, so we directly leverage subtyping to enforce this restriction.

The reuse of \corelang typing rules via \ruleref{TC-Lift} works cleanly for contexts involving just \Compute or \Prove, but shared references labeled \ComputeProve require a more careful treatment.
Specifically, we cannot na\"ively use $\SigmaCP$ as a context when leveraging \corelang's type system, as its domain (pairs of locations) does not match the domain of \corelang's heap context.
Instead, we substitute all $\SLocCP{\iota_1}{\iota_2}$ references with \emph{ghost locations},
which exist for the express purpose of enabling typing for \ComputeProve: they do not appear in any of the stores, and simply represent a synchronized pair of locations as a single location.
Using ghost locations, we define a second typing rule \ruleref{TC-LiftCP} that handles \ComputeProve typing separately:
\[\rulefiguresize
    \toggletrue{LiftCPToggle}
    \TCLiftCPRule
\]
The basic premise is the same as with the single-side rule: provide the namespace by projecting the contexts and ensure the operations in expression $e$ are available to \ComputeProve.
The differences involve substituting synchronized location pairs in both the expression $e$ and the context $\SigmaCP$ with ghost locations, defined by a map $\Theta$ from synchronized pairs of locations to unique individual ghost locations.
This map must be injective to prevent spurious typing proofs from reference collisions.
We then provide to the sub-derivation a transformed context $\Theta(\SigmaCP)$ that maps ghost locations to the type corresponding to the original location pair in $\SigmaCP$.
When lifting $e$ to \ComputeProve, the operation $\Lift{\ComputeProve}{e}_\Theta$, defined compositionally, then replaces each ghost location in $\Theta$ with its corresponding \ComputeProve reference object and otherwise lifts the expression by augmenting its syntax with labels when necessary.

These two typing rules enable lifting expressions to a single desired procedure,
but the type system must additionally permit interleaved operations and computed input operations within a single expression.
Therefore we explicitly include typing rules for branching and sequencing.
The typing rule for sequencing, \ruleref{TC-Let}, is similar to its corresponding rule in \corelang.
To interleave operations from different procedures, $e_1$ can have type $\tplab{\TTau_1'}{\ell_1'}$ and $e_2$ type $\tplab{\TTau_2}{\ell_2}$ where, unlike before, label $\ell_1'$ need not necessarily be equal to $\ell_2'$.
Note that we do not have an explicit rule for subtyping.
We instead embed it in \ruleref{TC-Let}: the expression $e_1$ can have a subtype of the declared type of the bound variable.
We require explicit subtyping using \SLetN to simplify projection, but note that it does not reduce expressive power.
The typing rule for branching, \ruleref{TC-If}, also mirrors its corresponding rule in \corelang but with the new labels on types.

\begin{figure}
    \begin{ruleset}
      \SetRuleLabelLoc{lab}
        \TCLetRule
        \and
        \TCIfRule
        \and
        \TCInputAssignRule
        \and
        \TCInputDerefRule
        \and
        \toggletrue{TCombinedToggle}
        \TComputeAndProveRule
    \end{ruleset}
    \caption{Selected Typing Rules for \zkstrudel}
    \label{fig:selected-combined-types}
\end{figure}

These rules also facilitate tracking assigned computed inputs.
Sequencing will thread the output set $A_1$ of the first expression $e_1$ to the input of $e_2$.
If statements use the same incoming tracking set for each branch and require the outgoing tracking set to be equal in each branch to ensure assignments are tracked precisely: $A$ represents the exact set of assigned computed inputs.

The type system tracks computed inputs in this way to permit assignment to computed inputs exactly once.
Requiring assignment occurs at least once guarantees any dereference in prove is defined.
Requiring assignment at most once serves to keep the behavior reasonable.
Otherwise, a computed input could be dereferenced before it is assigned.
Since compute runs in its entirety before prove, the two procedures would get different values, drastically complicating reasoning.

In \ruleref{TC-InputAssign}, we first check the computed input $x$ has not been assigned to yet (i.e., $x \not\in A$).
Then, if the data type $\TTau$ of the assigned value matches the expected one in $\Delta(x)$ and the value is available in the compute procedure, we can type check the whole assignment and add the computed input to the outgoing tracking set: $A \cup \{x\}$.
Once a computed input $x$ has been assigned, meaning the computed input is in the incoming context $A$, the type system permits a dereference with \ruleref{TC-InputDeref}.
The resulting value has the data type $\Delta(x)$ and label $\ComputeProve$, as it exists in \Prove by definition and was originally assigned in \Compute.

Finally, the \ruleref{T-ComputeAndProve} rule defines how to type check a compute-and-prove block.
All external variables and memory is available to compute, while only explicitly-defined inputs (both computed inputs $\overline{x_p}, \overline{x_s}$ and explicit inputs $\overline{y_p}, \overline{y_s}$) are available to prove.
Since proof generation must not use external references, the type system checks that all inputs to the proof must satisfy \refunreachN.
The explicit input names serve as binders for the externally-provided values $\overline{v_p}, \overline{v_s}$, so we also check that these values type with the types of the binders using the \corelang typing judgement.
Finally, we verify that $e$ types to a boolean available to \Prove in the combined type system, ensuring valid unified syntax to generate a proof.

The following contexts are used to type the proof predicate:
Since all memory locations are available at compute, they populate $\SigmaC$, while $\SigmaP$ and $\SigmaCP$ remain empty to start.
We then assign the label of proof inputs $\overline{y_p}$ and $\overline{y_s}$ to be \ComputeProve, making them available to the prove procedure in addition to compute.
We also assign the label $\Compute$ to all variables in $\Gamma$, with the order of contexts giving precedence to $\Gamma_\ComputeProve$.
The computed input context $\Delta$ maps all computed input variables to their types.
Then, to prove assignment to all computed inputs, the rule assumes an empty inbound assignment set $\varnothing$ and checks the outgoing $A$ must exactly match the set of computed inputs to ensure all computed inputs have been assigned.
If all of these conditions are satisfied, the block type checks.
Like with $\TProveN$ statements, the programmer assigns a unique name $\alpha$ to all compute-and-prove blocks, and so the type of a combined block is then a $\TProofOf*$ with label \Compute.

\subsection{Translating \zkstrudel to \corelang}
\label{sec:compilation}
Now that types have labels, we can separate the two procedures in a compute-and-prove block and execute them sequentially: first system updates and computed input construction, then proof generation.
We therefore define a projection operator that, given an augmented expression, extracts a \corelang expression for each procedure.
We then define how to combine these projected code procedures together to define a translation from a \zkstrudul program to a \corelang program.

Because determining an operation's procedure fundamentally relies on labels, projection is a type-directed transformation.
The projection operator $\denG{\cdot}$ takes a typing proof as input and extracts the relevant operations for procedure $\ell \in \{\Compute, \Prove\}$.
We abbreviate the input typing proof to just the expression in the relevant typing judgment $\denG{\SExp}$ for shorthand.

Intuitively, projection extracts the relevant code in $\SExp$ that belongs to $\ell$, retaining operations from~$\ell$ and ignoring operations not from $\ell$.
Consider
$\SExp = \SAssign{x_c}{1} \Sseq \SAssign{x_p}{2} \Sseq \SAssign{x_\mathit{cp}}{3}$,
where $x_c$, $x_p$, and $x_{cp}$ are mutable references with labels \Compute, \Prove, and \ComputeProve, respectively.
Projection to~$\Compute$ should retain the assignments to~$x_c$ and~$x_\mathit{cp}$ because they exist in the compute namespace,
but it should ignore~$x_p$, which is not.
Projection at \Prove is symmetric, projecting only the assignments to $x_p$ and $x_{cp}$.
\begin{mathpar}
    \denG[\Compute]{\SExp} = \TAssign{x_c}{1} \Tseq \TAssign{x_{cp}}{3}
    \and
    \denG[\Prove]{\SExp} = \TAssign{x_p}{2} \Tseq \TAssign{x_{cp}}{3}
\end{mathpar}

\begin{figure}
  \begin{mathpar}[\rulefiguresize]
    \denFull{\Lift{\ell'}{e}} = \begin{cases}
      e & \ell \subsetl \ell' \\
      \TUnitVal & \ow
    \end{cases}
    \and
    \denFull{\SWitAssign{x}{v}} = \begin{cases}
      \TAssign{x}{v} & \ell = \Compute \\
      \TUnitVal & \ell = \Prove
    \end{cases}
    \and
    \denFull{\SWitDeref{x}} = \begin{cases}
      \TDeref{x} & \ell = \Compute \\
      x & \ell = \Prove
    \end{cases}
    \and
    \denFull{\SIfThenElse*} = \begin{cases}
      \TIfThenElse{v}{\denG{e_1}}{\denG{e_2}} & \denFull{v} = v \\
      \denFull{e_1} & \denFull{v} = \TUnitVal \land \denFull{e_1} = \denFull{e_2} \\
      \text{undefined} & \text{otherwise}
    \end{cases}
    \and
    \denFull{\SLetIn{x\ty\tplab{\TTau}{\ell'}}{\SExp_1}{\SExp_2}} = \begin{cases}
      \TLetIn{x\ty\TTau}{\denG{e_1}}{\denG{e_2}} & \ell \subsetl \ell' \\
      \denFull{e_1} \Tseq \denFull{e_2} & \ell \not\subsetl \ell' \land \denFull{e_1} \neq v \\
      \denFull{e_2} & \ell \not\subsetl \ell' \land \denFull{e_1} = v 
    \end{cases}
  \end{mathpar}
  \caption{Selected Projection Rules}
  \label{fig:selected-projection}
\end{figure}

\Cref{fig:selected-projection} presents a selected set of projection rules, with the full rules in \Cref{app:compilation}.
All projection rules rely on the typing judgment to determine inclusion or exclusion.
We capture projection for most operations by the first rule:
if the typing proof types $\Lift{\ell'}{e}$ via a \corelang subderivation (\ruleref{TC-Lift} or \ruleref{TC-LiftCP}) and $\ell \subsetl \ell'$, then the operation $e$ corresponds to $\ell$.
Otherwise, it projects to an inert unit value~$\TUnitVal$.

We handle computed inputs differently in each procedure.
Compute must be able to assign to them, so we project them to mutable references;
input assignment $\SWitAssign{x}{v}$ translates to reference assignment $\TAssign{x}{v}$
and input usage $\SWitDeref{x}$ translates to dereferencing~$\TDeref{x}$.
However, mutable references cannot be used as proof inputs, so our full compilation procedure will dereference all computed inputs before feeding them to a $\TProveN$ statement.
The prove code therefore treats $\SWitDeref{x}$ as simply~$x$, as the proof will be given a value of the underlying type.
Since witness assignment can only happen in the compute code, the \Prove projection ignores input assignments and produces~$\TUnitVal$.

Branching and sequencing have subexpressions that could contain operations from either procedure, and so we recursively project the subterms.
For branching, if the condition~$v$ is visible to~$\ell$---$\denFull{v} = v$---then we project the entire if statement.
However, if~$v$ is not visible---$\denFull{v} = \TUnitVal$---then~$\ell$ cannot know which branch to take.
If both branches contain the same operations---$\denFull{\SExp_1} = \denFull{\SExp_2}$---then the direction does not matter and we simply use that projection.
Otherwise the projection of the branching statement is ill-formed and projection fails.

The rule for \SLetN similarly differentiates based on the procedural visibility.
If the label~$\ell'$ of the bound variable includes~$\ell$, then the output of~$\SExp_1$ should be available in the projection,
and the rule recursively projects the subterms into a let binding.
If $\ell \not\subsetl \ell'$, we do not project the variable binding, but still recursively project the subterms, which could contain meaningful computation for~$\ell$.
However, if additionally~$\SExp_1$ projects to a value (such as $\TUnitVal$), then~$\SExp_1$ has no meaningful effect in this projection and we simply return $\denFull{\SExp_2}$.
While this may seem like just an optimization, collapsing inert values has the effect of mapping semantically equivalent expressions to syntactically equivalent ones,
thereby increasing expressivity by allowing more branching statements to properly project.

We now turn to defining the full compilation pass of a compute-and-prove block.
The behavior is to first allocate memory for computed inputs as mutable reference, then execute the compute projection, and finally generate a proof determined by the prove projection.
Since computed inputs start uninitialized, we need a separate construct, $\TAlloc{\TTau}$, to allocate but not initialize a new reference of a given type.
This feature is not intended for direct use by the programmer, as a dereference step can now get stuck on an uninitialized value. 
However, because the type system ensures assignment to all computed inputs, the semantics will not get stuck by uninitialized references allocated through compilation.
We then dereference all computed inputs before entering the $\TProveN$ statement to provide the computed inputs without using context-aware values as proof inputs.
The formal definition of this compilation is shown below:
\[
  \begin{array}{l}
    \fullComp{\SCnp*[e]} = \\
    \qquad \TLetInMlp{\overline{w_p}, \overline{w_s} \ty \Reft \overline{\TTau_{x_p}}, \Reft \overline{\TTau_{x_s}}}%
                 {\TAlloc{\overline{\TTau_{x_p}}, \overline{\TTau_{x_s}}}}{%
                   \begin{array}[t]{@{}l@{}}
                    \subst*{\denC{e}}{{\overline{y_p}}{\overline{v_p}}{\overline{y_s}}{\overline{v_s}}{\overline{x_p}}{\overline{w_p}}{\overline{x_s}}{\overline{w_s}}}
                    \Tseq \\
                    \TProveUsing{\alpha}{\exists \overline{y_p}, \overline{x_p} [\overline{y_s}, \overline{x_s}]. \denP{e}}{\overline{v_p}, \TDeref{\overline{w_p}}[\overline{v_s}, \TDeref{\overline{w_s}}]}
                   \end{array}
                   }
    \end{array}
\]
Compilation of a \zkstrudul expression is then the identity function for all subexpressions except compute-and-prove blocks, which are compiled as defined above.
For a whole \zkstrudul program, we compile the main expression and all method bodies, resulting in a \corelang program.

\subsection{Operational Semantics}
\label{sec:combined-semantics}
The projection-based compilation above defines a behavior for compute-and-prove blocks that
precisely describes how a compiler should behave, but it performs a highly non-local transformation, making it difficult for programmers to reason about.
To address this shortcoming, we define an in-order operational semantics that directly describes the execution of the combined program.
We prove in \Cref{sec:adequacy} that these semantics are deeply equivalent.

To model interleaved procedures that may have independent state along with shared computed inputs,
the in-order semantics tracks three distinct stores:
\rone~a compute store~$\sigmaC$, corresponding to the heap $\sigma$ from \corelang;
\rtwo~a prove store $\sigmaP$ representing an ephemeral (or ``ghost'') heap used within the scope of a proof predicate; and
\rthree~a mapping~$\rho$ of computed input names to their assigned values.
We thus define the operational semantics as a small-step relation ($\cpstep$) over configurations of the form $\cpconf*{\SExp}$.

\begin{figure}
    \begin{ruleset}
      \SetRuleLabelLoc{lab}
      \toggletrue{ECRefCToggle}
        \ECLiftRule
        \and
        \ECRefCRule
        \and
        \ECDerefPRule
        \and
        \ECDerefCPRule
        \and
        \ECInputAssignRule
        \and
        \ECInputDerefRule
    \end{ruleset}
    \caption{Selected Semantic Rules for \zkstrudel}
    \label{fig:selected-combined-semantics}
\end{figure}

\Cref{fig:selected-combined-semantics} presents selected semantic rules, with the full ruleset in \Cref{app:combined-semantics}.
The \ruleref{EC-Lift} rule follows the pattern of the \ruleref{TC-Lift} typing rule and its corresponding projection rule
and allows most steps to lift from \corelang to the compute-and-prove semantics.
Specifically, if~$e$ can step in \corelang with empty heaps before and after---it cannot create, read, or update a reference---
the lifted expression $\Lift{\ell}{e}$ can step identically in the combined language without modifying any stores.

This rule also drives our inclusion of a syntactic procedure label on method calls.
Without that label, \ruleref{EC-Lift} would produce non-deterministic semantics:
a method lifted to different procedures could allocate a reference in its body,
causing meaningfully different behavior depending on the label.
Including the syntactic label ensures there can only be one~$\ell$ such that $\SCall* = \Lift{\ell}{e}$,
ensuring $\cpstep$ remains deterministic.

We do not lift rules involving references, because we must properly track which heap each reference belongs to.
We track this in the syntax with administrative terms that wrap raw references to indicate which heap they belong to:
$\SLocC*$ and $\SLocP*$ belong to~$\sigmaC$ and~$\sigmaP$, respectively, and $\SLocCP*$ consists of two locations, one in each heap, that must remain in sync.
The semantics include three corresponding rule sets to create, access, and update these references.

\ruleref{EC-RefC} allocates a fresh location~$\iota$ in the compute store~$\sigmaC$ and returns it properly wrapped.
There are corresponding \ruleref{EC-RefP} and \ruleref{EC-RefCP}.
rules, with the latter allocating two references, one in each heap, and returning the paired~$\SLocCP*$.
Also note that \ruleref{EC-RefC} does not simply place~$v$ into~$\sigmaC$, but instead a \emph{lowered}~$\Lower{\Compute}{v}$.
Because the compute store~$\sigmaC$ corresponds to the heap in \corelang, the semantics must reconcile the wrapped references with the underlying heap representation.
The lowering function~$\Lower{\ell}{v}$ bridges this gap by stripping away the wrappers identifying the heap and converting to a standard \corelang value.
Specifically, it is a structurally recursive partial function where $\Lower{\Compute}{\SLocC*} = \iota$, $\Lower{\Compute}{\SLocCP*} = \iota_1$,
$\Lower{\Compute}{\SLocP*}$ is undefined, and other base values are returned unmodified (and symmetrically for $\Lower{\Prove}{v}$).

When reading from a heap, these wrappers must be put back in place, which is accomplished by the lifting operator~$\Lift{\ell}{v}$, as in \ruleref{EC-DerefP}.
However, when reading \ComputeProve references, there are now two values that must match.
\ruleref{EC-DerefCP} therefore joins the lifted references back together ($\sqcup$),
an idempotent structurally-recursive operation defining $\SLocC{\iota_1} \sqcup \SLocP{\iota_2} = \SLocCP*$.

The semantics for the computed input operations $\SWitAssign*$ and $\SWitDeref*$ operate on the store~$\rho$ and enforce a single-assignment property that mirrors the property of the type system.
Specifically, \ruleref{EC-InputAssign} allows assigning to~$x$ if it is not already assigned,
but there is no rule for $\SWitAssign*$ when $x \in \dom(\rho)$, meaning such an operation will get stuck.
Once~$x$ has been assigned, reads behave as expected using~\ruleref{EC-InputDeref}.
No lifting or lowering occurs in $\rho$, as the type system prevents any references from appearing within computed input values.

\begin{figure}
    \begin{ruleset}
        \SetRuleLabelLoc{lab}
        \toggletrue{EComputeAndProveToggle}
        \SetRuleLabelVCenter
        \EComputeAndProveInitRule
        \and
        \EComputeAndProveStepRule
        \and
        \EComputeAndProveTrueRule
    \end{ruleset}
    \caption{Semantic Rules for Compute-And-Prove}
    \label{fig:computeandprove}
\end{figure}

The semantics of compute-and-prove blocks, presented in~\Cref{fig:computeandprove}, follow the step-by-step execution of its body~$e$.
To start, \ruleref{E-ComputeAndProveInit} sets up the execution of~$e$, stepping to an administrative term $\SCnpAdmin*{\SExp}{\varnothing}{\varnothing}$ that represents an intermediate state of execution.
This term stores information to continue executing $e$---the ghost heap and computed input, which start empty---and to finish execution---the proof name~$\alpha$, the public and private computed input names~$\overline{x_p}$ and~$\overline{x_s}$, respectively, and a map~$\varphi$ from computed input names to their fresh memory locations. 
Since computed inputs compile to mutable references, this step also allocates memory locations for each.
These newly created locations are uninitialized, represented as~$\bot$, where dereferencing a location pointing to~$\bot$ gets stuck.
However, we ensure it is not possible to get stuck on dereferencing an uninitialized reference created by this step.

Within this administrative term, \ruleref{E-ComputeAndProveStep} steps the body~$e$ using~$\sigma$ as the compute heap
and the tracked stores~$\sigmaP$ and~$\rho$, updating the administrative syntax as needed.
These updates allow the semantics to properly lower a sequence of $\cpstep$ steps to the outer~$\tstep$ semantics.
Finally, if the inner expression evaluates to $\STrue$, then the \ruleref{E-ComputeAndProveTrue} rule yields the proof $\TProofOfUsing{\alpha}{\overline{v_p}::\rho(\overline{x_p})}$ carrying the proof's public inputs and updates the store $\sigma$ with the final values of computed inputs.

The type system for \zkstrudul is sound with respect to its operational semantics: if an expression type checks and is not value, it either can take a step or is stuck in an expected way.
An expected stuck state includes invalid runtime casts and false proof predicates as in \corelang, or a \SCnpN term evaluating to false,
captured by the predicate $\stuckpredzk{\SExp'}{\sigmaC'}{\sigmaP'}{\rho'}$
\begin{restatable}[Type Soundness Inside Combined Blocks]{theorem}{abssoundness}
\label{thm:abs-soundness}
  If $\Sigma ; \varnothing ; \Delta ; A \cproves \SExp : \tplab{\TTau}{\ell} \cproduces A'$ and $\cpconf*{\SExp} \cpstep^* \cpconf{\SExp'}{\sigmaC'}{\sigmaP'}{\rho'}$,
  then either
  (1)~$\SExp'$ is a value,
  (2)~$\cpconf{\SExp'}{\sigmaC'}{\sigmaP'}{\rho'}$ can step, or
  (3)~$\stuckpredzk{\SExp'}{\sigmaC'}{\sigmaP'}{\rho'}$.
\end{restatable}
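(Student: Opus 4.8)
The plan is to follow the standard syntactic type-soundness recipe---prove single-step \emph{preservation} and \emph{progress} lemmas and then chain them along the $\cpstep^*$ reduction by induction on its length---adapted to the three-store configurations $\cpconf*{\SExp}$ and the single-assignment discipline on computed inputs. The first task is to define a \emph{configuration well-formedness} relation tying the typing contexts to the runtime stores. Concretely, I would require that (a)~$\sigmaC$ is well-typed against $\SigmaC$ together with the compute components of $\SigmaCP$, and symmetrically $\sigmaP$ against $\SigmaP$ and the prove components of $\SigmaCP$; (b)~for every pair $(\iota_1,\iota_2)\in\dom(\SigmaCP)$ the stored values are \emph{synchronized}, i.e.\ $\Lift{\Compute}{\sigmaC(\iota_1)}\sqcup\Lift{\Prove}{\sigmaP(\iota_2)}$ is defined; and (c)~the assigned-input set satisfies $A=\dom(\rho)$ with each $\rho(x)$ typing at $\Delta(x)$ and satisfying $\refunreachN$. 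The hypothesis $\Sigma;\varnothing;\Delta;A\cproves\SExp:\tplab{\TTau}{\ell}\cproduces A'$ is then read relative to an initial configuration satisfying this invariant, as holds at every call site, where a block body starts from $\sigmaP=\varnothing$, $\rho=\varnothing$, and $A=\varnothing$.

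For \emph{preservation}, I would show that a single $\cpstep$ from a well-formed, well-typed configuration yields a well-formed, well-typed configuration: the stepped expression types against an extended heap context $\Sigma'\supseteq\Sigma$ and an enlarged incoming assigned set $A\subseteq A_1$ (recording fresh allocations and newly assigned computed inputs), while preserving the same type $\tplab{\TTau}{\ell}$ and final assigned set $A'$. The argument is by cases on the step rule, using inversion on the typing derivation, a substitution lemma (for \ruleref{EC-Let} and method-call reductions), and the canonical-forms facts inherited from \corelang. Heap-free steps reuse \corelang preservation through \ruleref{EC-Lift}; the reference rules extend $\SigmaC$, $\SigmaP$, or $\SigmaCP$ appropriately; and \ruleref{EC-InputAssign} moves $x$ from $\Delta$-unassigned into both $\rho$ and the assigned set, matching the $A\cup\{x\}$ of \ruleref{TC-InputAssign}. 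The delicate obligation is preserving the synchronization invariant~(b): this is exactly why the type system forbids covariant subtyping and restricts $\Gamma|_\ell$ for context-aware types, so that a \ComputeProve reference can never be assigned one-sidedly---every such update goes through \ruleref{EC-AssignCP}, which lowers the \emph{same} source value into both stores and hence keeps them joinable.

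For \emph{progress}, I would induct on the combined typing derivation, identifying the leading redex permitted by the ANF-like syntax. Value cases are immediate, and \SLetN delegates to the induction hypothesis on $e_1$ via the congruence rule \ruleref{EC-Eval} (or fires \ruleref{EC-Let} once $e_1$ is a value). Lifted expressions \ruleref{TC-Lift}/\ruleref{TC-LiftCP} appeal to \corelang progress (the progress component of its soundness theorem): a pure redex steps via \ruleref{EC-Lift}, while a heap redex dispatches to the matching labeled rule, whose side conditions are discharged by well-formedness~(a) (the location lies in the correct store) and---crucially for \ruleref{EC-DerefCP}---by invariant~(b), which makes $\sqcup$ defined. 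Computed-input reads succeed because $A=\dom(\rho)$ guarantees any dereferenced $x$ is assigned, and assignments succeed because \ruleref{TC-InputAssign} certifies $x\notin A=\dom(\rho)$. The genuinely stuck configurations are precisely the \corelang ones---failed casts and non-true proof predicates---together with a \SCnpN block whose body reduces to a non-$\STrue$ value; all of these are collected by $\stuckpredzk{\SExp'}{\sigmaC'}{\sigmaP'}{\rho'}$.

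Chaining these, I would induct on the length of $\cpconf*{\SExp}\cpstep^*\cpconf{\SExp'}{\sigmaC'}{\sigmaP'}{\rho'}$: well-formedness and typing of the initial configuration are given, preservation propagates both along each step (with $\Sigma$ and the incoming assigned set growing monotonically and $A'$ fixed), and progress applied at $\cpconf{\SExp'}{\sigmaC'}{\sigmaP'}{\rho'}$ yields the required trichotomy. I expect the main obstacle to be the \ComputeProve machinery in preservation---maintaining invariant~(b) and, in the \ruleref{TC-LiftCP} case, re-establishing a valid injective ghost-location map $\Theta$ for the stepped configuration once allocations extend $\SigmaCP$. Showing that lifting and lowering (both partial) are defined exactly where the rules demand, and that $\Theta$ can always be chosen injectively on the enlarged $\dom(\SigmaCP)$, is where the context-awareness invariants do the real work and will require the most care.
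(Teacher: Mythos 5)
Your overall strategy is exactly the paper's: it proves single-step preservation and progress (\Cref{thm:abs-preservation,thm:abs-progress}) relative to a configuration-safety invariant ($\Sigma \cproves \sigmaC, \sigmaP$ together with $\Delta ; A \cproves \rho$, \Cref{app:config-safety-def}) and chains them by induction on the length of the $\cpstep^*$ reduction; your invariants (a) and (c) coincide with the paper's. The first genuine gap is \emph{nesting}. A well-typed body may itself contain compute-and-prove blocks, so reduction produces nested administrative $\SCnpN$ terms that carry their own ghost heap and computed-input map inside the syntax. When such a term steps by \ruleref{EC-ComputeAndProveStep}, applying the induction hypothesis to the inner expression requires well-formedness of the \emph{inner} configuration, and re-establishing the \emph{outer} invariant afterwards requires that the inner block's shared \ComputeProve locations be disjoint from the outer tracked pairs---the inner step mutates $\sigmaC$ while the outer prove store and input map sit frozen in the administrative term. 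Neither fact is derivable from your (a)--(c); the paper threads an additional recursive predicate, $\nestedok$ (\Cref{app:nested-preservation}), with an explicit disjointness set $S$, through both preservation and progress precisely for this purpose, and your proof would stall at this case without it.

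The second gap is that your synchronization invariant (b) is too weak. For preservation at \ruleref{EC-DerefCP}, the joined value must be typed at label \ComputeProve; when the two stored values are themselves locations $\iota_1'$ and $\iota_2'$, the join is $\SLocCP{\iota_1'}{\iota_2'}$, and typing it by \ruleref{TC-LiftValCP} requires the \emph{pair} $(\iota_1', \iota_2') \in \dom(\SigmaCP)$. Mere definedness of $\sqcup$ gives you nothing here (indeed $\sqcup$ pairs \emph{any} compute location with any prove location), and store typing only places each component separately in the corresponding projected domain $\dom(\SigmaCP|_\Compute)$ or $\dom(\SigmaCP|_\Prove)$, not that the two form a tracked pair. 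This is why the paper's configuration safety instead imposes a recursive value-synchronization relation $\approx$ plus the closure condition (vii) that location pairs stored in synchronized cells are themselves in $\dom(\SigmaCP)$; that stronger invariant is what makes the paper's Lemma on well-typedness of joins go through. You correctly flag this region as delicate, but the invariant as you state it would not close the preservation case.
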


Proving type soundness in \zkstrudul presents more challenges than in \corelang, as we cannot prove progress and preservation using only the typing judgment.
We instead use a stronger notion of \emph{configuration safety} that tracks locations in the heap to ensure that any \ComputeProve reference stays in sync.
In particular, configuration safety partitions each heap into three sets:
locations for only that procedure, locations synced with the other store, and other untracked locations e.g. that are outside of scope.
References in one partition may not alias or point to references in another---similar to the separating conjunction in separation logic.
Along with a correspondence between~$\Delta$, $A$, and~$\rho$ ensuring $\dom(\rho) = A$ and~$\rho$ agrees with~$\Delta$ on types,
configuration safety provides a sufficiently strong inductive invariant to prove progress and preservation.

\subsubsection*{A Note on Nesting Compute-And-Prove}
Our language also supports nesting compute-and-prove blocks.
Since $\TProveN$ statements in \corelang cannot nest, the inner nested compute-and-prove block only corresponds to the outer block's compute procedure.
Formalizing nested blocks involves adding very similar rules to those presented in this section but permitting only compute-side values and locations in a nested block.
We include the formal semantic and typing rules in \Cref{app:combined-semantics} and \Cref{app:combined-types}, and we prove all of the presented theorems even with nesting.

\section{Correctness of Compilation}\label{sec:adequacy}
While the combined semantics of \zkstrudul defined in~\Cref{sec:combined-semantics} executes operations in-order,
interleaving compute and prove steps,
in reality the projection-based semantics of~\Cref{sec:compilation} will compile the source program into a target program that
runs the entirety of the compute procedure first, and then generates a proof specified by the prove procedure.
Since these semantics provide two different interpretations of the same program, it is important
that they are equivalent, allowing use of the in-order semantics to reason about compiled programs.

First, we will show that our semantics are \emph{adequate} with respect to each other.
That is, executing a program using the source semantics~($\cpstep$)
is equivalent to executing the compiled program using the target semantics~($\tstep$).
\begin{definition}[Adequacy]
  A compilation procedure $\den{\cdot}$ is adequate when, for any source program~$e$, $e$ evaluates to the value~$v$ if and only if the compiled program~$\den{e}$ evaluates to~$\den{v}$.
  \[
    \text{Adequate}(\den{\cdot}) ~\defeq~ \forall \SExp, \SVal \ldotp \SExp \cpstep* \SVal \Longleftrightarrow \den{\SExp} \tstep* \den{\SVal}
  \]
\end{definition}

Second, we prove that compilation satisfies Robust Relational Hyperproperty Preservation (RrHP)---the strongest property in the robust compilation hierarchy---
ensuring that our source and target programs have fundamentally identical meanings.
Specifically, it says that, for any context in the target language, there exists a context in the source language,
such that \emph{every} source program produces the same behavior when linked with the source context
or compiled and linked with the target context.
This condition subsumes nearly every other secure and correct compilation criteria, including full abstraction~\citep{journey-beyond19}.

We first describe the proof strategy we use for adequacy.
The proof of RrHP follows the same strategy with some extra reasoning about traces, which we discuss in \Cref{sec:rrhp}.

\subsection{Proving Adequacy}
Classic adequacy proofs demonstrate a lock-step correspondence between two semantics.
However, the in-order semantics of \zkstrudul interleaves operations performed by each procedure,
whereas the projected semantics executes~\Compute before~\Prove.
This means a lock-step correspondence is impossible;
to simulate a single in-order step corresponding to~\Prove,
the projected semantics would have to fully execute the expression corresponding to~\Compute,
taking an arbitrary number of steps in the process.

To bridge the gap between these two semantics, we define a new \emph{concurrent} semantics~($\pstep$)
inspired by choreographies~\cite{Montesi23}
which allows for the operations performed by~\Compute and~\Prove to be interleaved,
so long as the order of events for each procedure is consistent.
The diagram below summarizes our proof strategy.
\begin{center}
  \begin{tikzpicture}[
      node/.style={outer sep=2pt,draw,black},
    ]
    \node[node] (SurfSequential) {$\SExp \cpstep^* \SVal$\strut};
    \node[node,right=9em of SurfSequential] (Sequential) {$\den{\SExp} \tstep^* \den{\SVal}$\strut};
    \coordinate (midpoint) at ($(SurfSequential.north)!.5!(Sequential.north)$);
    \node[node,above=0.5em of midpoint] (Concurrent) {$\den{\SExp} \pstep^* \den{\SVal}$\strut};

    \draw[implies-implies,double equal sign distance] (SurfSequential) to[out=55,in=180] node[outer sep=0pt,midway,above left = -2pt]{(1)} (Concurrent);
    \draw[implies-implies,double equal sign distance] (Concurrent) to[out=0,in=125] node[outer sep=0pt,midway,above right = -2pt]{(2)} (Sequential);
    \draw[implies-implies,double equal sign distance,dashed] (SurfSequential) -- (Sequential) node[outer sep=0pt,midway,below = 2pt]{(3)};
  \end{tikzpicture}
\end{center}

We (1)~establish a bisimulation between the source and concurrent target semantics (\Cref{thm:bisim-complete,thm:bisim-sound}),
then (2)~show that the sequential and concurrent target semantics yield equivalent outputs (\Cref{thm:bubblesort}),
allowing us to (3)~conclude our compilation is adequate (\Cref{thm:adequacy}).

\subsubsection{Concurrent Target Semantics}\label{sec:conc-sem}

\begin{figure}
  \begin{ruleset}
    \SetRuleLabelLoc{lab}
    \SetRuleLabelVCenter
    \EPStepC
    \and
    \EPStepP
     \and
    \toggletrue{EPWitAssignLinebreak}
    \EPWitAssign
    \and
    \EPWitDerefP
  \end{ruleset}
  \caption{Selected Concurrent Target Operational Semantics}
  \label{fig:selected-target-concurrent-semantics}
\end{figure}

The concurrent semantics, which we denote with the relation~$\pstep$, is similar to the projected semantics, with two main differences.
First, rather than tracking a single expression which is being executed, we track two expressions~$e_\Compute$ and~$e_\Prove$ that will execute concurrently.
Second, since execution of these programs is concurrent, the corresponding memory stores~$\sigma_\Compute$ and $\sigma_\Prove$
as well as the computed input memory store~$\rho$---which is treated as shared memory---must be separately accounted for.
While the expressions~$e_\Compute$ and~$e_\Prove$ can each independently update their corresponding memory stores,
\Prove can only read from the computed input memory~$\rho$, while~\Compute can both read and write.
These principles are captured by the rules shown in \Cref{fig:selected-target-concurrent-semantics}, with the remaining rules presented in \Cref{app:target-conc-lang}.

\ruleref{EP-StepC} and \ruleref{EP-StepP} lift the standard target semantics to allow~\Compute and~\Prove to execute independently.
\ruleref{EP-WitAssign} allows~\Compute to write to the computed input memory, and
\ruleref{EP-WitDerefP} allows~\Prove to read the computed input memory.
While there is an analogous rule \ruleref{EP-WitDerefC} to allow~\Compute
to read the computed input memory, \Prove cannot write to the computed input memory.

\subsubsection{Bisimilarity of Source and Concurrent Target Semantics}\label{sec:concurrent-bisimilarity}
To show that this concurrent semantics is bisimilar to the in-order semantics of \zkstrudul,
we must first address two technical complications needed to formally define our bisimulation relation.

\para{Annotated Source Programs}
To prove a bisimulation, one might hope that reduction would preserve equivalence between a program and its projection.
Unfortunately, this is not always true.
Specifically, substituting into the body of a let expression removes the syntactic type of the variable,
meaning values that may have been bound in only one procedure could suddenly appear in the other.
For example, the expression $\SLetIn{x\ty\tplab{\Bool}{\Prove}}{\STrue}{(\SIfThenElse{x}{2}{2})}$ projects to~$2$ for~\Compute,
but after stepping to~$\SIfThenElse{\STrue}{2}{2}$, it projects to~$\TIfThenElse{\TTrue}{2}{2}$.

To prevent this unintended behavior, we annotate our source language semantics and give each abstract syntax node its own procedure label.
For instance, the value~$\STrue_\annp$ denotes that $\STrue$ is only utilized in~\Prove's program,
while the value~$2_\anncp$ denotes the integer will be utilized by both~\Compute and~\Prove.
These annotations are simple to add with a type-directed elaboration pass over our regular syntax.
The original source program in the example above would be re-written as $\SLetIn{x\ty\tplab{\Bool}{\Prove}}{\STrue_\annp}{(\SIfThenElse{x}{2_\anncp}{2_\anncp})}$,
and step to~$\SIfThenElse{\STrue_\annp}{2_\anncp}{2_\anncp}$.

Projection becomes a simpler syntax-directed operation over these annotated programs, relying on the annotations rather than the types.
The above example then correctly projects to~$2$ for~\Compute, preventing the if-expression from materializing.

\para{Less-Sequencing Relation}
We might also hope for a bisimulation relating each annotated source program only to its projection.
However, this relation is too strict.
During projection we collapse expressions of the form~$v ~\Tseq e$ to $e$ (e.g., in the rules for let- and if-expressions)
which can break the strict equality between a \zkstrudul program and its projection after a reduction occurs.
This problem also arises in the choreographic literature.
We use the approach of \citet{SamuelsonHC25} and loosen the bisimulation with
a \emph{less-sequencing} relation~$\lessthan$ that relates $e \lessthan v \Tseq e$.
Formally, $\lessthan$ is the smallest structurally compatible pre-order generated by admitting the rule~$e \lessthan v \Tseq e$.

\para{Soundness and Completeness}
With these two adjustments, we can prove our compilation procedure is sound and complete with respect to our concurrent semantics.
Completeness is straightforward: for each sequence of steps in a \zkstrudul program, there is a sequence of concurrent steps in the projected program which simulate the source steps.
\begin{restatable}[Completeness of Concurrent Semantics]{theorem}{bisimcomplete}\label{thm:bisim-complete}
  If $\cpconf*{e} \cpstep^* \cpconf{e'}{\sigmaC'}{\sigmaP'}{\rho'}$ then there is some $e_\Compute'$ and $e_\Prove'$ such that $\pconf*{\denC{e}}{\denP{e}} \pstep \pconf{e_\Compute'}{e_\Prove'}{\sigmaC'}{\sigmaP'}{\rho'}$ where $\denC{e'} \lessthan e_\Compute'$ and $\denP{e'} \lessthan e_\Prove'$.
\end{restatable}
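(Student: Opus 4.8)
The plan is to prove \Cref{thm:bisim-complete} by strengthening it to a \emph{weak simulation} statement and inducting on the length of the source reduction. I would define a relation $R$ connecting an in-order source configuration $\cpconf{e}{\sigmaC}{\sigmaP}{\rho}$ to a concurrent target configuration $\pconf{e_\Compute}{e_\Prove}{\sigmaC}{\sigmaP}{\rho}$ (sharing all three stores) exactly when $\denC{e} \lessthan e_\Compute$ and $\denP{e} \lessthan e_\Prove$, working throughout with annotated, well-typed source programs so that projection is syntax-directed. Because $\lessthan$ is a reflexive pre-order, the starting configuration $\cpconf*{e}$ is $R$-related to $\pconf*{\denC{e}}{\denP{e}}$; hence it suffices to prove that $R$ is preserved by reduction, i.e., whenever $s \mathrel{R} t$ and $s \cpstep s'$ there is a $t'$ with $t \pstep^* t'$ and $s' \mathrel{R} t'$. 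Iterating this single-step lemma along the source reduction and chaining the resulting concurrent reductions yields a single sequence $\pstep^*$, and the $R$-relation at the end is exactly the conclusion of the theorem. \Cref{thm:abs-soundness} (preservation) ensures every intermediate source term remains well-typed, so each projection appearing along the way is defined.

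The single-step lemma I would prove by case analysis on the source rule that fires, grouped by how projection treats each rule (using the annotated forms of the combined rules). A step that lifts a single-procedure core reduction (\ruleref{EC-Lift}) keeps the affected code in exactly one procedure, so it is matched by a single \ruleref{EP-StepC} or \ruleref{EP-StepP} using the same underlying $\tstep$ reduction while the other procedure stands still. Computed-input steps are matched directly: \ruleref{EC-InputAssign} by \ruleref{EP-WitAssign} (the prove side projects to a value and is untouched), and \ruleref{EC-InputDeref}, whose result is annotated \ComputeProve, by the pair \ruleref{EP-WitDerefC} and \ruleref{EP-WitDerefP}, each reading the same $\rho(x)$. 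The interesting cases are the \ComputeProve reference operations: a single source step such as \ruleref{EC-RefCP} or \ruleref{EC-DerefCP} touches both stores at once, and I would simulate it with \emph{two} concurrent steps, one \ruleref{EP-StepC} updating $\sigmaC$ and one \ruleref{EP-StepP} updating $\sigmaP$, checking that either order produces the same final pair of stores. Congruence steps (reducing inside a let- or if-context) reduce to the induction hypothesis on the relevant subterm once projection is known to commute with evaluation contexts, and nested compute-and-prove steps, which live entirely in the compute procedure, are simulated purely on the compute side.

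Making the $\lessthan$-bookkeeping go through requires a few structural lemmas I would establish first: (i) projection commutes with the evaluation contexts used by the congruence rules, so a subterm step lifts to a whole-term step; (ii) projection is stable under the substitutions performed by the let- and method-call steps on annotated terms---precisely the issue the annotation pass is designed to fix, ensuring no value leaks into the wrong procedure after a substitution; and (iii) projection decomposes the \ComputeProve value constructions, so that $\denC{\cdot}$ and $\denP{\cdot}$ of a joined read $\Lift{\Compute}{\sigmaC(\iota_1)} \sqcup \Lift{\Prove}{\sigmaP(\iota_2)}$ recover the compute- and prove-side values separately. I also need that $\lessthan$ is closed under contexts and exactly absorbs the $v \Tseq e \mapsto e$ collapses that projection introduces when it simplifies let- and if-expressions, so that $R$ is genuinely invariant rather than invariant only up to further rewriting.

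I expect the main obstacle to be the \ComputeProve machinery: reconciling a single synchronized source step with two independent concurrent steps while maintaining the shared-store invariant, and showing that projection distributes correctly over the join operator $\sqcup$ and the paired reference values. Closely behind is the administrative reasoning around $\lessthan$---verifying, following \citet{SamuelsonHC25}, that every sequencing mismatch introduced by projecting let- and if-expressions is captured exactly by the less-sequencing relation, so that the invariant closes under the single-step lemma.
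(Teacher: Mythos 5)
Your overall strategy is the paper's strategy: the paper proves exactly your relation-preservation statement, just factored into two lemmas---a single-step completeness lemma (each annotated source step is matched by one or two concurrent steps starting from the \emph{exact} projections, landing $\lessthan$-above the projections of the result) and a separate ``less-than completeness'' lemma (configurations $\lessthan$-above the projections can simulate any step the projections take)---which are then chained by induction using transitivity of $\lessthan$. Your merged weak-simulation relation $R$ is the composition of these two lemmas, and your structural lemmas (substitution stability of projection on annotated terms, $\lessthan$ closed under contexts and absorbing the $v \Tseq e$ collapses) are precisely the ones the paper establishes. So the skeleton is sound and matches.

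Two concrete case-level problems. First, your claim that an \ruleref{EC-Lift} step ``keeps the affected code in exactly one procedure'' and is matched by a \emph{single} \ruleref{EP-StepC} or \ruleref{EP-StepP} is false when the lift is at label $\ComputeProve$: a cast, field access, call, conditional, or verify on \ComputeProve-labeled data appears in \emph{both} projections, and after stepping only the compute side you would need $\denP{e'} \lessthan e_\Prove'$ where $e_\Prove'$ still contains the unreduced redex---which $\lessthan$ cannot absorb, since it only swallows value-sequencing prefixes. These cases need two concurrent steps, one per side, exactly as you already do for \ComputeProve references and witness dereferences; the fix is local but the case split must be driven by the annotation label, not by the rule name. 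Second, nested compute-and-prove blocks cannot be ``simulated purely on the compute side'' inside this lemma: the \Compute projection of a nested block is the \emph{lifted full compilation} of that block, so matching a single nested \ruleref{EC-ComputeAndProveStep} against it is the adequacy problem itself (requiring the reordering/bubble-sort argument), not a simulation step. The paper avoids this circularity by restricting all bisimulation lemmas to nest-free expressions and handling nesting separately, by induction on nesting depth, in the adequacy theorem (\Cref{thm:adequacy-of-combined}); your proof should adopt the same restriction rather than absorb nested blocks into the single-step case analysis.
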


Soundness, however, is complicated by infinite loops.
If one procedure diverges, the source program may never be able to simulate a step made in the other procedure in the concurrent semantics.
However, if the target programs do not both terminate (in polynomial time, no less), then the security guarantees of zkSNARKs no longer hold,
nullifying the purpose of \zkstrudul.
We are therefore concerned only with terminating programs.
Fortunately, these termination-based complications are the only areas impeding soundness;
assuming the termination of both target programs yields soundness.
\begin{restatable}[Soundness of Concurrent Semantics]{theorem}{bisimsound}\label{thm:bisim-sound}
  If~$\denC{\SExp}$ and~$\denP{\SExp}$ both terminate and $\pconf*{\denC{\SExp}}{\denP{\SExp}} \pstep^* \pconf{e'_\Compute}{e'_\Prove}{\sigmaC'}{\sigmaP'}{\rho'}$, then there is some
  $\SExp'$, $e''_\Compute$, and $e''_\Prove$ such that
  $\cpconf*{\SExp} \cpstep^* \cpconf{\SExp'}{\sigmaC''}{\sigmaP''}{\rho''}$ and
  $\pconf{e'_\Compute}{e'_\Prove}{\sigmaC'}{\sigmaP'}{\rho'} \pstep^* \pconf{e''_\Compute}{e''_\Prove}{\sigmaC''}{\sigmaP''}{\rho''}$, where
  $\denC{\SExp'} \lessthan e''_\Compute$ and $\denP{\SExp'} \lessthan e''_\Prove$.
\end{restatable}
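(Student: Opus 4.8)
The plan is to avoid a fragile step-for-step simulation and instead exploit the termination hypothesis to reduce soundness to a \emph{confluence} argument on the concurrent semantics. The relation I would track is the natural one underlying \Cref{thm:bisim-complete}: a source state $\SExp$ corresponds to a concurrent configuration $(e_\Compute, e_\Prove)$, with matching stores $\sigmaC, \sigmaP, \rho$, exactly when $\denC{\SExp} \lessthan e_\Compute$ and $\denP{\SExp} \lessthan e_\Prove$, working over annotated programs so projection is syntax-directed. The obstacle soundness must overcome is that a concurrent run may drive one procedure arbitrarily far ahead of the in-order schedule, so that no single source state projects onto the observed $(e'_\Compute, e'_\Prove)$; the theorem permits us to advance both the source to some $\SExp'$ and the concurrent configuration to meet it, and termination is precisely what guarantees such a rendezvous exists. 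My strategy is to take the rendezvous at the \emph{terminal} value.

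First I would establish that every $\pstep$-execution is finite. Since $\denC{\SExp}$ and $\denP{\SExp}$ terminate and the target is deterministic up to the choice of fresh locations (hence terminating means strongly normalizing), each concurrent step reduces one projection by a single $\tstep$-step or performs a write-once $\rho$ operation; because \Compute{} runs independently of \Prove, its writes to $\rho$ are fixed, which in turn pins down the length of the \Prove{} run. Since every annotated reduction carries a nonempty procedure label, each source $\cpstep$-step is mirrored by at least one concurrent step, so an infinite source execution would, via \Cref{thm:bisim-complete} applied to ever-longer prefixes, yield unbounded concurrent executions, contradicting finiteness. Hence $\SExp$ itself terminates; let $\SExp \cpstep^* V$ for a value $V$. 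Applying \Cref{thm:bisim-complete} to this run produces a concurrent execution $\pconf*{\denC{\SExp}}{\denP{\SExp}} \pstep^* (\hat e_\Compute, \hat e_\Prove)$ with $\denC{V} \lessthan \hat e_\Compute$ and $\denP{V} \lessthan \hat e_\Prove$; because $\denC{V}$ and $\denP{V}$ are values, the sequencing prefixes admitted by $\lessthan$ reduce away (each $v \Tseq e$ steps to $e$), so this run extends to the normal form $(\denC{V}, \denP{V})$.

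The crux is then a confluence (diamond) lemma for $\pstep$. The only cross-procedure interaction is through $\rho$: a \Prove-read $\witderef w$ (\ruleref{EP-WitDerefP}) becomes enabled only after \Compute{} performs the corresponding $w \witassign v$ (\ruleref{EP-WitAssign}), and single-assignment makes the value read independent of the schedule, while all other \Compute- and \Prove-steps (\ruleref{EP-StepC}, \ruleref{EP-StepP}) act on disjoint state and commute outright. Local confluence together with strong normalization (Newman's lemma) yields unique normal forms. Since both the given partial run to $(e'_\Compute, e'_\Prove)$ and the completeness-derived run emanate from the common state $(\denC{\SExp}, \denP{\SExp})$, and the latter reaches the normal form $(\denC{V}, \denP{V})$, confluence forces $(e'_\Compute, e'_\Prove) \pstep^* (\denC{V}, \denP{V})$. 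Taking $\SExp' = V$, $e''_\Compute = \denC{V}$, and $e''_\Prove = \denP{V}$ discharges the goal, with $\denC{V} \lessthan \denC{V}$ and $\denP{V} \lessthan \denP{V}$ holding by reflexivity, and the final stores determined by the shared normal form.

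The hard part will be the commutation lemma underpinning confluence. I must check that every pair of concurrently-enabled steps either commutes on disjoint state or is ordered solely by the \Compute-write/\Prove-read dependency on $\rho$, and that the \ComputeProve references---split by projection into independent copies residing in the separate heaps $\sigmaC$ and $\sigmaP$---stay consistent because both projections emit matching writes to them. This requires carrying a well-typedness invariant, via the configuration-safety machinery behind \Cref{thm:abs-soundness}, to exclude concurrent states stuck at non-values, so that finiteness delivers a genuine value normal form rather than a stuck one; the residual bookkeeping is tracking the annotations and the $\lessthan$ slack introduced when projection collapses inert $\TUnitVal$ subterms.
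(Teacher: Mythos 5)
Your proposal is correct and follows essentially the same route as the paper: the paper's proof likewise uses the termination hypothesis to obtain a terminating source run (via its source-termination lemma), applies completeness (\Cref{thm:bisim-complete}) to that full run, and then invokes confluence of the concurrent semantics (\Cref{thm:confluence-conc-target}) to join the given partial concurrent run with the completeness-derived one near the terminal value. The only differences are packaging: the paper proves a single-step soundness theorem and inducts (with a less-than preservation corollary), joining runs by tiling the single-step diamond directly, whereas you rendezvous once at the terminal value and invoke Newman's lemma, which obliges you to additionally establish strong normalization of $\pstep$ --- an extra lemma the paper's tiling argument never needs, though your argument for it from termination of both projections is plausible.
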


\subsubsection{Equivalence of Sequential and Concurrent Target Semantics}\label{sec:target-soundness}
The goal is to show the equivalence of the sequential and concurrent target semantics.
That is, executing a (terminating) program with the concurrent semantics produces the same results as executing with the projected semantics.
One direction is obvious: any sequential execution is a valid concurrent one.

To show the target semantics can replicate the concurrent semantics,
we consider an arbitrary interleaving of steps in the concurrent semantics
and progressively rearrange the steps, preserving the validity of the entire sequence along the way.
In particular, we note that the two procedures can only interact by the compute procedure setting a computed input which the prove procedure reads.
If the interleaving has a prove step immediately followed by a compute step, swapping their order must always produce the same result.
By repeating this operation until all compute operations come first---essentially bubble sorting the interleaving---
we produce a sequence corresponding precisely to a sequential execution of the target program.

The following diagram shows this sorting process visually.
\begin{center}
  \begin{tikzpicture}[
      conf/.style={inner sep=4pt},
      label/.style={color=black, font=\small, inner sep=2pt, outer sep=0pt},
      highlighted/.style={draw,circle,fill=yellow!20},
    ]
    \matrix[column sep=3em, row sep=2.5em] {
      \node[conf] (e1) {$\conf{e_{\Compute 1}}{e_{\Prove 1}}$}; &
      \node[conf] (e2) {$\conf{e_{\Compute 1}}{e_{\Prove 2}}$}; &
      \node[conf] (e3) {$\conf{e_{\Compute 1}}{e_{\Prove 3}}$}; &
      \node[conf] (e4) {$\conf{e_{\Compute 2}}{e_{\Prove 3}}$}; &
      \node[conf] (e5) {$\conf{e_{\Compute 3}}{e_{\Prove 3}}$}; \\

      \node[conf] (e1') {$\conf{e_{\Compute 1}}{e_{\Prove 1}}$}; &
      \node[conf] (e2') {$\conf{e_{\Compute 1}}{e_{\Prove 2}}$}; &
      \node[conf] (e3') {$\conf{e_{\Compute 2}}{e_{\Prove 2}}$}; &
      \node[conf] (e4') {$\conf{e_{\Compute 2}}{e_{\Prove 3}}$}; &
      \node[conf] (e5') {$\conf{e_{\Compute 3}}{e_{\Prove 3}}$}; \\

      \node[conf] (e1'') {$e_{\Compute 1} \seq e_{\Prove 1}$}; &
      \node[conf] (e2'') {$e_{\Compute 2} \seq e_{\Prove 1}$}; &
      \node[conf] (e3'') {$e_{\Compute 3} \seq e_{\Prove 1}$}; &
      \node[conf] (e4'') {$e_{\Prove 2}$}; &
      \node[conf] (e5'') {$e_{\Prove 3}$}; \\
    };

    \draw[conc-arrow] (e1) -- (e2) node[midway,above=3pt,label]{$\Prove_1$};
    \draw[conc-arrow] (e2) -- (e3) node[midway,above=3pt,label,highlighted](swap1){$\Prove_2$};
    \draw[conc-arrow] (e3) -- (e4) node[midway,above=3pt,label](swap2){$\Compute_1$};
    \draw[conc-arrow] (e4) -- (e5) node[midway,above=3pt,label]{$\Compute_2$};

    \draw[stealth-stealth,bend right,shorten >=3pt,shorten <=3pt] (swap1.north east) to[out=45,in=135] (swap2.north west);

    \draw[mapsto] (e3) -- (e3');

    \draw[conc-arrow] (e1') -- (e2') node[midway,above=3pt,label]{$\Prove_1$};
    \draw[conc-arrow] (e2') -- (e3') node[midway,above=3pt,label]{$\Compute_1$};
    \draw[conc-arrow] (e3') -- (e4') node[midway,above=3pt,label,highlighted](swap3){$\Prove_2$};
    \draw[conc-arrow] (e4') -- (e5') node[midway,above=3pt,label](swap4){$\Compute_2$};

    \draw[stealth-stealth,bend right,shorten >=3pt,shorten <=3pt] (swap3.north east) to[out=45,in=135] (swap4.north west);

    \draw[mapsto] (e3') -- (e3'') node[midway,fill=white] {$\rvdots$};

    \draw[trg-arrow] (e1'') -- (e2'') node[midway,above=2pt,label]{$\Compute_1$};
    \draw[trg-arrow] (e2'') -- (e3'') node[midway,above=2pt,label]{$\Compute_2$};
    \draw[trg-arrow] (e3'') -- (e4'') node[midway,above=2pt,label]{$\Prove_1$};
    \draw[trg-arrow] (e4'') -- (e5'') node[midway,above=2pt,label]{$\Prove_2$};
  \end{tikzpicture}
\end{center}

Combining this result with \Cref{thm:bisim-sound,thm:bisim-complete}
proves our desired adequacy result.
\begin{restatable}[Adequacy of $\fullComp{\cdot}$]{theorem}{adequacy}\label{thm:adequacy}
  For any expression $e$ that may contain a combined block,
  $\langle e \divi \sigma \rangle \tstep^* \langle v \divi \sigma' \rangle$
  if and only if~$\langle \fullComp{e} \divi \sigma \rangle \tstep^* \langle \fullComp{v} \divi \sigma' \rangle$.
\end{restatable}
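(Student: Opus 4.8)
The plan is to reduce the whole-program statement to a single-block statement and then discharge the block case by chaining the three preceding results. Since $\fullComp{\cdot}$ is compositional and acts as the identity on every construct other than a compute-and-prove block, the in-order \zkstrudul semantics and the \corelang semantics agree step-for-step on all shared rules: whenever $\langle e \divi \sigma\rangle \tstep \langle e_1 \divi \sigma_1\rangle$ by a rule other than a compute-and-prove rule, the same rule fires under the same evaluation context to give $\langle \fullComp{e}\divi\sigma\rangle \tstep \langle \fullComp{e_1}\divi\sigma_1\rangle$, and conversely. First I would package this as a simulation argument so that the only interesting redex is a compute-and-prove block. Crucially, I would \emph{not} attempt to simulate a block step-by-step: because $\fullComp{\SCnp*[e]}$ runs the \Compute projection to completion before touching the \Prove projection, a single \ruleref{E-ComputeAndProveStep} has no matching target step, exactly the lockstep failure discussed for the in-order/projected mismatch. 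Instead I would treat each block as an atomic evaluation unit and prove the block-level lemma that, in the in-order semantics, $\langle \SCnp*[e]\divi\sigma\rangle \tstep^* \langle \TProofOfUsing{\alpha}{\overline{v_p}{::}\,\rho(\overline{x_p})}\divi\sigma'\rangle$ if and only if $\langle \fullComp{\SCnp*[e]}\divi\sigma\rangle \tstep^*$ reaches the same proof value with a store agreeing with $\sigma'$ on all observable locations. The outer induction then peels off one complete block evaluation at a time, so entering a block means either evaluating it fully or diverging inside it.

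The block-level lemma is where the cited machinery does the work. The rule \ruleref{E-ComputeAndProveInit} allocates the computed-input references and matches the opening let-binding of $\fullComp{\SCnp*[e]}$ that allocates them via $\TAlloc{\cdots}$; the run of \ruleref{E-ComputeAndProveStep} reductions is precisely a $\cpstep^*$ reduction of the body $e$; and \ruleref{E-ComputeAndProveTrue} fires exactly when the body reaches $\STrue$, emitting the proof and writing the final computed-input values into $\sigma$. On the target, $\fullComp{\SCnp*[e]}$ runs the \Compute projection $\denC{e}$ to completion and then evaluates $\denP{e}$ inside a $\TProveN$ via \ruleref{E-Prove}, starting from the empty heap. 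To relate these I would route through the concurrent semantics. For the ``only if'' direction, \Cref{thm:bisim-complete} turns the $\cpstep^*$ reduction of the body into a concurrent reduction of $\pconf*{\denC{e}}{\denP{e}}$, and \Cref{thm:bubblesort} reorders that interleaving into the sequential compute-then-prove execution that the compiled code performs. The ``if'' direction runs the same chain backwards: a sequential target execution is a particular concurrent one, and \Cref{thm:bisim-sound} then recovers a corresponding source reduction. Its termination hypothesis is free here, because reaching the final proof value forces both $\denC{e}$ (sequenced first) and $\denP{e}$ (which must evaluate to $\TTrue$ for \ruleref{E-Prove} to fire) to terminate. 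Matching the emitted proof object reduces to checking that $\denP{e}$, evaluated in isolation on the computed inputs $\TDeref{\overline{w}}$ (equivalently the final $\rho$), yields $\TTrue$ exactly when the in-order body does; since \Prove only reads $\rho$ and allocates into its own initially-empty heap, the bisimulation plus the reordering already guarantee this.

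The main obstacle will be the bookkeeping that makes the two final stores literally equal rather than merely isomorphic, together with the atomicity argument. The in-order rule writes the computed inputs into the fresh locations recorded by $\varphi$ in \ruleref{E-ComputeAndProveInit}, whereas the compiled code stores them in the let-bound references $\overline{w}$ produced by $\TAlloc{\cdots}$; these coincide only up to the choice of fresh names, so I would either fix a canonical allocation convention or state the block-level lemma up to renaming of the scratch locations and then show that this renaming is invisible to the surrounding continuation. The \Prove-side heap never appears in $\sigma'$—it is the ghost store $\sigmaP$ in-order and the discarded empty heap of \ruleref{E-Prove} on the target—so it causes no discrepancy. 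The remaining delicate point is justifying the atomic treatment of blocks in the outer induction: I must rule out partial block evaluations interleaving with the enclosing context, which follows because a block occupies a single evaluation-context hole and its compute-and-prove rules keep reduction inside the block until it produces a value, at which point the block-level lemma and the shared-rule simulation compose to give the desired whole-program equivalence.
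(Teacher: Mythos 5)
Your overall skeleton matches the paper's: a compositional outer argument that isolates compute-and-prove blocks, plus a per-block lemma discharged by chaining bisimulation completeness/soundness (\Cref{thm:bisim-complete,thm:bisim-sound}) with the reordering theorem (\Cref{thm:bubblesort}). However, two concrete gaps remain.

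First, nesting. The theorem quantifies over expressions whose block bodies may themselves contain compute-and-prove blocks, and the projection of a nested block is its \emph{full compilation} lifted to the compute side, $\Lift{\Compute}{\fullComp{\cdot}}$. The bisimulation and reordering machinery you cite is developed only for block bodies drawn from the nested-block-free fragment, so your block-level lemma cannot be discharged by those theorems when the body contains a nested block. The paper closes this hole with an additional induction on nesting depth in its adequacy proof for combined blocks: it proves the innermost (nest-free) block adequate, then replaces that block by its lifted compilation using a substitution-preserves-adequacy lemma (\Cref{lemma:subst-adequacy}) together with a lifting bisimulation (\Cref{lemma:lifting-bisim}), decreasing the depth by one and invoking the inductive hypothesis. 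Your ``peel off one complete block at a time'' induction does not substitute for this, because the block being peeled can itself contain nested blocks.

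Second, your plan to treat the remaining mismatch as renaming of scratch locations would fail. After reordering, you have a concurrent execution of $\denC{e}$ and $\denP{e}$, which still contain the primitive operations $x \witassign v$ and $\witderef x$ acting on the separate store~$\rho$; the compiled code instead contains ordinary heap operations over references bound by $\TAlloc{\TTau}$, and passes $\TDeref{\overline{w}}$ into the \TProveN{} statement. These are structurally different programs over different stores, not $\alpha$-variants: for example, $\witderef x$ must become the heap dereference $\TDeref{\varphi(x)}$ on the compute side but the bare input variable on the prove side. The paper bridges this with dedicated bisimulations for the computed-input translation (\Cref{thm:witness-compute-bisimulation,thm:witness-prove-bisimulation}) and a compatibility lemma showing that composing the annotated projection with this translation yields exactly the type-directed projection used by $\fullComp{\cdot}$ (\Cref{lemma:projections-match}). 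Without this layer, your step from ``reordered concurrent execution'' to ``sequential execution of the compiled code'' does not go through; only the freshness-of-locations part of your concern is genuinely a renaming matter, since both sides allocate fresh cells in the same store and the semantics are deterministic only up to permutation of location names.
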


\subsection{Robust Relational Hyperproperty Preservation}\label{sec:rrhp}
With adequacy proved, we now show that our semantics and compilation satisfy another equivalence condition: Robust Relational Hyperproperty Preservation~(RrHP)~\citep{journey-beyond19}.
To reason about observable behavior within the robust compilation framework, we consider the traces produced by our semantics.
Our semantics emit nontrivial trace events from the operations that are relevant to the state of the system and NIZK proof operations:
allocating and assigning memory, generating proofs, and verifying proofs.
Note that executing the body (predicate) of a \TProveN statement does not produce trace events,
as proof generation occurs within a cryptographic proof system and only the final result is visible.

Robust compilation, and RrHP in particular, is defined in terms of linking \emph{contexts} with \emph{partial programs} to form \emph{whole programs}.
Partial programs~$P$ and contexts~$C$ both consist of a class list and an expression to evaluate,
though contexts include a single hole $[\cdot]$ in their expression.
To form a whole program~$W$, we can link a partial program and a context, denoted $C \link P$,
combining their class lists and placing the partial program's expression in the context's hole.
RrHP then compares the possible behaviors of source-language programs linked with source-language contexts---in our case \zkstrudul programs and contexts---
with the compilations of those source programs linked with target-language contexts---here \corelang contexts.

Because RrHP compares the behavior of a partial source program~$\PartialP$ with its compilation~$\fullComp{\PartialP}$, we need a notion of compiling partial programs.
However, our projection process is type-directed, meaning compiling partial programs requires a typing derivation.
There are two challenges to this: the partial program expression may not be closed, and any part of~$\PartialP$ might use classes defined in the linked context.
We therefore allow~$\PartialP$ to be typed in a non-empty variable context~$\Gamma$ and class context~$\Phi$, defined as a set of class lookup functions available in the typing derivation denoted $\Phi; \Gamma \cproves \PartialP$.
These free variables and undefined classes must then come from the linked context to give a well-typed whole program, $\cproves \SContextN \Slink \PartialP$.

We then define the behavior of a full program~$\FullProg$ as the set of traces it can produce.
Due to the security of the underlying NIZK proof systems only holding when programs terminate in polynomial time,
we restrict our definition to \emph{terminating} behavior.
That is, $\Behav{\FullProg} \defeq \{ \trace \mid \FullProg \tstep*[\trace] v\}$.

With this notion of linking and behavior, we slightly modify \citeposessive{journey-beyond19} notion of terminating RrHP for type-directed compilation.
\begin{definition}[Typed RrHP]
  A type-directed compiler $\den{\cdot}$ satisfies \emph{typed RrHP} if,
  \[
    \forall \TContextN\ldotp \exists \SContextN\ldotp \forall \PartialP\ldotp
    {}\proves \TContextN \Tlink \den{\PartialP} \Longleftrightarrow {\cproves \SContextN \Slink \PartialP}
    \hspace{0.5em} \text{and} \hspace{0.5em}
    \TBehav{\TContextN \Tlink \den{\PartialP}} = \SBehav{\SContextN \Slink \PartialP}
  \]
\end{definition}
We can now prove that our compilation satisfies RrHP with our terminating notion of behavior.
\begin{restatable}[Robust Relational Hyperproperty Preservation]{theorem}{rrhp}\label{thm:rrhp}
  $\fullComp{\cdot}$ satisfies typed RrHP.
\end{restatable}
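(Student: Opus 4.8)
The plan is to exploit that \corelang is a syntactic and typing sublanguage of \zkstrudul and that $\fullComp{\cdot}$ acts as the identity on every \corelang construct, which reduces typed RrHP to a whole-program trace-adequacy statement. Given a target context $\TContextN$ (a \corelang context, hence containing no compute-and-prove blocks), I take $\SContextN$ to be $\TContextN$ regarded as a \zkstrudul context. Because compilation is compositional and is the identity on \corelang code, it commutes with linking and hole-filling, so $\fullComp{\SContextN \Slink \PartialP} = \TContextN \Tlink \fullComp{\PartialP}$ for every $\PartialP$. Writing $\FullProg = \SContextN \Slink \PartialP$, this identity lets me replace the target whole program $\TContextN \Tlink \fullComp{\PartialP}$ by $\fullComp{\FullProg}$, so both conjuncts of typed RrHP become statements relating $\FullProg$ to $\fullComp{\FullProg}$.

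\textbf{Well-typedness.} For the biconditional I would establish a type preservation-and-reflection lemma: $\cproves \FullProg$ in \zkstrudul iff $\proves \fullComp{\FullProg}$ in \corelang. The forward direction proceeds by induction on the combined typing derivation, checking that each projection clause of \Cref{fig:selected-projection} and the full-compilation clause for compute-and-prove blocks yields a \corelang-typeable term: computed inputs become uninitialized references, their assignments and dereferences become reference assignments and dereferences, and the \Prove-projection of the body is a boolean predicate of exactly the shape required by \ruleref{T-Prove}. The reflection direction is largely definitional, since the type-directed projection is defined only on well-typed sources, and the identity embedding preserves typing because $\TContextN$ uses only \corelang features, whose \zkstrudul and \corelang typing rules coincide; class-table well-formedness transfers because compilation alters only method bodies while preserving their types.

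\textbf{Behavior.} For $\SBehav{\FullProg} = \TBehav{\fullComp{\FullProg}}$ (equal sets of terminating traces) I reduce to a trace-aware form of \Cref{thm:adequacy} and follow its three-step route, now carrying traces. Step~\rone reuses the bisimulation of \Cref{thm:bisim-sound,thm:bisim-complete}, strengthened (up to the less-sequencing relation $\lessthan$) to record that the in-order trace of a compute-and-prove block equals the $\witden{\cdot}$-filtered concurrent trace: the combined semantics already emits events only for compute-side memory operations and computed-input assignments, with prove-side operations on $\sigmaP$ silent, while \ruleref{E-ComputeAndProveInit} and \ruleref{E-ComputeAndProveTrue} contribute precisely the computed-input allocations and the final $\topgen$ event. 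Step~\rtwo reuses \Cref{thm:bubblesort} but must additionally show that reordering independent \Compute and \Prove steps preserves the observable trace, not merely the final store. This holds because every \Prove step is trace-silent, so bubbling it past a \Compute step leaves the sequence of compute events fixed; and since a \Compute-write feeding a \Prove-read is the only cross-procedure interaction, the final $\rho$ is interleaving-independent, making the concluding $\topgen(\alpha, \overline{v_p} :: \rho(\overline{x_p}), \overline{v_s} :: \rho(\overline{x_s}))$ event identical across orderings.

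\textbf{Main obstacle.} The crux is this trace-preservation of the bubblesort step: output-preservation alone (as in \Cref{thm:bubblesort}) constrains only final stores, whereas RrHP demands the whole emitted trace agree, so I must verify that swapping an adjacent silent \Prove step with a \Compute step neither reorders nor drops any observable event and preserves the $\rho$-values feeding proof generation. A secondary subtlety is that both semantics allocate fresh locations nondeterministically, so $\SBehav{\cdot}$ and $\TBehav{\cdot}$ are \emph{sets} of traces differing in location names; I would close this by matching allocations of the single target heap $\sigma$ against the compute heap $\sigmaC$ together with the computed-input locations recorded in $\varphi$, so that the two trace sets coincide exactly rather than merely up to renaming.
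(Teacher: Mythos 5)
Your proposal is correct and follows essentially the same route as the paper: an identity back-translation justified by \corelang being a sublanguage of \zkstrudul, commutation of compilation with linking, and a trace-aware strengthening of adequacy whose key step is that the reordering (bubblesort) argument preserves traces because prove-side steps emit only silent events. The paper packages these exact ingredients as \Cref{lemma:compilation-over-context}, \Cref{thm:bubblesort-trace}, and \Cref{thm:whole-adequacy-with-traces}, so your ``main obstacle'' is precisely the lemma the paper proves.
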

To prove this theorem, given an arbitrary target context $\TContextN$, we define a back-translation to $\SContextN$ as the identity function, which is allowed since \corelang is a subset of \zkstrudul. 
The proof then follows by updating the adequacy theorem to include traces, which shows that the sets of traces, and therefore the behaviors, are equal.

\section{Related Work}
\label{sec:related-work}

\para{Languages for Cryptographic Proofs}
\label{sec:crypto-languages}
There are a numerous language-based tools for NIZK proofs.
Many provide high-level language support for specifying zkSNARK predicates and compile those predicates to efficient circuits
and generate cryptographic proofs~\citep{pinocchio16,gnark,libsnark,zokrates18,circ22,circom23,o1js,leo-lang,zinc-lang,xjsnark18}.
Other work, like \textsc{Coda}~\citep{LiuKLT+24}, aims to verify important properties about individual circuits,
such as requiring that only one set of hidden inputs will allow them to verify.
All of these works, however, operate at the level of individual zkSNARK proofs and presume system updates and input computation is already complete.
They are therefore entirely complementary to \zkstrudul,
as our results show how to tie these powerful existing languages and libraries into larger systems.

Other work builds support for \emph{authenticated data structures}, like Merkle Trees, that allow verifiers to check the correctness of operations performed by untrusted provers.
\citet{Miller14} present a purely functional language for building them,
where programmers use special ``authentication'' types to designate differing but related behavior for the prover and verifier.
Recent work~\citep{Gregersen25} mechanically proves the correctness of this idea in an OCaml-like language using a variant of separation logic.
Both rely on a duality similar to our Merkle tree example (Section~\ref{sec:merkle-example}), but their techniques are specific to authenticated data structures.

\para{Languages for MPC Systems}
There exist languages designed to support full-system construction leveraging cryptographic primitives,
but they focus on \emph{interactive} multi-party computation~(MPC) and ignore the complexities of efficiently building NIZK proofs.
Languages like Viaduct~\citep{viaduct21} and Jif/Split~\citep{jifSplit02,ZhengCMZ03} transform high-level programs
with information-flow type systems into secure distributed implementations by selecting appropriate cryptographic mechanisms, including (interactive) zero knowledge proofs.
Wysteria~\citep{wysteria14,wysstar18} mixes local and secure multi-party computations using a refinement type system expressing the requirements of the computation that is compiled to circuits to be executed.
Symphony~\citep{symphony23} allows for more complex coordination by adding new first-class constructs, shares, and party-sets.
Our work does not explicitly model multiple parties.
While they can be approximated in \zkstrudul by representing each party as a separate object, including such explicit separation would be valuable future work.

\para{Choreographies}
\label{sec:choreographies}

Our approach to proving translation adequacy (Section~\ref{sec:adequacy}) pulls from \emph{choreographic programming}~\citep{Montesi23},
a concurrent programming paradigm
that unifies the code for all processes in a concurrent system into one combined program.
A projection operator then extracts code for each process from this top-level program
just as \zkstrudul compiles compute-and-prove blocks to separate procedures.
Recent choreographies allow processes to share data and control flow using multiply-located operations~\citep{BatesK+25,SamuelsonHC25},
similar to the \ComputeProve annotation in \zkstrudul.
These similarities are enough to make the proof techniques highly applicable to our work.
Choreographies have also previously proved useful for language-based analysis of synthesized cryptography~\citep{AcayGRM24} for interactive multi-party computation.
\section{Conclusion}
\label{sec:conclusion}
This work introduced \zkstrudul, a new programming abstraction for NIZK systems that unifies predicate definition and input transformation into a single coherent expression---the compute-and-prove block.
By leveraging the structural duality between these two procedures,
compute-and-prove blocks eliminate code duplication and enable local reasoning about normally-separate operations involving input construction and usage.
\zkstrudul projects the predicate and input transformation procedures from this combined abstraction,
and supports statically restricting the set of available predicate operations for compatibility with different potential NIZK backends.  
Leveraging ideas from concurrency theory, we established two correspondences between the source semantics and projected semantics;
first proving the semantics are adequate, and second that our compilation satisfies Robust Relational Hyperproperty Preservation.
As a result of these features and properties, \zkstrudul enables writing NIZK applications---including Merkle Proofs and zkVMs---that are more concise, easier to reason about, and less error-prone than existing approaches.

\section*{Acknowledgments}

Thanks to Ian Miers for helping to formulate motivating examples and,
along with Leonidas Lampropulous, discussions and direction that solidified the problem framing.
Support for this work was provided by NSF grant \#2504579
and by by the University of Wisconsin--Madison Office of the Vice Chancellor for Research with funding from the Wisconsin Alumni Research Foundation.

\bibliography{ethan,ephemeral}

\appendix
\section{Syntax}
\label{app:syntax}

\subsection{Full Syntax}
\label{app:full-syntax}
The extension from \corelang to \zkstrudul is underlined in \underred{red}.
Terms not available in the surface language are underlined in \undergreen{green}.
{\small
  \[
    \begin{array}{rcl}
      f,m,x,\alpha & \in & \mathcal{V} \quad \text{(field, method, variable, proof names)} \\[2pt]
      \TTau & \Coloneqq & \TInt \alt \TBool \alt \TUnit \alt \TRefTau* \alt C \alt \TProofOf* \alt \TProof \\[2pt]
      \ell & \Coloneqq & \Compute \alt \Prove \alt \ComputeProve \quad (\Compute, \Prove \subsetl \ComputeProve) \\[2pt]
      \mathit{CL} & \Coloneqq & \TClassListSyntax \\ [2pt]
      K & \Coloneqq & \TConstructor \\[2pt]
      M & \Coloneqq & \TTau\{\ell\}~ m(\overline{x} \ty \overline{\TTau}) \{ e \} \\[2pt]
      \TVal & \Coloneqq & \TVar* \alt \TUnitVal \alt \TTrue \alt \TFalse \alt \TNew* \alt \undergreen{\TLoc} \alt \undergreen{\TProofOfUsing*} \\[2pt]
      \TExp & \Coloneqq & v \alt \TRefVal* \alt \TDeref* \alt \TAssign* \alt \TCast* \alt \TField* \alt \TCall* \alt \TAlloc{\TTau} \\[2pt]
      & \alt & \TLetIn* \alt \TIfThenElse* \\[2pt]
      & \alt & \TProveUsing* \alt \TVerify* \\[2pt]
      & \alt & \underred{\TCnp*} \\[2pt]
      & \alt & \undergreen{\underred{\TCnpAdmin*{e}{\sigmaP}{\rho}}}
    \end{array}
  \]
}

\subsection{Inner \zkstrudel Syntax}
\label{app:combined-syntax}
We underline changes from \corelang involving adding labels in \underred{red}, and involving computed inputs in \underblue{blue}.
Terms not available in the surface language are underlined in \undergreen{green}.
{\small
  \[
    \begin{array}{rcl}
      x,\alpha & \in & \mathcal{V} \quad \text{(variable, proof names)} \\[2pt]
      \TTau & \Coloneqq & \SInt \alt \SBool \alt \SUnit \alt {\SRefTau*}  \alt C \alt \SProofOf* \alt \SProof \\[2pt]
      \ell & \Coloneqq & \Compute \alt \Prove \alt \ComputeProve \quad (\Compute, \Prove \subsetl \ComputeProve) \\[2pt]
      \STau & \Coloneqq & \tplab{\TTau}{\ell} \\[2pt]
      r & \Coloneqq & \underred{\SLocC* \alt \SLocP* \alt \SLocCP*}\\[2pt]
      \SVal & \Coloneqq & \SVar* \alt \SUnitVal \alt \STrue \alt \SFalse \alt \SNew* \alt \undergreen{r} \alt \undergreen{\SProofOfUsing*} \\[2pt]
      \SExp & \Coloneqq & \SVal \alt \underred{\SRefVal*} \alt \SDeref* \alt \SAssign* \alt \underblue{\SWitAssign*} \alt \underblue{\SWitDeref*} \alt \SCast* \alt \SField* \alt \underred{\SCall*} \alt \SAlloc{\TTau} \\[2pt]
      & \alt & \underred{\SLetIn*} \alt \SIfThenElse*  \\[2pt]
      & \alt & \SProveUsing* \alt \SVerify* \\[2pt]
      & \alt & \SCnp* \\[2pt]
      & \alt & \undergreen{\SCnpAdmin*{e}{\sigmaP}{\rho}}
    \end{array}
  \]
}

\section{Operational Semantics}
\label{app:semantics}

\subsection{\corelang Operational Semantics}
\label{app:core-semantics}
\begin{align*}
    E \Coloneqq [\cdot] \mid \TLetIn{x\ty\TTau}{\EvalN}{e}
\end{align*}
\begin{rulesetpagebreakable}
    \SetRuleLabelLoc{lab}
    \SetRuleLabelVCenter

    \EEvalRule
    \and
    \ELetRule
    \and
    \EIfTRule
    \and
    \EIfFRule
    \and
    \ERefRule
    \and
    \EDerefRule
    \and
    \EAssignRule
    \and
    \ECastRule
    \and
    \EFieldRule
    \and
    \ECallRule
    \and
    \EAllocRule
    \and
    \EProveRule
    \and
    \EVerifyTRule
    \and
    \EVerifyFRule
\end{rulesetpagebreakable}

\subsection{Additional Semantic Rules for Combined Blocks}
\begin{rulesetpagebreakable}
    \SetRuleLabelLoc{lab}
    \SetRuleLabelVCenter
    \EComputeAndProveInitRule
    \and
    \EComputeAndProveStepRule
    \and
    \EComputeAndProveTrueRule
\end{rulesetpagebreakable}

\subsection{\zkstrudel Operational Semantics}
\label{app:combined-semantics}
\begin{align*}
    E \Coloneqq [\cdot] \mid \SLetIn{x\ty\STau}{\EvalN}{e}
\end{align*}

\begin{rulesetpagebreakable}
    \SetRuleLabelLoc{lab}
    \SetRuleLabelVCenter
    \ECLiftRule
    \and
    \ECEvalRule
    \and
    \ECLetRule
    \and
    \ECRefCRule
    \and
    \ECRefPRule
    \and
    \ECRefCPRule
    \and
    \ECDerefCRule
    \and
    \ECDerefPRule
    \and
    \ECDerefCPRule
    \and
    \ECAllocRule
    \and
    \ECAssignCRule
    \and
    \ECAssignPRule
    \and
    \ECAssignCPRule
    \and
    \ECInputAssignRule
    \and
    \ECInputDerefRule
    \and
    \ECComputeAndProveInitRule
    \and
    \ECComputeAndProveStepRule
    \and
    \ECComputeAndProveTrueRule
\end{rulesetpagebreakable}

\subsection{Trace Element Syntax and Equivalence Definition}
\subsubsection*{\corelang Traces}
{
    \begin{align*}
        \temit & \Coloneqq \Nulltrace \alt \topalloc(\TLoc, v) \alt \topset(\TLoc, v) \divi \topgen(\alpha, \overline{v}, \overline{u}) \divi \topverif(\alpha, \overline{v}, b) \\
        \ttrace & \Coloneqq \epsilon \alt \temit :: \ttrace \\
        b & \Coloneqq \bot \mid \top
    \end{align*}
}

\subsubsection*{\zkstrudel Traces}
{
    \begin{align*}
        \semit & \Coloneqq \Nulltrace \alt \sopalloc(\TLoc, v) \alt \sopset(\TLoc, v) \divi \sopset(x,v) \divi \sopgen(\alpha, \overline{v}, \overline{u}) \divi \sopverif(\alpha, \overline{v}, b) \\
        \strace & \Coloneqq \epsilon \alt \semit :: \strace \\
        b & \Coloneqq \bot \mid \top
    \end{align*}
}

\subsection{Trace Projection Definition}
\begin{mathparpagebreakable}
    \witdenR{\semit :: \strace} = \witdenR{\semit} :: \witdenR{\strace}
    \and
    \witdenR{\Nulltrace} = \Nulltrace
    \and
    \witdenR{\sopalloc(\iota, v)} = \topalloc(\iota, v)
    \and
    \witdenR{\sopset(\iota, v)} = \topset(\iota, v)
    \and
    \witdenR{\sopgen(\alpha, \overline{v}, \overline{w})} = \topgen(\alpha, \overline{v}, \overline{w})
    \and
    \witdenR{\sopverif(\alpha, \overline{v}, b)} = \topverif(\alpha, \overline{v}, b)
    \and
    \witdenR{\sopset(x,v)} = \topset(\varphi(x), v)
\end{mathparpagebreakable}

\subsection{Value Lifting and Lowering Value Definitions}\label{app:lifting-lowering-def}
\begin{mathparpagebreakable}
    \Lift*{x} = x 
    \and
    \Lift*{()} = ()
    \and
    \Lift*{\TTrue} = \STrue
    \and
    \Lift*{\TFalse} = \SFalse
    \and
    \Lift*{\TNew*} = \SNew{C}{\Lift*{\overline{v}}}
    \and
    \Lift*{\TProofOfUsing*} = \SProofOfUsing{\alpha}{\Lift*{\overline{v}}}
    \\
    \Lift*{\iota} = \begin{cases}
        r_\Compute(\iota) & \ell = \Compute \\
        r_\Prove(\iota) & \ell = \Prove \\
        \uparrow & \ell = \ComputeProve
    \end{cases}
    \and
    \SVal_1 \sqcup \SVal_2 = \begin{cases}
        \SVal_1 & \SVal_1 = \SVal_2 \\
        r_\ComputeProve(\iota_1, \iota_2) & \SVal_1 = r_\Compute(\iota_1) \land \SVal_2 = r_\Prove(\iota_2) \\
        \SNew{C}{\overline{v_1} \sqcup \overline{v_2}} & \SVal_1 = \SNew{C}{\overline{v_1}} \land \SVal_2 = \SNew{C}{\overline{v_2}} \\
        \SProofOfUsing{\alpha}{\overline{v_1} \sqcup \overline{v_2}} & v_1 = \SProofOfUsing{\alpha}{\overline{v_1}} \land v_2 = \SProofOfUsing{\alpha}{\overline{v_2}} \\
        \uparrow & \ow
    \end{cases}
    \\\\
    \Lower*{x} = x 
    \and
    \Lower*{()} = ()
    \and
    \Lower*{\STrue} = \TTrue
    \and
    \Lower*{\SFalse} = \TFalse
    \and
    \Lower*{\SNew*} = \TNew{C}{\Lower*{\overline{v}}}
    \and
    \Lower*{\SProofOfUsing*} = \TProofOfUsing{\alpha}{\Lower*{\overline{v}}}
    \\
    \Lower*{r_\Compute(\iota)} = \begin{cases}
        \iota & \ell = \Compute \\
        \uparrow & \ow
    \end{cases}
    \and
    \Lower*{r_\Prove(\iota)} = \begin{cases}
        \iota & \ell = \Prove \\
        \uparrow & \ow
    \end{cases}
    \and
    \Lower*{r_\ComputeProve(\iota_1, \iota_2)} = \begin{cases}
        \iota_1 & \ell = \Compute \\
        \iota_2 & \ell = \Prove \\
        \uparrow & \ell = \ComputeProve
    \end{cases}
\end{mathparpagebreakable}

\subsection{Expression Lifting Definitions}
\label{app:expression-lifting-def}
\begin{mathparpagebreakable}
    \Lift*{\TRefVal*} = \SRefVal{\ell}{\Lift*{v}}
    \and
    \Lift*{\TDeref*} = \SDeref{\Lift*{v}}
    \and
    \Lift*{\TAssign{v_1}{v_2}} = \SAssign{\Lift*{v_1}}{\Lift*{v_2}}
    \and
    \Lift*{\TCast*} = \SCast{C}{\Lift*{v}}
    \and
    \Lift*{v.f} = \Lift*{v}.f
    \and
    \Lift*{v.m(\overline{u})} = \Lift*{v}.m_\ell(\Lift*{\overline{u}})
    \and
    \Lift*{\TAlloc{\TTau}} = \SAlloc{\TTau}
    \and
    \Lift*{\TLetIn{x\ty\TTau}{e_1}{e_2}} = \SLetIn{x\ty\tplab{\TTau}{\ell}}{\Lift*{e_1}}{\Lift*{e_2}}
    \and
    \Lift*{\TIfThenElse{v}{e_1}{e_2}} = \SIfThenElse{\Lift*{v}}{\Lift*{e_1}}{\Lift*{e_2}}
    \and
    \Lift{\ell}{\TProveUsing*} = \SProveUsing* 
    \and
    \Lift*{\TVerify*} = \SVerify*[v]
    \and
    \Lift{\ell}{\TCnp*} = \SCnp*
    \and 
    \Lift{\ell}{\TCnpAdmin*{e}{\sigmaP}{\rho}} = \SCnpAdmin*{e}{\sigmaP}{\rho}
\end{mathparpagebreakable}
\section{Type System}
\label{app:types}

\subsection{\corelang Type System}
\label{app:core-types}
Notable changes to existing rules from Featherweight Java with mutable references~\citep{featherweightJava01, tapl} are colored in \changes{red}.

\begin{rulesetpagebreakable}
    \SetRuleLabelLoc{lab}
    \SetRuleLabelVCenter
    \TVarRule
    \and
    \TUnitRule
    \and
    \TTrueRule
    \and
    \TFalseRule
    \and
    \TLocationRule
    \and
    \TProofOfRule
    \and
    \TNewRule
    \and
    \TValRule
    \and
    \TCastRule
    \and
    \TFieldRule
    \and
    \TCallRule
    \and
    \TAllocRule
    \and
    \TRefRule
    \and
    \TDerefRule
    \and
    \TAssignRule
    \and
    \TIfRule
    \and
    \TLetRule
    \and
    \TProveRule
    \and
    \TVerifyRule
    \and
    \TSubtypeRule
    \\
    \SubClass
    \and
    \SubProof
    \and
    \SubTrans
    \and
    \MethodOk
    \and
    \ClassOkRef
    \and
    \ClassOkNoRef
    \and
    \CTOk
    \and 
    \infer*{
        \TTau\changes{\{\ell\}}~m(\overline{x} \ty \overline{\TTau_\alpha}) \{ e \} \in M\\\\
        CT(C) = \TClassN~C_{\changes{\{\ell_C\}}}~\TExtendsN~D_{\changes{\{\ell_D\}}}~\{\overline{f} \ty \overline{\TTau_f} ; K ; \overline{M} \}
    } {
        \mtype(m, C) = \overline{\TTau_\alpha} \xrightarrow{\changes{\ell}} \TTau \\\\
        \mbody(m, C) = (e, \overline{x}, \TTau)
    }
    \and 
    \infer*{
        m \text{ not defined in } \overline{M} \\\\
        CT(C) = \TClassN~C_{\changes{\{\ell_C\}}}~\TExtendsN~D_{\changes{\{\ell_D\}}}~\{\overline{f} \ty \overline{\TTau_f} ; K ; \overline{M} \}
    } {
        \mtype(m,C) = \mtype(m,D) \\\\
        \mbody(m,C) = \mbody(m,D)
    }
    \and
    \infer*{
        \fields(D) = \overline{g} \ty \overline{\TTau_g} \\\\
        CT(C) = \TClassN~C_{\changes{\{\ell_C\}}}~\TExtendsN~D_{\changes{\{\ell_D\}}}~\{\overline{f} \ty \overline{\TTau_f} \seq K \seq \overline{M} \}
    } {
        \fields(C) = \overline{g} \ty \overline{\TTau_g} ; \overline{f} \ty \overline{\TTau_f}
    }
    \and
    \infer*{
        (m,D) \in \dom(\mtype) \Rightarrow \mtype(m,D) = \overline{\TTau_\alpha} \xrightarrow{\changes{\ell}} \TTau
    } {
        \canoverride(D,m,\overline{\TTau_\alpha} \xrightarrow{\changes{\ell}} \TTau)
    }
    \and
    \infer*{
        PT(\alpha) = \CircuitDefault
    } {
        \ptypes(\alpha) = \overline{\TTau_x}
    }
\end{rulesetpagebreakable}

\subsection{Additional Typing Rules for Combined Blocks}
\begin{rulesetpagebreakable}
    \SetRuleLabelLoc{lab}
    \SetRuleLabelVCenter
    \TComputeAndProveRule
    \and
    \infer*{
        PT(\alpha) = \TCnp*
    } {
      \ptypes(\alpha) = \overline{\TTau_{x_p}}, \overline{\TTau_{y_p}}
    }
    \and
    \infer*{
        PT(\alpha) = \SCnp*
    } {
      \ptypes(\alpha) = \overline{\TTau_{x_p}}, \overline{\TTau_{y_p}}
    }
    \and
    \TCombinedRule
\end{rulesetpagebreakable}

\subsection{\zkstrudel Type System}
\label{app:combined-types}
\begin{mathparpagebreakable}
    (\Gamma, x \ty \tplab{\TTau}{\ell'})|_\ell = \begin{cases}
        \Gamma |_\ell , x \ty \tplab{\TTau}{\ell'} & \tplab{\TTau}{\ell'} <: \tplab{\TTau}{\ell} \\
        \Gamma |_\ell & \ow
    \end{cases}
\end{mathparpagebreakable}

\begin{rulesetpagebreakable}
    \SetRuleLabelLoc{lab}
    \SetRuleLabelVCenter
    \infer*{
        \TTau_1 <: \TTau_2
    } {
        \tplab{\TTau_1}{\ell} <: \tplab{\TTau_2}{\ell}
    }
    \and
    \infer*{
        \TTau_1 <: \TTau_2 \\
        \ell_2 \subsetl \ell_1 \\
        \refunreach{\TTau_1}
    } {
        \tplab{\TTau_1}{\ell_1} <: \tplab{\TTau_2}{\ell_2}
    }
    \and
    \TCLiftValRule
    \and
    \TCLiftValCPRule
    \and
    \TCValRule
    \and
    \TCLiftRule
    \and
    \TCLiftCPRule
    \and
    \TCInputAssignRule
    \and
    \TCInputDerefRule
    \and
    \TCIfRule
    \and
    \TCLetRule
    \and
    \TCComputeAndProveRule
    \and
    \TCCombinedRule
\end{rulesetpagebreakable}

\subsection{Theta-Lifting Definition}
\begin{mathparpagebreakable}
    \LiftTheta*{x} = x 
    \and
    \LiftTheta*{()} = ()
    \and
    \LiftTheta*{\TTrue} = \STrue
    \and
    \LiftTheta*{\TFalse} = \SFalse
    \and
    \LiftTheta*{\iota} = \SrcProg{r}_\ComputeProve(\Theta^{-1}(\iota)) 
    \and
    \LiftTheta*{\TNew*} = \SNew{C}{\LiftTheta*{\overline{v}}}
    \and
    \LiftTheta*{\TProofOfUsing*} = \SProofOfUsing{\alpha}{\LiftTheta*{\overline{v}}}
    \\
    \LiftTheta*{\TRefVal*} = \SRefVal{\ell}{\LiftTheta*{v}}
    \and
    \LiftTheta*{\TDeref*} = \SDeref{\LiftTheta*{v}}
    \and
    \LiftTheta*{\TAssign{v_1}{v_2}} = \SAssign{\LiftTheta*{v_1}}{\LiftTheta*{v_2}}
    \and
    \LiftTheta*{\TCast*} = \SCast{C}{\LiftTheta*{v}}
    \and
    \LiftTheta*{v.f} = \LiftTheta*{v}.f
    \and
    \LiftTheta*{v.m(\overline{u})} = \LiftTheta*{v}.m_\ell(\LiftTheta*{\overline{u}})
    \and
    \LiftTheta*{\TAlloc{\TTau}} = \SAlloc{\TTau}
    \and
    \LiftTheta*{\TLetIn{x\ty\TTau}{e_1}{e_2}} = \SLetIn{x\ty\tplab{\TTau}{\ell}}{\LiftTheta*{e_1}}{\LiftTheta*{e_2}}
    \and
    \LiftTheta*{\TIfThenElse{v}{e_1}{e_2}} = \SIfThenElse{\LiftTheta*{v}}{\LiftTheta*{e_1}}{\LiftTheta*{e_2}}
    \and
    \LiftTheta{\ell}{\TProveUsing*} = \SProveUsing* 
    \and
    \LiftTheta*{\TVerify*} = \SVerify*[v]
    \and
    \LiftTheta{\ell}{\TCnp*} = \SCnp*
    \and
    \LiftTheta{\ell}{\TCnpAdmin*{e}{\sigmaP}{\rho}} = \SCnpAdmin*{e}{\sigmaP}{\rho}
\end{mathparpagebreakable}

\section{Compilation}
\label{app:compilation}

\subsection{Projection Definition}
Recall the inputs to projection are typing judgments: $\denL{\Sigma ; \Gamma \cproves v : \tplab{\STau}{\ell}}$ for values and $\denL{\Sigma ; \Gamma ; \Delta ; A \cproves e : \tplab{\STau}{\ell'} \cproduces A'}$ for expressions, which we shorten to [\text{rulename}]$\denL{e}$.
\begin{mathparpagebreakable}[\small]
    [\ruleref{TC-LiftVal}]~\denFull{\Lift{\ell'}{v}} = \begin{cases}
        v & \ell \subsetl \ell' \\
        \TUnitVal & \ow
    \end{cases} 
    \and
    [\ruleref{TC-LiftValCP}]~\denFull{\Lift{\ComputeProve}{v}_\Theta} = v
    \and
    [\ruleref{TC-Val}]~\denFull{\Sigma ; \Gamma ; \Delta ; A \cproves v : \tplab{\STau}{\ell'} \cproduces A} = \begin{cases}
        v & \ell \subsetl \ell' \\
        \TUnitVal & \ow
    \end{cases}
    \and
    [\ruleref{TC-Lift}]~\denFull{\Lift{\ell'}{e}} = \begin{cases}
        e & \ell \subsetl \ell' \\
        \TUnitVal & \ow
    \end{cases} 
    \and
    [\ruleref{TC-LiftCP}]~\denFull{\Lift{\ComputeProve}{e}_\Theta} = e
    \and
    [\ruleref{TC-InputAssign}]~\denFull{\SWitAssign{x}{v}} = \begin{cases}
        \TAssign{x}{v} & \ell = \Compute \\
        \TUnitVal & \ell = \Prove
    \end{cases}
    \and
    [\ruleref{TC-InputDeref}]~\denFull{\SWitDeref{x}} = \begin{cases}
        \TDeref{x} & \ell = \Compute \\
        x & \ell = \Prove
    \end{cases}
    \and
    [\ruleref{TC-Let}]~\denFull{\SLetIn{x\ty\tplab{\TTau}{\ell'}}{\SExp_1}{\SExp_2}} = \begin{cases}
        \TLetIn{x\ty\TTau}{\denG{e_1}}{\denG{e_2}} & \ell \subsetl \ell' \\
        \denFull{e_2} & \ell \not\subsetl \ell' \land \denFull{e_1} = v \\
        \denFull{e_1} \Tseq \denFull{e_2} & \ow
    \end{cases}
    \and
    [\ruleref{TC-If}]~\denFull{\SIfThenElse*} = \begin{cases}
        \TIfThenElse{v}{\denG{e_1}}{\denG{e_2}} & \denFull{v} = v \\
        \denFull{e_1} & \denFull{v} = \TUnitVal \land \denFull{e_1} = \denFull{e_2} \\
        \diverges & \ow
    \end{cases}
    \and
    [\ruleref{TC-ComputeAndProve}]~\denFull{\SCnp*[e]}\\
    = \begin{cases}
        \Lift{\Compute}{\fullComp{\TCnp*[e]}} & \ell = \Compute \\
        () & \ell = \Prove
    \end{cases}
\end{mathparpagebreakable}

\subsection{Full Compilation Definition}
\begin{mathparpagebreakable}[\small]
    \fullComp{\TClassN~C_{\{\ell_C\}} ~\TExtendsN~D_{\{\ell_D\}} \{\overline{f} \ty \overline{\TTau} \seq K \seq \overline{M}\}} = \TClassN~C_{\{\ell_C\}}~\TExtendsN~D_{\{\ell_D\}} \{\overline{f} \ty \overline{\TTau} \seq K \seq \fullComp{\overline{M}} \}
    \and
    \fullComp{\TTau\{\ell\}~ m(\overline{x} \ty \overline{\TTau}) \{ e \}} = \TTau\{\ell\}~ m(\overline{x} \ty \overline{\TTau}) \{ \fullComp{e} \} 
    \and
    \fullComp{v} = v
    \and
    \fullComp{\TRefVal{v}} = \TRefVal{v}
    \and
    \fullComp{\TDeref{v}} = \TDeref{v}
    \and
    \fullComp{\TAssign{v_1}{v_2}} = \TAssign{v_1}{v_2}
    \and
    \fullComp{\TCast*} = \TCast*
    \and
    \fullComp{\TField*} = \TField*
    \and
    \fullComp{\TCall*} = \TCall*
    \and
    \fullComp{\TLetIn{x\ty\TTau}{e_1}{e_2}} = \TLetIn{x\ty\TTau}{\fullComp{e_1}}{\fullComp{e_2}}
    \and
    \fullComp{\TIfThenElse{v}{e_1}{e_2}} = \TIfThenElse{v}{\fullComp{e_1}}{\fullComp{e_2}}
    \and
    \fullComp{\TProveUsing*} = \TProveUsing*
    \and
    \fullComp{\TVerify*} = \TVerify*
    \\
    \begin{array}{rl}
        & \fullComp{\TCnp*[e]} = \\
        &\hspace{2em} \TLetIn{\overline{w_p}, \overline{w_s} \ty \Reft \overline{\TTau_{x_p}}, \Reft \overline{\TTau_{x_s}}}{\TAlloc{\overline{\TTau_{x_p}}, \overline{\TTau_{x_s}}}}{\big( 
            \subst*{\denC{e}}{{\overline{y_p}}{\overline{v_p}}{\overline{y_s}}{\overline{v_s}}{\overline{x_p}}{\overline{w_p}}{\overline{x_s}}{\overline{w_s}}} \Tseq
        } \\
        &\hspace{4em} \TProveUsing{\alpha}{\exists \overline{y_p}, \overline{x_p} [\overline{y_s}, \overline{x_s}]. \denP{e}}{\overline{v_p}, \TDeref{\overline{w_p}}[\overline{v_s}, \TDeref{\overline{w_s}}]} \big)
    \end{array}
\end{mathparpagebreakable}

\section{Proof Artifacts}
\label{app:proof-artifacts}

\subsection{Type Soundness Predicates}
\label{app:config-safety-def}
\begin{align*}
    \stuckpred{e}{\sigma} &= e = \Eval{\TCast{C}{(\TNew{D}{\TVals*})}} \text{ where } D \not<: C \\ 
    & \lor e = \Eval{\TProveUsing*[e']} \\
    &\hspace{10em}\text{ where } \conf{\subst*{e'}{{\overline{x}}{\overline{v}}{\overline{y}}{\overline{u}}}}{\varnothing} \text{ diverges or it steps to } \False \\
    & \lor e = \Eval{\TDeref{\iota}} \text{ where } \sigma(\iota) = \bot 
\end{align*}

\begin{align*}
    \stuckpredzk{\SExp}{\sigmaC}{\sigmaP}{\rho} &= \SExp = \Eval{\SCast{C}{(\SNew{D}{\SVals*})}} \text{ where } D \not<: C \\
    & \lor \SExp = \Eval{\SProveUsing*[e']} \\
    &\hspace{4em}\text{ where } \conf{\subst*{e'}{{\overline{x}}{\overline{v}}{\overline{y}}{\overline{u}}}}{\varnothing} \text{ diverges or it steps to } \False \\
    & \lor \SExp = \Eval{\SDeref{\SLocC*}} \text{ where } \sigmaC(\iota) = \bot \\
    & \lor \SExp = \Eval{\SCnpAdmin*{\SFalse}{\tilde{\sigmaP}}{\tilde{\rho}}} \\
    & \lor \SExp = \Eval{\SCnpAdmin*{\SExp'}{\tilde{\sigmaP}}{\tilde{\rho}}} \\
    &\hspace{4em}\text{ where } \SExp' \text{ itself gets stuck}
\end{align*}

\subsection{\zkstrudul Configuration Safety Definition}
We define $\Sigma \cproves \sigmaC, \sigmaP$ to be the following are true (note that $\SigmaCP |_\Compute$ and $\SigmaCP |_\Prove$ return the projected context where each pair $(a,b)$ mapping to a type $\TTau$ now maps $a$ and $b$ respectively to the type $\TTau$):
\begin{enumerate}[label=(\roman*)]
    \item $\dom(\sigmaC) \supseteq \dom(\SigmaC) \cupdot \dom(\SigmaCP |_\Compute)$ and $\dom(\sigmaP) \supseteq \dom(\SigmaP) \cupdot \dom(\SigmaCP |_\Prove)$
    \item for any $\iota \in \dom(\SigmaC)$, $\SigmaC \proves \sigmaC(\iota) : \SigmaC(\iota)$
    \item for any $\iota \in \dom(\SigmaP)$, $\SigmaP \proves \sigmaP(\iota) : \SigmaP(\iota)$
    \item for any $(\iota_1, \iota_2) \in \dom(\SigmaCP)$, $\SigmaCP |_\Compute \proves \sigmaC(\iota_1) : \SigmaCP(\iota_1, \iota_2)$
    \item for any $(\iota_1, \iota_2) \in \dom(\SigmaCP)$, $\SigmaCP |_\Prove \proves \sigmaP(\iota_2) : \SigmaCP(\iota_1, \iota_2)$;
    \item for any $(\iota_1, \iota_2) \in \dom(\SigmaCP)$: $\sigmaC ; \sigmaP \models \sigmaC(\iota_1) \approx \sigmaP(\iota_2)$; and
    \item for any $(\iota_1, \iota_2) \in \dom(\SigmaCP)$ such that $\sigmaC(\iota_1) = \iota_1'$ and $\sigmaP(\iota_2) = \iota_2'$: $(\iota_1', \iota_2') \in \dom(\SigmaCP)$
\end{enumerate}

We also define $\Delta ; A \cproves \rho$ to be that:
\begin{enumerate}[label=(\roman*)]
    \item $A = \dom(\rho) \subseteq \dom(\Delta)$
    \item for any $x \in \dom(\rho). \proves \rho(x) : \Delta(x)$
\end{enumerate}

Value synchronization $\approx$ is defined as follows:
\begin{mathparpagebreakable}
    \infer* {
        v \in \{\STrue, \SFalse, \SUnitVal \}
    } {
        \sigmaC, \sigmaP \models v \approx v
    }
    \and
    \infer* {
        \sigmaC, \sigmaP \models \sigmaC(\iota_1) \approx \sigmaP(\iota_2)
    } {
        \sigmaC, \sigmaP \models \iota_1 \approx \iota_2
    }
    \and
    \infer* {
        \sigmaC, \sigmaP \models \overline{v_1} \approx \overline{v_2}
    } {
        \sigmaC, \sigmaP \models \SProofOfUsing{\alpha}{\overline{v_1}} \approx \SProofOfUsing{\alpha}{\overline{v_2}}
    }
    \and
    \infer* {
        \sigmaC, \sigmaP \models \overline{v_1} \approx \overline{v_2}
    } {
        \sigmaC, \sigmaP \models \SNew{C}{\overline{v_1}} \approx \SNew{C}{\overline{v_2}}
    }
\end{mathparpagebreakable}

\subsection{Nested Preservation Property}
\label{app:nested-preservation}
\begin{mathparpagebreakable}
    \infer*{
        \TCnpN \not\in e \\
        \SCnpN \not\in e \\
        \SigmaP' = \SigmaCP' = \varnothing
    } {
        (e, \sigmaC, \SigmaC, \SigmaP', \SigmaCP', \Delta', A', S) \nestedok
    }
    \and
    \infer*{
        e = \TCnpAdmin*{e'}{\sigmaP'}{\rho'} \\
        (e', \sigmaC, \SigmaC, \SigmaP'', \SigmaCP'', \Delta'', A'', S \cup \dom(\SigmaCP' |_\Compute)) \nestedok \\\\
        \dom(\Delta') = \{\overline{x_p}, \overline{x_s}\} \\
        \SigmaC, \SigmaP', \SigmaCP' \cproves \sigmaC, \sigmaP' \\
        \Delta' ; A' \cproves \rho' \\\\
        \dom(\SigmaCP' |_\Compute) \cap S = \varnothing \\
        (\SigmaC, \SigmaP', \SigmaCP'); \varnothing ; \Delta' ; A' \cproves e' : \tplab{\SBool}{\Prove} \cproduces \{\overline{x_p}, \overline{x_s}\}
    } {
        (e, \sigmaC, \SigmaC, \SigmaP', \SigmaCP', \Delta', A', S) \nestedok
    }
    \and
    \infer*{
        e = \SCnpAdmin*{e'}{\sigmaP'}{\rho'} \\
        (e', \sigmaC, \SigmaC, \SigmaP'', \SigmaCP'', \Delta'', A'', S \cup \dom(\SigmaCP' |_\Compute)) \nestedok \\\\
        \dom(\Delta') = \{\overline{x_p}, \overline{x_s}\} \\
        \SigmaC, \SigmaP', \SigmaCP' \cproves \sigmaC, \sigmaP' \\
        \Delta' ; A' \cproves \rho' \\\\
        \dom(\SigmaCP' |_\Compute) \cap S = \varnothing \\
        (\SigmaC, \SigmaP', \SigmaCP'); \varnothing ; \Delta' ; A' \cproves e' : \tplab{\SBool}{\Prove} \cproduces \{\overline{x_p}, \overline{x_s}\}
    } {
        (e, \sigmaC, \SigmaC, \SigmaP', \SigmaCP', \Delta', A', S) \nestedok
    }
\end{mathparpagebreakable}

\subsection{Parallel Projected Language}
\label{app:target-conc-lang}

\subsection*{Syntax}
{
    \[
    \begin{array}{rcl}
        x,\underred{w},\alpha &\in& \mathcal{V} \quad \text{(variable, computed input, proof names)} \\
        \TTau & \Coloneqq & \Int \alt \Bool \alt \Unit \alt \Reft \TTau  \alt C \alt \ProofOf* \alt \Proof \\
        v & \Coloneqq & x \alt () \alt \True \alt \False \alt \New C(\overline{v}) \alt \iota \alt \ProofOfN~\alpha~\WithN~\overline{v} \\
        e & \Coloneqq & v \alt \Reft v \alt \bang v \alt v_1 \coloneq v_2 \alt \underred{x \leftarrow v} \alt \underred{\witderef x} \alt (C) v \alt v.f \alt v_1.m(\overline{v}) \alt \Alloc{\TTau} \\
        & \alt & \Let x\ty\TTau = e_1 \In e_2 \alt \If v \Then e_1 \Else e_2 \\ 
        & \alt & \ProveWith* \alt \VerifyWith*{v}
    \end{array}
  \]
}

\subsection*{Operational Semantics}

\begin{rulesetpagebreakable}
    \SetRuleLabelLoc{lab}
    \SetRuleLabelVCenter
    \EPLetOneC
    \and
    \EPLetOneP
    \and
    \EPStepC
    \and
    \EPStepP
    \and
    \EPWitAssign
    \and
    \EPWitDerefC
    \and
    \EPWitDerefP
\end{rulesetpagebreakable}

\subsection{Annotated \zkstrudel Operational Semantics}
\label{app:annotated-semantics}
This semantics is very similar to the semantics in \Cref{app:combined-semantics} but with labels annotated, which are used for defining an annotated projection in \Cref{app:intermediate-projections}.

\begin{rulesetpagebreakable}
    \SetRuleLabelLoc{lab}
    \SetRuleLabelVCenter
    \EALetOne
    \and
    \EALetTwo
    \and
    \EAIfT
    \and
    \EAIfF
    \and
    \EARefC
    \and
    \EARefP
    \and
    \EARefCP
    \and
    \EADerefC
    \and
    \EADerefP
    \and
    \EADerefCP
    \and
    \EAAssignC
    \and
    \EAAssignP
    \and
    \EAAssignCP
    \and
    \EACast
    \and
    \EAField
    \and
    \EACall
    \and
    \EAInputAssign
    \and
    \EAInputDeref
    \and
    \EAProve
    \and
    \EAVerifyT
    \and
    \EAVerifyF
\end{rulesetpagebreakable}

\subsection{Type-Directed Annotations Definition}\label{app:ann-types-def}
\begin{mathparpagebreakable}\\
    [\ruleref{TC-LiftVal}]~\FAnn{\Lift*{v}} = \LiftAnn*{v}
    \and
    [\ruleref{TC-LiftValCP}]~\FAnn{\LiftTheta{\ComputeProve}{v}} = \LiftAnnTheta*{v}
    \and
    [\ruleref{TC-Val}]~\FAnn{v : \tplab{\STau}{\ell}} = \LiftAnn*{v}
    \and
    [\ruleref{TC-Lift}]~\FAnn{\Lift*{e}} = \LiftAnn*{e}
    \and
    [\ruleref{TC-LiftCP}]~\FAnn{\LiftTheta*{e}} = \LiftAnnTheta*{e}
    \and
    [\ruleref{TC-InputAssign}]~\FAnn{x \witassign v} = (x \witassign \FAnn(v))_\ComputeProve
    \and
    [\ruleref{TC-InputDeref}]~\FAnn{\witderef x} = (\witderef x)_\ComputeProve
    \and
    [\ruleref{TC-Let}]~\FAnn{\LetIn{x\ty\tplab{\STau_1}{\ell_1}}{e_1}{e_2} : \tplab{\STau_2}{\ell_2}} = (\LetIn{x\ty\tplab{\STau_1}{\ell_1}}{\FAnn{e_1}}{\FAnn{e_2}})_{\annell_2}
    \and
    [\ruleref{TC-If}]~\FAnn{\IfThenElse{v}{e_1}{e_2} : \tplab{\STau}{\ell}} = (\IfThenElse{\FAnn{v}}{\FAnn{e_1}}{\FAnn{e_2}})_\annell
    \and
    [\ruleref{TC-ComputeAndProve}]~\FAnn{\Cnp*} \\\hspace{4em}= \Cnp{\overline{x_p}\ty\overline{\tau_{x_p}}}{\overline{x_s}\ty\overline{\tau_{x_s}}}{\alpha}
  {\overline{y_p}\ty\overline{\tau_{y_p}}}{\overline{y_s}\ty\overline{\tau_{y_s}}}
  {e}{\FAnn{\overline{v_p}}}{\FAnn{\overline{v_s}}}_\Compute
    \and
    [\ruleref{TC-Combined}]~\FAnn{\CnpAdmin*{e}{\sigmaP}{\rho}} = \CnpAdmin{\sigmaP}{\rho}{\overline{x_p}}{\overline{x_s}}{\varphi}{e}{\FAnn{\overline{v_p}}[\FAnn{\overline{v_s}}]}_\Compute
\end{mathparpagebreakable}

\subsection{Annotated Lifting Definition}\label{app:ann-lifting-def}
\begin{mathparpagebreakable}
    \LiftAnn*{x} = (x)_\annell
    \and
    \LiftAnn*{\True} = (\True)_\annell
    \and
    \LiftAnn*{\False} = (\False)_\annell
    \and
    \LiftAnn*{\New C(\overline{v})} = (\New C(\LiftAnn*{\overline{v}}))_\annell
    \and
    \LiftAnn*{\ProofOfN~\alpha~\UsingN~\overline{v}} = (\ProofOfN~\alpha~\UsingN~\LiftAnn*{\overline{v}})_\annell
    \and
    \LiftAnn*{\Reft v} = (\Refl \LiftAnn*{v})_\annell
    \and
    \LiftAnn*{\bang v} = (\bang \LiftAnn*{v})_\annell
    \and
    \LiftAnn*{v_1 \coloneq v_2} = (\LiftAnn*{v_1} \coloneq \LiftAnn*{v_2})_\annell
    \and
    \LiftAnn*{(C)v} = ((C)\LiftAnn*{v})_\annell
    \and
    \LiftAnn*{v.f} = (\LiftAnn*{v}.f)_\annell
    \and
    \LiftAnn*{v.m(\overline{u})} = (\LiftAnn*{v}.m_\ell(\LiftAnn*{\overline{u}}))_\annell
    \and
    \LiftAnn*{\LetIn{x\ty\TTau}{e_1}{e_2}} = (\LetIn{x\ty\tplab{\TTau}{\ell}}{\LiftAnn*{e_1}}{\LiftAnn*{e_2}})_\annell
    \and
    \LiftAnn*{\Alloc{\TTau}} = (\Alloc{\TTau})_\annell
    \and
    \LiftAnn*{\IfThenElse{v}{e_1}{e_2}} = (\IfThenElse{\LiftAnn*{v}}{\LiftAnn*{e_1}}{\LiftAnn*{e_2}})_\annell
    \and
    \LiftAnn*{\ProveWith*} = (\Provet \alpha = \CircuitDefault \With \LiftAnn*{\overline{v}}[\LiftAnn*{\overline{u}}])_\annell 
    \and
    \LiftAnn*{\VerifyWith*{v}} = (\Verify \LiftAnn*{v} \Proves \alpha \With \LiftAnn*{\overline{u}})_\annell
    \and
    \LiftAnn*{\Cnp*} = (\Cnp{\overline{x_p}\ty\overline{\TTau_{x_p}}}{\overline{x_s}\ty\overline{\TTau_{x_s}}}{\alpha}
                                 {\overline{y_p}\ty\overline{\TTau_{y_p}}}{\overline{y_s}\ty\overline{\TTau_{y_s}}}
                                 {e}{\LiftAnn{\Compute}{\overline{v_p}}}{\LiftAnn{\Compute}{\overline{v_s}}})_\annell
\end{mathparpagebreakable}

\subsection{Annotated Theta Lifting Definition}
\begin{mathparpagebreakable}
    \LiftAnnTheta*{x} = (x)_\annell
    \and
    \LiftAnnTheta*{\True} = (\True)_\annell
    \and
    \LiftAnnTheta*{\False} = (\False)_\annell
    \and
    \LiftAnnTheta*{\New C(\overline{v})} = (\New C(\LiftAnnTheta*{\overline{v}}))_\annell
    \and
    \LiftAnnTheta*{\ProofOfN~\alpha~\UsingN~\overline{v}} = (\ProofOfN~\alpha~\UsingN~\LiftAnnTheta*{\overline{v}})_\annell
    \and
    \LiftAnnTheta*{\iota} = (\SrcProg{r}_\ComputeProve(\Theta^{-1}(\iota)))_\ComputeProve
    \and
    \LiftAnnTheta*{\Reft v} = (\Refl \LiftAnnTheta*{v})_\annell
    \and
    \LiftAnnTheta*{\bang v} = (\bang \LiftAnnTheta*{v})_\annell
    \and
    \LiftAnnTheta*{v_1 \coloneq v_2} = (\LiftAnnTheta*{v_1} \coloneq \LiftAnnTheta*{v_2})_\annell
    \and
    \LiftAnnTheta*{(C)v} = ((C)\LiftAnnTheta*{v})_\annell
    \and
    \LiftAnnTheta*{v.f} = (\LiftAnnTheta*{v}.f)_\annell
    \and
    \LiftAnnTheta*{v.m(\overline{u})} = (\LiftAnnTheta*{v}.m_\ell(\LiftAnnTheta*{\overline{u}}))_\annell
    \and
    \LiftAnnTheta*{\LetIn{x\ty\TTau}{e_1}{e_2}} = (\LetIn{x\ty\tplab{\TTau}{\ell}}{\LiftAnnTheta*{e_1}}{\LiftAnnTheta*{e_2}})_\annell
    \and
    \LiftAnnTheta*{\Alloc{\TTau}} = (\Alloc{\TTau})_\annell
    \and
    \LiftAnnTheta*{\IfThenElse{v}{e_1}{e_2}} = (\IfThenElse{\LiftAnnTheta*{v}}{\LiftAnnTheta*{e_1}}{\LiftAnnTheta*{e_2}})_\annell
    \and
    \LiftAnnTheta*{\ProveWith*} = (\Provet \alpha = \CircuitDefault \With \LiftAnnTheta*{\overline{v}}[\LiftAnnTheta*{\overline{u}}])_\annell 
    \and
    \LiftAnnTheta*{\VerifyWith*{v}} = (\Verify \LiftAnnTheta*{v} \Proves \alpha \With \LiftAnnTheta*{\overline{u}})_\annell
    \and
    \LiftAnnTheta*{\Cnp*} = (\Cnp{\overline{x_p}\ty\overline{\TTau_{x_p}}}{\overline{x_s}\ty\overline{\TTau_{x_s}}}{\alpha}
                                 {\overline{y_p}\ty\overline{\TTau_{y_p}}}{\overline{y_s}\ty\overline{\TTau_{y_s}}}
                                 {e}{\LiftAnnTheta{\Compute}{\overline{v_p}}}{\LiftAnnTheta{\Compute}{\overline{v_s}}})_\annell
\end{mathparpagebreakable}

\subsection{Annotated Projection Definitions}
\label{app:intermediate-projections}
\begin{mathparpagebreakable}[\small]
    \denL{(v)_\annellp} = \begin{cases}
        v & \ell \subsetl \annellp \\
        () & \ow
    \end{cases}
    \and
    \denL{\Refsub{\ell'} (v)_\annellpp} = \begin{cases}
        \Reft v & \ell \subsetl \ell' \\
        () & \ow
    \end{cases}
    \and
    \denL{\bang (v)_\annellp} = \begin{cases}
        \bang v & \ell \subsetl \annellp \\
        () & \ow
    \end{cases}
    \and
    \denL{(r_{\ell'}(\iota))_\annellp} = \begin{cases}
        \iota & \ell \subsetl \ell' \\
        () & \ow
    \end{cases}
    \and
    \denL{(r_\ComputeProve(\iota_\Compute, \iota_\Prove))_\anncp} = \begin{cases}
        \iota_\Compute & \ell = \Compute \\
        \iota_\Prove & \ell = \Prove
    \end{cases}
    \and
    \denL{r \coloneq (v)_\annellp} = \begin{cases}
        \iota \coloneq v & r = (r_\ell(\iota))_\annell \\
        \iota_\Compute \coloneq v & \ell = \Compute \land r = (r_\ComputeProve(\iota_\Compute, \iota_\Prove))_\anncp \\
        \iota_\Prove \coloneq v & \ell = \Prove \land r = (r_\ComputeProve(\iota_\Compute, \iota_\Prove))_\anncp \\
        () & \ow
    \end{cases}
    \and
    \denL{(C)(v)_\annellp} = \begin{cases}
        (C) v & \ell \subsetl \annellp \\
        () & \ow
    \end{cases}
    \and
    \denL{(v)_\annellp .f} = \begin{cases}
        v.f & \ell \subsetl \annellp \\
        () & \ow
    \end{cases}
    \and
    \denL{(v)_\annellp.m(\overline{v})} = \begin{cases}
        v.m(\overline{v}) & \ell \subsetl \annellp \\
        () & \ow
    \end{cases}
    \and
    \denL{\Alloc{\TTau}} = \begin{cases}
        \Alloc{\TTau} & \ell = \Compute \\
        () & \ell = \Prove
    \end{cases}
    \and
    \denL{\witderef x} = \witderef x
    \and
    \denL{x \witassign (v)_\annellp} = \begin{cases}
        x \witassign v & \ell = \Compute \\
        () & \ell = \Prove
    \end{cases}
    \and
    \denL{\Let x\ty\tplab{\tau}{\ell'} = e_1 \In e_2} = \begin{cases}
        \Let x\ty\tau = \denL{e_1} \In \denL{e_2} & \ell \subsetl \ell' \\
        \denL{e_2} & \ell \not\subsetl \ell' \land \denL{e_1} = v \\
        \denL{e_1} ; \denL{e_2} & \ow
    \end{cases}
    \and
    \denL{\If (v)_\annellp \Then e_1 \Else e_2} = \begin{cases}
        \If v \Then \denL{e_1} \Else \denL{e_2} & \ell \subsetl \annellp \\
        \denL{e_1} & \ell \not\subsetl \annellp \land \denL{e_1} = \denL{e_2} \\
        \diverges & \ow
    \end{cases}
    \and
    \denL{\Provet \alpha = \CircuitDefault \With \overline{v}[\overline{w}]} = \begin{cases}
        \Provet \alpha = \CircuitDefault \With \overline{v}[\overline{w}] & \ell = \Compute \\
        () & \ell = \Prove
    \end{cases}
    \and
    \denL{\Verify (v)_\annellp \Proves \alpha \With \overline{u}} = \begin{cases}
        \Verify v \Proves \alpha \With \overline{u} & \ell \subsetl \annellp \\
        () & \ow
    \end{cases}
    \and
    \denL{e_{comb} = \Cnp*} = \begin{cases}
        \LiftAnn{\Compute}{\fullComp{e_{comb}}} & \ell = \Compute \\
        () & \ell = \Prove
    \end{cases}
\end{mathparpagebreakable}

\subsection{Computed Input Translation Definition}\label{app:witness-translation-def}
\begin{mathparpagebreakable}
    \witdenL{v} = v
    \and
    \witdenL{\Reft v} = \Reft v
    \and
    \witdenL{\bang v} = \bang v
    \and
    \witdenL{v_1 \coloneq v_2} = v_1 \coloneq v_2
    \and
    \witdenL{(C)v} = (C)v
    \and
    \witdenL{v.f} = v.f
    \and
    \witdenL{v.m(\overline{w})} = v.m(\overline{w})
    \and
    \witdenL{\LetIn{x\ty\tau}{e_1}{e_2}} = \LetIn{x\ty\tau}{\witdenL{e_1}}{\witdenL{e_2}}
    \and
    \witdenL{\IfThenElse{v}{e_1}{e_2}} = \IfThenElse{v}{\witdenL{e_1}}{\witdenL{e_2}}
    \and
    \witdenL{\Alloc{\TTau}} = \Alloc{\TTau}
    \\
    \witdenL{x \witassign v} = \begin{cases}
        \varphi(x) \coloneq v & \ell = \Compute \\
        () & \ow
    \end{cases}
    \and
    \witdenL{\witderef x} = \begin{cases}
        \bang \varphi(x) & \ell = \Compute \\
        \varphi(x) & \ell = \Prove
    \end{cases}
    \and
    \witdenL{\Provet \alpha = \CircuitDefault \With \overline{v}[\overline{u}]} = \Provet \alpha = \CircuitDefault \With \overline{v}[\overline{u}]
    \and
    \witdenL{ \Verifyt v \Proves \alpha \With \overline{u}} = \Verifyt v \Proves \alpha \With \overline{u}
\end{mathparpagebreakable}

\subsection{Less-Than Relation on Parallel Projected Syntax}\label{app:lessthan-def}
\begin{mathparpagebreakable}
    \infer*{
        e_1 \lessthan e_2
    }{
        e_1 \lessthan v ; e_2
    }
    \and
    \infer*{ }{
        v \lessthan v
    }
    \and
    \infer*{ }{
        \Reft v \lessthan \Reft v
    }
    \and
    \infer*{ }{
        \bang v \lessthan \bang v
    }
    \and
    \infer*{ }{
        v_1 \coloneq v_2 \lessthan v_1 \coloneq v_2
    }
    \and
    \infer*{ }{
        x \witassign v \lessthan x \witassign v
    }
    \and
    \infer*{ }{
        \witderef x \lessthan \witderef x
    }
    \and
    \infer*{ }{
        (C)v \lessthan (C)v
    }
    \and
    \infer*{ }{
        v.f \lessthan v.f
    }
    \and
    \infer*{ }{
        v.m(\overline{v}) \lessthan v.m(\overline{v})   
    }
    \and
    \infer*{ } {
        \Alloc{\TTau} \lessthan \Alloc{\TTau}
    }
    \and
    \infer*{
        e_1 \lessthan e_1'\\
        e_2 \lessthan e_2'
    } {
        \LetIn{x\ty\tau}{e_1}{e_2} \lessthan \LetIn{x\ty\tau}{e_1'}{e_2'}
    }
    \and
    \infer*{
        e_1 \lessthan e_1' \\
        e_2 \lessthan e_2'
    }{
        \IfThenElse{v}{e_1}{e_2} \lessthan \IfThenElse{v}{e_1'}{e_2'}
    }
    \and
    \infer*{ }{
        \Provet \alpha = \CircuitDefault \With \overline{v}[\overline{w}] \lessthan \Provet \alpha = \CircuitDefault \With \overline{v}[\overline{w}]
    }
    \and
    \infer*{ }{
        \Verifyt p \Proves \alpha \With \overline{v} \lessthan \Verifyt p \Proves \alpha \With \overline{v}
    }
\end{mathparpagebreakable}

\subsection{Trace Equivalence Definition}
\begin{mathparpagebreakable}
    \infer* { } {
        t \traceq t
    }
    \and
    \infer* { } {
        t \traceq t :: \Nulltrace
    }
    \and
    \infer* { 
        t_1 \traceq t_2
    } {
        t_2 \traceq t_1
    }
    \and
    \infer* {
        t_1 \traceq t_2
    } {
        t_1 :: t \traceq t_2 :: t
    }
    \and
    \infer* {
        t_1 \traceq t_2 \\
        t_2 \traceq t_3
    } {
        t_1 \traceq t_3
    }
\end{mathparpagebreakable}

\section{Proofs}
We order the proofs based on dependencies: if proof A is used in proof B, then proof A appears before proof B.

\subsection{Type Soundness Theorems}

\begin{lemma}[Cnp Always in Head]\label{lemma:cnp-always-head}
    A $\CnpN$ statement will always appear in head position.
\end{lemma}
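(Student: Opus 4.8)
The plan is to derive the statement from the defining syntactic restriction of \corelang and \zkstrudul: every proper subexpression is an (open) value except for the expression bound by a \TLetN and the two branches of an \TIfN, the latter of which are not evaluated until the enclosing \TIfN is itself reduced. First I would recall the evaluation-context grammar $E \Coloneqq [\cdot] \mid \TLetIn{x\ty\TTau}{\EvalN}{e}$ (and, for the combined language, $E \Coloneqq [\cdot] \mid \SLetIn{x\ty\STau}{\EvalN}{e}$) and establish the standard decomposition fact: any expression is either a value or uniquely of the form $\Eval{h}$, where $h$ is a \emph{head term} drawn from the non-value, non-\TLetN productions (references, dereferences, assignments, casts, field projections, calls, an \TIfN on a value, \TProveN, \TVerifyN, and---crucially---a bare $\CnpN$ block or its administrative form). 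Because a $\CnpN$ statement is neither a value nor a \TLetN, whenever it is the term selected for reduction it must be exactly this head term $h$; that is, the configuration has the shape $\Eval{\CnpN\,\cdots}$, which is what we mean by ``head position.''

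Second I would promote this to a dynamic invariant on reachable configurations. The administrative term $\TCnpAdmin*{e}{\sigmaP}{\rho}$ is introduced \emph{only} by \ruleref{E-ComputeAndProveInit}, whose redex is a bare $\CnpN$; by the decomposition above this rule can fire only when that $\CnpN$ already sits at the head of an evaluation context, so the administrative term it produces inherits head position. Advancing the body with \ruleref{E-ComputeAndProveStep} merely lowers an inner $\cpstep$ step and leaves the administrative wrapper in place at the head, while \ruleref{E-ComputeAndProveTrue} replaces the wrapper with a proof value, which is no longer a $\CnpN$ term at all. Hence no $\tstep$ reduction ever creates a $\CnpN$ (or administrative) term in a guarded, non-head position or migrates one away from the head; a straightforward induction on the length of the reduction sequence then shows the invariant holds of every reachable configuration.

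The one genuinely delicate point is nesting. While executing an administrative block, its body $e$ is driven by the combined relation $\cpstep$, and $e$ may itself contain a nested $\CnpN$. I would dispatch this by reapplying the same decomposition, now relative to the combined evaluation context: the inner body is either a value or of the form $\Eval{h}$ with $h$ a head term, and the nested $\CnpN$, again being neither a value nor a \SLetN, is that head term of the inner context. Composing the outer evaluation context with the inner one (the outer one being the surrounding lets up to the administrative wrapper) witnesses that the nested block still occupies head position of the whole configuration, so the invariant is stable under the nested semantics as well.

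I expect the main obstacle to be stating ``head position'' uniformly enough to be reusable in the later progress and preservation arguments without conflating the two step relations: the claim must be phrased both for $\tstep$-configurations $\conf*{\TExp}$ and for combined $\cpconf*{\SExp}$-configurations, and I must confirm that ``every proper subexpression is a value'' is genuinely preserved by \emph{both} relations. This reduces to a routine but careful inspection of every reduction rule that performs a substitution---\ruleref{E-Let}, method invocation, and the initialization of a block---to check that none of them can drop a non-value into a non-\TLetN subposition, which is exactly the property that keeps the unique $\Eval{h}$ decomposition available at each step.
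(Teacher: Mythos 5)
Your proposal is correct and takes essentially the same route as the paper's own proof: the substantive content is your second paragraph, an induction over the reduction sequence maintaining the invariant that an administrative $\CnpN$ term is either absent or at the head, justified by noting that \ruleref{E-ComputeAndProveInit} is the only rule that creates one (and creates it at the head), \ruleref{E-ComputeAndProveStep} keeps it there, and \ruleref{E-ComputeAndProveTrue} eliminates it, starting from surface syntax in which no administrative term can occur. The extra scaffolding you add---the unique $\Eval{h}$ decomposition, the nested-block case under $\cpstep$, and the check that substitutions preserve the value-subexpression discipline---fills in details the paper leaves implicit but does not change the argument.
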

\begin{proof}
    By induction on the semantic steps.

    The base case is the surface syntax, which means no $\CnpN$ terms can appear.

    The inductive case is assuming after $k$ steps, proving for $k+1$ step.
    The inductive hypothesis mean there is a $\CnpN$ statement in the head position or no $\CnpN$ statement at all.
    If it's in head position, then either \ruleref{E-ComputeAndProveStep} steps to another $\CnpN$ statement in head position, or \ruleref{E-ComputeAndProveTrue} steps to a proof and so no $\CnpN$ statements appear.
    If there is no $\CnpN$ statement, onyl the \ruleref{E-ComputeAndProveInit} can create one, which is in head position.
\end{proof}

\begin{lemma}[Simultaneous Substitution Preserves \corelang Typing]\label{lemma:subst-pres-core-typing}
    If $\Sigma ; \Gamma_1 \proves e : \TTau$ and $\sigma$ is a variable substitution such that
    $\Sigma ; \Gamma_2 \proves \sigma(x) : \TTau'$ for all variables $x : \TTau' \in \Gamma_1$ and $e[\sigma]$ is a defined substitution,
    then $\Sigma ; \Gamma_2 \proves e[\sigma] : \TTau$.
\end{lemma}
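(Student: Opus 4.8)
The plan is to proceed by induction on the derivation of $\Sigma ; \Gamma_1 \proves e : \TTau$, following the standard structure of a substitution lemma, with the only genuine subtlety being the single binding construct of \corelang. Throughout, the substitution $\sigma$ maps variables to values, the assumed premise supplies $\Sigma ; \Gamma_2 \proves \sigma(x) : \TTau'$ whenever $x : \TTau' \in \Gamma_1$, and compositionality of the substitution operator ensures that if $e[\sigma]$ is defined then so is $e'[\sigma]$ for each immediate subterm $e'$, which is what licenses applying the induction hypothesis.

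The base case is \ruleref{T-Var}: here $e = x$ with $\Gamma_1(x) = \TTau$, so $e[\sigma] = \sigma(x)$ and the goal $\Sigma ; \Gamma_2 \proves \sigma(x) : \TTau$ is exactly the assumed premise instantiated at $\TTau' = \TTau$. The remaining leaf rules (\ruleref{T-Unit}, \ruleref{T-True}, \ruleref{T-False}, and \ruleref{T-Location}) type closed terms whose derivations mention $\Gamma$ only vacuously, so $e[\sigma] = e$ and the derivation transfers verbatim with $\Gamma_2$ replacing $\Gamma_1$. For every compositional rule that reuses the ambient context unchanged — \ruleref{T-New}, \ruleref{T-ProofOf}, and the operation rules for casts, field access, dereference, assignment, reference allocation, method calls, and both proof primitives — I would apply the induction hypothesis to each premise and then reapply the same rule. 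The side conditions are untouched by substitution because they depend only on the class table and $\Sigma$ (the \mtype, \fields, and \ptypes lookups and the \refunreachN checks) and never on $\Gamma$; likewise the procedure label $\ell$ is fixed by the operations rather than the substituted values, so it is preserved. The \ruleref{T-Prove} case deserves a remark: its predicate body is typed in a self-contained context built only from the formal input types, so $\sigma$ does not reach inside it and only the supplied inputs are retyped. \ruleref{T-Subtype} and \ruleref{T-Val} are handled by applying the hypothesis to their single premise and reattaching the (context-independent) subtyping or embedding step.

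The one case requiring care is the binder \ruleref{T-Let} for $\TLetIn{x \ty \TTau_1}{e_1}{e_2}$. Here I would invoke the Barendregt convention to assume the bound variable $x$ is fresh for $\sigma$ — neither in its domain nor free in any $\sigma(y)$, and not already in $\Gamma_1$ — so that substitution commutes with the binder and $e[\sigma]$ is the let whose subterms are $e_1[\sigma]$ and $e_2[\sigma]$. The induction hypothesis applies directly to $e_1$ against $\Gamma_1$ and $\Gamma_2$. For $e_2$, typed under $\Gamma_1, x : \TTau_1$, I would extend $\sigma$ to be the identity on $x$ and verify the lemma's premise for the enlarged contexts: for the fresh variable, $\sigma(x) = x$ types at $\TTau_1$ under $\Gamma_2, x : \TTau_1$ by \ruleref{T-Var}, while for each existing $y : \TTau' \in \Gamma_1$ the freshness of $x$ lets me weaken the given derivation of $\Sigma ; \Gamma_2 \proves \sigma(y) : \TTau'$ to $\Sigma ; \Gamma_2, x:\TTau_1 \proves \sigma(y) : \TTau'$. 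Applying the induction hypothesis to $e_2$ and recombining with \ruleref{T-Let} — whose label-intersection condition is independent of the substituted values — closes the case. The main obstacle is therefore purely the standard binder bookkeeping: securing freshness so substitution distributes through the let, and appealing to a context-weakening lemma to lift the substitution premise to the extended context; every other case is a direct structural descent.
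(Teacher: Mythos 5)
Your proof is correct and takes essentially the same approach as the paper's: a structural induction on the typing derivation in which every case is standard, with the \TProveN case resolved by observing that the predicate body is typed in a self-contained context so only the supplied inputs need re-typing. The paper's own proof is a one-liner flagging just that \TProveN case; your extra care with the \TLetN binder (freshness and context weakening) is precisely the standard bookkeeping the paper leaves implicit, so the two arguments coincide in substance.
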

\begin{proof}
    By induction.
    The only case that isn't standard is substituting into a $\TProveN$ statement, where substitution only changes the input values and the body of the expression, which follows from the inductive hypothesis.
\end{proof}

\begin{corollary}[Substitution Preserves \corelang Typing]\label{cor:core-safe-sub}
    If $\Sigma ; \Gamma, x : \TTau' \proves e : \TTau$ and $\Sigma ; \Gamma \proves v : \TTau'$ and $e[x \mapsto v]$ is a defined substitution, then $\Sigma ; \Gamma \proves e[x \mapsto v] : \TTau$.
\end{corollary}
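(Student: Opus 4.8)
The plan is to derive this corollary as a direct specialization of Lemma~\ref{lemma:subst-pres-core-typing}, instantiating the simultaneous substitution with a single variable. First I would set $\Gamma_1 = \Gamma, x \ty \TTau'$ and $\Gamma_2 = \Gamma$, and take $\sigma$ to be the substitution $\subst{}{x}{v}$ that sends $x$ to $v$ and acts as the identity on every other variable. Under this choice, $e[\sigma]$ is precisely $\subst{e}{x}{v}$, which is assumed to be a defined substitution, so the corresponding hypothesis of the lemma is met directly.

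The remaining work is to verify the lemma's typing hypothesis: that $\Sigma ; \Gamma_2 \proves \sigma(y) : \TTau''$ for every binding $y \ty \TTau'' \in \Gamma_1$. I would split this into two cases. For the binding $x \ty \TTau'$ we have $\sigma(x) = v$, and the corollary's premise $\Sigma ; \Gamma \proves v : \TTau'$ supplies exactly $\Sigma ; \Gamma_2 \proves \sigma(x) : \TTau'$. For any other binding $y \ty \TTau'' \in \Gamma$ with $y \neq x$, we have $\sigma(y) = y$, and since $\Gamma_2(y) = \Gamma(y) = \TTau''$, the judgment $\Sigma ; \Gamma \proves y : \TTau''$ follows immediately from \ruleref{T-Var}. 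With both cases discharged, the lemma yields $\Sigma ; \Gamma_2 \proves e[\sigma] : \TTau$, that is, $\Sigma ; \Gamma \proves \subst{e}{x}{v} : \TTau$, which is the desired conclusion.

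Since this is a pure instantiation of an already-established lemma, I do not expect a substantive obstacle. The only point requiring care is variable shadowing: because $\Gamma, x \ty \TTau'$ rebinds $x$, I would confirm that the enumeration of bindings in the lemma's hypothesis treats $x$ through its rightmost (shadowing) binding $\TTau'$, so that the value $v$ is checked against the correct type and no stale binding of $x$ in $\Gamma$ interferes. This is routine bookkeeping rather than a genuine difficulty, and it is the only step I would flag as needing explicit justification.
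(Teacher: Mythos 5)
Your proposal is correct and matches the paper's approach exactly: the paper states this as an immediate corollary of Lemma~\ref{lemma:subst-pres-core-typing}, obtained by instantiating the simultaneous substitution with the single-variable map $\subst{}{x}{v}$, which is precisely what you do. Your explicit verification of the lemma's per-variable typing hypothesis (splitting on $x$ versus the other bindings) and the shadowing remark are sound, routine details that the paper leaves implicit.
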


\begin{theorem}[Preservation of \corelang]
\label{thm:core-preservation}
    If $\Sigma ; \Gamma \proves e : \TTau \produces \ell$ and $\Sigma \proves \sigma$, and $\langle e \divi \sigma \rangle \tstep \langle e' \divi \sigma' \rangle$, then for some $\Sigma' \supseteq \Sigma$, we know $\Sigma' ; \Gamma \proves e' : \TTau' \produces \ell'$ where $\TTau <: \TTau'$ and $\Sigma' \proves \sigma'$.
\end{theorem}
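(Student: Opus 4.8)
The plan is to prove preservation by induction on the derivation of the step $\conf{e}{\sigma} \tstep \conf{e'}{\sigma'}$, i.e.\ by case analysis on the last operational rule applied, maintaining the store-typing invariant $\Sigma \proves \sigma$ throughout (with the convention that a well-typed store may hold $\bot$ at uninitialized locations, consistent with the $v \neq \bot$ side condition of \ruleref{E-Deref}). The sole congruence rule, \ruleref{E-Eval}, which steps inside the bound position of a \TLetN, is dispatched directly by the induction hypothesis: inverting \ruleref{T-Let} exposes the typing of the redex, the hypothesis produces a reduced term well-typed under some $\Sigma' \supseteq \Sigma$, and subsumption (\ruleref{T-Subtype}) reconciles the reduced subterm's type with the declared binder type so the body can be retyped and the \TLetN reassembled. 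Preservation is thus established only up to subtyping, with \ruleref{T-Subtype} closing the gap in each case.

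The routine computational cases follow the standard \corelang pattern. For the substitution-based rules \ruleref{E-Let} and \ruleref{E-Call}, I would invoke \Cref{cor:core-safe-sub} and \Cref{lemma:subst-pres-core-typing} (the latter for the simultaneous substitution of $\overline{x}$, $\overline{w}$, and $\ThisN$ into a method body), discharging the types of the formals from the premises of \ruleref{Method-Ok}. The rules \ruleref{E-IfT}, \ruleref{E-IfF}, and \ruleref{E-Field} simply reuse a subderivation extracted by inversion. \ruleref{E-Cast} produces $\TNew{D}{\overline{v}}$ whose type $D$ is a subtype of the cast target $C$, and \ruleref{T-Subtype} recovers $C$. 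In all these cases $\Sigma$ is unchanged and $\Sigma \proves \sigma$ carries over unmodified.

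The reference cases require managing the heap context, and I would isolate a small weakening lemma stating that every typing judgment is preserved under $\Sigma \subseteq \Sigma'$. For \ruleref{E-Ref} and \ruleref{E-Alloc} I set $\Sigma' = \Sigma, \iota \ty \TTau$ for the fresh $\iota$; weakening then re-types $e'$ and all previously stored values, while the extended store stays well-typed because the newly stored value (or $\bot$, for \ruleref{E-Alloc}) has type $\TTau$. \ruleref{E-Assign} keeps $\Sigma' = \Sigma$ and overwrites an in-domain location with a value of its declared type, so $\Sigma \proves \sigma'$ is immediate, and \ruleref{E-Deref} reads a value the invariant already guarantees is well-typed at $\Sigma(\iota)$.

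I expect the NIZK case \ruleref{E-Prove} to be the main obstacle. It steps $\TProveUsing{\alpha}{\exists \overline{x}\ty\overline{\TTau_x}[\overline{y}\ty\overline{\TTau_y}].e}{\overline{v_x}[\overline{v_y}]}$ to the proof value $\TProofOfUsing{\alpha}{\overline{v_x}}$, which I would re-type at $\ProofOf*$ via \ruleref{T-ProofOf} (promoted to an expression judgment by \ruleref{T-Val}), matching the result type assigned by \ruleref{T-Prove}. The key observation is that reconstructing this type needs \emph{no} reasoning about the internal reduction of the predicate body appearing in the rule's premise, since the result type is fixed syntactically by the name $\alpha$ and the recorded public inputs $\overline{v_x}$. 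What it does require is that $\ptypes(\alpha)$ coincide with the public-input types $\overline{\TTau_x}$ and that $\proves \overline{v_x} : \overline{\TTau_x}$; the latter is a premise of \ruleref{T-Prove}, while the former depends on a standing global consistency assumption tying the proof-name table $PT$ to each \TProveN statement using $\alpha$, analogous to the $CT$-ok condition, which I would state explicitly as a well-formedness hypothesis. Finally, \ruleref{E-VerifyT} and \ruleref{E-VerifyF} produce $\TTrue$ and $\TFalse$, both typable at $\Bool$ via \ruleref{T-Val} and matching \ruleref{T-Verify}, with $\Sigma$ and $\sigma$ untouched. The "$\produces \ell'$" clause imposes no real constraint, since values are label-polymorphic under \ruleref{T-Val}.
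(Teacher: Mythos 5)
Your proposal is correct and takes essentially the same route as the paper's proof: induction over the step (the paper phrases it as induction on $e$, which is equivalent here since \ruleref{E-Eval} is the only congruence rule), the same substitution lemmas for \ruleref{E-Let} and \ruleref{E-Call}, extension of $\Sigma$ by a single fresh location for the allocation rules, and direct retyping of the resulting proof object via \ruleref{T-ProofOf} for \ruleref{E-Prove}. The points you flag explicitly---the weakening lemma for $\Sigma' \supseteq \Sigma$, the $\bot$ convention in store typing, and the global $PT$-consistency ensuring $\ptypes(\alpha)$ matches the public-input types---are left implicit in the paper's proof, so they are refinements rather than deviations.
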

\begin{proof}
    By induction on $e$.
    \begin{itemize}[itemsep=0.6em] 
        \item If $e$ is a value, then the statement is vacuously true.
        
        \item In most cases, the proof follows by inverting the typing rule and using the available premises to prove the typing rule for $e'$.
        In all cases, the updated $\Sigma'$ is either $\Sigma$ itself or $\Sigma \cup \{\iota \mapsto \TTau \}$ updated with a single new reference.
        Also note that all substitutions performed are type-preserving by \Cref{lemma:subst-pres-core-typing}.

        \item If $e = \TCall*$ steps, then it remains well-typed by the method body being well-typed (\ruleref{Method-Ok}) and the substitution lemma (\Cref{cor:core-safe-sub}).

        \item If $e = \LetIn{x\ty\TTau}{e_1}{e_2}$ steps to $\LetIn{x\ty\TTau}{e_1'}{e_2}$, then we apply the inductive hypothesis on the subexpression $e_1$ taking a step to $e_2$ and use that implied typing judgment to show $\Sigma' ; \Gamma \proves \LetIn{x\ty\TTau}{e_1'}{e_2} : \TTau_2$.
        
        \item If $e = \LetIn{x\ty\TTau}{e_1}{e_2}$ where $e_1$ is a value $v$, then the stepped expression $e_2[x \mapsto v]$ is a well-typed substitution by \Cref{cor:core-safe-sub}.
        
        \item In both cases for if, the typing rule gives us that either stepping to $e_1$ or $e_2$ is well-typed.
        
        \item Suppose $e = \ProveWith*[e']$ and $\conf*{e} \tstep \conf*{\ProofOf* \With \overline{v}}$.
        Note that $\Sigma ; \Gamma \proves e : \ProofOf*$ by the premise, and $\Sigma ; \Gamma \proves \ProofOf* \With \overline{v} : \ProofOf*$ directly.
        
        \item Suppose $e = \VerifyWith*{e'}$ and $\Sigma ; \Gamma \proves e : \Bool$.
        For $e$ to step means \ruleref{E-VerifyT} must have stepped, which would result in $\True$ which types to a bool,
        or \ruleref{E-VerifyF} which also types to a bool ($\False$).

    \end{itemize}
\end{proof}

\begin{theorem}[Progress of \corelang]
\label{thm:core-progress}
    Suppose that $\Sigma ; \varnothing \proves e : \TTau \cproduces \ell$.
    Then either 
    \begin{enumerate}[label=(\roman*)]
        \item $e$ is a value;
        \item for any store $\sigma$ where $\Sigma \proves \sigma$, there is some $e'$ and $\sigma'$ such that $\conf*{e} \tstep \conf{e'}{\sigma'}$;
        \item $e = \Eval{\TProveUsing*[e']}$ where $\conf{\subst*{e'}{{\overline{x}}{\overline{v}}{\overline{y}}{\overline{u}}}}{\varnothing} \tstep^* \conf{\False}{\_}$ or it diverges;
        \item $e = \Eval{\TCast{C}{(\TNew{D}{\TVals*})}}$ where $D \not<: C$; or
        \item $e = \Eval{\TDeref{\iota}}$ where $\sigma(\iota) = \bot$
    \end{enumerate}
\end{theorem}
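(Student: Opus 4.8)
The plan is to prove Progress by rule induction on the typing derivation $\Sigma ; \varnothing \proves e : \TTau \produces \ell$ — equivalently, by structural induction on $e$, since \corelang forces every immediate subexpression to be a (closed) value except for the bound expression of a \TLetN and the branches of an \TIfN, so the only recursive appeal to the induction hypothesis is on the first component of a \TLetN. Before the case analysis I would prove a \emph{Canonical Forms} lemma for the value judgment: a value of type $\Bool$ is \TTrue or \TFalse; of type $\TRefTau*$ is a location $\iota$; of a class type $C$ is $\TNew{D}{\overline{v}}$ with $D <: C$; of type $\ProofOf*$ is $\TProofOfUsing*$; and of type $\Proof$ is $\TProofOfUsing{\beta}{\overline{w}}$ for some $\beta, \overline{w}$. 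I would also collect the lookup-stability facts I need from the class table — that $\fields$ and $\mtype$/$\mbody$ are preserved along subtyping, so that a redex whose receiver is $\TNew{D}{\overline{v}}$ with $D <: C$ has the field or method promised by $C$ — together with the property of the store hypothesis $\Sigma \proves \sigma$ that $\dom(\Sigma) \subseteq \dom(\sigma)$.

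With these in hand, most cases land in outcome~(ii) and are routine. A \TrgProg{ref}, $\TAlloc{\TTau}$, or \TIfN head always steps: the first two pick a fresh $\iota \notin \dom(\sigma)$, and Canonical Forms reduces the \TIfN guard to \TTrue or \TFalse. A \TrgProg{verify} always steps by \ruleref{E-VerifyT} or \ruleref{E-VerifyF} once Canonical Forms identifies its argument as a proof object, regardless of whether that proof matches the expected statement. Field projection and method call step using Canonical Forms plus lookup stability; dereference and assignment use that the location argument lies in $\dom(\sigma)$ via $\dom(\Sigma) \subseteq \dom(\sigma)$, with dereference splitting on whether $\sigma(\iota) = \bot$ to yield \ruleref{E-Deref} or outcome~(v), and a cast $\TCast{C}{(\TNew{D}{\overline{v}})}$ splitting on $D <: C$ to yield \ruleref{E-Cast} or outcome~(iv). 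The sole inductive case is \TLetN: applying the hypothesis to the bound expression $e_1$, I step the whole term via \ruleref{E-Let} when $e_1$ is a value and via the congruence rule \ruleref{E-Eval} when $e_1$ steps; when $e_1$ is stuck I propagate the stuck subterm upward, using that the evaluation contexts $E \Coloneqq [\cdot] \mid \TLetIn{x\ty\TTau}{\EvalN}{e}$ compose, so outcomes (iii)--(v) for $e_1$ become the same outcomes for $e$.

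The substantive case, which I expect to be the main obstacle, is \TProveN. Whether $e = \Eval{\TProveUsing*[e']}$ steps is dictated by \ruleref{E-Prove}, whose premise evaluates the substituted body $\subst*{e'}{{\overline{x}}{\overline{v}}{\overline{y}}{\overline{u}}}$ \emph{to completion} from the empty heap and requires it to reach \TTrue. Progress here thus hinges on a multi-step evaluation rather than a single redex, so the argument cannot remain local: I must show that if the body does not reach \TTrue then it diverges or reaches \False, which is exactly outcome~(iii). I would obtain this by running preservation (\Cref{thm:core-preservation}) and Progress together along the body's reduction sequence — legitimate because the body is closed after substitution and typed at $\Bool \produces \Prove$, and this Progress statement is proved for arbitrary $\ell$. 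A nonterminating body gives divergence, and a terminating one ends at a value that preservation and Canonical Forms force to be boolean, hence \TTrue (outcome~(ii)) or \False (outcome~(iii)). For a body that instead halts at a non-value, I first rule out two of the three stuck shapes: nested \TProveN cannot occur, since \ruleref{T-Prove} gives \TProveN the label \Compute, and an uninitialized dereference cannot occur, since $\TAlloc{\TTau}$ — the only source of $\bot$ cells — is likewise \Compute-only and hence absent from \Prove-labeled predicate code. The one residual stuck shape is a failed downcast inside the predicate, and reconciling this with the ``diverges or reaches \False'' phrasing of outcome~(iii) — either by arguing such casts cannot arise in well-typed predicates run from the empty heap, or by treating a predicate that halts without accepting uniformly as a non-producing evaluation — is the delicate step. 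It is this reduction of \TProveN's progress to the well-behavedness of predicate evaluation, and its interaction with \Cref{thm:core-preservation} when the two are later combined into the full type-soundness result for \corelang, that constitutes the crux of the proof.
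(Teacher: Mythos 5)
Your proposal is correct and matches the paper's own proof in all essentials: structural induction on $e$, canonical-forms reasoning to discharge the routine redexes, the inductive hypothesis for the \TLetN case, and iterating progress together with preservation (\Cref{thm:core-preservation}) along the predicate body's evaluation to handle \TProveN. The one divergence is in your favor: where the paper simply writes that a predicate body stuck in cases \rthree, \rfour, or \rfive ``diverges,'' you correctly observe that the \Compute-only labels on \TProveN and $\TAlloc{\TTau}$ rule out two of those stuck shapes inside \Prove-labeled predicate code, and that the residual failed-downcast case is a genuine mismatch with the ``steps to \False{} or diverges'' phrasing of outcome (iii) that the paper's proof glosses over rather than resolves.
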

\begin{proof}
    By induction on $e$.
    \begin{itemize}[itemsep=0.6em]  
        \item If $e$ is a value, we fall into case \rone and are done.
        
        \item In most cases, $e$ can take a step by simply proving it satisfies the semantic form.
        For instance, since the context $\Gamma$ is assumed empty, all values who type to class $C$ must be a class object, and any value typing to $\iota$ must a location $\iota$ in $\dom{\Sigma}$ and therefore in $\sigma$.

        \item If $e = \TDeref{\iota}$, by $\Sigma \proves \sigma$ we know $\iota \in \dom(\sigma)$.
        Either $\sigma(\iota) = \bot$, where we're in stuck case \rfive, or it's a value in which case we take a step (case \rtwo).
        
        \item If $e = (C)(\New D(\overline{v}))$, then a step can be taken if $D <: C$ (satisfying \rtwo), or we get stuck in case \rfour. 
        
        \item If $e = \LetIn{x\ty\TTau}{e_1}{e_2}$, by its typing rule $\Sigma ; \varnothing \proves e_1 : \TTau_1$.
        By the inductive hypothesis on $e_1$, either \rone $e_1$ is a value (in which case we apply the second Let rule), \rtwo $e_2$ can take a step (in which case we apply the first Let rule), or it gets stuck and so we're in \rthree.
        
        \item If $e = \IfThenElse{v}{e_1}{e_2}$, we know that $\Sigma ; \emptyset \proves v : \Bool$, meaning it's either $\True$ or $\False$.
        In the former case, we apply the IfT rule and in the latter the IfF rule.
        In either case, $\Sigma ; \varnothing \proves e_1 : \TTau$ and $\Sigma ; \varnothing \proves e_2 : \TTau$, so we can apply the inductive hypothesis in either of these cases.
        
        \item If $e = \ProveWith*[e']$, then $\cdot ; \overline{x} : \overline{\TTau_x}, \overline{y} : \overline{\TTau_y} \proves e' : \Bool$.
        This then means that $\subst*{e'}{{\overline{x}}{\overline{v}}{\overline{y}}{\overline{u}}}$ which is a well-typed substitution by \Cref{lemma:subst-pres-core-typing}.
        By the inductive hypothesis, $e'$ is either \rone a value, \rtwo can take a step, \rthree gets stuck on a nested prove, \rfour gets stuck on an invalid cast, or \rfive gets stuck dereferencing an uninitialized location.
        For \rone, the expression is either $\TTrue$ or $\TFalse$ since it types to a bool, so it either steps returning a proof (\ruleref{E-Prove}) or gets stuck in \rthree.
        \ruleref{T-Prove} gives that $\cdot ; \Gamma \proves e : \TBool \produces \Prove$, and $\cdot \proves \emptyset$ so we can invoke the inductive hypothesis and take a step, inductively turning the crank until we reach a value (which is a bool) or is a failure condition.
        For \rthree, \rfour, or \rfive it diverges.
        
        \item If $e = \VerifyWith*{v}$, then either the \ruleref{E-VerifyT} or \ruleref{E-VerifyF} rules must be possible, since the only other option is the proof names don't match up which is disallowed by the type system.
    \end{itemize}
\end{proof}

\coresoundness*
\begin{proof}
    We outline the cases explicitly here:
    \begin{enumerate}[nosep,label={(\roman*)},leftmargin=*]
        \item $e'$ is a value;
        \item there is some~$e''$ and~$\sigma'$ such that $\conf*{e'} \tstep \conf{e''}{\sigma'}$;
        \item $e' = \Eval{\TCast{C}{(\TNew{D}{\TVals*})}}$ where $D \not<: C$;
        \item $e' = \Eval{\TProveUsing*[e'']}$ where either $\conf{\subst*{e''}{{\overline{x}}{\overline{v}}{\overline{y}}{\overline{u}}}}{\varnothing}$ diverges or it steps to~$\False$; or
        \item $e' = \Eval{\TDeref{\iota}}$ where $\sigma(\iota) = \bot$
    \end{enumerate}

    The proof is by induction, applying progress (\Cref{thm:core-progress}) and preservation (\Cref{thm:core-preservation}).
\end{proof}

\begin{lemma}[Simultaneous Substitution Preserves \zkstrudel Typing]\label{thm:subst-pres-surf-typing}
    If $\Sigma ; \Gamma_1 ; \Delta ; A \proves e : \tplab{\TTau}{\ell} \produces A'$ and $\sigma$ is a variable substitution such that
    $\Sigma ; \Gamma_2 \proves \sigma(x) : \tplab{\TTau'}{\ell'}$ for all variables $x : \tplab{\TTau'}{\ell'} \in \Gamma_1$ and $e[\sigma]$ is a defined substitution,
    then $\Sigma ; \Gamma_2 ; \Delta ; A \proves e[\sigma] : \tplab{\TTau}{\ell} \produces A'$.
\end{lemma}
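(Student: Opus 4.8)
The plan is to prove this by structural induction on the combined typing derivation, simultaneously establishing the analogous statement for the value judgment $\Sigma ; \Gamma_1 \cproves v : \tplab{\TTau}{\ell}$, exactly as \corelang's \Cref{lemma:subst-pres-core-typing} inducts over both of its judgments. For the purely structural rules this is routine: in \ruleref{TC-Let} and \ruleref{TC-If} I would invert the rule, apply the induction hypothesis to each subderivation---threading the assigned-input sets $A$ and $A'$ unchanged, since $\sigma$ touches neither $\Delta$ nor the tracking sets---and reassemble. The only bookkeeping is the usual capture-avoidance for the \ruleref{TC-Let} binder: I would assume $x \notin \dom(\sigma) \cup \mathrm{fv}(\sigma)$ (renaming if necessary) so that $\sigma$ extends harmlessly to the context $\Gamma, x \ty \tplab{\TTau_1}{\ell_1}$ used to type the body.

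The crux is the lifting rules \ruleref{TC-Lift} and \ruleref{TC-LiftCP} (and their value analogues \ruleref{TC-Val}, \ruleref{TC-LiftVal}, \ruleref{TC-LiftValCP}), which embed a \corelang subderivation, so I would like to discharge them by appeal to the \corelang substitution lemma (\Cref{lemma:subst-pres-core-typing}). The technical content is a commutation fact: for a \corelang expression $e$, substitution commutes with lifting, i.e. $\Lift{\ell}{e}[x \mapsto \Lift{\ell}{w}] = \Lift{\ell}{e[x \mapsto w]}$ (and likewise for $\LiftTheta{\ell}{\cdot}$). To use it I must show each substitutend $\sigma(x)$ with $x \in \dom(\Gamma_1 |_\ell)$ is itself an $\ell$-lift of a \corelang value, so that $\sigma$ descends to a \corelang substitution $\sigma'$ with $\sigma'(x) = \Lower{\ell}{\sigma(x)}$. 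This follows from the very reason the restriction $\Gamma|_\ell$ exists: if $x \ty \tplab{\TTau'}{\ell'}$ is retained by the restriction then $\tplab{\TTau'}{\ell'} <: \tplab{\TTau'}{\ell}$, so either $\TTau'$ satisfies \refunreachN (and $\sigma(x)$ contains no references, lifting cleanly to any label) or $\TTau'$ is context-aware and label invariance forces $\ell' = \ell$ (so $\sigma(x)$ is already an $\ell$-value). Once $\sigma$ is lowered, applying \Cref{lemma:subst-pres-core-typing} to the \corelang premise and re-lifting closes the case; \ruleref{TC-LiftCP} is identical modulo carrying the injective ghost-location map $\Theta$ through the lowering.

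The remaining cases concern operations with no \corelang counterpart. For \ruleref{TC-InputAssign} and \ruleref{TC-InputDeref}, substitution leaves the computed-input context $\Delta$, the name $x$, and the tracking sets untouched, so only the assigned value subterm is substituted and the induction hypothesis on that value judgment suffices. For the nested-block rules \ruleref{TC-ComputeAndProve} and \ruleref{TC-Combined}, substitution acts on the externally supplied inputs $\overline{v_p}, \overline{v_s}$ at the value level (handled by the value case) and may reach into the body $e$ wherever an outer variable occurs. Here I would observe that the body is typed against $\Gamma_\Compute = \tplab{(\Gamma |_\Compute)}{\Compute}$, so any retained outer variable carries label \Compute; each $\sigma(x)$, being well-typed at $\tplab{\TTau'}{\ell'}$ with $x \in \dom(\Gamma|_\Compute)$, remains well-typed at \Compute by subsumption, and shrinking the heap context to $(\SigmaC, \varnothing, \varnothing)$ is consistent because \Compute-usable values mention only \Compute locations. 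The induction hypothesis on the body then completes the case.

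The main obstacle I anticipate is the lifting-commutation argument of the second paragraph: making precise that a \zkstrudel-level substitution restricts to a well-typed \corelang-level substitution requires exactly the invariance-of-labels-for-context-aware-types discipline that $\Gamma|_\ell$ enforces, and the \ruleref{TC-LiftCP} variant additionally demands that lowering respect the injective ghost-location map $\Theta$. Once this bridge between the two type systems is in place, every other case is a mechanical induction mirroring the \corelang proof.
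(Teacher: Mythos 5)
Your overall strategy---induction on the combined typing derivation, with the \corelang{} substitution lemma (\Cref{lemma:subst-pres-core-typing}) discharging the embedded \corelang{} subderivations---matches the paper's proof, which is stated simply as ``by induction'' and flags only two non-standard cases. Your detailed treatment of \ruleref{TC-Lift}/\ruleref{TC-LiftCP} (lowering the substitution to a \corelang{} substitution, commuting it with lifting, and re-lifting) is a sound elaboration of what the paper leaves implicit, and your handling of \ruleref{TC-ComputeAndProve} agrees with the paper's remark that substitution there only touches the supplied inputs and the body.

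The one place your case analysis would fail as written is \ruleref{TC-Combined}, the administrative $\SCnpN$ term. You justify that case by saying the body is typed against $\Gamma_\Compute = \tplab{(\Gamma|_\Compute)}{\Compute}$, but that context appears only in \ruleref{TC-ComputeAndProve}; in \ruleref{TC-Combined} the body is typed against the \emph{empty} variable context, so no outer variable can occur in it and your argument about retained \Compute-labelled variables has nothing to attach to. The paper handles this case entirely differently: substitution into an administrative $\SCnpN$ term is \emph{undefined}, and the case can never arise because such terms only ever occur in head position (\Cref{lemma:cnp-always-head}), never under a binder---which is also why the lemma statement carries the hypothesis that $e[\sigma]$ is a defined substitution. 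Your proof survives only because the case you misargue is in fact vacuous; you should replace your substantive treatment of \ruleref{TC-Combined} with the definedness/head-position observation.
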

\begin{proof}
    By induction.
    The non-standard cases are substituting into a $\ComputeN$ statement and a $\CnpN$ statement.
    In the former, substitution only changes the input values and the body of the expression, which can be proven to be well-typed.
    In the latter, the substitution is undefined, but can never occur by \Cref{lemma:cnp-always-head}.
\end{proof}

\begin{corollary}[Substitution Preserves \zkstrudel Typing]\label{cor:safe-sub}
    If $\Sigma ; \Gamma, x : \tplab{\TTau'}{\ell'} ; \Delta ; A \proves e : \tplab{\TTau}{\ell} \produces A'$ and $\Sigma ; \Gamma ; \Delta ; A \proves v : \tplab{\TTau'}{\ell'}$ and $e[x\mapsto v]$ is a defined substitution, then $\Sigma ; \Gamma ; \Delta ; A \proves e[x \mapsto v] : \tplab{\TTau}{\ell} \produces A$.
\end{corollary}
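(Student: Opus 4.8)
The plan is to obtain this single-variable corollary as a direct instance of the simultaneous substitution result, \Cref{thm:subst-pres-surf-typing}. First I would take $\sigma$ to be the substitution that sends $x$ to $v$ and acts as the identity on every other variable, so that $e[\sigma]$ is literally $e[x \mapsto v]$; the ``defined substitution'' hypothesis of the corollary then transfers verbatim to the corresponding hypothesis of the lemma.

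Next I would instantiate \Cref{thm:subst-pres-surf-typing} with $\Gamma_1 \defeq \Gamma, x \ty \tplab{\TTau'}{\ell'}$ and $\Gamma_2 \defeq \Gamma$. The expression-typing premise of the lemma is then exactly the first hypothesis of the corollary, so the only thing left to discharge is the lemma's side condition that $\sigma$ be type-preserving across the whole domain of $\Gamma_1$: for each binding $y \ty \tplab{\TTau''}{\ell''} \in \Gamma_1$ we must show $\Sigma ; \Gamma \proves \sigma(y) : \tplab{\TTau''}{\ell''}$. This splits into two cases. For $y = x$ we have $\sigma(x) = v$, and the required judgment is precisely the second hypothesis of the corollary (a value typed through the expression judgment with an unchanged assigned-input set coincides with the value judgment via \ruleref{TC-Val}). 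For any $y \neq x$, the binding $y \ty \tplab{\TTau''}{\ell''}$ already lies in $\Gamma$---here using that $x \notin \dom(\Gamma)$, implicit in the context-extension notation $\Gamma, x$---so $\sigma(y) = y$ types against $\Gamma$ by the variable rule.

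With both cases discharged, \Cref{thm:subst-pres-surf-typing} delivers $\Sigma ; \Gamma ; \Delta ; A \proves e[x \mapsto v] : \tplab{\TTau}{\ell} \produces A'$, which is the stated conclusion; the incoming and outgoing assigned-input sets are untouched because $\sigma$ only rewrites variables and never introduces or removes computed-input operations. I do not expect a genuine obstacle here: the corollary is a pure instantiation of the simultaneous lemma, and all of the real work---in particular the nonstandard cases for \ComputeN and \SCnpN terms, where substitution must be shown either to preserve well-typedness or to be impossible by \Cref{lemma:cnp-always-head}---is already carried out inside the proof of that lemma.
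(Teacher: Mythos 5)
Your proof is correct and matches the paper's approach exactly: the paper states this corollary immediately after \Cref{thm:subst-pres-surf-typing} with no separate proof, precisely because it is the instantiation you describe (the singleton substitution $\sigma = [x \mapsto v]$, with $\Gamma_1 = \Gamma, x \ty \tplab{\TTau'}{\ell'}$ and $\Gamma_2 = \Gamma$, discharging the type-preservation side condition via the second hypothesis for $x$ and the variable rule for all other bindings). Note only that the corollary as printed concludes $\produces A$ while the lemma yields $\produces A'$; your $A'$ is what the lemma actually delivers, so this discrepancy is a typo in the paper's statement rather than a gap in your argument.
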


\begin{lemma}\label{lemma:raise-and-lower-types}
    Suppose that $\Sigma ; \cdot \cproves v : \tplab{\TTau}{\ell}$.
    Then $\Sigma ; \cdot \cproves \Lift{\Compute}{\Lower{\Compute}{v}} : \tplab{\TTau}{\Compute}$, and $\Sigma ; \Gamma \cproves \Lift{\Prove}{\Lower{\Prove}{v}} : \tplab{\TTau}{\Prove}$.
\end{lemma}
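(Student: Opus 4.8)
The plan is to prove this by induction on the derivation of $\Sigma ; \cdot \cproves v : \tplab{\TTau}{\ell}$, which for a value is obtained by a single application of either \ruleref{TC-LiftVal} or \ruleref{TC-LiftValCP} over an underlying \corelang value derivation. The guiding observation is that both $\Lift{\Compute}{\Lower{\Compute}{\cdot}}$ and $\Lift{\Prove}{\Lower{\Prove}{\cdot}}$ are structurally recursive ``re-homing'' maps: they descend through $\SNew{C}{\overline{v}}$ and $\SProofOfUsing{\alpha}{\overline{v}}$ homomorphically and leave every base value fixed, so the only cases with genuine content involve references. Before the induction I would record the two round-trip identities $\Lower{\Compute}{\Lift{\Compute}{v_0}} = v_0$ and $\Lower{\Prove}{\Lift{\Prove}{v_0}} = v_0$ for a \corelang value $v_0$; these make the matching-side conclusion immediate whenever $v$ already lives on a single side, and they reduce the problem to reasoning about the \corelang value obtained after lowering.

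For the base cases $v \in \{\SUnitVal, \STrue, \SFalse\}$, lowering and lifting are the identity and the resulting \corelang value types at $\TTau$ under any heap context by the corresponding \corelang rule, so \ruleref{TC-LiftVal} yields both $\tplab{\TTau}{\Compute}$ and $\tplab{\TTau}{\Prove}$ directly. For $v = \SNew{C}{\overline{v}}$ and $v = \SProofOfUsing{\alpha}{\overline{v}}$, I would push the lift/lower maps inside, apply the induction hypothesis to each component to retype it at $\tplab{\TTau_i}{\Compute}$ (respectively $\tplab{\TTau_i}{\Prove}$), and then reassemble the lowered \corelang value using the \corelang formation rule for objects (respectively proofs) before re-lifting with \ruleref{TC-LiftVal}. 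Each of these steps is routine given that $\dom$ of the relevant heap context is unchanged by re-homing.

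The crux is the reference case, where the form of $v$ is tied to its label. A value typed at $\tplab{\TTau}{\ComputeProve}$ via \ruleref{TC-LiftValCP} contains only synchronized references $\SLocCP{\iota_1}{\iota_2}$, for which $\Lower{\Compute}{\SLocCP{\iota_1}{\iota_2}} = \iota_1$ and $\Lower{\Prove}{\SLocCP{\iota_1}{\iota_2}} = \iota_2$ are both defined---exactly the regime in which the lemma is invoked, namely lowering the stored value of a $\ComputeProve$ reference into each heap as in \ruleref{EC-RefCP}. The main obstacle is the heap-context bookkeeping: after re-homing, $\Lift{\Compute}{\Lower{\Compute}{v}}$ mentions the location $\iota_1$, and to discharge the case through \ruleref{TC-LiftVal} I must exhibit a type for $\iota_1$ in $\SigmaC$ (and symmetrically for $\iota_2$ in $\SigmaP$), whereas the hypothesis types $v$ through the ghost-location context $\Theta(\SigmaCP)$. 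I would bridge this gap using the correspondence between $\SigmaCP$ and its projections $\SigmaCP |_\Compute$ and $\SigmaCP |_\Prove$ supplied by the configuration-safety invariant, which assigns the two components of each pair $(\iota_1,\iota_2) \mapsto \TTau$ matching types; the injectivity of $\Theta$ on $\dom(\SigmaCP)$ guarantees the substitution of each ghost location by its per-side component is well defined and collision-free, so the lowered value indeed types at $\TTau$ under the projected context. The remaining single-sided reference subcases close immediately by the round-trip identities, completing the induction.
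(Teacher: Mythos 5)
Your skeleton is the paper's: induction over the structure of $v$ (for values, induction on the derivation amounts to the same thing), identity behavior on constants, homomorphic descent through $\SNew{C}{\overline{v}}$ and $\SProofOfUsing{\alpha}{\overline{v}}$, and round-trip identities disposing of the single-sided reference cases. The divergence is exactly in the case you call the crux, $v = r_\ComputeProve(\iota_1,\iota_2)$, and there your argument has a genuine gap.

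First, configuration safety is not a hypothesis of this lemma: the only assumption is $\Sigma ; \cdot \cproves v : \tplab{\TTau}{\ell}$, with no stores $\sigmaC,\sigmaP$ in scope, so the invariant you appeal to is simply unavailable. Second, even granting it, the bridge you describe does not reach the conclusion. What the projection gives you is the \corelang judgment $\SigmaCP|_\Compute \proves \iota_1 : \TRefTau{\TTau'}$; what the lemma demands is $\Sigma ; \cdot \cproves r_\Compute(\iota_1) : \tplab{\TRefTau{\TTau'}}{\Compute}$, and the only rule that can produce a \Compute-labelled value judgment is \ruleref{TC-LiftVal}, whose premise consults $\SigmaC$, not $\SigmaCP|_\Compute$ (label demotion via subtyping is also blocked, because reference types fail \refunreachN). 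No rule types a single-sided reference against the projection of the shared context. Worse, configuration safety actively works against you: condition (i) of \Cref{app:config-safety-def} makes $\dom(\SigmaC)$ disjoint from $\dom(\SigmaCP|_\Compute)$ (the union there is $\cupdot$), so $(\iota_1,\iota_2) \in \dom(\SigmaCP)$ forces $\iota_1 \notin \dom(\SigmaC)$, rendering the judgment you want underivable. Note the paper does not discharge this case either: it merely computes $\Lift{\Compute}{\Lower{\Compute}{v}} = r_\Compute(\iota_1)$ and then remarks that the \ComputeProve-reference case ``is actually disallowed by our type system, and so would never actually occur''---the case is dismissed as unreachable rather than proved. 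The honest repair is to do the same, restricting attention to values for which this case cannot arise (as at every use site of the lemma), not to seek a bridge from $\Theta(\SigmaCP)$ to the per-side contexts, because none exists.
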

\begin{proof}
    By induction on $v$.

    The lemma is vacuously true for $v$ being a variable.
    For any constant, lowering and lifting results in itself $v$: $\Lift{\Compute}{\Lower{\Compute}{v}} = v$, and their respective typing rules can be typed with both $\Compute$ or $\Prove$.
    For $v = \New C(\overline{v})$, the inductive hypothesis applies to $\overline{v}$ and the result for $v$ follows.
    Similar reasoning proves the case for $v = \ProofOf* \With \overline{v}$.
    If $v = r_\Compute(\iota)$, then $\Sigma ; \cdot \cproves v : \tplab{(\Reft \tplab{\TTau}{\ell})}{\ell}$. 
    Then $\Lift{\Compute}{\Lower{\Compute}{r_\Compute(\iota)}} = \Lift{\Compute}{\iota} = r_\Compute(\iota)$ and so $\Sigma ; \cdot \cproves r_\Compute(\iota) : \tplab{(\Reft \tplab{\TTau}{\ell})}{\ell}$.
    Similar reasoning holds for proving the case for lifting/lowering to $\Prove$.
    Finally, if $v = r_\ComputeProve(\iota_1, \iota_2)$, then $\Lift{\Compute}{\Lower{\Compute}{v}} = \Lift{\Compute}{\iota_1} = r_\Compute(\iota_1)$ (and similarly for $r_\Prove(\iota_2)$).

    Note that even though we proved it, this very last case is actually disallowed by our type system, and so would never actually occur.
\end{proof}

\begin{lemma}\label{lemma:c-and-p-types}
    If $\cdot ; \cdot \cproves v : \tplab{\TTau}{\Compute}$ and $\cdot ; \cdot \cproves v : \tplab{\TTau}{\Prove}$, then $\cdot ; \cdot \cproves v : \tplab{\TTau}{\ComputeProve}$.
\end{lemma}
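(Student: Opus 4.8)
The plan is to prove the lemma by inverting the two hypothesized value-typing derivations and reassembling the conclusion through \ruleref{TC-LiftValCP}. The value judgment $\Sigma;\Gamma \cproves v : \tplab{\TTau}{\ell}$ has only the two rules \ruleref{TC-LiftVal} and \ruleref{TC-LiftValCP} (there is no value-level subsumption rule), and the latter always concludes with label $\ComputeProve$. Hence both hypotheses must be instances of \ruleref{TC-LiftVal}: inverting them produces \corelang values $v_1, v_2$ with $v = \Lift{\Compute}{v_1} = \Lift{\Prove}{v_2}$ together with \corelang derivations $\varnothing;\varnothing \proves v_1 : \TTau$ and $\varnothing;\varnothing \proves v_2 : \TTau$. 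The heap and variable contexts are empty because the lemma's $\Sigma$ is $(\varnothing,\varnothing,\varnothing)$ and $\Gamma = \cdot$, so $\SigmaC = \SigmaP = \varnothing$ and $\Gamma |_\Compute = \Gamma |_\Prove = \varnothing$.

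The key observation is that an empty heap context forbids memory locations. By \ruleref{T-Location}, a location $\iota$ types only when $\iota \in \dom(\Sigma)$, so both $v_1$ and $v_2$ are location-free. On location-free values the lifting operator is the identity irrespective of its label: every clause in the definition of $\Lift{\ell}{\cdot}$ except the one for locations is independent of $\ell$, and $\Compute$, $\Prove$, and the $\Theta$-lifting differ only on that clause. Consequently $v_1 = \Lift{\Compute}{v_1} = v = \Lift{\Prove}{v_2} = v_2$, identifying location-free \corelang and \zkstrudul values, so $v_1 = v_2 = v$.

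I then conclude by applying \ruleref{TC-LiftValCP} with the empty ghost map $\Theta = \varnothing$. Its side conditions are immediate: $\Theta$ is vacuously injective, $\dom(\SigmaCP) = \varnothing \subseteq \dom(\Theta) = \varnothing$, and $\Theta(\SigmaCP) = \varnothing$. The remaining premise $\varnothing;\varnothing \proves v : \TTau$ is exactly the \corelang derivation recovered above, and since $v$ is location-free the $\Theta$-lifting is again the identity, $\Lift{\ComputeProve}{v}_\Theta = v$. This yields $\cdot;\cdot \cproves v : \tplab{\TTau}{\ComputeProve}$, as required.

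The only real obstacle is bookkeeping around the lifting conventions: one must verify by a routine structural check against the appendix definitions that $\Lift{\Compute}{\cdot}$, $\Lift{\Prove}{\cdot}$, and $\Lift{\ComputeProve}{\cdot}_\Theta$ coincide on location-free values, and confirm that inverting \ruleref{TC-LiftVal} under the empty contexts genuinely forces location-freeness. In fact, under these empty contexts either hypothesis alone already suffices to reconstruct the \corelang typing needed for \ruleref{TC-LiftValCP}; using both simply lets us equate the underlying \corelang values directly rather than re-deriving location-freeness on each side separately.
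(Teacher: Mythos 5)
Your proof is correct, but it takes a genuinely different route from the paper's. The paper argues by direct structural induction on $v$ inside the combined judgment: variables and locations make the premises vacuous, constants type at $\ComputeProve$ outright, and class and proof-of values follow by applying the inductive hypothesis to their subvalues and then the relevant typing rule. You instead invert both hypotheses down to \corelang derivations (correctly noting that \ruleref{TC-LiftValCP} can only conclude label $\ComputeProve$, so both premises must be instances of \ruleref{TC-LiftVal}), observe that the empty heap context forces the underlying core values to be location-free, use the coincidence of $\Lift{\Compute}{\cdot}$, $\Lift{\Prove}{\cdot}$, and the $\Theta$-lifting on location-free values to identify $v_1 = v_2 = v$, and finish with a single application of \ruleref{TC-LiftValCP} under the empty ghost map $\Theta = \varnothing$, whose side conditions hold vacuously. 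What your route buys is sharper information: it makes explicit \emph{why} the lemma holds (no locations can occur at all under empty contexts), it reduces combined-system reasoning to core-system reasoning, and, as you note at the end, it establishes the strictly stronger fact that either premise alone suffices. What the paper's route buys is self-containment: its induction on $v$ is exactly where the work lives, whereas your two supporting facts (location-freeness of values well-typed under an empty $\Sigma$, and agreement of the three lifting operators on location-free values) are themselves structural inductions on $v$ that you defer as routine checks, so the total proof effort is comparable, just relocated into auxiliary lemmas.
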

\begin{proof}
    By induction on $v$.

    The lemma is vacuously true when $v$ is a variable or a location.
    For any constant, they can be typed with label $\ComputeProve$ without the premises.
    We then get the result for $\New C(\overline{v})$ and $\ProofOf* \With \overline{v}$ by applying the inductive hypothesis on $\overline{v}$ and applying the relevant typing rule.
\end{proof}

\begin{lemma}\label{lemma:lifting-v-preserves-type}
    If $\refunreach{\TTau}$, then $\proves v : \TTau$ if and only if $\cproves v : \tplab{\TTau}{\ell}$ for any $\ell$.
\end{lemma}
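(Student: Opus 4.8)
The plan is to prove both directions by induction on the structure of the value~$v$, with the hypothesis $\refunreach{\TTau}$ doing all of the real work by forbidding locations. First I would record that, because the only context-aware values are memory locations and the only rule that types a location (\ruleref{T-Location}, together with its \zkstrudul analogue) concludes a reference type, $\refunreach{\TTau}$ guarantees $v$ is not a location: neither a raw~$\iota$ nor a wrapped $\SLocC{\iota}$, $\SLocP{\iota}$, or $\SLocCP{\iota_1}{\iota_2}$. Hence $v$ lies in the fragment common to both languages (variables, $\SUnitVal$, $\STrue$, $\SFalse$, objects, and proofs), and the only nontrivial, location-handling clauses of the lifting operators are never triggered, so $\Lift{\ell}{v} = v$ and $\LiftTheta{\ComputeProve}{v} = v$. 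This identity is the bridge between the two judgments.

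For the inductive cases I would argue that $\refunreachN$ propagates to component values so the induction hypothesis applies: if $\refunreach{C}$ then every field type of $C$ satisfies $\refunreachN$, and if $v = \SProofOfUsing{\alpha}{\overline{v}}$ then the public-input types $\ptypes(\alpha)$ satisfy $\refunreachN$ by the premises of \ruleref{T-Prove} (and its compute-and-prove counterpart). With this in hand the cases are routine: variables are vacuous (empty contexts); $\SUnitVal$, $\STrue$, and $\SFalse$ type to $\Unit$ and $\Bool$ identically in both systems; and objects and proofs follow by applying the induction hypothesis componentwise and reassembling with the corresponding typing rule.

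For the forward direction, given $\proves v : \TTau$, I would build $\cproves v : \tplab{\TTau}{\ell}$ for each label. For $\ell \in \{\Compute, \Prove\}$ I apply \ruleref{TC-LiftVal}, whose premise $\Sigma_\ell; \Gamma|_\ell \proves v : \TTau$ is exactly the given \corelang judgment and whose conclusion types $\Lift{\ell}{v} = v$. For $\ell = \ComputeProve$ I apply \ruleref{TC-LiftValCP} with the empty map $\Theta$ (permissible since $\SigmaCP$ is empty and $v$ has no locations), which types $\LiftTheta{\ComputeProve}{v} = v$. Conversely, given $\cproves v : \tplab{\TTau}{\ell}$, I invert the value judgment; whichever lifting rule applies exhibits an underlying \corelang value~$v'$ with $\proves v' : \TTau$ and $\Lift{\ell}{v'} = v$ (resp. $\LiftTheta{\ComputeProve}{v'} = v$), and since $\refunreach{\TTau}$ forces $v'$ to be reference-free we conclude $v' = v$, yielding $\proves v : \TTau$.

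The main obstacle is not any single hard deduction but the care needed to certify that references genuinely cannot surface. Concretely, the two delicate points are (1)~establishing that $\refunreachN$ transfers from a class to its fields and from a proof name to its input types, which leans on the definition of $\refunreachN$ and on the well-formedness of the proof-type table guaranteed by \ruleref{T-Prove}; and (2)~verifying in the $\ComputeProve$ case that the ghost-location apparatus ($\Theta$ in \ruleref{TC-LiftValCP}) degenerates on reference-free values, so that it is neither required for nor an obstruction to the derivation.
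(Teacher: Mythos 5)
Your proof is correct and follows essentially the same route as the paper's: induction over value structure/typing derivations, with variables and locations handled vacuously (the latter because $\refunreachN$ rules out reference types), constants immediate, and classes and proof objects closed under the inductive hypothesis, using \ruleref{TC-LiftVal}/\ruleref{TC-LiftValCP} and inversion for the two directions. The paper's own proof is just a terser version of this argument; your added care about $\Lift{\ell}{v} = v$ on reference-free values and the degenerate $\Theta$ in the $\ComputeProve$ case makes explicit what the paper leaves implicit.
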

\begin{proof}
        We prove the forward direction by induction on the typing rules of $v$.
        Variables and reference rules are vacuously true by not satisfying the premise.
        All constants are trivially true.
        Finally, classes and proof objects follow by applying the inductive hypothesis on its arguments $\overline{v}$ and using the same $\ell$ in $\bigcap \overline{\ell} = \ell$.

        The backward direction is similar to the forward direction, but inverting the argument for each case.
\end{proof}

\begin{lemma}\label{lemma:lifting-e-preserves-type}
    If $\refunreach{\TTau}$, $e$ is an expression in the surface language, $\cdot ; \Gamma \proves e : \TTau$, and $\Lift{\ell}{e}$ is defined, then $\cdot ; \Gamma \diamond \ell \cproves \Lift{\ell}{e} : \tplab{\TTau}{\ell}$.
\end{lemma}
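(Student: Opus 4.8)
The plan is to prove the lemma by a case analysis on the procedure label $\ell$, appealing in each case to the one rule of the combined type system that imports a full \corelang derivation of a single-procedure expression: \ruleref{TC-Lift} when $\ell \in \{\Compute, \Prove\}$ and \ruleref{TC-LiftCP} when $\ell = \ComputeProve$. I read the unlabeled hypothesis $\cdot ; \Gamma \proves e : \TTau$ as the \corelang expression judgment carried at the same label as the conclusion, namely $\cdot ; \Gamma \proves e : \TTau \produces \ell$, and the conclusion as the combined expression judgment in the shape those rules produce (with the computed-input contexts $\Delta$ and $A$ threaded unchanged, since a surface expression performs no computed-input operations). Crucially, ``surface'' rules out the runtime-only forms marked green in the full syntax---in particular heap locations $\iota$---so $e$ is location-free; this is what makes $\Lift{\ell}{e}$ defined for every $\ell$ (the only undefined clause is $\Lift{\ComputeProve}{\iota}$), so the definedness hypothesis is automatic, and it is the key structural fact driving the $\ComputeProve$ case.

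First I would isolate a small context identity: annotating a \corelang context and then restricting to the same label is the identity, $(\Gamma \diamond \ell)|_\ell = \Gamma$. This follows directly from the two definitions: $\Gamma \diamond \ell$ attaches $\ell$ to every binding, and the restriction clause keeps $x \ty \TTau$ exactly when $\tplab{\TTau}{\ell} <: \tplab{\TTau}{\ell}$, which holds by reflexivity of subtyping, so every variable survives with its label stripped. With this in hand the cases $\ell \in \{\Compute, \Prove\}$ are immediate: I instantiate \ruleref{TC-Lift} with variable context $\Gamma \diamond \ell$, empty heap contexts, and arbitrary $\Delta, A$; its premise $\Sigma_\ell ; (\Gamma \diamond \ell)|_\ell \proves e : \TTau \produces \ell$ collapses, via the identity and the empty heap, to precisely the hypothesis, and the rule delivers $\cdot ; \Gamma \diamond \ell \cproves \Lift{\ell}{e} : \tplab{\TTau}{\ell}$ with $A$ passed through unchanged by construction of the rule.

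The main obstacle is the case $\ell = \ComputeProve$, where \ruleref{TC-Lift} is blocked by its side condition $\ell \in \{\Compute, \Prove\}$ and we must route through \ruleref{TC-LiftCP}, which is phrased in terms of ghost locations and a $\Theta$-lifting operator $\LiftTheta{\ComputeProve}{e}$ rather than the plain $\Lift{\ComputeProve}{e}$. Here I would exploit location-freeness to take both the shared heap context $\SigmaCP$ and the ghost-location map $\Theta$ empty: then $\Theta$ is vacuously injective on $\dom(\SigmaCP) = \varnothing$, the transported context $\Theta(\SigmaCP)$ is empty, and the premise $\Theta(\SigmaCP) ; (\Gamma \diamond \ComputeProve)|_\ComputeProve \proves e : \TTau \produces \ComputeProve$ again reduces to the hypothesis via the context identity. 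The delicate point to nail down---and the heart of this case---is that $\LiftTheta{\ComputeProve}{e}$ and $\Lift{\ComputeProve}{e}$ coincide on location-free expressions, since the two lifting definitions agree on every syntactic form except the location clause; I would establish this by a routine structural induction over the lifting definitions and then read off that the conclusion of \ruleref{TC-LiftCP} is exactly the desired statement. Finally, I would confirm where the standing hypothesis $\refunreach{\TTau}$ is actually consumed: neither lift rule lists it as a premise, so I expect it is present to keep this lemma's interface uniform with the value-level \Cref{lemma:lifting-v-preserves-type} and to serve its callers---and should a subtyping step on the result type ever be needed to reconcile $\tplab{\TTau}{\ComputeProve}$ with a required $\tplab{\TTau}{\Compute}$ or $\tplab{\TTau}{\Prove}$ through the covariant labeled-subtyping rule, $\refunreach{\TTau}$ is precisely the side condition that licenses it.
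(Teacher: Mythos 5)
Your proof is sound under the reading you adopt, but it is a genuinely different argument from the paper's. The paper proceeds by structural induction on $e$: values are discharged via the forward direction of \Cref{lemma:lifting-v-preserves-type} (which is where $\refunreach{\TTau}$ is actually consumed), reference operations are dismissed as vacuous, and the compound cases rebuild the combined derivation node by node---typing the lifted \LetN and \IfN forms with \ruleref{TC-Let} (taking $\ell_1 = \ell_1'$) and \ruleref{TC-If} and invoking the inductive hypothesis on subterms---so that \ruleref{TC-Lift} is only ever applied to atomic pieces such as method calls. You instead collapse the whole lemma into a single top-level application of \ruleref{TC-Lift} (resp.\ \ruleref{TC-LiftCP}), justified by the context identity $(\Gamma \diamond \ell)|_\ell = \Gamma$ and, for $\ComputeProve$, by taking $\Theta$ and $\SigmaCP$ empty and observing that $\Theta$-lifting coincides with plain lifting on location-free surface terms. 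Both supporting observations are correct, and your handling of the $\ComputeProve$ case is actually more explicit than the paper's, whose proof never confronts \ruleref{TC-LiftCP} directly.

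The trade-off is exactly the label-reading issue you flag at the outset. Your one-shot argument requires the given \corelang derivation to produce label \emph{exactly} $\ell$, since that is the premise of \ruleref{TC-Lift}/\ruleref{TC-LiftCP}. The paper's induction never needs the root label of the given derivation to be $\ell$: at \LetN/\IfN nodes it only needs the subterms to be liftable at $\ell$, reassembling with the combined rules, so the label question is pushed down to the leaves. This matters for the lemma's only client, the method-call case of \Cref{thm:abs-preservation}, where the body of a method called at label $\ell$ is typed by \ruleref{Method-Ok} at some $\ell_m$ with $\ell \subseteq \ell_m$, not necessarily at $\ell$ itself; under your exact-$\ell$ reading the lemma cannot be invoked there without an additional relabeling argument, which is not entirely trivial since \ruleref{T-Call} pins labels exactly. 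The paper glosses over the same point, so this is not an error in your proof, but it is the reason the inductive structure is the more robust choice. A final minor difference: on your route $\refunreach{\TTau}$ is never used (correctly, as you note), whereas the paper's proof needs it for the value case and to dismiss the reference-operation cases.
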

\begin{proof}
    By induction on $e$.

    The case for values is proven by the forward direction of \Cref{lemma:lifting-v-preserves-type}, and the trivial case for variables.
    Any of the reference operations (allocation, dereferencing, or assignment) are vacuously true.
    In most other cases, we simply apply the result on subvalues in the expression to type with label $\ell$, and the premises for the typing rule follows.
    For instance, with \ruleref{TC-Lift} for method calls, raising the whole method call with $\ell$ means we raise all subvalues with the same $\ell$, and so the intersection of relevant labels is also $\ell$.
    Note that lifting is undefined on $\ProveN$ and compute-and-prove statements if $\ell \neq \Compute$, but we syntactically prevent lifting an expression in these case (i.e. prevent methods calls that nest prove statements).

    Finally, for the inductive cases, if $e$ is an if statement, then the inductive hypothesis proves $e_1$ and $e_2$ lift to $\tplab{\TTau}{\ell}$, and so the premises of the typing rule are satisfied.
    And if $e$ is a let statement, $\ell_1 = \ell_1'$ in \ruleref{TC-Let} and $e_2$ is typed with $\tplab{\TTau_2}{\ell}$ by the inductive hypothesis, and so the premises of the typing rule hold.
\end{proof}

\begin{lemma}\label{lemma:lowering-cp-types}
    If $\Sigma ; \cdot \cproves v : \tplab{\TTau}{\ComputeProve}$, then $\SigmaCP |_\Compute \proves \Lower{\Compute}{v} : \TTau$ and $\SigmaCP |_\Prove \proves \Lower{\Prove}{v} : \TTau$.
\end{lemma}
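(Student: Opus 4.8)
The plan is to invert the value typing derivation and then reduce the claim to a location-relabeling property of the underlying \corelang type system. Since the variable context is empty and the produced label is \ComputeProve, the only rule that can derive $\Sigma ; \cdot \cproves v : \tplab{\TTau}{\ComputeProve}$ is \ruleref{TC-LiftValCP}: the value judgment has no standalone subsumption rule (subtyping is embedded only in \ruleref{TC-Let}), and \ruleref{TC-LiftVal} produces only \Compute or \Prove labels. Inverting \ruleref{TC-LiftValCP} therefore yields a \corelang value $v_0$ and a map $\Theta$, injective on $\dom(\SigmaCP) \subseteq \dom(\Theta)$, with $v = \LiftTheta{\ComputeProve}{v_0}$ and $\Theta(\SigmaCP) ; \cdot \proves v_0 : \TTau$.

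Next I would compute $\Lower{\Compute}{v}$ explicitly. By the definition of $\Theta$-lifting, $\LiftTheta{\ComputeProve}{v_0}$ replaces each ghost location $\iota$ occurring in $v_0$ by the reference object $\SLocCP{\iota_1}{\iota_2}$ where $(\iota_1,\iota_2) = \Theta^{-1}(\iota)$, leaving all base values unchanged. Since $v_0$ is closed and typed against $\Theta(\SigmaCP)$, every location in $v_0$ lies in $\dom(\Theta(\SigmaCP))$ and is thus a ghost location, so all references produced are \ComputeProve references and $\Lower{\Compute}{v}$ is defined. Applying $\Lower{\Compute}{\cdot}$ then strips each $\SLocCP{\iota_1}{\iota_2}$ to its first component $\iota_1$. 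Hence $\Lower{\Compute}{v}$ is exactly $v_0$ with every ghost location $\iota$ renamed to $\pi_1(\Theta^{-1}(\iota))$, and symmetrically $\Lower{\Prove}{v}$ renames each $\iota$ to its second component.

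The core step is then a relabeling lemma for \corelang typing, proved by a routine structural induction on the derivation of $\Theta(\SigmaCP) ; \cdot \proves v_0 : \TTau$: renaming each ghost location $\iota$ to $\pi_1(\Theta^{-1}(\iota))$ while simultaneously replacing the heap context $\Theta(\SigmaCP)$ by $\SigmaCP |_\Compute$ preserves the type. The only interesting case is \ruleref{T-Location}, where the lookups must agree; they do, because $\Theta(\SigmaCP)(\iota) = \SigmaCP(\Theta^{-1}(\iota)) = (\SigmaCP |_\Compute)(\iota_1)$ by the definitions of $\Theta(\SigmaCP)$ and of the projection $|_\Compute$. Class objects and proof objects follow directly from the inductive hypotheses on their components. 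This gives $\SigmaCP |_\Compute \proves \Lower{\Compute}{v} : \TTau$, and the \Prove case is entirely symmetric.

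The main obstacle is ensuring the renaming is type-coherent, i.e.\ that $\SigmaCP |_\Compute$ is genuinely a function. Injectivity of $\Theta$ guarantees distinct pairs map to distinct ghost locations, but two pairs could a priori share a first component, which would make the projected lookup ambiguous. I would discharge this using the well-formedness of heap contexts maintained throughout—specifically the disjoint-union condition in the configuration-safety definition of \Cref{app:config-safety-def}, which forces the first and second projections of $\dom(\SigmaCP)$ to be injective. With that invariant in hand the location lookups in \ruleref{T-Location} are unambiguous and the induction closes. This lemma is the reference-sensitive companion to \Cref{lemma:raise-and-lower-types}, and is used wherever a \ComputeProve value must be typed after lowering into a single-sided store.
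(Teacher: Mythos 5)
Your proposal is correct and, despite being repackaged as a single top-level inversion of \ruleref{TC-LiftValCP} followed by a location-renaming induction on the \corelang derivation of $v_0$, it is essentially the paper's proof: the paper simply inducts on the structure of $v$, inverting \ruleref{TC-LiftValCP} at each constructor, and both arguments turn on exactly the same equation at \ruleref{T-Location}, namely $\Theta(\SigmaCP)(\iota') = \SigmaCP(\iota_1,\iota_2) = (\SigmaCP|_\Compute)(\iota_1)$ where $\Theta(\iota_1,\iota_2)=\iota'$, with class and \ProofOfN objects handled by the inductive hypothesis on their components. One caveat: your appeal to the configuration-safety invariant of \Cref{app:config-safety-def} to make $\SigmaCP|_\Compute$ single-valued is not actually available---the lemma carries no store or safety hypothesis---but this costs nothing, since functionality of the projected context is a well-formedness presupposition of the statement itself that the paper's proof likewise takes for granted.
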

\begin{proof}
    By induction on $v$.

    For any constant $v$, $\Lower{\ell}{v} = v$ for any $\ell$, so $\cproves v : \tplab{\TTau}{\ComputeProve}$ implies $\proves v : \TTau$ by inverting \ruleref{TC-LiftValCP}.

    If $v$ is a reference, meaning $v = r_\ComputeProve(\iota_1, \iota_2)$, then we can invert $\Sigma ; \cdot \cproves v : \tplab{(\SRefTau{\TTau'})}{\ell}$ using \ruleref{TC-LiftValCP} to $\Theta(\SigmaCP) ; \cdot \proves \iota' : \TRefTau{\TTau'}$ where $\Theta(\iota_1, \iota_2) = \iota'$.
    Inverting again with \ruleref{T-Location} gives $(\Theta(\SigmaCP))(\iota') = \TTau'$, meaning $\SigmaCP(\iota_1, \iota_2) = \TTau'$ and therefore our conclusion holds.

    The other cases, of classes and $\ProofOfN$ objects, follow by the inductive hypothesis applied on their enclosed subvalues $\overline{v}$ and applying the relevant typing rule.
\end{proof}

\begin{lemma}\label{lemma:lowering-in-sync}
    If $\Sigma \cproves \sigmaC ; \sigmaP$ then for any $v$ such that $\Sigma ; \cdot \cproves v : \tplab{\STau}{\ComputeProve}$, then $\sigmaC ; \sigmaP \models \Lower{\Compute}{v} \approx \Lower{\Prove}{v}$.
\end{lemma}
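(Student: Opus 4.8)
The plan is to proceed by structural induction on the value~$v$, discharging each level by inverting its typing derivation. Since $\Gamma = \cdot$ and $v$ is typed at label~$\ComputeProve$, the only applicable value rule is \ruleref{TC-LiftValCP}, so $v = \Lift{\ComputeProve}{v'}_\Theta$ for some \corelang value~$v'$ and injective ghost-location map~$\Theta$ with $\Theta(\SigmaCP) ; \cdot \proves v' : \TTau$. A useful preliminary observation is that the theta-lifting $\Lift{\ComputeProve}{\cdot}_\Theta$ only ever produces $\ComputeProve$-wrapped references $\SLocCP{\iota_1}{\iota_2}$ (never $\SLocC*$ or $\SLocP*$), so both $\Lower{\Compute}{v}$ and $\Lower{\Prove}{v}$ are defined on every subterm that arises; this simultaneously rules out the undefined cases of lowering.

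First I would dispatch the trivial cases. For a base constant ($\STrue$, $\SFalse$, $\SUnitVal$, or an integer), both $\Lower{\Compute}{v}$ and $\Lower{\Prove}{v}$ equal the same \corelang constant, so the reflexive rule of~$\approx$ applies immediately. The structural cases of a class object $\SNew{C}{\overline{v}}$ and a proof object $\SProofOfUsing{\alpha}{\overline{v}}$ follow by the induction hypothesis: inverting \ruleref{T-New} (respectively the proof-object rule) on $\Theta(\SigmaCP) ; \cdot \proves v' : \TTau$ types each component of~$v'$, so each corresponding component of~$v$ is again typed at~$\ComputeProve$ by \ruleref{TC-LiftValCP} with the \emph{same}~$\Theta$. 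Applying the IH to each component and then the matching structural rule of~$\approx$ (which preserves~$C$ and~$\alpha$ and transforms only the components) yields $\sigmaC, \sigmaP \models \Lower{\Compute}{v} \approx \Lower{\Prove}{v}$.

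The crux is the reference case $v = \SLocCP{\iota_1}{\iota_2}$, where $\Lower{\Compute}{v} = \iota_1$ and $\Lower{\Prove}{v} = \iota_2$, and I must establish $\sigmaC, \sigmaP \models \iota_1 \approx \iota_2$. By the location rule of~$\approx$ this reduces to $\sigmaC, \sigmaP \models \sigmaC(\iota_1) \approx \sigmaP(\iota_2)$, which is \emph{exactly} clause~(vi) of configuration safety $\Sigma \cproves \sigmaC, \sigmaP$---provided $(\iota_1,\iota_2) \in \dom(\SigmaCP)$. The main obstacle is therefore connecting the ghost-location machinery of \ruleref{TC-LiftValCP} back to~$\SigmaCP$: at this position $v'$ is a ghost location $\iota'$, and inverting \ruleref{T-Location} on $\Theta(\SigmaCP) ; \cdot \proves \iota' : \TTau$ gives $\iota' \in \dom(\Theta(\SigmaCP)) = \Theta(\dom(\SigmaCP))$; since~$\Theta$ is injective on~$\dom(\SigmaCP)$ and the theta-lifting restores references via~$\Theta^{-1}$, we recover $(\iota_1,\iota_2) = \Theta^{-1}(\iota') \in \dom(\SigmaCP)$, so clause~(vi) applies and closes the case. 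Because clause~(vi) already packages the required heap-content synchronization, the location rule of~$\approx$ bottoms out by direct appeal to configuration safety rather than by a recursive descent into the heap, so no separate well-foundedness argument is needed; the induction is purely on the syntactic structure of~$v$.
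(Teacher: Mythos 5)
Your proof is correct and follows essentially the same route as the paper's: structural induction on $v$, constants by reflexivity of~$\approx$, the $\SLocCP{\iota_1}{\iota_2}$ case by appealing to the configuration-safety clause that synchronizes $\sigmaC(\iota_1)$ and $\sigmaP(\iota_2)$ for pairs in $\dom(\SigmaCP)$, and class/proof objects by the induction hypothesis on components. The only difference is that you spell out the inversion of \ruleref{TC-LiftValCP} through the ghost-location map~$\Theta$ to justify $(\iota_1,\iota_2) \in \dom(\SigmaCP)$, a step the paper's proof leaves implicit.
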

\begin{proof}
    By induction on $v$.

    For any constant, $\Lower{\Compute}{v} = \Lower{\Prove}{v}$, and we're done by noting $\approx$ is reflexive.

    The theorem is vacuously true for variables.

    For any reference, it must be $v = r_\ComputeProve(\iota_1, \iota_2)$ for it to have type $\tplab{\TTau}{\ComputeProve}$.
    This means that $\Lower{\Compute}{v} = \iota_1$ and $\Lower{\Prove}{v} = \iota_2$.
    Note that since $(\iota_1, \iota_2) \in \dom(\SigmaCP)$ and $\iota_2 \neq \ast$ the last condition of configuration safety gives that $\sigmaC ; \sigmaP \models \sigmaC(\iota_1) \approx \sigmaP(\iota_2)$, which by definition of $\approx$ (\Cref{app:config-safety-def}) gives $\sigmaC ; \sigmaP \models \iota_1 \approx \iota_2$.

    For any class object with constructor inputs $\overline{v}$, the inductive hypothesis gives $\sigmaC ; \sigmaP \models \Lower{\Compute}{v_i} \approx \Lower{\Prove}{v_i}$, and so $\sigmaC ; \sigmaP \models \Lower{\Compute}{v} \approx \Lower{\Prove}{v}$ by definition.
    Similar reasoning holds for any \ProofOfN value, but where the values are precisely equal since the premise and the typing rule \ruleref{T-ProofOf} gives $\proves \overline{v_1} : \tplab{\overline{\TTau}}{\ComputeProve}$.
\end{proof}

\begin{lemma}\label{lemma:lifting-is-defined}
    If $\sigmaC ; \sigmaP \models v_1 \approx v_2$ then $\Lift{\Compute}{v_1} \sqcup \Lift{\Prove}{v_2}$ is defined.
\end{lemma}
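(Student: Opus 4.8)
The plan is to proceed by induction on the derivation of $\sigmaC ; \sigmaP \models v_1 \approx v_2$, observing that the four inference rules defining $\approx$ correspond precisely to the four non-$\uparrow$ clauses in the definition of the join operator $\sqcup$. Inverting the last rule of the derivation both fixes the top-level shape of $v_1$ and forces $v_2$ to have a matching shape, so in each case I can compute $\Lift{\Compute}{v_1}$ and $\Lift{\Prove}{v_2}$ explicitly and check that the corresponding clause of $\sqcup$ fires.

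For the base case where $v_1 = v_2 \in \{\STrue, \SFalse, \SUnitVal\}$, lifting acts as the identity on constants, so $\Lift{\Compute}{v_1} = \Lift{\Prove}{v_2}$ and the first clause of $\sqcup$ (equal arguments) applies, yielding a defined result. The reference case is the most instructive: when $v_1 = \iota_1$ and $v_2 = \iota_2$, lifting produces $\Lift{\Compute}{\iota_1} = r_\Compute(\iota_1)$ and $\Lift{\Prove}{\iota_2} = r_\Prove(\iota_2)$, so the second clause of $\sqcup$ applies and returns $r_\ComputeProve(\iota_1, \iota_2)$. Notably, this case requires no appeal to the inductive hypothesis at all: lifting wraps the raw locations into the $r_\Compute$/$r_\Prove$ forms that $\sqcup$ joins unconditionally, so the premise $\sigmaC(\iota_1) \approx \sigmaP(\iota_2)$ is irrelevant to definedness.

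For the two compound cases, $v_1 = \SNew{C}{\overline{v_1'}}$ with $v_2 = \SNew{C}{\overline{v_2'}}$, and symmetrically the $\SProofOfUsing{\alpha}{\overline{v_1'}}$ case, lifting distributes structurally, giving $\Lift{\Compute}{v_1} = \SNew{C}{\Lift{\Compute}{\overline{v_1'}}}$ and $\Lift{\Prove}{v_2} = \SNew{C}{\Lift{\Prove}{\overline{v_2'}}}$. These share a common head, so the relevant clause of $\sqcup$ applies provided each component join $\Lift{\Compute}{v_{1,i}'} \sqcup \Lift{\Prove}{v_{2,i}'}$ is defined; since the derivation of the compound $\approx$ rule supplies a component-wise synchronization $\overline{v_1'} \approx \overline{v_2'}$, the inductive hypothesis applied to each pair discharges exactly this obligation.

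I do not expect a genuine obstacle here: the statement is essentially a structural observation that $\approx$ and $\sqcup$ are defined by aligned case analyses. The one point requiring mild care is confirming that lifting a raw location yields a single-sided wrapper ($r_\Compute$ on the compute side, $r_\Prove$ on the prove side) rather than triggering the undefined $\ComputeProve$ clause of $\Lift$; this is immediate from the definition of $\Lift$ because we only ever lift $v_1$ at $\Compute$ and $v_2$ at $\Prove$, never at $\ComputeProve$. The only other bookkeeping is to verify that $\approx$ on argument lists is read component-wise with matching lengths, so that the inductive hypothesis threads cleanly through each constructor or proof argument.
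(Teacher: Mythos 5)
Your proof is correct and follows essentially the same argument as the paper's: induction on the derivation of $\approx$, with constants handled by the equal-values clause of $\sqcup$, raw locations handled by the $r_\Compute$/$r_\Prove$ joining clause (with no appeal to the inductive hypothesis, exactly as in the paper), and the class and proof-object cases discharged component-wise by the inductive hypothesis.
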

\begin{proof}
    By induction on the definition of $\approx$.

    If $v_1$ and $v_2$ are constants, then the premise is true only if $v_1 = v_2$, and so $\Lift{\Compute}{v_1} \sqcup \Lift{\Prove}{v_2} = v_1 = v_2$.

    If $\sigmaC ; \sigmaP \models \iota_1 \approx \iota_2$ then $\Lift{\Compute}{\iota_1} \sqcup \Lift{\Prove}{\iota_2} = r_\ComputeProve(\iota_1, \iota_2)$ is defined.

    If $v_1 = \NewN~C(\overline{v_1})$ and $v_2 = \NewN~C(\overline{v_2})$ then by the inductive hypothesis $\Lift{\Compute}{v_{1i}} \sqcup \Lift{\Prove}{v_{2i}}$ exists, meaning $\Lift{\Compute}{v_1} \sqcup \Lift{\Prove}{v_2}$ is defined by definition.
    And similarly for \ProofOfN values.
\end{proof}

\begin{lemma}\label{lemma:theta-is-defined}
    If $(\iota_1, \iota_2) \in \dom(\SigmaCP)$ and  $\Sigma \cproves \sigmaC , \sigmaP$ then $\Lift{\Compute}{\sigmaC(\iota_1)} \sqcup \Lift{\Prove}{\sigmaP(\iota_2)}$ is defined.
\end{lemma}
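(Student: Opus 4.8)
The plan is to observe that this lemma follows almost immediately by chaining the synchronization invariant baked into configuration safety with \Cref{lemma:lifting-is-defined}, the lemma established just above. First I would unfold the definition of $\Sigma \cproves \sigmaC, \sigmaP$ and extract condition~(vi), instantiated at the specific pair $(\iota_1, \iota_2)$, which lies in $\dom(\SigmaCP)$ by assumption. This condition states exactly that $\sigmaC ; \sigmaP \models \sigmaC(\iota_1) \approx \sigmaP(\iota_2)$, i.e., the stored compute-side and prove-side values are synchronized.

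Before invoking that fact, I would note for completeness that the two store lookups $\sigmaC(\iota_1)$ and $\sigmaP(\iota_2)$ are themselves well-defined: condition~(i) of configuration safety gives $\dom(\sigmaC) \supseteq \dom(\SigmaC) \cupdot \dom(\SigmaCP |_\Compute)$ and $\dom(\sigmaP) \supseteq \dom(\SigmaP) \cupdot \dom(\SigmaCP |_\Prove)$, so from $(\iota_1, \iota_2) \in \dom(\SigmaCP)$ we obtain $\iota_1 \in \dom(\sigmaC)$ and $\iota_2 \in \dom(\sigmaP)$.

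With the synchronization $\sigmaC ; \sigmaP \models \sigmaC(\iota_1) \approx \sigmaP(\iota_2)$ in hand, I would apply \Cref{lemma:lifting-is-defined} directly, taking $v_1 = \sigmaC(\iota_1)$ and $v_2 = \sigmaP(\iota_2)$. That lemma concludes precisely that $\Lift{\Compute}{v_1} \sqcup \Lift{\Prove}{v_2}$ is defined, which is the desired statement. There is essentially no obstacle here---the real work was already carried out in establishing \Cref{lemma:lifting-is-defined} by induction on the definition of $\approx$, and the present lemma is merely the specialization of that fact to values drawn from two synchronized stores via a \ComputeProve heap-context entry. The only point worth double-checking is that condition~(vi) is stated with the same orientation as the hypothesis of \Cref{lemma:lifting-is-defined} (the compute value on the left of $\approx$ and the prove value on the right), which it is, so the instantiation lines up without any rearrangement.
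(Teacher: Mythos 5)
Your proposal is correct and follows exactly the paper's own proof: extract the synchronization condition of configuration safety at the pair $(\iota_1, \iota_2)$ to get $\sigmaC ; \sigmaP \models \sigmaC(\iota_1) \approx \sigmaP(\iota_2)$, then conclude by \Cref{lemma:lifting-is-defined}. Your additional remark that condition~(i) guarantees the store lookups are defined is a harmless (and slightly more careful) elaboration the paper leaves implicit.
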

\begin{proof}
    From the last condition of $\Sigma \cproves \sigmaC, \sigmaP$ and $(\iota_1, \iota_2) \in \dom(\SigmaCP)$ we get that $\sigmaC ; \sigmaP \models \sigmaC(\iota_1) \approx \sigmaP(\iota_2)$.
    Thus, by \Cref{lemma:lifting-is-defined} this means $\Lift{\Compute}{\sigmaC(\iota_1)} \sqcup \Lift{\Prove}{\sigmaP(\iota_2)}$ is defined.
\end{proof}

\begin{lemma}\label{lemma:joining-is-welltyped}
    Suppose $(\iota_\Compute, \iota_\Prove) \in \dom(\SigmaCP)$ and $\Sigma \cproves \sigmaC, \sigmaP$.
    If $v = \Lift{\Compute}{\sigmaC(\iota_\Compute)} \sqcup \Lift{\Prove}{\sigmaP(\iota_\Prove)}$ is defined, then $\SigmaCP \cproves v : \tplab{\SigmaCP(\iota_\Compute, \iota_\Prove)}{\ComputeProve}$.
\end{lemma}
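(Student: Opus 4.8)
The plan is to type the joined value with \ruleref{TC-LiftValCP}, the only rule producing a $\ComputeProve$ label on a genuinely two-sided value. Writing $\TTau = \SigmaCP(\iota_\Compute, \iota_\Prove)$, $w_\Compute = \sigmaC(\iota_\Compute)$, and $w_\Prove = \sigmaP(\iota_\Prove)$, configuration safety immediately supplies the three facts I need: $\SigmaCP |_\Compute \proves w_\Compute : \TTau$ (condition (iv)), $\SigmaCP |_\Prove \proves w_\Prove : \TTau$ (condition (v)), and $\sigmaC ; \sigmaP \models w_\Compute \approx w_\Prove$ (condition (vi)). To apply \ruleref{TC-LiftValCP} I will exhibit a ghost map $\Theta$ that is injective on $\dom(\SigmaCP)$ and a \corelang value $v'$ with $v = \Lift{\ComputeProve}{v'}_\Theta$, reducing the goal to the single \corelang judgment $\Theta(\SigmaCP) ; \cdot \proves v' : \TTau$. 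I take $\Theta$ to send each pair in $\dom(\SigmaCP)$ to a fresh, distinct ghost location and $v'$ to be the structural ``collapse'' of $v$ rewriting every $r_\ComputeProve(\iota_a, \iota_b)$ to $\Theta(\iota_a, \iota_b)$; by the definition of $\Theta$-lifting this makes $\Lift{\ComputeProve}{v'}_\Theta = v$, provided every reference pair occurring in $v$ lies in $\dom(\Theta)$.

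The reduced judgment $\Theta(\SigmaCP) ; \cdot \proves v' : \TTau$ I would prove by induction on the structure of $w_\Compute$, using the derivation of $w_\Compute \approx w_\Prove$ to force $w_\Prove$ into a matching shape (this mirrors, in reverse, the lowering argument of \Cref{lemma:lowering-cp-types}). For a base constant, $\approx$ forces $w_\Compute = w_\Prove$; both lift to the same constant, $\sqcup$ collapses to it, and it types at its base type directly. For a class object $\NewN C(\overline{w})$, the class rule for $\approx$ gives componentwise syncing and \ruleref{T-New} gives the field types on each side; since both lifting and $\sqcup$ distribute over $\NewN C(\cdot)$ by definition, the induction hypothesis types each collapsed field and \ruleref{T-New} reassembles the object. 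The $\ProofOfN~\alpha~\UsingN~\overline{w}$ case is identical using \ruleref{T-ProofOf}.

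The crux, and the case I expect to be the main obstacle, is a reference: $w_\Compute = \iota_1'$, $w_\Prove = \iota_2'$, and $\TTau = \TRefTau{\TTau'}$. Here the join is $r_\ComputeProve(\iota_1', \iota_2')$, which collapses to $\Theta(\iota_1', \iota_2')$, so \ruleref{T-Location} requires exactly that $(\iota_1', \iota_2') \in \dom(\SigmaCP)$ with $\SigmaCP(\iota_1', \iota_2') = \TTau'$. The projected typings of $w_\Compute$ and $w_\Prove$ only reveal that $\iota_1'$ is the compute component and $\iota_2'$ the prove component of \emph{some} pairs, and $\approx$ relates only their stored contents, so neither alone pins down the pairing. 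At the top level (where $w_\Compute = \sigmaC(\iota_\Compute)$ is itself a bare location) this is precisely the closure condition (vii) of configuration safety, which applies directly to give $(\iota_1', \iota_2') \in \dom(\SigmaCP)$.

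For nested references reachable through class fields, condition (vii) no longer applies verbatim, since the enclosing pair stores a constructed value rather than a bare location. I would therefore strengthen the induction hypothesis to track that the values being joined are reachable from the tracked pair $(\iota_\Compute, \iota_\Prove)$ and that $\SigmaCP$ is closed under this reachability---any synchronized location pair reachable from a tracked pair is itself tracked. Establishing and maintaining this closure, by interlocking the single-context typing of the stores (conditions (ii)--(v)), the $\approx$ relation, and condition (vii), is the delicate part of the argument; once it holds, the reference case and hence the whole induction go through, and the stated lemma follows by instantiating the generalized claim at $w_\Compute = \sigmaC(\iota_\Compute)$ and $w_\Prove = \sigmaP(\iota_\Prove)$.
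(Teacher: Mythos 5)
Your proposal follows essentially the same route as the paper's own proof: induction over the structure of the join, typing the result via \ruleref{TC-LiftValCP} with an injective ghost map $\Theta$, constants handled directly, class and proof objects reassembled with \ruleref{T-New} and \ruleref{T-ProofOf}, and the top-level reference case discharged by condition (vii) of configuration safety. Where you genuinely diverge is the nested-reference case, and there you are more careful than the paper. The paper's proof simply asserts that ``the inductive hypothesis proves the condition on the sub-cases,'' even though the lemma (and hence its IH) only speaks of values stored at a tracked pair, and condition (vii) only covers bare locations stored \emph{directly} at a tracked pair. Your observation is exactly right: for a reference pair $(\iota_1',\iota_2')$ nested inside a constructed value, conditions (iv)--(vi) yield only that $\iota_1'$ is the compute component of \emph{some} tracked pair and $\iota_2'$ the prove component of \emph{some} (possibly different) tracked pair, together with synchronization of their contents; the pairing $(\iota_1',\iota_2')\in\dom(\SigmaCP)$ does not follow from the stated safety conditions. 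So your strengthened, reachability-closed induction hypothesis is not an optional refinement but the generalized claim the paper's induction implicitly relies on. The one caveat is that this closure cannot be ``established'' inside the lemma itself, since configuration safety is a premise: it would have to be added to the definition of $\Sigma \cproves \sigmaC, \sigmaP$ (generalizing condition (vii) from directly-stored locations to locations reachable through stored values) and then shown preserved in \Cref{thm:abs-preservation}. Modulo that strengthening of the ambient invariant---which the paper needs just as much as you do---your argument goes through.
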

\begin{proof}
    By induction on the definition of $\sqcup$:

    If $\sigmaC(\iota_\Compute) = \sigmaP(\iota_\Prove)$, then the join $v$ is the equal value and doesn't contain any locations, so $\cproves v : \TTau$ with $\TTau$ being the type of the equal value.
    Note this value could be a class or proof object, but all the fields/subvalues must be exactly equal and contain no references.

    If $\sigmaC(\iota_\Compute) = \iota_1$ and $\sigmaP(\iota_\Prove) = \iota_2$ then by the last condition of $\Sigma \cproves \sigmaC, \sigmaP$, we know $(\iota_1, \iota_2) \in \dom(\SigmaCP)$ and therefore $\SigmaCP \cproves r_\ComputeProve(\iota_1, \iota_2) : \tplab{\SigmaCP(\iota_\Compute, \iota_\Prove)}{\ComputeProve}$.

    If the join is over two class objects, the inductive hypothesis proves the condition on the sub-cases, and we use \ruleref{T-New} and \ruleref{TC-LiftValCP} to get the expected results.
    And similarly for proof objects with \ruleref{T-ProofOf} and \ruleref{TC-LiftValCP}.
\end{proof}

\begin{theorem}[Preservation of Inner \zkstrudel]
\label{thm:abs-preservation}
     Suppose $\Sigma ; \varnothing ; \Delta ; A \cproves \SExp : \tplab{\TTau}{\ell} \produces A'$ and $\Sigma \cproves \sigma_\Compute, \sigma_\Prove$ and $\Delta; A \cproves \rho$ and $(e, \sigmaC, \SigmaC, \SigmaP^{in}, \SigmaCP^{in}, \Delta^{in}, A^{in}, \dom(\SigmaCP |_\Compute)) \nestedok$ for some $\SigmaP^{in}, \SigmaCP^{in}, \Delta^{in}, A^{in}$.
     If $\cpconf*{\SExp} \cpstep \cpconf{\SExp'}{\sigmaC'}{\sigmaP'}{\rho'}$, then for some $\Sigma' \supseteq \Sigma$ and $A'' \supseteq A$, $\Sigma' ; \varnothing ; \Delta ; A'' \cproves \SExp' : \tplab{\TTau'}{\ell} \produces A'$ where $\TTau <: \TTau'$ and $\Sigma' \cproves \sigma_\Compute', \sigma_\Prove'$ and $ \Delta ; A'' \cproves \rho'$ and $(e', \sigmaC', \SigmaC', \SigmaP^{in'}, \SigmaCP^{in'}, \Delta^{in'}, A^{in'}, \dom(\SigmaCP' |_\Compute)) \nestedok$ for some $\SigmaP^{in'}, \SigmaCP^{in'}, \Delta^{in'}, A^{in'}$.
\end{theorem}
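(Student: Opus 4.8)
The plan is to prove preservation by induction on the typing derivation of $\SExp$, case-splitting on which semantic rule drives the step $\cpconf*{\SExp} \cpstep \cpconf{\SExp'}{\sigmaC'}{\sigmaP'}{\rho'}$. In each case I would invert the relevant typing rule to expose its premises, reconstruct a typing derivation for $\SExp'$ at a type $\TTau'$ with $\TTau <: \TTau'$, and then re-establish each of the three preserved invariants: configuration safety $\Sigma' \cproves \sigmaC', \sigmaP'$, the computed-input agreement $\Delta ; A'' \cproves \rho'$, and the $\nestedok$ property. The evaluation-context case (\ruleref{EC-Eval}) follows immediately from the induction hypothesis applied to the stepping subexpression after inverting \ruleref{TC-Let}, and the substitution case (\ruleref{EC-Let}) uses \Cref{cor:safe-sub} to preserve typing through the let-binding.

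The lift rules (\ruleref{EC-Lift}) are handled uniformly: because such a step performs a \corelang reduction with empty heaps before and after, no store is modified, so all three invariants carry over unchanged, and typing is preserved by applying the \corelang preservation theorem (\Cref{thm:core-preservation}) to the underlying core step and re-applying \ruleref{TC-Lift} (or \ruleref{TC-LiftCP}) to the result.

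For the remaining cases the work concentrates in the reference and computed-input operations. For allocation (\ruleref{EC-RefC}, \ruleref{EC-RefP}, \ruleref{EC-RefCP}) I would extend $\Sigma$ with the fresh location(s), and in the \ComputeProve case verify that the new entry of $\SigmaCP$ satisfies conditions (ii)--(vii) of configuration safety using \Cref{lemma:lowering-cp-types} for well-typedness of the two lowered values and \Cref{lemma:lowering-in-sync} for their synchronization. Single-sided dereference and assignment (\ruleref{EC-DerefC}, \ruleref{EC-DerefP}) rely on the lift/lower round-trip of \Cref{lemma:raise-and-lower-types}, while the \ComputeProve dereference (\ruleref{EC-DerefCP}) is where the join $\sqcup$ appears: I would invoke \Cref{lemma:theta-is-defined} to show the join is defined and \Cref{lemma:joining-is-welltyped} to type the resulting value at label \ComputeProve. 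For the computed-input operations, \ruleref{EC-InputAssign} grows the out-assigned set to $A'' = A \cup \{x\}$ and re-establishes $\Delta ; A'' \cproves \rho'$ from the single-assignment side condition $x \notin \dom(\rho)$, while \ruleref{EC-InputDeref} types the read at \ComputeProve directly from $\Delta(x)$.

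The main obstacle will be maintaining the full seven-clause configuration-safety invariant across the \ComputeProve operations while simultaneously threading the $\nestedok$ property through the administrative terms of nested compute-and-prove blocks. The delicate clauses are (vi) and (vii), which demand that paired locations stay value-synchronized and that any stored location is again a tracked pair; these must be re-argued after every write that could touch a shared cell, leaning on the synchronization lemmas above. For nested blocks, the \ruleref{EC-ComputeAndProveInit}, \ruleref{EC-ComputeAndProveStep}, and \ruleref{EC-ComputeAndProveTrue} rules advance an inner $\SCnpN$ administrative term; there I would appeal to \Cref{lemma:cnp-always-head} to know such a term sits in head position, unfold the inner $\nestedok$ derivation to obtain an induction hypothesis for the inner body, and rebuild the outer $\nestedok$ witness with the stack of already-used compute locations grown by $\dom(\SigmaCP |_\Compute)$ so that the separation between nesting levels is preserved.
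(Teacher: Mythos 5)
Your proposal is correct and follows essentially the same route as the paper's proof: induction with case analysis on the stepping rule, inverting the typing rules, extending $\Sigma$ at allocations, using \Cref{lemma:lowering-cp-types}, \Cref{lemma:lowering-in-sync}, \Cref{lemma:theta-is-defined}, and \Cref{lemma:joining-is-welltyped} for the \ComputeProve cases, \Cref{cor:safe-sub} for substitution, and \Cref{lemma:cnp-always-head} together with unfolding $\nestedok$ to thread the induction hypothesis through nested $\SCnpN$ administrative terms. The only cosmetic differences (induction on the typing derivation rather than on the syntax of $e$, and handling all \ruleref{EC-Lift} steps uniformly via \corelang preservation instead of case by case) do not change the substance of the argument.
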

\begin{proof}
    By induction on the syntax of $e$.
    The definitions for configuration safety are defined in \Cref{app:config-safety-def} and the $\nestedok$ predicate is defined in \Cref{app:nested-preservation}.
    Note that the $\nestedok$ predicate is true by definition in all but the last three cases by simply using $\SigmaP^{in} = \SigmaCP^{in} = \varnothing$ to satisfy that premise of the definition (the last three cases involve $\SCnpN$ terms, the others don't by \Cref{lemma:cnp-always-head}).

    \begin{itemize}[itemsep=0.6em]
        \item If $e$ is a value, then the theorem is vacuously true.
    
        \item If $e = \SRefVal{\ell}{v}$, then it can step by either \ruleref{EC-RefC}, \ruleref{EC-RefP}, or \ruleref{EC-RefCP}.
        We consider each of those cases below:
        \begin{itemize}
            \item Suppose $\cpconf*{\SRefVal{\Compute}{v}} \cpstep \cpconf{r_\Compute(\iota)}{\sigmaC[\iota \mapsto \Lower{\Compute}{v}]}{\sigmaP}{\rho}$.
            If we let $\SigmaC' = \SigmaC \cup \{\iota \mapsto \TTau \}$, and $\SigmaP' = \SigmaP$ and $\SigmaCP' = \SigmaCP$, then $\Sigma' ; \Gamma ; \Delta ; A \cproves r_\Compute(\iota) \cproduces A$ from \ruleref{T-Location} as an axiom, into \ruleref{T-Val} and then \ruleref{TC-Lift} (or from \ruleref{T-Location} first into \ruleref{TC-LiftVal} then \ruleref{TC-Val}).

            We now prove each cases for configuration safety.
            By the inductive hypothesis, combined with $\iota \in \dom(\SigmaC')$ and $\iota \not\in \dom(\SigmaCP' |_\Compute)$ by construction, configuration safety case \rone is proven true.

            By the premise, for any $\iota' \in \dom(\SigmaC), \SigmaC \proves \sigmaC(\iota') : \SigmaC(\iota')$, so to prove it for $\dom(\SigmaC')$ we just have to prove it on the $\iota$ created in this step.
            Note that $\sigmaC(\iota) = \Lower{\Compute}{v}$.
            Given that $\Sigma ; \Gamma ; \Delta ; A \cproves \SRefVal{\Compute}{v} : \Lift{\Compute}{\SRefTau{\TTau}} \cproduces A$, inverting it via \ruleref{TC-Lift} and then \ruleref{T-Ref} gives that $\Sigma ; \Gamma \proves \Lower{\Compute}{v} : \TTau$.
            Noting that $v$ cannot be a variable means this can be proven with an empty $\Gamma$ and therefore configuration safety condition \rtwo is true.

            Finally, note that case \rthree is trivially true since $\sigmaP$ does not change, and cases \rfour, \rfive, \rsix, and \rseven are trivially true since $\SigmaCP' = \SigmaCP$.

            \item Suppose $\cpconf*{\SRefVal{\Prove}{v}} \cpstep \cpconf{r_\Prove(\iota)}{\sigmaC}{\sigmaP[\iota \mapsto \Lower{\Prove}{v}]}{\rho}$.
            If we let $\SigmaC' = \SigmaC$, $\SigmaP' = \SigmaP \cup \{\iota \mapsto \TTau\}$, and $\SigmaCP' = \SigmaCP$, then the proof follows similarly as in the previous case (when it was a \Compute reference instead of a \Prove reference).

            \item Suppose $\cpconf*{\SRefVal{\ComputeProve}{v}} \cpstep \cpconf{r_\ComputeProve(\iota_1, \iota_2)}{\sigmaC[\iota_1 \mapsto \Lower{\Compute}{v}]}{\sigmaP[\iota_2 \mapsto \Lower{\Prove}{v}]}{\rho}$.
            If we let $\SigmaC' = \SigmaC$, $\SigmaP' = \SigmaP$, and $\SigmaCP' = \SigmaCP \cup \{(\iota_1, \iota_2) \mapsto \TTau \}$, then $\Sigma' ; \Gamma ; \Delta ; A \cproves r_\ComputeProve(\iota_1, \iota_2) \cproduces A$ from \ruleref{T-Location} as an axiom into \ruleref{T-Val} and then \ruleref{TC-LiftCP}.
            Note that this requires a function $\Theta$ that maps pairs of locations in $\SigmaCP'$ to ghost locations, and therefore $\Theta(\iota_1, \iota_2)$ is the location used for the \ruleref{T-Location} axiom.

            Then since $\iota_1 \in \dom(\SigmaCP' |_\Compute)$ and $\iota_1 \not\in \dom(\SigmaC')$ by construction, and since $\iota_2 \in \dom(\SigmaCP' |_\Prove)$ and $\iota_2 \not\in \dom(\SigmaP')$ by construction, combined with the inductive hypothesis we get that configuration safety \rone is proven.

            Since $\SigmaC' = \SigmaC$ and $\SigmaP' = \SigmaP$, cases \rtwo and \rthree are trivial from the inductive hypothesis.
            
            From inverting the typing judgment from the premise using \ruleref{TC-LiftCP} and \ruleref{T-Deref}, then applying \ruleref{TC-LiftValCP}, we get that $\Sigma ; \Gamma \cproves v : \tplab{\TTau}{\ComputeProve}$.
            Using \Cref{lemma:lowering-cp-types}, and because $v$ is closed, we immediately get configuration safety conditions \rfour and \rfive.

            For \rsix, since $(\iota_1, \iota_2) \in \dom(\SigmaCP')$, we need to show $\sigmaC ; \sigmaP \models \sigmaC(\iota_1) \approx \sigmaP(\iota_2)$, which we get by \Cref{lemma:lowering-in-sync}.
            Note this requires either the value is $r_\ComputeProve(\iota_1', \iota_2')$, which satisfies the premise, or is a value that satisfies \Cref{lemma:c-and-p-types}.
            
            For \rseven, this would happen only if $v = r_\ComputeProve(\iota_1', \iota_2')$.
            $v$ can only be typed with \ruleref{TC-LiftValCP}, which requires a $\Theta$ mapping $\iota_1', \iota_2'$ to some $\iota_3'$ that is in the domain of $\Theta(\SigmaCP')$, meaning $(\iota_1', \iota_2') \in \dom(\SigmaCP')$.
        \end{itemize}

        \item If $e = \SDeref{v}$, then it can step by either \ruleref{EC-DerefC}, \ruleref{EC-DerefP}, or \ruleref{EC-DerefCP}.
        We consider each of those cases below:

        \begin{itemize}[itemsep=0.6em]
            \item Suppose $\cpconf*{\SDeref{r_\Compute(\iota)}} \cpstep \cpconf*{\Lift{\Compute}{\sigmaC(\iota)}}$.
            From well-typedness in the premise, e.g. $\Sigma ; \Gamma ; \Delta ; A \cproves \SDeref{r_\Compute(\iota)} : \tplab{\SRefTau{\TTau}}{\Compute} \cproduces A$, we can invert this with \ruleref{TC-Lift}.
            Inversion gives that $\SigmaC ; \Gamma |_\Compute \proves \SDeref{\iota} : \Lower{}{\tplab{\SRefTau{\TTau}}{\Compute}} \produces \Compute$.
            By inverting with \ruleref{T-Deref} and once more with \ruleref{T-Location}, we get that $\SigmaC(\iota) = \TTau$ for $\tplab{\TTau}{\Compute} = \Lift{\Compute}{\TTau}$.
            Thus, $\iota \in \dom(\SigmaC)$ and therefore by condition \rtwo of configuration safety, $\SigmaC \proves \sigmaC(\iota) : \SigmaC(\iota)$.
            This then can apply to \ruleref{TC-LiftVal} and then \ruleref{TC-Val} to get $\Sigma ; \Gamma ; \Delta ; A \cproves \Lift{\Compute}{\sigmaC(\iota)} : \tplab{\TTau}{\Compute} \cproduces A$.
            
            Because none of the stores change, all configuration properties are trivially preserved by induction.
            
            \item Suppose $\cpconf*{\SDeref{r_\Prove(\iota)}} \cpstep \cpconf*{\Lift{\Prove}{\sigmaP(\iota)}}$.
            The proof follows similarly as in the previous case (when dereferencing a \Compute reference instead of a \Prove reference).

            \item Suppose $\cpconf*{\SDeref{r_\ComputeProve(\iota_1, \iota_2)}} \cpstep \cpconf*{\Lift{\Compute}{\sigmaC(\iota_1)} \sqcup \Lift{\Prove}{\sigmaP(\iota_2)}}$.
            From well-typedness in the premise, $\Sigma ; \Gamma ; \Delta ; A \cproves \SDeref{r_\ComputeProve(\iota_1, \iota_2)} : (\tplab{\SRefTau{\TTau}}{\ComputeProve}) \cproduces A$, the only possible typing rule is \ruleref{TC-LiftCP}.
            Inverting with injective function $\Theta$ where $\Theta(\iota_1, \iota_2) = \iota_3$ then gives $\Theta(\SigmaCP) ; \Gamma \proves \TDeref{\iota_3} : \TRefTau{\TTau} \cproduces \ell$.
            By inverting \ruleref{T-Deref}, we get the premise that $\Theta(\SigmaCP)(\iota_3) = \TTau$, meaning $\iota_3 \in \dom(\Theta(\SigmaCP))$ and therefore $(\iota_1, \iota_2) \in \dom(\SigmaCP)$.
            By \Cref{lemma:theta-is-defined}, we know that $\Lift{\Compute}{\sigmaC(\iota_1)} \sqcup \Lift{\Prove}{\sigmaP(\iota_2)}$ is defined.
            Then, by \Cref{lemma:joining-is-welltyped}, we get that it also types with type $\tplab{(\SigmaCP(\iota_1, \iota_2))}{\ComputeProve}$.
            
            Because none of the stores change, all configuration properties are trivially preserved by induction.
        \end{itemize}

        \item If $e = \SAssign{v_1}{v_2}$, then it must step via either \ruleref{EC-AssignC}, \ruleref{EC-AssignP}, or \ruleref{EC-AssignCP}. 
        We consider each of the cases below:

        \begin{itemize}[itemsep=0.6em]
            \item Suppose $\cpconf*{\SAssign{r_\Compute(\iota)}{v}} \cpstep \cpconf{()}{\sigmaC[\iota \mapsto \Lower{\Compute}{v}]}{\sigmaP}{\rho}$.
            First, note that the typing condition is trivial, since $\Sigma ; \Gamma \cproves () : \tplab{\SUnit}{\ell'}$ for any $\ell'$.
            Also note that since the domains of the stores do not change, and the types of the values in the stores remain the same, so setting $\Sigma' = \Sigma$ satisfies all the configuration safety conditions.

            \item Suppose $\cpconf*{\SAssign{r_\Prove(\iota)}{v}} \cpstep \cpconf{()}{\sigmaC}{\sigmaP[\iota \mapsto \Lower{\Prove}{v}]}{\rho}$.
            Just like in the \Compute case, the typing condition is trivial and setting $\Sigma' = \Sigma$ satisfies the configuration safety conditions.

            \item Suppose $\cpconf*{\SAssign{r_\ComputeProve(\iota_1, \iota_2)}{v}} \cpstep \cpconf{()}{\sigmaC[\iota_1 \mapsto \Lower{\Compute}{v}]}{\sigmaP[\iota_2 \mapsto \Lower{\Prove}{v}]}{\rho}$.
            Like before, the typing condition is trivial, and setting $\Sigma' = \Sigma$ satisfies configuration safety conditions \rone to \rfive.
            Case \rsix is true from \Cref{lemma:lowering-in-sync}, and case \rseven is true using the same logic as the case where $\SRefVal{\ComputeProve}{v}$ steps.
        \end{itemize}

        \item Suppose $\cpconf*{\SWitAssign*} \cpstep \cpconf{()}{\sigmaC}{\sigmaP}{\rho[x \mapsto v]}$.
        The typing condition is trivial, since $\Sigma ; \Gamma \cproves () : \tplab{\Unit}{\ell'}$ for any $\ell'$.
        Keeping $\Sigma' = \Sigma$, we get by induction that $\Sigma' \cproves \sigmaC, \sigmaP$.
        Then, $\Delta ; A' \cproves \rho'$ where $A' = A \cup \{x\}$, since $\dom(\rho') = \dom(\rho) \cup \{x\} = A \cup \{x\} = A'$, and by inverting \ruleref{TC-InputAssign}, we get that $\Delta(x) = \TTau$ and $\Sigma ; \Gamma \cproves \rho'(x) : \Lift{\Compute}{\TTau}$. 
        Note that $\rho'(x)$ cannot be a variable, so $\Sigma ; \cdot \cproves \rho'(x) : \Lift{\Compute}{\TTau}$.
        Similarly, computed input variables cannot be references (see \ruleref{T-ComputeAndProve}), which comes from $\refunreach{\TTau}$.
        This gives us that $\cdot ; \cdot \cproves \rho'(x) : \tplab{\TTau}{\ell}$.
        Therefore, by inverting with \ruleref{TC-LiftVal}, we get $\proves \rho'(x) : \TTau$.

        \item Suppose $\cpconf*{\SWitDeref*} \cpstep \cpconf*{\rho(x)}$.
        Inverting \ruleref{TC-InputDeref} gives $\Delta(x) = \TTau$ and $x \in A$.
        Combined with the premise $\Delta ; A \cproves \rho$, we know that $\proves \rho(x) : \TTau$.
        Therefore, by \ruleref{TC-LiftValCP}, $\Sigma ; \Gamma ; \Delta ; A \cproves \rho(x) : \tplab{\TTau}{\ComputeProve}$.
        
        None of the stores change, so $\Sigma \cproves \sigmaC, \sigmaP$ and $\Delta ; A \cproves \rho$ are trivially preserved.
    
        \item For the cases for casting, field access, prove and verify statements, none of the stores change and type preservation occurs similar to these cases in \Cref{thm:core-preservation}.
        
        \item Suppose $e = v.m_\ell(\overline{u})$ is a method call, so $\Sigma ; \Gamma ; \Delta ; A \cproves e : \tplab{\TTau}{\ell} \produces A$ and $\cpconf*{(\New C(\overline{v})).m_\ell(\overline{u})} \cpstep \cpconf*{\subst*{\Lift*{e_m}}{{\overline{x}}{\overline{u}}{\This}{\New C(\overline{v})}}}$.
        By \Cref{lemma:lifting-e-preserves-type}, since $\cdot ; \overline{x} : \overline{\TTau_1}, \This : C \proves e : \TTau$, we know $\cdot ; \overline{x} : \tplab{\overline{\TTau_1}}{\ell}, \This : \tplab{C}{\ell} \proves \Lift*{e} : \tplab{\TTau}{\ell}$.
        Thus, the substitution also types by \Cref{thm:subst-pres-surf-typing}: $\cdot ; \Gamma \proves \subst*{\Lift*{e_m}}{{\overline{x}}{\overline{u}}{\This}{\New C(\overline{v})}} : \tplab{\TTau}{\ell}$.
        And since none of the stores change, $\Sigma \cproves \sigmaC, \sigmaP$ and $\Delta ; A \cproves \rho$ are preserved.

        \item Suppose $e = \SLetIn{x \ty \tplab{\TTau_1}{\ell}}{e_1}{e_2}$.
        Then it can step via either \ruleref{EC-Eval} or \ruleref{EC-Lift} and \ruleref{E-Let}.
        \begin{itemize}[itemsep=0.6em]
            \item In the first case, $\cpconf*{e} \cpstep \cpconf{\SLetIn{x\ty\tplab{\TTau_1}{\ell}}{e_1'}{e_2}}{\sigmaC'}{\sigmaP'}{\rho'}$.
            By the inductive hypothesis, there is some $\Sigma', A''$ such that $\Sigma' \cproves \sigmaC', \sigmaP'$ and $\Delta ; A'' \cproves \rho'$ and $\Sigma' ; \Gamma ; \Delta ; A'' \cproves e_1' : \tplab{\TTau_1'}{\ell} \produces A_1$.
            Thus, we get that $\Sigma' ; \Gamma ; \Delta ; A'' \cproves \SLetIn{x\ty\tplab{\TTau_1}{\ell}}{e_1'}{e_2} : \tplab{\TTau}{\ell'} \produces A$.

            \item In the second case, $\cpconf*{e} \cpstep \cpconf*{\subst{e_2}{x}{v}}$. 
            Inverting \ruleref{TC-Let} means $\Sigma ; \Gamma \cproves v : \tplab{\TTau_1}{\ell}$ and so $\Sigma ; \Gamma ; \Delta ; A \cproves \subst{e_2}{x}{v} : \tplab{\TTau_2}{\ell'} \produces A_2$ since it is a safe substitution (\Cref{thm:subst-pres-surf-typing}).
            And since none of the stores change, $\Sigma \cproves \sigmaC, \sigmaP$ and $\Delta ; A \cproves \rho$ are preserved.
        \end{itemize}

        \item Suppose $e = \SCnp*[e_{in}]$.
        Then it can only step via \ruleref{EC-ComputeAndProveInit} to $e' = \SCnpAdmin*{e_{in}'}{\varnothing}{\varnothing}$, where $\cpconf*{e} \cpstep \cpconf{e'}{\sigmaC'}{\sigmaP}{\rho}$ where $\sigmaC' = \sigmaC[\overline{\iota_{yp}} \mapsto \bot, \overline{\iota_{ys}} \mapsto \bot]$.
        We can still choose $\Sigma' = \Sigma$ and get $\Sigma' \cproves \sigmaC', \sigmaP$ as we just relegate the newly allocated computed input locations as untracked in configuration safety (note it is tracked in $\varphi$).
        Similar reasoning means $\Delta ; A \cproves \rho$ is preserved.
        Also, $e'$ is well-typed by \ruleref{TC-Combined} because it is directly implied by the last two premises of \ruleref{TC-ComputeAndProve}.

        Because $e'$ is itself a $\SCnpN$ term, we need to additionally prove $\nestedok$ predicate is true.
        If $\SigmaP^{in} = \SigmaCP^{in} = \varnothing$ and then we can prove $\SigmaC, \SigmaP^{in}, \SigmaCP^{in} \cproves \sigmaC, \sigmaP^{in}$ (since $\sigmaP^{in}$, defined by the $\SCnpN$ term, is $\varnothing$) as well as $\Delta^{in}, A^{in} \cproves \rho^{in}$ where $\Delta^{in}$ is as defined in \ruleref{TC-ComputeAndProve} and $A^{in} = \varnothing$ and $\rho^{in} = \varnothing$ from the $\SCnpN$ term. 
        Also, $e_{in}'$ types via the premise of \ruleref{TC-Combined}.
        Finally, using \Cref{lemma:cnp-always-head} and since $\SCnpN$ cannot appear in the surface syntax, we know $\SCnpN \not\in e_{in}'$ and therefore $(e', \sigmaC, \SigmaC, \varnothing, \varnothing, \varnothing, \varnothing, \dom(\SigmaCP |_\Compute) \cup \varnothing) \nestedok$ is true. 
        All of these then prove $(e, \sigmaC, \SigmaC, \SigmaP^{in}, \SigmaCP^{in}, \Delta^{in}, A^{in}, \dom(\SigmaCP |_\Compute)) \nestedok$, which finishes preservation for this case.

        \item Suppose $e = \SCnpAdmin*{e_{in}}{\sigmaP^{in}}{\rho^{in}}$ and using \ruleref{EC-ComputeAndProveStep} it steps $\cpconf*{e} \cpstep \cpconf{e'}{\sigmaC'}{\sigmaP}{\rho}$, meaning $e' = \SCnpAdmin*{e_{in}'}{\sigmaP^{in'}}{\rho^{in'}}$ from $\cpconf{e_{in}}{\sigmaC}{\sigmaP^{in}}{\rho^{in}} \cpstep \cpconf{e_{in}'}{\sigmaC'}{\sigmaP^{in'}}{\rho^{in'}}$.
        Using the contexts $\SigmaP^{in}, \SigmaCP^{in}, \Delta^{in}, A^{in}$ from typing $e_{in}$ on \ruleref{TC-Combined} for the predicate $(e, \sigmaC, \SigmaC, \SigmaP^{in}, \SigmaCP^{in}, \Delta^{in}, A^{in}, \dom(\SigmaCP |_\Compute)) \nestedok$, we can apply the inductive hypothesis to prove all of the preservation conditions hold on $e_{in}'$: namely, $\SigmaC, \SigmaP^{in'}, \SigmaCP^{in'} \cproves \sigmaC', \sigmaP^{in'}$ and $\Delta^{in}, A^{in'} \cproves \rho^{in'}$ and $\nestedok$ holds on $e_{in}'$ with $\sigmaC, \SigmaC$, and some other contexts.
        This all proves $(e', \sigmaC', \SigmaC', \SigmaP^{in'}, \SigmaCP^{in'}, \Delta^{in'}, A^{in'}, \SigmaCP' |_\Compute) \nestedok$.
        Also, since $\sigmaP$ is unchanged, letting $\SigmaP' = \SigmaP$ and $\SigmaCP' = \SigmaCP$ gives $\SigmaC', \SigmaP', \SigmaCP' \cproves \sigmaC', \sigmaP$ because the changed parts of the store $\sigmaC'$ are disjoint from the subset corresponding to $\dom(\SigmaCP)$ by the disjoint conditions in $\nestedok$ (specifically, the recursive set $S$ that collects the domains of shared memory at each step and ensures future shared memory is disjoint).
        And since $\rho$ is unchanged, $\Delta ; A \cproves \rho$ is preserved.
        Finally, the $\nestedok$ predicate directly implies the premises of \ruleref{TC-Combined}, and so $e'$ is also well-typed.

        \item Suppose $e = \SCnpAdmin*{e_{in}}{\sigmaP^{in}}{\rho^{in}}$ and it steps according to \ruleref{EC-ComputeAndProveTrue}, meaning $\cpconf*{e} \cpstep \cpconf{e'}{\sigmaC'}{\sigmaP}{\rho}$ where $e' = \SProofOfUsing{\alpha}{\overline{v}{::}\rho(\overline{x_p})}$.
        Note that this types because the sub-values are closed and so we can use \ruleref{T-ProofOf} and \ruleref{TC-LiftVal} to type $e'$.
        Since $\sigmaC'$ doesn't update existing locations and only adds new ones, we can keep the same $\Sigma$ that types $e$ and $\Sigma \cproves \sigmaC', \sigmaP$ and $\Delta; A \cproves \rho$ doesn't change.
        Finally, because $\SCnpN \not\in e'$, we can set $\SigmaP^{in} = \SigmaCP^{in} = \Delta^{in} = A^{in} = \varnothing$ to get $(e', \sigmaC', \SigmaC, \SigmaP^{in}, \SigmaCP^{in}, \Delta^{in}, A^{in}, \dom(\SigmaCP |_\Compute)) \nestedok$.
    \end{itemize}
\end{proof}

\begin{theorem}[Progress of Inner \zkstrudel]
\label{thm:abs-progress}
    Suppose that $\Sigma ; \varnothing ; \Delta ; A \cproves \SExp : \tplab{\TTau}{\ell} \produces A'$.
    Then either
    \begin{enumerate}[label=(\roman*)]
        \item $\SExp$ is a value;
        \item for any set of stores $(\sigma_\Compute, \sigma_\Prove, \rho)$ such that $\Sigma \cproves \sigma_\Compute, \sigma_\Prove$ and $\Delta ; A \cproves \rho$ (\Cref{app:config-safety-def}), and for any set of contexts $(\SigmaP^{in}, \SigmaCP^{in}, \Delta^{in}, A^{in})$ such that $(e, \sigmaC, \SigmaC, \SigmaP^{in}, \SigmaCP^{in}, \Delta^{in}, A^{in}, \dom(\SigmaCP |_\Compute)) \nestedok$ (\Cref{app:nested-preservation}),  there is some $\SExp',\sigma_\Compute', \sigma_\Prove', \rho'$ such that $\cpconf*{\SExp} \cpstep \cpconf{\SExp'}{\sigmaC'}{\sigmaP'}{\rho'}$;
        \item $\SExp = \Eval{\SProveUsing*[e']}$ where $\conf{\subst*{e'}{{\overline{x}}{\overline{v}}{\overline{y}}{\overline{u}}}}{\varnothing} \tstep^* \conf{\False}{\_}$ or it diverges; 
        \item $\SExp = \Eval{\SCast{C}{(\SNew{D}{\SVals*})}}$ where $D \not<: C$
        \item $\SExp = \Eval{\SCnpAdmin*{\False}{\tilde{\sigmaP}}{\tilde{\rho}}}$; 
        \item $\SExp = \Eval{\SDeref{r_\Compute(\iota)}}$ where $\sigmaC(\iota) = \bot$; or
        \item $\SExp = \Eval{\SCnpAdmin*{\SExp'}{\tilde{\sigmaP}}{\tilde{\rho}}}$ where $\SExp'$ falls into one of failure conditions \rthree to \rseven.
    \end{enumerate}
\end{theorem}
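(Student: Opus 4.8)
The plan is to proceed by structural induction on $\SExp$, mirroring the progress argument for \corelang (\Cref{thm:core-progress}) but accounting for the three-store configuration and the new computed-input and compute-and-prove constructs. In each case I would invert the expression typing judgment to pin down the syntactic form of $\SExp$ and then show one of the seven alternatives holds. Values land in case~(i). For the ``lifted'' \corelang constructs---casts, field access, method calls, conditionals, let-bindings, and $\SProveN$/$\SVerifyN$ terms---the reasoning is essentially that of \corelang: either a reduction rule fires, placing us in case~(ii), or we inherit one of the stuck states, namely an invalid cast~(iv), a false-or-divergent proof predicate~(iii), or a dereference of an uninitialized \Compute location~(vi). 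Because the only evaluation context is the let-binding position, the inductive let case simply threads the outcome of the induction hypothesis on the first subexpression through the surrounding context, wrapping any failure outcome back up as a $\Eval{\cdot}$-framed version of the same case.

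The genuinely new work is in the cases that touch the separated stores. For reads and writes of \Compute and \Prove references, conditions~(i)--(iii) of configuration safety $\Sigma \cproves \sigmaC, \sigmaP$ guarantee the relevant location lies in the appropriate store's domain, so the matching allocation, dereference, or assignment rule applies (or, for a \Compute dereference, the store maps it to $\bot$ and we land in case~(vi)). The subtle step is dereferencing a \ComputeProve reference via \ruleref{EC-DerefCP}, which is only enabled when the join $\Lift{\Compute}{\sigmaC(\iota_1)} \sqcup \Lift{\Prove}{\sigmaP(\iota_2)}$ is defined; I would obtain this from \Cref{lemma:theta-is-defined}, which in turn rests on the value-synchronization invariant baked into configuration safety together with \Cref{lemma:lifting-is-defined}. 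For computed inputs, inverting \ruleref{TC-InputDeref} gives $x \in A$, and $\Delta ; A \cproves \rho$ forces $A = \dom(\rho)$, so the read is defined; inverting \ruleref{TC-InputAssign} gives $x \notin A = \dom(\rho)$, which is exactly the side condition of \ruleref{EC-InputAssign}, so the single-assignment discipline guarantees the write can fire.

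The compute-and-prove cases are where the $\nestedok$ hypothesis is used. A surface $\SCnp*$ block always reduces by \ruleref{EC-ComputeAndProveInit}, landing in case~(ii). For an administrative block $\SCnpAdmin*{e_{in}}{\sigmaP^{in}}{\rho^{in}}$, I would recursively apply the induction hypothesis to the inner expression $e_{in}$, extracting from $\nestedok$ the well-typedness of $e_{in}$ against $(\SigmaC, \SigmaP^{in}, \SigmaCP^{in})$ along with the matching configuration-safety and input-store facts, and then case-split on the outcome: if $e_{in}$ is $\STrue$ the block steps by \ruleref{EC-ComputeAndProveTrue}; if it is $\SFalse$ we are in the dedicated stuck state~(v); if it can step we lift that step with \ruleref{EC-ComputeAndProveStep}, giving case~(ii); and if $e_{in}$ itself falls into a failure condition we are in case~(vii). \Cref{lemma:cnp-always-head} ensures these administrative terms only appear in head position, so no other evaluation-context placements need be considered.

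The main obstacle is the \ComputeProve dereference together with the recursion through nested blocks. Unlike the single-sided reference cases, progress for \ruleref{EC-DerefCP} cannot be reduced to domain membership: I must know the two stored halves are structurally compatible, which is precisely the synchronization content of configuration safety, and threading that invariant correctly is where most of the care lies. The nested case is delicate for a complementary reason---the induction hypothesis demands a full configuration-safety and typing package for $e_{in}$, so $\nestedok$ must be unpacked to supply exactly those premises, including the disjointness set $S$ tracking already-shared locations, so that the recursive call is legitimate and its conclusion can be repackaged into one of the seven global alternatives.
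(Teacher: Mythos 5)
Your proposal is correct and follows essentially the same route as the paper's proof: induction on the structure of $\SExp$, inverting the typing judgment per case, using configuration safety to discharge the domain-membership and synchronization obligations for the reference rules (with \Cref{lemma:theta-is-defined} for the \ComputeProve dereference), the $\Delta ; A \cproves \rho$ correspondence for the computed-input operations, and unpacking $\nestedok$ to supply the induction hypothesis for the administrative compute-and-prove case before case-splitting on $\STrue$/$\SFalse$/step/stuck. No gaps; the key lemmas and the handling of each of the seven alternatives match the paper's argument.
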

\begin{proof}
    By induction on $e$.
    Note that case \rtwo means we take a step on a normal expression, 
    and \rthree to \rseven specify the conditions where the semantics could get stuck.
    \begin{itemize}[itemsep=0.6em]
        \item If $e$ is a value, then we are in case \rone and are done.
        
        \item Suppose $e = \SRefVal{\ell}{v}$ and is well-typed. 
        $e$ is not a value, nor does it fit into the cases of \rthree to \rseven, so we need to prove it can step.
        We can simply pick the relevant rule based on $\ell$ and pick fresh locations in order to take a step.

        \item Suppose $e = \SDeref{v}$, which means we are either in case \rtwo or \rsix.
        Given $e$ is well-typed, it can be typed in three cases:
        \begin{itemize}
            \item If $e$ is typed with \ruleref{TC-Lift}, then inverting with \ruleref{T-Deref} gives that $v$ (typed as a $\TRefN$) must be a location (since $\Gamma$ is empty). 
            If $\ell = \Compute$, then either $v = \bot$, so we're in case \rsix, or $v \in \dom(\SigmaC)$ meaning $\Sigma \cproves \sigmaC, \sigmaP$, we know that $v \in \dom(\SigmaC) \subseteq \dom(\sigmaC)$ and therefore can step with \ruleref{EC-DerefC}.
            If $\ell = \Prove$, then similar logic shows that it can step with \ruleref{EC-DerefP}.

            \item If $e$ is typed with \ruleref{TC-LiftCP}, then this means the inverted value being dereferenced is a location in $\Theta(\SigmaCP)$.
            This means there is some $\iota_1, \iota_2$ such that $\Theta(\iota_1, \iota_2) = v$, and therefore $(\iota_1, \iota_2) \in \SigmaCP$.
            Firstly, this means that $\iota_1 \in \dom(\sigmaC)$ and $\iota_2 \in \dom(\sigmaP)$.
            Then, by \Cref{lemma:theta-is-defined}, we get that the joined term $\Lift{\Compute}{\sigmaC(\iota_1)} \sqcup \Lift{\Prove}{\sigmaP(\iota_2)}$ is defined and can therefore take a step with \ruleref{EC-DerefCP}.
        \end{itemize}
        
        \item Suppose $e = \SAssign{v_1}{v_2}$ which is not a value, nor does it fit into the cases of \rthree to \rsix, so we need to prove it can step.
        Given $e$ is well-typed, it can be typed in two cases:
        \begin{itemize}
            \item If $e$ is typed with \ruleref{TC-Lift}, then inverting this rule and then inverting \ruleref{T-Assign} results in the value $v$ (where $\Lift{\ell}{v} = v_1$) which is typed as \TRefN is a location $\iota$ (since $\Gamma$ is empty).
            If $\ell = \Compute$ then $\iota \in \dom(\SigmaC)$, and therefore by $\Sigma \cproves \sigmaC, \sigmaP$ we know that $\iota \in \dom(\SigmaC) \subseteq \dom(\sigmaC)$ and therefore can step with \ruleref{EC-AssignC}.
            If $\ell = \Prove$ then similar logic shows that it can step with \ruleref{EC-DerefP}.

            \item If $e$ is typed with \ruleref{TC-LiftCP}, then the inverted value being dereferenced is a location in $\Theta(\SigmaCP)$, meaning there is some $\iota_1, \iota_2$ such that $\Theta(\iota_1, \iota_2) = v_1$ and therefore $(\iota_1, \iota_2) \in \SigmaCP$.
            This then means that $\iota_1 \in \sigmaC$ and $\iota_2 \in \sigmaP$ so it can step with \ruleref{EC-AssignCP}.
        \end{itemize}

        \item Suppose $e = \SWitAssign*$ and is well-typed.
        By inverting the typing rule \ruleref{TC-InputAssign}, we know $x \not\in A$.
        By preservation conditions $\Delta ; A \cproves \rho$, this means $x \not\in \dom(\rho)$, meaning it can step with \ruleref{EC-InputAssign}.

        \item Suppose $e = \SWitDeref*$ and is well-typed.
        By inverting the typing rule \ruleref{TC-InputDeref}, we get $x \in A$.
        By preservation conditions $\Delta ; A \cproves \rho$, this means $x \in \dom(\rho)$, meaning it can step to $\rho(x)$ with \ruleref{EC-InputDeref}.

        \item If $e = \SLetIn{x}{e_1}{e_2}$, then by applying the inductive hypothesis on $e_1$ gives it is either a value, it can step to $e_1'$ by \rtwo, or gets stuck according to \rthree to \rseven.
        In the first case, we can take a step with \ruleref{EC-Lift} and \ruleref{E-Let}.
        In the second case, we take a step with \ruleref{EC-Eval}.
        For the remaining, they fall under their respective cases.

        \item Suppose $e = \SCast{C}{v}$ and is well-typed.
        Then either $D <: C$ and it steps, or it gets stuck according to \rfive.
        
        \item Suppose $e = \SCall{v}{m}{\ell}{v}$ and is well-typed.
        Note that $\Lift{\ell}{e}$ is always defined, since no raw locations can appear in a method body.
        Thus it can step with \ruleref{EC-Lift} on method calls.

        \item The case for field access, \SIfN, \SProveN, and \SVerifyN all follow similarly to their respective cases in the proof of \corelang progress (\Cref{thm:core-progress}).
        
        \item If $e$ is a compute-and-prove block, then it can step via \ruleref{EC-ComputeAndProveInit}.
        
        \item If $e$ is a \SCnpN term, by the premise of well-typedness, the body $e'$ of the \CnpN term also is well-typed.
        We can therefore apply the inductive hypothesis to get that $e'$ is either \rone a value, \rtwo steps, \rthree gets stuck on a proof, \rfour gets stuck when evaluating a combined block, \rfive gets stuck on cast, \rsix gets stuck on an uninitialized dereference, or \rseven gets struck in another nested \CnpN term.
        For case \rone in the inductive hypothesis, we know that $e$ must type to $\Bool$, meaning it must be either $\True$ or $\False$ (since it cannot be an open variable).
        If $\True$, we can step $e$ with \ruleref{EC-ComputeAndProveTrue}.
        If $\False$, then we get stuck as is allowed.
        For \rtwo with the inductive hypothesis, using the step $e'$ takes inside, we can take a step with \ruleref{EC-ComputeAndProveStep}.
        The inductive hypothesis applies here because the predicate $\nestedok$ implies our preservation conditions $\Sigma' \cproves \sigmaC, \sigmaP'$ and $\Delta', A'' \cproves \rho'$ for the inner expression $e'$.
        All other failure cases \rthree to \rseven for the inductive hypothesis fall into case \rseven for the current expression.
    \end{itemize}
\end{proof}

\abssoundness*
\begin{proof}
    We explicitly define all the cases:
    \begin{enumerate}[label=(\roman*)]
        \item $\SExp'$ is a value;
        \item there is some $\SExp'',\sigma_\Compute'', \sigma_\Prove'', \rho''$ such that $\cpconf{\SExp'}{\sigmaC'}{\sigmaP'}{\rho'} \cpstep \cpconf{\SExp''}{\sigmaC''}{\sigmaP''}{\rho''}$;
        \item $\SExp' = \Eval{\SProveUsing*[e'']}$ where $\conf{\subst*{e''}{{\overline{x}}{\overline{v}}{\overline{y}}{\overline{u}}}}{\varnothing} \tstep^* \conf{\False}{\_}$ or it diverges; 
        \item $\SExp' = \Eval{\SCnpAdmin*{\SFalse}{\tilde{\sigmaP}}{\tilde{\rho}}}$;
        \item $\SExp' = \Eval{\SDeref{r_\Compute(\iota)}}$ where $\sigmaC(\iota) = \bot$; 
        \item $\SExp' = \Eval{\SCast{C}{(\SNew{D}{\SVals*})}}$ where $D \not<: C$; or
        \item $\SExp' = \Eval{\SCnpAdmin*{\SExp''}{\tilde{\sigmaP}}{\tilde{\rho}}}$ where $\SExp''$ itself gets stuck from any of \rthree to \rseven.
  \end{enumerate}
    By induction, applying progress (\Cref{thm:abs-progress}) and preservation (\Cref{thm:abs-preservation}).
\end{proof}

\begin{theorem}[Preservation of Full \zkstrudul]\label{thm:extended-preservation}
    If $\Sigma ; \Gamma \proves e : \TTau$ and $\Sigma \proves \sigma$ and $\langle e \divi \sigma \rangle \tstep \langle e' \divi \sigma' \rangle$ and $(e, \sigma, \Sigma, \SigmaP^{in}, \SigmaCP^{in}, \Delta^{in}, A^{in}, \varnothing) \nestedok$ for some $\SigmaP^{in}, \SigmaCP^{in}, \Delta^{in}, A^{in}$ then for some $\Sigma' \supseteq \Sigma$, $\Sigma' ; \Gamma \proves e' : \TTau'$ where $\TTau <: \TTau'$ and $\Sigma' \proves \sigma'$ and $(e', \sigma', \Sigma', \SigmaP^{in'}, \SigmaCP^{in'}, \Delta^{in'}, A^{in'}, \varnothing) \nestedok$ for some $\SigmaP^{in'}, \SigmaCP^{in'}, \Delta^{in'}, A^{in'}$.
\end{theorem}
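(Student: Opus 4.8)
The plan is to prove this by induction on the structure of $e$, in close analogy with the preservation theorem for \corelang (\Cref{thm:core-preservation}). Every construct that the full \zkstrudul language inherits from \corelang---values, references, casts, field access, method calls, let- and if-expressions, and the \TProveN and \TVerifyN primitives---steps and types by exactly the \corelang rules, so each such case proceeds verbatim as in \Cref{thm:core-preservation}: invert the typing rule, pick $\Sigma'$ to be $\Sigma$ or $\Sigma$ extended with one fresh location, and appeal to the substitution lemma for any reduction that substitutes. The only extra obligation is to re-establish the $\nestedok$ invariant threaded through the statement, and for all of these non-compute-and-prove reductions it is immediate: by \Cref{lemma:cnp-always-head} a $\TCnpN$ administrative term can only occupy the active (head) position, so none of these cases contains one, and $\nestedok$ is discharged through its base clause by taking $\SigmaP^{in'} = \SigmaCP^{in'} = \varnothing$.

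The new content lies in the three compute-and-prove rules. For \ruleref{E-ComputeAndProveInit}, the redex $\TCnp*$ types by \ruleref{T-ComputeAndProve} and the administrative term it produces types by \ruleref{T-Combined}, whose two premises are precisely the last two premises of \ruleref{T-ComputeAndProve}; I keep $\Sigma' = \Sigma$ and treat the freshly-allocated computed-input locations (recorded in $\varphi$ and initialized to $\bot$) as untracked in $\Sigma$, which preserves $\Sigma \proves \sigma'$, and I then build the base $\nestedok$ derivation for the administrative term with empty inner stores $\sigmaP = \rho = \varnothing$, context $\Delta$ from \ruleref{T-ComputeAndProve}, and incoming assignment set $\varnothing$. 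For \ruleref{E-ComputeAndProveTrue}, the result $\TProofOfUsing{\alpha}{\overline{v_p}{::}\rho(\overline{x_p})}$ types by \ruleref{T-ProofOf}; the step only overwrites the untracked computed-input locations with values that $\Delta$ and $\rho$ already guarantee are well-typed, so $\Sigma \proves \sigma'$ still holds, and $\nestedok$ again collapses to its base clause since no administrative term survives.

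I expect the main obstacle to be \ruleref{E-ComputeAndProveStep}, the sole rule that crosses from the outer $\tstep$ semantics into the inner $\cpstep$ semantics. The key observation is that here the outer heap $\sigma$ serves as the compute store, while the administrative term carries the prove store $\sigmaP$ and the computed-input map $\rho$, and that the $\nestedok$ hypothesis is engineered to bundle exactly the premises required by inner preservation (\Cref{thm:abs-preservation}). From the administrative-term clause of $\nestedok$ I extract the inner typing judgment $(\Sigma, \SigmaP^{in}, \SigmaCP^{in}); \varnothing; \Delta^{in}; A^{in} \cproves e : \tplab{\Bool}{\Prove}$, the configuration safety $\Sigma, \SigmaP^{in}, \SigmaCP^{in} \cproves \sigma, \sigmaP$, and $\Delta^{in}; A^{in} \cproves \rho$. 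Applying \Cref{thm:abs-preservation} to the premise step $\cpconf{e}{\sigma}{\sigmaP}{\rho} \cpstep \cpconf{e'}{\sigma'}{\sigmaP'}{\rho'}$ yields an extended $\Sigma' \supseteq \Sigma$ with preserved inner typing of $e'$, preserved configuration safety, and a fresh $\nestedok$ witness for the stepped body. I then rebuild the outer typing of $\TCnpAdmin*{e'}{\sigmaP'}{\rho'}$ via \ruleref{T-Combined}, recover $\Sigma' \proves \sigma'$ by projecting the compute side of the returned configuration safety, and repackage the inner $\nestedok$ witness into the administrative-term clause. The delicate step will be discharging the disjointness condition inside $\nestedok$---showing that the portion of $\sigma'$ modified by the inner step remains separated from the locations shared with $\sigmaP$---but this is exactly the invariant that the recursive set $S$ in the $\nestedok$ definition maintains and that \Cref{thm:abs-preservation} returns, so the case closes.
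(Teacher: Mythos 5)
Your proposal is correct and follows essentially the same route as the paper: induction on $e$, with all \corelang-inherited cases discharged exactly as in \Cref{thm:core-preservation}, and the three compute-and-prove rules handled by mirroring the last three cases of \Cref{thm:abs-preservation} (choosing $\Sigma' = \Sigma$ with untracked computed-input locations for Init, typing the result by \ruleref{T-ProofOf} for True, and for Step extracting the inner typing, configuration safety, and $\Delta;A \cproves \rho$ from the $\nestedok$ clause to invoke inner preservation, with the disjointness set $S$ keeping the compute-side updates separated from the shared locations). Your explicit appeal to \Cref{thm:abs-preservation} in the Step case is precisely the intended adaptation of the paper's argument to the outer semantics.
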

\begin{proof}
    By induction on $e$.
    \begin{itemize}[itemsep=0.6em]
        \item All cases come from the subset corresponding to \corelang come directly from \Cref{thm:core-preservation}.
        The cases to handle compute-and-prove blocks (and their intermediate administrative syntax terms) follows very similarly to the last three cases of the proof of \Cref{thm:abs-preservation}.
    \end{itemize}
\end{proof}

\begin{theorem}[Progress of Full \zkstrudul]\label{thm:extended-progress}
    Suppose that $\Sigma ; \varnothing \proves e : \TTau$.
    Then either 
    \begin{enumerate}[label=(\roman*)]
        \item $e$ is a value;
        \item for any store $\sigma$ where $\Sigma \proves \sigma$, and for any set of contexts $(\SigmaP^{in}, \SigmaCP^{in}, \Delta^{in}, A^{in})$ such that the predicate $(e, \sigma, \Sigma, \SigmaP^{in}, \SigmaCP^{in}, \Delta^{in}, A^{in}, \varnothing) \nestedok$ (\Cref{app:nested-preservation}), there is some $e'$ and $\sigma'$ such that $\langle e \divi \sigma \rangle \tstep \langle e' \divi \sigma' \rangle$;
        \item $e = \Eval{\ProveWith*[e']}$ where $\conf{\subst*{e'}{{\overline{x}}{\overline{v}}{\overline{y}}{\overline{u}}}}{\varnothing} \tstep^* \conf{\False}{\_}$ or it diverges; 
        \item $e = \Eval{\TCnpAdmin*{\False}{\sigmaP}{\rho}}$;
        \item $e = \Eval{\TDeref{\iota}}$ where $\sigma(\iota) = \bot$; 
        \item $e = \Eval{(C)(\New D(\overline{v}))}$ where $D \not<: C$; or
        \item $e = \Eval{\TCnpAdmin*{\SExp'}{\tilde{\sigmaP}}{\tilde{\rho}}}$ where $\stuckpredzk{\SExp'}{\sigma}{\sigmaP}{\rho}$.
    \end{enumerate}
\end{theorem}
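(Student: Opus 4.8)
The plan is to prove the theorem by induction on the structure of~$e$, reusing the two progress results already established: \Cref{thm:core-progress} for the \corelang fragment and \Cref{thm:abs-progress} for the combined-block machinery. The outer small-step relation~$\tstep$ of full \zkstrudul comprises exactly the \corelang rules together with the three rules governing compute-and-prove blocks---\ruleref{E-ComputeAndProveInit}, \ruleref{E-ComputeAndProveStep}, and \ruleref{E-ComputeAndProveTrue}---so every well-typed~$e$ is either a pure \corelang term, a surface block, or an administrative term. For any~$e$ drawn entirely from \corelang syntax, the conclusion follows immediately from \Cref{thm:core-progress}: its five outcomes align with outer cases~(i), (ii), (iii), (v), and~(vi), since the stuck conditions of full \zkstrudul strictly subsume those of \corelang. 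As in the earlier proofs, evaluation-context cases (notably the \LetN form) are dispatched by applying the induction hypothesis to the subexpression in the hole.

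The genuinely new reasoning concerns the two combined forms. When~$e$ is a surface block~$\TCnp*$, it is neither a value nor any stuck form, and it always takes a step by \ruleref{E-ComputeAndProveInit}, whose only side condition is the existence of fresh locations for the computed inputs outside~$\dom(\sigma)$---always satisfiable. When~$e$ is an administrative term~$\TCnpAdmin*{e'}{\sigmaP}{\rho}$, the idea is to run the inner progress theorem~\Cref{thm:abs-progress} on the body~$e'$, viewing the outer heap~$\sigma$ as the compute store, $\sigmaP$ as the prove store, and $\rho$ as the computed-input store. Well-typedness of~$e'$ comes from inverting \ruleref{T-Combined}, and the store-level hypotheses that \Cref{thm:abs-progress} demands---configuration safety, the input-agreement $\Delta ; A \cproves \rho$, and a nested-ok obligation on~$e'$---are precisely the facts bundled inside the incoming $\nestedok$ hypothesis on~$e$. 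A case split on the inner outcome then finishes the block: if~$e'$ is a value it must be a boolean, and~\True steps the block via \ruleref{E-ComputeAndProveTrue} while~\False lands in outer case~(iv); if~$e'$ can step internally, the whole term steps by \ruleref{E-ComputeAndProveStep}; and if~$e'$ is stuck under any of the inner failure conditions, the block falls into outer case~(vii), witnessed by $\stuckpredzk{e'}{\sigma}{\sigmaP}{\rho}$.

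The main obstacle will be the interface between the inner ($\cpstep$) and outer ($\tstep$) semantics at the administrative term. The technical subtlety is that \Cref{thm:abs-progress} is phrased over a three-store configuration, whereas the outer theorem tracks a single heap; I will need to unpack the $\nestedok$ predicate (\Cref{app:nested-preservation}) to extract the inner heap contexts $\SigmaP^{in}$, $\SigmaCP^{in}$, the computed-input context, and the assignment set, then verify that these actually satisfy the configuration-safety and recursive nested-ok premises that inner progress requires. This mirrors how the final cases of \Cref{thm:abs-preservation} thread $\nestedok$ through \ruleref{E-ComputeAndProveStep}, so the invariant is already known to be strong enough; the residual work is confirming that each inner failure outcome maps cleanly onto the single outer stuck predicate, so that cases~(iii)--(vii) of inner progress collapse into outer case~(vii).
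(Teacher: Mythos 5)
Your proposal is correct and takes essentially the same approach as the paper: the paper's proof likewise proceeds by induction on~$e$, dispatching the \corelang cases via \Cref{thm:core-progress} and handling the surface block and administrative term exactly as in the last two cases of \Cref{thm:abs-progress}, with the $\nestedok$ predicate supplying the configuration-safety, $\Delta;A \cproves \rho$, and recursive nested-ok hypotheses needed for the inner body to make progress. The details you flesh out---inverting \ruleref{T-Combined} for well-typedness of the body, case-splitting on the inner outcome, and mapping the inner failure conditions onto outer case~(vii)---are precisely what the paper's terse proof gestures at.
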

\begin{proof}
    By induction on $e$.
    \begin{itemize}[itemsep=0.6em]        
        \item All cases from the subset corresponding to \corelang come directly from \Cref{thm:core-progress}.
        The cases to handle compute-and-prove blocks (and their intermediate administrative syntax terms) follows very similarly to the last two cases of the proof of \Cref{thm:abs-progress}, using the $\nestedok$ predicate to ensure a step can be taken.
    \end{itemize}
\end{proof}

\begin{theorem}[Type Soundness of Full \zkstrudul]
\label{thm:extended-soundness}
  If $\proves e : \TTau$ and $\conf{e}{\varnothing} \tstep^* \conf*{e'}$,
  then one of the following holds.
  \begin{enumerate}[nosep,label={(\roman*)},leftmargin=*]
    \item $e'$ is a value.
    \item There is some~$e''$ and~$\sigma'$ such that $\conf*{e'} \tstep \conf{e''}{\sigma'}$.
    \item $e' = \Eval{\TProveUsing*[e'']}$ where either $\conf{\subst*{e''}{{\overline{x}}{\overline{v}}{\overline{y}}{\overline{u}}}}{\varnothing}$
    diverges or it steps to~$\False$; or
    \item $e' = \Eval{\TDeref{\iota}}$ where $\sigma(\iota) = \bot$
    \item $e' = \Eval{\TCast{C}{(\TNew{D}{\TVals*})}}$ where $D \not<: C$
    \item $e' = \Eval{\TCnpAdmin*{\False}{\sigmaP}{\rho}}$;
    \item $e' = \Eval{\TCnpAdmin*{\SExp''}{\tilde{\sigmaP}}{\tilde{\rho}}}$ where $\stuckpredzk{\SExp''}{\sigma}{\sigmaP}{\rho}$
  \end{enumerate}
\end{theorem}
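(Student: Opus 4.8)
The plan is to derive type soundness of full \zkstrudul as a routine corollary of the progress and preservation theorems already established for the full language (\Cref{thm:extended-progress} and \Cref{thm:extended-preservation}), proceeding by induction on the length of the reduction sequence $\conf{e}{\varnothing} \tstep^* \conf*{e'}$. The seven disjuncts in the conclusion are exactly the cases enumerated by \Cref{thm:extended-progress}, so the entire proof amounts to carrying the right invariants up to the final configuration and then reading off progress.

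First I would fix the inductive invariant. Both progress and preservation for the full language are stated relative not only to well-typedness $\Sigma ; \Gamma \proves e : \TTau$ and heap typing $\Sigma \proves \sigma$, but also to the $\nestedok$ predicate that governs the state of nested compute-and-prove terms. I therefore track all three facts together at every step. At the outset, $e$ is closed and well-typed with the empty heap $\varnothing$ and outer nesting set $\varnothing$; since a well-typed surface program contains no administrative $\TCnpN$ terms—by \Cref{lemma:cnp-always-head} these arise only during reduction and always sit in head position—the $\nestedok$ predicate holds initially by taking every inner context empty, matching the first inference rule of its definition in \Cref{app:nested-preservation}.

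Then comes the inductive step. Suppose $\conf{e}{\varnothing} \tstep^* \conf{e'}{\sigma'} \tstep \conf{e''}{\sigma''}$. By the inductive hypothesis I have $\Sigma' \supseteq \Sigma$ with $\Sigma' ; \Gamma \proves e' : \TTau'$ (and $\TTau <: \TTau'$), together with $\Sigma' \proves \sigma'$ and the matching $\nestedok$ predicate on $e'$. Applying \Cref{thm:extended-preservation} to the single step $\conf{e'}{\sigma'} \tstep \conf{e''}{\sigma''}$ yields $\Sigma'' \supseteq \Sigma'$ with $\Sigma'' ; \Gamma \proves e'' : \TTau''$, $\Sigma'' \proves \sigma''$, and the preserved $\nestedok$ predicate for $e''$, restoring the invariant. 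Once the sequence terminates at some $e'$ (after zero or more steps), I invoke \Cref{thm:extended-progress}: its hypotheses are precisely the invariants I have maintained, so its conclusion hands me exactly the required case split—$e'$ is a value, it can take a step, or it is stuck in one of the permitted ways (invalid cast, a \TProveN predicate that diverges or steps to \False, a dereference of an uninitialized location, a $\TCnpN$ term that evaluates to \False, or a $\TCnpN$ term wrapping an inner expression that is itself stuck via $\stuckpredzk{\cdot}{\cdot}{\cdot}{\cdot}$).

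The induction itself is entirely mechanical; the real content sits in verifying that the predicate I thread through the argument is \emph{literally} the one quantified over by \Cref{thm:extended-progress} and \Cref{thm:extended-preservation}, with the outer nesting set pinned at $\varnothing$. The substantive obstacle—establishing that a step inside a $\TCnpN$ administrative term keeps the three heap partitions (compute-only, prove-only, and synchronized \ComputeProve pairs) disjoint and keeps each synchronized pair in sync, as tracked by the recursive set $S$ in $\nestedok$—is discharged entirely within \Cref{thm:extended-preservation}. At the level of this corollary, I need only confirm that the initial configuration satisfies $\nestedok$ and that preservation hands it back after each step, so no fresh reasoning about synchronization is required here.
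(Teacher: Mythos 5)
Your proposal is correct and matches the paper's own proof, which is exactly a straightforward induction over the reduction sequence applying preservation (\Cref{thm:extended-preservation}) at each step and progress (\Cref{thm:extended-progress}) at the end. The additional details you supply---establishing the initial $\nestedok$ invariant via the absence of administrative $\TCnpN$ terms in surface syntax and threading all three invariants through the induction---are exactly the bookkeeping the paper's one-line proof leaves implicit.
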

\begin{proof}
    By induction, applying progress (\Cref{thm:extended-progress}) and preservation (\Cref{thm:extended-preservation}).
\end{proof}

\subsection{Confluence}
\begin{theorem}[Determinism of $\tstep$]\label{thm:target-determinism}
    If $\conf*{e} \tstep \conf{e'}{\sigma'}$ and $\conf*{e} \tstep \conf{e''}{\sigma''}$, then $e' = e''$ and $\sigma' = \sigma''$, both up to permuting location names.
\end{theorem}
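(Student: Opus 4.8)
The plan is to prove determinism by structural induction on the expression~$e$, exploiting the fact that \corelang's evaluation contexts are restricted to $\EvalN \Coloneqq [\cdot] \mid \TLetIn{x\ty\TTau}{\EvalN}{e}$: the only position at which a reduction can occur inside a compound term is the bound expression of a \TLetN. First I would establish the standard auxiliary fact that values never step. Combined with the syntactic invariant that every immediate subexpression other than \TLetN bindings and \TIfN branches is already a value, this guarantees that for each top-level syntactic form of~$e$ at most one reduction rule can fire, and that the premises of that rule determine the contractum uniquely.

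I would then dispatch the cases. For $e = \TLetIn{x\ty\TTau}{e_1}{e_2}$, either $e_1$ is a value, forcing \ruleref{E-Let}, or $e_1$ itself steps, forcing \ruleref{E-Eval}; these are disjoint by the values-do-not-step lemma, and in the second case the inductive hypothesis on~$e_1$ supplies uniqueness. For \TIfN the guard is syntactically a value, hence \TTrue or \TFalse, selecting exactly one of \ruleref{E-IfT} and \ruleref{E-IfF}. The remaining deterministic cases---\ruleref{E-Deref}, \ruleref{E-Assign}, \ruleref{E-Field}, \ruleref{E-Cast} (whose side condition $D <: C$ is required and decidable), and \ruleref{E-Call} (where \mbody is a partial function)---are immediate. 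For \ruleref{E-Prove} the emitted value $\TProofOfUsing*$ is determined solely by~$\alpha$ and the public inputs and does \emph{not} depend on the predicate's internal evaluation, so the single-step result is fixed whenever the rule applies; I need not reason about determinism of the inner multi-step reduction here. For \TVerifyN, \ruleref{E-VerifyT} and \ruleref{E-VerifyF} are mutually exclusive because their premises test syntactic equality versus disequality of the supplied proof object against the expected statement.

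The ``up to permuting location names'' clause is exactly what is needed for the allocating rules \ruleref{E-Ref}, \ruleref{E-Alloc}, and the batch allocation in \ruleref{E-ComputeAndProveInit}, whose only source of nondeterminism is the choice of a fresh $\iota \notin \dom(\sigma)$. Two such steps produce heaps and results that differ only by a bijection on locations fixing $\dom(\sigma)$, so I would phrase the entire statement modulo such a location permutation and verify that this equivalence is preserved when threaded through the congruence case for \TLetN and through the compute-and-prove stepping case.

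The main obstacle, and the reason the argument is not wholly routine, is that $\tstep$ is mutually recursive with the combined semantics $\cpstep$: the rule \ruleref{E-ComputeAndProveStep} carries a $\cpstep$ premise, while $\cpstep$'s own \ruleref{EC-Lift} appeals back to $\tstep$ and its \ruleref{EC-ComputeAndProveStep} appeals to $\cpstep$. Determinism of $\tstep$ therefore cannot be proved in isolation; I would strengthen it to a simultaneous determinism claim for both $\tstep$ and $\cpstep$ and prove the two together by mutual induction on the reduction derivations, still working modulo location permutations. Inside the administrative compute-and-prove term I must additionally check that \ruleref{E-ComputeAndProveStep} and \ruleref{E-ComputeAndProveTrue} are mutually exclusive, which reduces to the values-do-not-step fact for $\cpstep$ applied to the inner $\STrue$. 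The delicate bookkeeping is in maintaining one coherent renaming across the two stores $\sigmaC$ and $\sigmaP$ that $\cpstep$ manipulates and reconciling it with the outer heap~$\sigma$ when an inner $\cpstep$ allocation is lifted through \ruleref{E-ComputeAndProveStep}.
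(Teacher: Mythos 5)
Your core argument coincides with the paper's: induction over the step relation with a per-form case analysis, observing that values do not step, that at most one rule can fire for each syntactic form, that the \TIfN guard (syntactically a value) selects exactly one of \ruleref{E-IfT}/\ruleref{E-IfF}, that \TLetN splits on whether the bound expression is a value (invoking the inductive hypothesis otherwise), that the two \TVerifyN rules have disjoint premises, and that freshness of allocated locations is absorbed by the ``up to permuting location names'' qualifier. Where you genuinely depart from the paper is the compute-and-prove case. The paper's proof of \Cref{thm:target-determinism} does not mention \ruleref{E-ComputeAndProveInit}, \ruleref{E-ComputeAndProveStep}, or \ruleref{E-ComputeAndProveTrue} at all; it instead proves determinism of $\cpstep$ as a separate, later theorem (\Cref{thm:surface-determinism}) whose \ruleref{EC-Lift} case cites determinism of $\tstep$. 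That sequential layering is unproblematic only if one reads \Cref{thm:target-determinism} as scoped to the \corelang fragment without compute-and-prove terms (which suffices for the confluence results, since projected programs are \corelang code), yet the paper later invokes both determinism theorems on expressions that do contain compute-and-prove blocks in the adequacy argument. Your strengthened simultaneous claim for $\tstep$ and $\cpstep$, proved by mutual induction on derivations, handles the full language uniformly and discharges exactly the circularity that the paper's presentation leaves implicit, at the cost of the bookkeeping you identify: one coherent location renaming threaded through $\sigma$, $\sigmaC$, and $\sigmaP$. Your remaining refinements (that \ruleref{E-Prove}'s contractum depends only on $\alpha$ and the inputs, not on the inner multi-step evaluation, and that \ruleref{E-ComputeAndProveStep} and \ruleref{E-ComputeAndProveTrue} are mutually exclusive because $\STrue$ does not step) are correct and fill in details the paper's terse proof takes for granted.
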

\begin{proof}
    By induction on the semantic steps for $\tstep$.
    For all syntactic terms except for If, Let, and Verify, there is at most one semantic rule that can apply.
    In the case of If, the rule to be chosen of the two is based on the conditional guard.
    In the case of Let, if $e_1$ is a value, we apply the second Let rule, otherwise we invoke the inductive hypothesis on $e_1$.
    In the case of verify, the premises of the rules are disjoint.
\end{proof}

\begin{lemma}\label{lemma:step-preserves-lifting}
    For any expression $e$ and labels $\ell$ and $\ell'$ such that $\Lift{\ell}{e} = \Lift{\ell'}{e}$, if $\conf{e}{\varnothing} \tstep \conf{e'}{\varnothing}$ then $\Lift{\ell}{e'} = \Lift{\ell'}{e'}$.
\end{lemma}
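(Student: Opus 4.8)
The plan is to argue by case analysis on the reduction rule that justifies $\conf{e}{\varnothing} \tstep \conf{e'}{\varnothing}$, first using the empty-heap constraint to cut the rule set down to a handful of cases. Any rule that allocates (\ruleref{E-Ref}, \ruleref{E-Alloc}) leaves a non-empty store, while \ruleref{E-Deref} and \ruleref{E-Assign} require a location already in the store; none of these can take an empty heap to an empty heap. Hence the only head reductions available are \ruleref{E-Let}, \ruleref{E-IfT}, \ruleref{E-IfF}, \ruleref{E-Cast}, \ruleref{E-Field}, \ruleref{E-Call}, \ruleref{E-Prove}, \ruleref{E-VerifyT}, and \ruleref{E-VerifyF}, together with the congruence rule \ruleref{E-Eval}, whose evaluation context $E \Coloneqq [\cdot] \mid \TLetIn{x\ty\TTau}{\EvalN}{e}$ is always a \TLetN.

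Two structural facts about $\Lift{\ell}{\cdot}$ then drive the argument. (A)~Lifting is compositional: each combined-language constructor in its image is injective in its arguments, so if $\Lift{\ell}{e} = \Lift{\ell'}{e}$ and $s$ is an immediate lifted subterm of $e$, then $\Lift{\ell}{s} = \Lift{\ell'}{s}$. (B)~The label $\ell$ is recorded in the lifted syntax only at let-binder annotations $\tplab{\TTau}{\ell}$, method names $m_\ell$, and reference allocations $\SRefVal{\ell}{\cdot}$, and $\Lift{\ell}{\cdot}$ is defined on \TProveN and compute-and-prove terms only when $\ell = \Compute$. Consequently, whenever the head of $e$ is a \TLetN, a method call, or a \TProveN, the hypothesis $\Lift{\ell}{e} = \Lift{\ell'}{e}$ already forces $\ell = \ell'$, after which the goal $\Lift{\ell}{e'} = \Lift{\ell'}{e'}$ is the identity and holds regardless of the form of $e'$.

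This splits the work into two families. For \ruleref{E-Let}, \ruleref{E-Call}, \ruleref{E-Prove}, and every instance of \ruleref{E-Eval} (whose context is a \TLetN), fact~(B) collapses the hypothesis to $\ell = \ell'$ and the conclusion is immediate; crucially, this avoids ever having to prove that lifting commutes with the substitutions performed by \ruleref{E-Let} and \ruleref{E-Call}. For the remaining redexes, $e'$ is either a subterm of $e$ or a constant: \ruleref{E-IfT}/\ruleref{E-IfF} return a branch $e_i$, \ruleref{E-Cast} returns the wrapped object $\TNew{D}{\overline{v}}$, and \ruleref{E-Field} returns a field value $v_i$, so one or two applications of descent~(A) through the surrounding constructor give $\Lift{\ell}{e'} = \Lift{\ell'}{e'}$. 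Finally, \ruleref{E-VerifyT} and \ruleref{E-VerifyF} step to $\TTrue$ or $\TFalse$, whose lifting is label-independent (e.g.\ $\Lift{\ell}{\TTrue} = \STrue$), closing those cases.

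The main obstacle is conceptual rather than computational: spotting that the substitution-producing rules never need a ``lifting commutes with substitution'' lemma, because each of them carries $\ell$ in a distinguishing syntactic position (or is restricted to $\Compute$) and so forces $\ell = \ell'$ under the hypothesis. The only remaining care is bookkeeping the partiality of lifting --- locations lift only at $\Compute$ or $\Prove$, and \TProveN only at $\Compute$ --- to confirm that definedness of $\Lift{\ell}{e}$ is inherited by the subterms and results inspected in the compositional cases, which is immediate from the recursive definition of $\Lift{\ell}{\cdot}$.
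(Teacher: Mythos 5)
Your proof is correct, and at the top level it follows the same strategy as the paper's: case analysis on the reduction rule, vacuous discharge of the reference rules (\ruleref{E-Ref}, \ruleref{E-Alloc}, \ruleref{E-Deref}, \ruleref{E-Assign}) under the empty-heap constraint, and decomposition of the lifting equality into subterm equalities for \ruleref{E-IfT}, \ruleref{E-IfF}, \ruleref{E-Cast}, and \ruleref{E-Field}, with \ruleref{E-VerifyT}/\ruleref{E-VerifyF} closed by label-independence of $\Lift{\ell}{\TTrue}$ and $\Lift{\ell}{\TFalse}$. Where you genuinely improve on the paper is the substitution-producing cases. The paper's proof lumps everything else under the claim that ``the step either steps to a subterm or a constant value,'' which is not literally true for \ruleref{E-Let} and \ruleref{E-Call}: their results are substitution instances (and, for calls, a method body drawn from the class table), not subterms, so subterm decomposition alone does not finish these cases without an additional lifting-commutes-with-substitution lemma the paper never states. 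Your fact~(B) --- that $\Lift{\ell}{\cdot}$ records $\ell$ syntactically at let binders ($\tplab{\TTau}{\ell}$), method calls ($m_\ell$), and allocations, and is defined on \TProveN only at \Compute --- forces $\ell = \ell'$ in exactly these cases, making the goal the identity and also covering \ruleref{E-Eval} since every non-trivial evaluation context is a let. This is the mechanism the paper itself relies on elsewhere (its determinism proof notes the syntactic method-call label exists precisely so this lemma holds), so your argument makes explicit and watertight what the paper's terse wording leaves loose.
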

\begin{proof}
    By cases on $\tstep$.
    For the two \IfN statement rules, applying the definition of $\Lift{\ell}{e}$ gives $\Lift{\ell}{e_i} = \Lift{\ell'}{e_i}$ for each subterm $i \in \{1,2\}$.
    The cases involving rules using references are vacuously true, as we demand the rules do not use locations meaningfully.
    The remaining rules involve decomposing equality between lifts on $e$ to equality between lifts on subterms of $e$, and the step either steps to a subterm or a constant value.
\end{proof}

\begin{theorem}[Determinism of $\cpstep$]\label{thm:surface-determinism}
    If $\cpconf*{e} \cpstep \cpconf{e'}{\sigmaC'}{\sigmaP'}{\rho'}$ and $\cpconf*{e} \cpstep \cpconf{e''}{\sigmaC''}{\sigmaP''}{\rho''}$, then $e' = e''$ and $\sigmaC' = \sigmaC''$ and $\sigmaP' = \sigmaP''$ and $\rho' = \rho''$, up to permuting location names.
\end{theorem}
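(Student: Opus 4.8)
The plan is to mirror the determinism argument for the target semantics (\Cref{thm:target-determinism}), proceeding by induction on the structure of $e$ with a case analysis on the form of the head redex. For each syntactic form I would collect the set of $\cpstep$ rules whose left-hand side can unify with $\cpconf*{e}$ and show that any two firings produce identical successor configurations, up to renaming of freshly chosen locations. For most forms this is immediate. The administrative reference wrappers $\SLocC*$, $\SLocP*$, and $\SLocCP*$ syntactically distinguish the three allocation rules (\ruleref{EC-RefC}, \ruleref{EC-RefP}, \ruleref{EC-RefCP}), the three dereference rules, and the three assignment rules, so exactly one applies and its store update and result value are pinned down by the premises. The computed-input operations $\SWitAssign*$ and $\SWitDeref*$ each match a single rule (\ruleref{EC-InputAssign}, \ruleref{EC-InputDeref}), and the three compute-and-prove rules are disambiguated by whether the body is a value and, if so, whether it is $\STrue$: \ruleref{EC-ComputeAndProveInit} fires only on a fresh block, \ruleref{EC-ComputeAndProveTrue} only when the body is $\STrue$, and \ruleref{EC-ComputeAndProveStep} otherwise.

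The delicate case, and the one I expect to be the main obstacle, is \ruleref{EC-Lift}, which lifts an arbitrary \corelang step into the combined semantics and could in principle fire at more than one procedure label $\ell$. The key enabling fact is the explicit procedure label carried on method calls: as noted in the discussion of \ruleref{EC-Lift}, this annotation forces a unique $\ell$ with $\SCall* = \Lift{\ell}{e}$ wherever reference-bearing computation could occur, so the lifting is determined there. Where an expression genuinely satisfies $\Lift{\ell}{e} = \Lift{\ell'}{e}$ for two labels, the underlying \corelang redex performs only reference-free work; I would then invoke \Cref{thm:target-determinism} to obtain a unique \corelang successor $e'$ and \Cref{lemma:step-preserves-lifting} to conclude $\Lift{\ell}{e'} = \Lift{\ell'}{e'}$. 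Hence even when both labels enable \ruleref{EC-Lift}, the two derivations agree on the combined successor and leave all three stores untouched.

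The remaining work is the congruence case and cross-rule overlap. For $e = \SLetIn{x\ty\STau}{e_1}{e_2}$ with $e_1$ not a value, both derivations must reduce $e_1$ under \ruleref{EC-Eval}, and the induction hypothesis applied to $e_1$ yields equal sub-successors and equal store triples, which determine the whole configuration; when $e_1$ is a value the substitution performed by \ruleref{EC-Let} is fixed. I would additionally check that where \ruleref{EC-Lift} overlaps a dedicated structural rule (for instance, at a let whose bound expression is already a value, or at an if on a lifted boolean), the two routes coincide, which holds because lifting commutes with the underlying reductions and capture-avoiding substitution. Finally, because allocation rules pick a location $\iota \notin \dom(\sigma)$ nondeterministically, the equalities for $e'$, $\sigmaC'$, $\sigmaP'$, and $\rho'$ are established only up to a permutation of location names, exactly as in \Cref{thm:target-determinism}; threading this renaming consistently through the wrapped references is the one piece of bookkeeping that warrants care.
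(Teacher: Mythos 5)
Your proposal is correct and follows essentially the same route as the paper's proof: induction with case analysis on the applicable rules, using \Cref{thm:target-determinism} together with \Cref{lemma:step-preserves-lifting} to handle \ruleref{EC-Lift} (with the syntactic label on method calls resolving the only potential ambiguity), observing that every other form except \SLetN matches at most one rule, and handling \SLetN by splitting on whether the bound expression is a value. Your additional attention to the compute-and-prove rule disambiguation and to fresh-location renaming only makes explicit bookkeeping that the paper's terser proof leaves implicit.
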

\begin{proof}
    By direct induction on the semantic steps for $\cpstep$.
    For all rules lifted from \ruleref{EC-Lift}, determinism follows from \Cref{thm:target-determinism} and \Cref{lemma:step-preserves-lifting}.
    We call out the case for method calls in particular:
    they utilize the syntactic label $\ell$ to determine where to lift, which would violate \Cref{lemma:step-preserves-lifting} if not.
    For all other syntactic terms except for Let, there is at most one semantic rule that can apply.
    This includes references, which we assume are created deterministically.
    In the case of Let, if $e_1$ is a value, we apply the second Let rule, otherwise we invoke the inductive hypothesis on $e_1$.
\end{proof}

\begin{theorem}[Confluence of Concurrent Semantics]\label{thm:confluence-conc-target}
  The concurrent semantics $\pstep$ satisfies single-step diamond confluence:
  \[
    \begin{array}{rcl}
      & 
      \pconf* {e_\Compute} {e_\Prove} 
      & \\

      \vspace{1mm}
      {\text{\tiny{[\Compute-Step]}}} \hspace{-2mm} \targetconcsemSW \hspace{-10mm} & & \hspace{-12mm} \targetconcsemSE \hspace{0mm} {\text{\tiny{[\Prove-Step]}}} \\

      \pconf {\ComputeHL{e_\Compute'}} {e_\Prove} {\ComputeHL{\sigmaC'}} {\sigmaP} {\ComputeHL{\rho'}}
      \hspace{-15mm} & & \hspace{-15mm}

      \pconf {e_\Compute} {\ProveHL{e_\Prove'}} {\sigmaC} {\ProveHL{\sigmaP'}} {\rho} \\

      \vspace{1mm} {\text{\tiny{[\Prove-Step]}}} \hspace{-2mm} \targetconcsemSE \hspace{-10mm} & & \hspace{-12mm} \targetconcsemSW \hspace{0mm} {\text{\tiny{[\Compute-Step]}}} \\

      & 
      \pconf {\ComputeHL{e_\Compute'}} {\ProveHL{e_\Prove'}} {\ComputeHL{\sigmaC'}} {\ProveHL{\sigmaP'}} {\ComputeHL{\rho'}}
      & \\
    \end{array}
  \]
\end{theorem}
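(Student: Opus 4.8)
The plan is to prove the single-step diamond by inverting the two given reductions and observing that a \Compute-step and a \Prove-step touch essentially disjoint fragments of the configuration, so they commute. First I would case-split on which concurrent rule justifies each premise: the \Compute-step is an instance of \ruleref{EP-StepC}, \ruleref{EP-WitAssign}, \ruleref{EP-WitDerefC}, or the congruence rule \ruleref{EP-Let1C}, while the \Prove-step is an instance of \ruleref{EP-StepP}, \ruleref{EP-WitDerefP}, or \ruleref{EP-Let1P}. The congruence cases reduce immediately to the base cases by peeling off the shared let-context, so the real work is the finitely many pairings of base rules.

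The core of the argument is a locality (frame) property of each side. A \Compute-step reads and writes only the pair $(e_\Compute, \sigmaC)$, and interacts with the computed-input store $\rho$ only by reading it (\ruleref{EP-WitDerefC}) or extending it with a single fresh binding (\ruleref{EP-WitAssign}); it never inspects $e_\Prove$ or $\sigmaP$. Symmetrically, a \Prove-step reads and writes only $(e_\Prove, \sigmaP)$ and at most \emph{reads} $\rho$, never writing it nor touching $e_\Compute$ or $\sigmaC$. Consequently, when I re-run the \Compute-step from the \Prove-stepped configuration $\pconf{e_\Compute}{e_\Prove'}{\sigmaC}{\sigmaP'}{\rho}$, neither $\sigmaC$ nor $\rho$ has changed, so by determinism of the underlying target semantics (\Cref{thm:target-determinism}) in the \ruleref{EP-StepC} case—and trivially in the computed-input cases—the step remains enabled and produces exactly $e_\Compute'$, $\sigmaC'$, and $\rho'$, leaving $\sigmaP'$ untouched.

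Closing the diamond in the other direction hinges on the single-assignment discipline for $\rho$, the only genuine coupling between the two procedures. Re-running the \Prove-step from the \Compute-stepped configuration $\pconf{e_\Compute'}{e_\Prove}{\sigmaC'}{\sigmaP}{\rho'}$ requires that every value the \Prove-step reads out of $\rho$ be unchanged by the \Compute-step. If the \Compute-step was not \ruleref{EP-WitAssign} then $\rho' = \rho$ and there is nothing to show. Otherwise \ruleref{EP-WitAssign} wrote a binding for some $w$ under the side condition $w \notin \dom(\rho)$, whereas a \Prove read via \ruleref{EP-WitDerefP} of some $w'$ is only well-defined when $w' \in \dom(\rho)$; hence $w \neq w'$ and $\rho'(w') = \rho(w')$. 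The \Prove-step therefore behaves identically, yielding $e_\Prove'$ and $\sigmaP'$, and since \Prove never writes $\rho$ the computed-input store remains $\rho'$. Both paths thus arrive at $\pconf{e_\Compute'}{e_\Prove'}{\sigmaC'}{\sigmaP'}{\rho'}$, as required.

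I expect the main obstacle to be precisely this non-conflict argument for $\rho$: it is where the design invariant of computed inputs (write-once, and only by \Compute; read-only by \Prove) does the real work, and the statement must be phrased so that \ruleref{EP-WitDerefP} is read as carrying the implicit precondition $w' \in \dom(\rho)$. A secondary bookkeeping point is fresh-location choice: the two base steps allocate into the disjoint stores $\sigmaC$ and $\sigmaP$, so no name clash can occur across procedures, and any within-procedure nondeterminism in location names is absorbed by the ``up to permuting location names'' clause inherited from \Cref{thm:target-determinism}. With those two points handled, the remaining rule pairings are routine.
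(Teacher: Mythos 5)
Your proposal is correct and takes essentially the same route as the paper's proof: a case analysis over the pairs of concurrent rules, closing each square by the disjointness of the state that each side touches together with determinism of the underlying semantics (\Cref{thm:target-determinism}). In fact, your treatment of the one genuinely interacting pair---\ruleref{EP-WitAssign} against \ruleref{EP-WitDerefP}, resolved by the write-once discipline on $\rho$ ($w \notin \dom(\rho)$ for the write versus $w' \in \dom(\rho)$ for the read, so $w \neq w'$)---is spelled out more carefully than in the paper's own proof, which only considers \ruleref{EP-WitAssign} paired with itself; the only element you omit is the trace side-condition $\trace_1 :: \trace_2 \traceq \trace_2 :: \trace_1$ that the paper's proof additionally establishes by observing that \Prove-side steps emit only $\Nulltrace$.
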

\begin{proof}
    More formally, we prove that if $\pconf* {e_\Compute}{e_\Prove} \pstep_{\trace_1} \pconf {e_{1,\Compute}}{e_{1,\Prove}}{\sigma_{1,\Compute}}{\sigma_{1,\Prove}}{\rho_1}$ and $\pconf {e_\Compute}{e_\Prove}{\sigma_\Compute}{\sigma_\Prove}{\rho} \pstep_{\trace_2} \pconf {e_{2,\Compute}}{e_{2,\Prove}}{\sigma_{2,\Compute}}{\sigma_{2,\Prove}}{\rho_2}$ then there is some ($e_{3,\Compute}$, $e_{3,\Prove}$), ($\sigma_{3,\Compute}$, $\sigma_{3,\Prove}$), and $\rho_3$ such that $\pconf {e_{1,\Compute}}{e_{1,\Prove}}{\sigma_{1,\Compute}}{\sigma_{1,\Prove}}{\rho_1} \pstep_{\trace_2} \pconf {e_{3,\Compute}}{e_{3,\Prove}}{\sigma_{3,\Compute}}{\sigma_{3,\Prove}}{\rho_3}$ and $\pconf {e_{2,\Compute}}{e_{2,\Prove}}{\sigma_{2,\Compute}}{\sigma_{2,\Prove}}{\rho_2} \pstep_{\trace_1} \pconf {e_{3,\Compute}}{e_{3,\Prove}}{\sigma_{3,\Compute}}{\sigma_{3,\Prove}}{\rho_3}$ and $\trace_1 :: \trace_2 \traceq \trace_2 :: \trace_1$.

    We prove this by case-analysis over the semantic definition of \Cref{app:target-conc-lang}.
    \begin{itemize}
        \item Suppose that the rule \ruleref{EP-Let1C} and \ruleref{EP-Let1P} can be applied to $e_\Compute$ and $e_\Prove$. If the same rule is applied to both, then using determinism (\Cref{thm:target-determinism}), both of these syntactic pairs are confluent. Suppose that different rules are applied to $e_\Compute$ and $e_\Prove$. This pair is shown to be confluent by the following diagram.
        \begin{center}
            \footnotesize
            $$
                \begin{array}{rcl}
                & 
                \pconf {\LetIn{x:t}{e_{\Compute 1}}{e_{\Compute 2}}} {\LetIn{x}{e_{\Prove 1}}{e_{\Prove 2}}} {\sigma_\Compute} {\sigma_\Prove} {\rho}
                & \\

                \vspace{1mm}
                {\text{\tiny{\ruleref{EP-Let1C}}}} \hspace{-2mm} \targetconcsemSW \hspace{-10mm} & & \hspace{-12mm} \targetconcsemSE \hspace{0mm} {\text{\tiny{\ruleref{EP-Let1P}}}} \\

                \pconf {\ComputeHL{\LetIn{x:t}{e_{\Compute 1}'}{e_{\Compute 2}}}} {\LetIn{x:t}{e_{\Prove 1}}{e_{\Prove 2}}} {\ComputeHL{\sigma_\Compute'}} {\sigma_\Prove} {\ComputeHL{\rho'}} 
                \hspace{-32mm} & & \hspace{-32mm}

                \pconf {\LetIn{x:t}{e_{\Compute 1}}{e_{\Compute 2}}} {\ProveHL{\LetIn{x:t}{e_{\Prove 1}'}{e_{\Prove 2}}}} {\sigma_\Compute} {\ProveHL{\sigma_\Prove'}} {\rho}
                \\

                \vspace{1mm} {\text{\tiny{\ruleref{EP-Let1P}}}} \hspace{-2mm} \targetconcsemSE \hspace{-10mm} & & \hspace{-12mm} \targetconcsemSW \hspace{0mm} {\text{\tiny{\ruleref{EP-Let1C}}}} \\

                & 
                \pconf {\ComputeHL{\LetIn{x:t}{e_{\Compute 1}'}{e_{\Compute 2}}}} {\ProveHL{\LetIn{x:t}{e_{\Prove 1}'}{e_{\Prove 2}}}} {\ComputeHL{\sigma_\Compute'}} {\ProveHL{\sigma_\Prove'}} {\ComputeHL{\rho'}}
                & \\
                \end{array}
            $$
        \end{center}
        In the diagram, the modified elements from executing the \Compute step are denoted in yellow, and the modified elements from the \Prove step are denoted in pink. Note that the two rules modify different components that do not interfere with each other, allowing them to converge at the same state simply by applying the other semantic rule.
        
        \item Suppose that the rule \ruleref{EP-StepC} and \ruleref{EP-StepP} can be applied to $e_\Compute$ and $e_\Prove$. We note that one pre-requisite of this syntactic form is that $\langle e_\ell \divi \sigma_\ell \rangle \tstep \langle e_\ell' \divi \sigma_\ell' \rangle$ for $\ell = \Compute, \Prove$. The statement $\langle e_\ell \divi \sigma_\ell \rangle \tstep \langle e_\ell' \divi \sigma_\ell' \rangle$ may be referencing any of the semantic rules in \Cref{app:core-semantics}. Some of the rules in \Cref{app:core-semantics} modify the store element, while others don't. If the store remains unmodified, then the proof continues below. If the store is modified, we apply \Cref{thm:target-determinism} to show that the modified stores converge to the same store.

        Suppose that the same rule is applied to both $e_\Compute$ and $e_\Prove$. Then the pair is confluent by \Cref{thm:target-determinism}. Suppose that different rules are applied to $e_\Compute$ and $e_\Prove$. For this case to be applicable, we know that $\langle e_\Compute \divi \sigma_\Compute \rangle \tstep \langle e_\Compute' \divi \sigma_\Compute' \rangle$ and $\langle e_\Prove \divi \sigma_\Prove \rangle \tstep \langle e_\Prove' \divi \sigma_\Prove' \rangle$ must be true (regardless of what step is being taken in the target semantics). We can prove confluence by applying the semantic rule in both orders.
        $$
            \begin{array}{rcl}
            & 
            \pconf {e_\Compute} {e_\Prove} {\sigma_\Compute} {\sigma_\Prove} {\rho} 
            & \\

            \vspace{1mm}
            {\text{\tiny{\ruleref{EP-StepC}}}} \hspace{-2mm} \targetconcsemSW \hspace{-10mm} & & \hspace{-12mm} \targetconcsemSE \hspace{0mm} {\text{\tiny{\ruleref{EP-StepP}}}} \\

            \pconf {\ComputeHL{e_\Compute'}} {e_\Prove} {\ComputeHL{\sigma_\Compute'}}  {\sigma_\Prove} {\rho}
            \hspace{-15mm} & & \hspace{-15mm}
        
            \pconf {e_\Compute} {\ProveHL{e_\Prove'}} {\sigma_\Compute} {\ProveHL{\sigma_\Prove'}} {\rho} \\

            \vspace{1mm} {\text{\tiny{\ruleref{EP-StepP}}}} \hspace{-2mm} \targetconcsemSE \hspace{-10mm} & & \hspace{-12mm} \targetconcsemSW \hspace{0mm} {\text{\tiny{\ruleref{EP-StepC}}}} \\

            & 
            \pconf {\ComputeHL{e_\Compute'}} {\ProveHL{e_\Prove'}} {\ComputeHL{\sigma_\Compute'}} {\ProveHL{\sigma_\Prove'}} {\rho} 
            & \\
            \end{array}
        $$
        
        \item Suppose that \ruleref{EP-WitAssign} is applicable to $e_\Compute$. We note that this rule can only be done by the \Compute component. Then there exists only one pair of syntactic form to prove which is if both branches step \ruleref{EP-WitAssign}. This is confluent by \Cref{thm:target-determinism}.
        
        \item Suppose that rule \ruleref{EP-WitDerefC} and \ruleref{EP-WitDerefP} can be applied to $e_\Compute$ and $e_\Prove$. Suppose that the same rule is applied to both $e_\Compute$ and $e_\Prove$. Then the pair is confluent by \Cref{thm:target-determinism}. Suppose that different rules are applied to $e_\Compute$ and $e_\Prove$. We require that $x \in \dom(\rho)$. Confluence follows from the diagram below.
        $$
            \begin{array}{rcl}
            & 
            \pconf {\witderef x} {\witderef x} {\sigma_\Compute} {\sigma_\Prove} {\rho}
            & \\

            \vspace{1mm}
            {\text{\tiny{\ruleref{EP-WitDerefC}}}} \hspace{-2mm} \targetconcsemSW \hspace{-10mm} & & \hspace{-12mm} \targetconcsemSE \hspace{0mm} {\text{\tiny{\ruleref{EP-WitDerefP}}}} \\

            \pconf {\ComputeHL{\rho(x)}} {\witderef x} {\sigma_\Compute} {\sigma_\Prove} {\rho}
            \hspace{-15mm} & & \hspace{-15mm}
        
            \pconf {\witderef x} {\ProveHL{\rho(x)}} {\sigma_\Compute} {\sigma_\Prove} {\rho}
            \\

            \vspace{1mm} {\text{\tiny{\ruleref{EP-WitDerefP}}}} \hspace{-2mm} \targetconcsemSE \hspace{-10mm} & & \hspace{-12mm} \targetconcsemSW \hspace{0mm} {\text{\tiny{\ruleref{EP-WitDerefC}}}} \\

            & 
            \pconf {\ComputeHL{\rho(x)}} {\ProveHL{\rho(x)}} {\sigma_\Compute} {\sigma_\Prove} {\rho}
            & \\
            \end{array}
        $$
    \end{itemize}

    Note that in all cases, traces can only appear on the compute side, so at least one of $t_1$ and $t_2$ is $\Nulltrace$, meaning $\trace_1 :: \trace_2 \traceq \trace_2 :: \trace_1$.
\end{proof}

\subsection{Bisimulation}
Note that all proofs up to \Cref{thm:adequacy-of-combined} are on expressions from the subset of the language that do not contain (nested) compute and prove blocks.
Then, \Cref{thm:adequacy-of-combined} will separately prove adequacy if there are potentially nested compute-and-prove blocks.
Because method calls can introduce nested compute-and-prove blocks, we initially assume they've been compiled before method calls.
For shorthand we sometimes use $\storetuple$ to represent $(\sigma_\Compute, \sigma_\Prove, \rho)$ and $\storetuple'$ for $(\sigma_\Compute', \sigma_\Prove', \rho'$).

\begin{lemma}[Bisimulation of Annotated Semantics]\label{lemma:annotated-bisim}
    For any expression $e$ and its corresponding annotated version $\hat{e}$ using the typing derivation to assign annotations to values,
    then $\cpconf*{e} \cpstep[t] \cpconf{e'}{\sigmaC'}{\sigmaP'}{\rho'}$ if and only if $\cpconf*{\hat{e}} \cpstep[t] \cpconf{\hat{e}'}{\sigmaC'}{\sigmaP'}{\rho'}$
\end{lemma}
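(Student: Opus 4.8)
The plan is to establish that the type-directed annotation map $\FAnn{\cdot}$ of \Cref{app:ann-types-def} is a \emph{strong bisimulation} carrying each step of the unannotated combined semantics (\Cref{app:combined-semantics}) to a single step of the annotated semantics (\Cref{app:annotated-semantics}) and back, preserving the trace $t$ and all three stores exactly. Concretely, I would prove the sharper forward statement: whenever $\cpconf*{e} \cpstep[t] \cpconf{e'}{\sigmaC'}{\sigmaP'}{\rho'}$, the annotated term $\hat e = \FAnn{e}$ steps with the same trace and stores to exactly $\FAnn{e'}$, where the annotation of $e'$ is taken with respect to the typing derivation preserved by \Cref{thm:abs-preservation}. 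Because erasing annotations is a left inverse of $\FAnn{\cdot}$ and each annotated rule is the image under $\FAnn{\cdot}$ of a unique unannotated step, the backward implication follows by erasing any given annotated step to its unannotated counterpart (with determinism, \Cref{thm:surface-determinism}, ruling out spurious targets), so the biconditional reduces to the forward simulation. The argument then proceeds by induction on the derivation of the unannotated step, case-splitting on the rule applied.

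The bulk of the cases are \emph{local} and essentially immediate. Each reference rule (\ruleref{EC-RefC}, \ruleref{EC-RefP}, \ruleref{EC-RefCP}), each dereference rule (\ruleref{EC-DerefC}, \ruleref{EC-DerefP}, \ruleref{EC-DerefCP}), each assignment rule, and the computed-input rules \ruleref{EC-InputAssign} and \ruleref{EC-InputDeref} has a matching annotated rule whose store-side premises and emitted trace event are literally identical; the annotated rule differs only by decorating the produced value with the label its type prescribes, which is precisely the decoration $\FAnn{\cdot}$ assigns. Since the stores $\sigmaC$, $\sigmaP$, $\rho$ carry no annotations, their updates coincide verbatim. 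The congruence rule \ruleref{EC-Eval} matches the annotated let-congruence rule: here I would invoke the inductive hypothesis on the subterm in evaluation position and use that $\FAnn{\cdot}$ commutes with the let evaluation context. The remaining non-reference constructs—casts, field projection, method call, conditionals, \ruleref{E-Prove}, and verification—can step in the unannotated system only through \ruleref{EC-Lift} over the corresponding \corelang rule, and each is paired with exactly one annotated rule; that these lifted steps leave the heaps empty is exactly the side condition of \ruleref{EC-Lift}, so the annotations ride along unchanged, mirroring the lifting-stability argument of \Cref{lemma:step-preserves-lifting}.

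The substantive cases are those performing substitution: the let reduction \ruleref{E-Let} (reached via \ruleref{EC-Lift}) and the method-call reduction \ruleref{E-Call}. Their annotated counterparts substitute \emph{annotated} values through annotated binders, so matching them against $\FAnn{e'}$ reduces to a commutation lemma stating that annotation and capture-avoiding substitution commute, $\FAnn{\subst{e}{x}{v}} = \subst{\FAnn{e}}{x_\annell}{\FAnn{v}}$, together with its simultaneous-substitution generalization for method calls, which uses the annotated body-lifting $\LiftAnn{\ell'}{\cdot}$ of \Cref{app:ann-lifting-def}. This is exactly the property annotations were introduced to secure: substituting into a let body discards the bound variable's syntactic type, yet the annotations already resident on $v$ and within $e$ persist, so no value can silently migrate between the \Compute and \Prove procedures after reduction. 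Proving the lemma requires that substitution preserves typing (\Cref{thm:subst-pres-surf-typing,cor:safe-sub}), which guarantees $\FAnn{\subst{e}{x}{v}}$ is even well-defined, followed by a structural induction over $e$ matching each clause of the annotation definition against the substitution.

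I expect this commutation lemma to be the main obstacle, for two reasons. First, it must be stated for simultaneous substitutions, since a method call substitutes for the formal parameters and $\TThisN$ at once while the annotated call rule threads these through $\LiftAnn{\ell'}{\cdot}$; I would need to check that lifting a \corelang body and then annotating agrees with annotating the already-lifted body. Second, and more delicately, the annotation of the residual $e'$ is meaningful only relative to a typing derivation for $e'$, so I must carry the derivation preserved by \Cref{thm:abs-preservation} through the induction and verify that the annotations it canonically assigns coincide with those produced operationally. Once the commutation lemma is in hand, the two substitution cases close immediately and the bisimulation is complete; because the proof is confined to block bodies without nested compute-and-prove blocks (as noted preceding \Cref{thm:adequacy-of-combined}), no administrative \SCnpN{} cases arise.
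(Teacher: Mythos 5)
Your proposal takes essentially the same approach as the paper: the paper's entire proof of this lemma is the single sentence ``By induction on $e$, a direct lock-step bisimulation is possible by linking the corresponding rules,'' which is precisely the rule-by-rule lock-step correspondence you construct. The extra machinery you identify---the commutation of annotation with (simultaneous) substitution in the let and method-call cases, and threading the typing derivation preserved by preservation so that the canonical annotation of $e'$ agrees with the operationally produced one---is exactly the detail the paper's one-line proof leaves implicit, so your write-up is a faithful (and more careful) elaboration of the same argument.
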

\begin{proof}
    By induction on $e$, a direct lock-step bisimulation is possible by linking the corresponding rules.
\end{proof}

\begin{lemma}[Projected Annotated Expressions]\label{lemma:annotated-proj}
    The projections of any expression $e$ and its corresponding annotated version $\hat{e}$ are identical.
    That is, $\denL{e} = \denL{\hat{e}}$.
\end{lemma}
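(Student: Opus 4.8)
The plan is to prove the identity by structural induction on the typing derivation of~$e$, exploiting the fact that all three relevant definitions---the type-directed projection $\denL{\cdot}$ on typing derivations (\Cref{app:compilation}), the annotation function $\FAnn{\cdot}$ (\Cref{app:ann-types-def}), and the annotation-directed projection $\denL{\cdot}$ on annotated syntax (\Cref{app:intermediate-projections})---are specified by exactly one clause per typing rule. Concretely, for each rule I will compute $\hat e = \FAnn{e}$ from the derivation, unfold the annotation-directed projection of $\hat e$, unfold the type-directed projection of~$e$, and check that the two resulting \corelang expressions coincide, discharging subterms with the induction hypothesis.

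The technical heart of the argument is the pair of \emph{lift} rules, \ruleref{TC-Lift} and \ruleref{TC-LiftCP}, since these are where the type-directed projection acts ``wholesale''---returning $e$ when $\ell \subsetl \ell'$ and $\TUnitVal$ otherwise---whereas $\FAnn{\cdot}$ replaces a single-side lift $\Lift{\ell'}{e}$ by the fully distributed annotated lift $\LiftAnn{\ell'}{e}$ (and $\Lift{\ComputeProve}{e}_\Theta$ by $\LiftAnnTheta{\ComputeProve}{e}$, per \Cref{app:ann-lifting-def}). I therefore plan to first establish an auxiliary lemma: for every core expression~$e$, $\denL{\LiftAnn{\ell'}{e}}$ equals~$e$ when $\ell \subsetl \ell'$ and equals $\TUnitVal$ otherwise, with the analogous statement for $\LiftAnnTheta{\ComputeProve}{e}$. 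This sub-lemma is itself proved by induction on~$e$: because $\LiftAnn{\ell'}{\cdot}$ stamps every node with the same label~$\ell'$, projecting the result node-by-node either reconstructs the original subtree verbatim (when $\ell \subsetl \ell'$) or collapses each leaf to $\TUnitVal$ and each binder to its body, bottoming out at $\TUnitVal$ (when $\ell \not\subsetl \ell'$); the value-collapsing clause of the annotated let-projection is precisely what makes the latter case terminate cleanly.

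The remaining cases are comparatively routine. For the structural rules \ruleref{TC-Let} and \ruleref{TC-If}, the type-directed and annotation-directed projections are defined by the \emph{same} case split---on the label of the bound variable and on the visibility of the scrutinee, respectively---so the equality follows by applying the induction hypothesis to each subexpression; here I must additionally observe that the induction hypothesis transfers the side condition ``$\denL{e_1}$ is a value,'' which both let-clauses branch on, from the type-directed to the annotation-directed projection. The computed-input rules \ruleref{TC-InputAssign} and \ruleref{TC-InputDeref} annotate with $\ComputeProve$ and project identically in each procedure, and \ruleref{TC-ComputeAndProve} and \ruleref{TC-Combined} annotate the whole block with $\Compute$, so both projections return the lifted full compilation at $\Compute$ and $\TUnitVal$ at $\Prove$.

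I expect the main obstacle to be the \ComputeProve lift case, \ruleref{TC-LiftCP}: there the annotated projection must reconcile the ghost locations introduced by~$\Theta$ with the concrete $r_\ComputeProve(\iota_1,\iota_2)$ references produced by $\LiftAnnTheta{\ComputeProve}{\cdot}$, so the auxiliary lemma for this case has to track the $\Theta^{-1}$ substitution and verify that projecting a ghost location down to its $\Compute$- or $\Prove$-side matches the type-directed rule. Once that bookkeeping is in place, the lemma feeds directly into the downstream use in \Cref{lemma:annotated-bisim}, letting us reason about the simpler annotation-directed projection throughout the bisimulation argument without ever re-deriving the type-directed one.
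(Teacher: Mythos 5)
Your proposal follows essentially the same route as the paper's proof: a case-by-case induction (the paper phrases it as induction on $e$, you on its typing derivation, which amounts to the same thing since projection is type-directed) that unfolds both projections and matches them, with the paper explicitly showing only the value, let, and if cases and declaring that "the rest can be inferred." Your auxiliary lemma for the lift cases---$\denL{\LiftAnn{\ell'}{e}} = e$ when $\ell \subsetl \ell'$ and $\TUnitVal$ otherwise---is sound and supplies exactly the content the paper leaves implicit; it also coincides, up to the substitution parameter and the relaxation of $\ell = \ell'$ to $\ell \subsetl \ell'$, with \Cref{lemma:raise-lessthan-all} and \Cref{lemma:raise-lessthan-unit}, which the paper proves separately for the bisimulation argument. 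Your identification of the \ComputeProve ghost-location bookkeeping as the delicate point is likewise apt.

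One concrete claim in your proposal does not survive contact with the definitions: that \ruleref{TC-InputAssign} and \ruleref{TC-InputDeref} "project identically in each procedure." They do not. The type-directed projection (\Cref{app:compilation}) sends $\SWitAssign{x}{v}$ to the \corelang reference assignment $\TAssign{x}{v}$ and sends $\SWitDeref{x}$ to $\TDeref{x}$ at \Compute (resp.\ $x$ at \Prove), whereas the annotation-directed projection (\Cref{app:intermediate-projections}) keeps them as the wit-operations $x \witassign v$ and $\witderef x$ of the parallel projected language. A similar mismatch occurs for \ruleref{TC-ComputeAndProve}, where one side produces $\Lift{\Compute}{\fullComp{\cdot}}$ and the other $\LiftAnn{\Compute}{\fullComp{\cdot}}$. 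So for expressions containing computed-input operations---which includes every compute-and-prove body the lemma is ultimately applied to---the two projections are not syntactically equal, and these "routine" cases of your induction would fail as stated; reconciling the wit-operations with their reference-based counterparts is the job of the witness translation $\witdenL{\cdot}$ and \Cref{lemma:projections-match}, not of this lemma. In fairness, the paper's own proof silently skips exactly these cases, so your attempt is no weaker than the published argument, but an executed proof would need either to restrict the lemma's scope to the fragment where both projections target common syntax or to state the equality modulo the wit-operation/reference correspondence.
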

\begin{proof}
    By induction on $e$.
    We show some cases, where the rest can be inferred.
    \begin{itemize}
        \item ($e = v$)

        If $v$ is typed at label~$\ell'$, then $\hat{e} = (v)_\annellp$.
        Then either~$\denL{e} = \denL{\hat{e}} = v$, or $\denL{e} = \denL{\hat{e}} = ()$.

        \item ($e = \Let x\ty\tplab{\TTau}{\ell'} = e_1 \In e_2$)

        If $\ell \subseteq \ell'$, then
        $\denL{e} = (\Let x = \denL{e_1} \In \denL{e_2}) = (\Let x = \denL{\hat{e_1}} \In \denL{\hat{e_2}}) = \denL{\hat{e}}$.
        Otherwise, either 
        $\denL{e} = \denL{e_1} ; \denL{e_2} = \denL{\hat{e_1}} ; \denL{\hat{e_2}} = \denL{\hat{e}}$
        or $\denL{e} = \denL{e_2} = \denL{\hat{e_2}} = \denL{\hat{e}}$,
        depending on whether~$\denL{e_1} = \denL{\hat{e_1}}$ is a value or not.
        
        \item ($e = \If v \Then e_1 \Else e_2$)

        Supposing $v$ is typed at label~$\ell'$, if~$\ell \subseteq \ell'$ then
        \begin{align*}
            \denL{e} &= \If v \Then \denL{e_1} \Else \denL{e_2}\\
            &= \If v \Then \denL{\hat{e_1}} \Else \denL{\hat{e_2}}\\
            &= \denL{\hat{e}}.
        \end{align*}
        Otherwise $\denL{e} = \denL{e_1} = \denL{\hat{e_1}} = \denL{\hat{e}}$.
    \end{itemize}
\end{proof}

\begin{lemma}\label{lemma:raising-type}
    If $e$ is an expression where $\Sigma ; \Gamma \proves e : \TTau$, and $\sigma$ is a substitution mapping variables to values with matching types,
    then $\Sigma ; \Gamma ; \Delta ; A \proves \LiftAnn*{e}[\sigma] : \tplab{\TTau}{\ell}$.
\end{lemma}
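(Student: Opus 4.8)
The plan is to prove this by induction on the derivation of the core typing judgment $\Sigma ; \Gamma \proves e : \TTau$ (equivalently, on the structure of $e$), exploiting that the annotated lifting operator $\LiftAnn{\ell}{\cdot}$ is defined by structural recursion so that each of its clauses lines up with exactly one rule of the combined type system. Before the induction I would pin down the meaning of ``matching types'' for $\sigma$: for every variable $x \ty \TTau_x$ in its domain, $\sigma(x)$ is a closed value with $\Sigma ; \Gamma ; \Delta ; A \cproves \sigma(x) : \tplab{\TTau_x}{\ell}$, i.e.\ the annotated argument values already carry a label compatible with the target procedure $\ell$. This is exactly the invariant maintained by \ruleref{EA-Call}, which is the rule this lemma exists to serve, and I would also record that the statement is only meaningful where $\LiftAnn{\ell}{e}$ is defined (no nested \TProveN when $\ell$ includes \Prove).

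The base cases handle the value forms. For a variable, $\LiftAnn{\ell}{x}[\sigma]$ reduces to $\sigma(x)$, and the matching-types assumption gives the conclusion immediately. For the constants and for locations, $\LiftAnn{\ell}{v}$ produces the annotated value $(v)_\annell$, which types at $\tplab{\TTau}{\ell}$ through \ruleref{TC-LiftVal} (or \ruleref{TC-LiftValCP} when $\ell = \ComputeProve$) using the core value derivation supplied by the hypothesis. For the compound values, namely class objects $\New C(\overline{v})$ and proof objects, the lifting distributes componentwise over $\overline{v}$, so I apply the induction hypothesis to each subvalue and reassemble with the same rule.

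For the inductive cases I separate the single-procedure operations from the control-flow operations. Every operation that $\LiftAnn{\ell}{\cdot}$ wraps as a whole at $\ell$---references, dereference, assignment, cast, field access, method call, \TProveN, and \TVerifyN---corresponds to \ruleref{TC-Lift}, whose premise is a core derivation at label $\ell$ over the restricted context $\Gamma|_\ell$. This mirrors \Cref{lemma:lifting-e-preserves-type}, except that here I must allow context-aware types, so I cannot route through that lemma and instead discharge the core subderivation directly from the hypothesis. The let- and if-expression cases are where $\LiftAnn{\ell}{\cdot}$ recurses into subterms; I apply the induction hypothesis to each subexpression and assemble via \ruleref{TC-Let} and \ruleref{TC-If}, threading the annotated label through the bound-variable type. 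Throughout, the substitution is pushed inside by a standard substitution-preservation argument (\Cref{thm:subst-pres-surf-typing}, \Cref{cor:safe-sub}), since $\LiftAnn{\ell}{\cdot}$ and $[\sigma]$ commute up to replacing annotated variable nodes by the annotated values in the range of $\sigma$.

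The main obstacle is the context-restriction step. The combined lift rules type a lifted expression against $\Gamma|_\ell$ rather than $\Gamma$, and $\Gamma|_\ell$ retains only the variables whose labeled types are subtypes of their $\ell$-labeled form---precisely the variables usable by procedure $\ell$ without desynchronizing a \ComputeProve reference. I therefore must show that after substitution every surviving free variable still lands in $\Gamma|_\ell$ and that each substituted value respects the same label-compatibility condition; this is where the invariance of labels for context-aware types is essential, and it is the reason the statement threads $\ell$ uniformly through both the lifting and the substituted values. A secondary delicacy is the method-call clause: lifting the body at the method's own label and then substituting arguments annotated at the caller's label must yield a single consistent annotation, which relies on the syntactic label attached to calls (the same device that makes $\cpstep$ deterministic).
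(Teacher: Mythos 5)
Your proposal is correct and takes essentially the same approach as the paper: the paper's proof is a single line stating ``by induction on the definition of $\LiftAnn*{\cdot}$ and then applying the relevant typing rule,'' which is exactly the structural induction you describe, with each lifting clause matched to the corresponding combined typing rule. Your discussion of the context restriction $\Gamma |_\ell$, the commutation of lifting with substitution, and the substitution-preservation lemmas is a faithful elaboration of what that terse proof leaves implicit.
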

\begin{proof}
    By induction on the definition of $\LiftAnn*{\cdot}$ in \Cref{app:ann-lifting-def} and then applying the relevant typing rule.
\end{proof}

\begin{lemma}\label{lemma:value-substitution-proj}
    If $\sigma$ is a substitution mapping variables to values, $e$ is an annotated expression,
    and $\denL{e[\sigma]}$ is a value,
    then $\denL{e}$, given that it exists, is also a value.
\end{lemma}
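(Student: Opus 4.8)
The plan is to prove the statement by structural induction on the annotated expression $e$, universally quantified over the substitution $\sigma$, working directly from the annotated projection clauses of \Cref{app:intermediate-projections}. The guiding observation is that the annotated semantics substitutes a value for a variable while carrying over the variable's label: the reduction of a let-binding in \Cref{app:annotated-semantics} performs $x_\annell \mapsto \SVal_\annell$, so substitution never alters the annotation decorating a node. Consequently every label-based case split in $\denL{\cdot}$ is invariant under $\sigma$, and whether a clause emits its ``genuine'' syntactic form or collapses to the inert unit $()$ is determined solely by the node's label and the projection target $\ell$, never by the particular values introduced by $\sigma$. This is exactly the payoff of working with annotated programs rather than re-deriving labels from types.

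First I would dispatch the non-recursive cases. If $e$ is an annotated value or a (wrapped) location, then $\denL{e}$ is immediately a value ($v$, $()$, or a raw location $\iota$), so the conclusion holds outright. For each remaining leaf-style operation — reference allocation, dereference $\bang v$, reference assignment, cast, field access, method call, $\Alloc{\TTau}$, computed-input assignment $x \witassign v$, $\Provet$, and $\Verify$ — the projection produces either a fixed expression that is syntactically never a value or else $()$, and the choice between them is governed only by labels. Since the hypothesis gives that $\denL{e[\sigma]}$ is a value, the non-value alternative is excluded; the surviving branch is $()$, and because the controlling labels are unchanged by $\sigma$, $\denL{e}$ selects $()$ as well. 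The form $\witderef x$ always projects to the non-value $\witderef x$ (the name $x$ is a computed input, not in $\dom(\sigma)$), so its hypothesis is vacuously unsatisfiable.

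The cases that carry the real content are $\Let$ and $\If$, and they are where the induction hypothesis is essential. For $e = \Let x\ty\tplab{\TTau}{\ell'} = e_1 \In e_2$, if $\ell \subsetl \ell'$ then $\denL{e[\sigma]}$ is itself a $\Let$, hence not a value, contradicting the hypothesis; so $\ell \not\subsetl \ell'$. In that regime $\denL{e[\sigma]}$ is either $\denL{e_1[\sigma]} \Tseq \denL{e_2[\sigma]}$ or $\denL{e_2[\sigma]}$, and as the sequence is not a value, the hypothesis forces the collapsing branch: $\denL{e_1[\sigma]}$ is a value and $\denL{e[\sigma]} = \denL{e_2[\sigma]}$ is a value. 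Since $\denL{e}$ is assumed to exist, both $\denL{e_1}$ and $\denL{e_2}$ are defined (evaluating $\denL{e_1}$ is needed even to choose the sub-case), so I may apply the induction hypothesis to $e_1$ to conclude $\denL{e_1}$ is a value; the original $e$ therefore also takes the collapsing branch $\denL{e} = \denL{e_2}$, and a second application of the hypothesis to $e_2$ (with the capture-avoiding restriction of $\sigma$) shows $\denL{e_2}$, hence $\denL{e}$, is a value. The $\If$ case is analogous: the hypothesis again rules out $\ell \subsetl \ell'$, and in the $\ell \not\subsetl \ell'$ regime definedness of $\denL{e}$ forces the branches to project to a common value $\denL{e} = \denL{e_1}$, whereupon the induction hypothesis on $e_1$ delivers the result.

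The main obstacle I anticipate is precisely the interaction between the partiality of projection and its value-sensitivity in the $\Let$ and $\If$ clauses. I must lean on the stated assumption that $\denL{e}$ exists to pin down which branch $\denL{e}$ takes, and on the induction hypothesis to transfer ``is a value'' backward from the substituted subprojections to the unsubstituted ones — the two are entangled, so the argument has to thread definedness and valueness together. Finally, by the preamble restricting these lemmas to expressions without (nested) compute-and-prove blocks, the $\SCnpN$ projection clause does not arise; were it needed it would fall under the label-determined pattern, projecting to $()$ at \Prove and to a lifted (non-value) $\Let$ at \Compute, and so be handled exactly as the leaf operations above.
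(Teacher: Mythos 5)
Your proposal is correct and follows essentially the same route as the paper's proof: structural induction on the annotated expression, observing that substitution cannot change annotations, with the substantive work in the \SLetN and \SIfN cases where the value hypothesis rules out the label-inclusion branch and forces the collapsing branch, after which the induction hypothesis transfers valueness from the substituted subprojections back to the unsubstituted ones. Your treatment is in fact slightly more careful than the paper's in threading the definedness assumption (e.g.\ using it to obtain $\denL{e_1} = \denL{e_2}$ in the \SIfN case, which the paper attributes a bit loosely to induction), but this is a refinement of the same argument, not a different one.
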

\begin{proof}
    By induction on $e$.
    We show some cases, where the rest can be inferred.
    \begin{itemize}
        \item ($e = (v)_\annellp$)

        This case is trivial since~$\denL{(v)_\annellp}$ always projects to either~$v$ or~$()$,
        both of which are values.

        \item ($e = \Let x\ty\tplab{\TTau}{\ell'} = e_1 \In e_2$)

        In order for $\denL{e[\sigma]}$ to be a value, it must be the case
        that $\ell \not\subseteq \ell'$, and both $\denL{e_1[\sigma]}$ and $\denL{e_2[\sigma]}$ are values.
        But then by induction both $\denL{e_1}$ and $\denL{e_2}$ are values, so
        $\denL{e} = \denL{e_2}$, which is a value.
        
        \item ($e = \If (v)_\annellp \Then e_1 \Else e_2$)

        In order for $\denL{e[\sigma]}$ to be a value, it must be the case
        that $\ell \not\subseteq \ell'$, and that $\denL{e_1[\sigma]} = \denL{e_2[\sigma]}$ is a value.
        Then by induction $\denL{e_1} = \denL{e_2}$ is a value, so
        $\denL{e} = \denL{e_1} = \denL{e_2}$, which is a value.
        
        \item ($e = \Provet \alpha = \CircuitDefault \With \overline{v}[\overline{w}]$)

        It must be the case that $\ell = \Prove$ and $\denL{e[\sigma]} = ()$, so clearly $\denL{e} = ()$ is a value.
        The same argument applies to $\Verify (p)_\annellp \Proves \alpha \With \overline{v}$.
    \end{itemize}
\end{proof}

\begin{lemma}[Simultaneous Substitution Preserves Projections]
    If $\sigma_1$ is a substitution mapping variables to values, $e$ an annotated expression,
    $\denL{\sigma_1(x)} = \sigma_2(x)$ for each variable~$x$,
    and $\denL{e}$ is defined,
    then $\denL{e[\sigma_1]} = \denL{e}[\sigma_2]$, where both sides are also defined.
\end{lemma}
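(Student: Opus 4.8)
The plan is to prove this substitution lemma by structural induction on the annotated expression $e$, where $\sigma_2$ is the substitution determined by $\sigma_1$ through the hypothesis $\denL{\sigma_1(x)} = \sigma_2(x)$, and the projection $\denL{\cdot}$ is read off the syntax-directed rules of \Cref{app:intermediate-projections}. Throughout I assume the standard convention that substitution is capture-avoiding, so that a variable bound by a $\Let$ never lies in $\dom(\sigma_1)$ or $\dom(\sigma_2)$; this is what lets substitution commute under binders. Since the two sides of the desired equation are built by the same syntax-directed map, the proof is essentially a matter of pushing the inductive hypothesis through each projection rule, with care taken wherever a rule branches on whether a subexpression projects to a value.

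For the base case $e = (v)_\annellp$, the projection is $v$ (recursively projected) or $\TUnitVal$ according only to whether $\ell \subsetl \annellp$; this test is fixed by the annotation and is therefore unchanged by substitution, and the variable sub-case collapses precisely to the premise $\denL{\sigma_1(x)} = \sigma_2(x)$. All operations whose projection rules pass through their subterms without inspecting them — reference allocation, dereference, assignment, the computed-input operations $\SWitAssign*$ and $\SWitDeref*$, casts, field access, method calls, and the $\ProveN$ and $\VerifyN$ forms — follow immediately by applying the inductive hypothesis to each immediate subterm and reassembling.

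The interesting cases are $\Let$ and $\If$, since these are the rules in which projection collapses an inert $v \seq e$ into $e$. Consider $e = \Let x\ty\tplab{\TTau}{\ell'} = e_1 \In e_2$ with $\ell \not\subsetl \ell'$: projection of $e$ selects the $\denL{e_2}$ branch when $\denL{e_1}$ is a value and the $\denL{e_1} \seq \denL{e_2}$ branch otherwise, while projection of $e[\sigma_1]$ makes the analogous split on $\denL{e_1[\sigma_1]}$. To show the two splits agree I would invoke the inductive hypothesis $\denL{e_1[\sigma_1]} = \denL{e_1}[\sigma_2]$ together with \Cref{lemma:value-substitution-proj}: substituting values for variables into a projected value again yields a value (one direction), and the contrapositive of \Cref{lemma:value-substitution-proj} gives that if $\denL{e_1}$ is not a value then $\denL{e_1[\sigma_1]}$ is not one either (the other direction). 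Once the branches are known to coincide, the equation follows from the inductive hypotheses on $e_1$ and $e_2$ and the fact that $x \notin \dom(\sigma_2)$. The $\If$ case is similar but simpler: its branch depends only on the fixed guard annotation $\annellp$, which substitution cannot alter, so the split is automatically consistent and each arm is discharged by the inductive hypothesis, using in the collapsing branch that $\denL{e_1} = \denL{e_2}$ is preserved under substitution.

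I expect the main obstacle to be exactly this alignment of the case analysis in the $\Let$ projection. The remainder is a routine commutation of substitution with a syntax-directed transformation, but the sequencing-collapse optimization means the \emph{shape} of the output can differ between the two candidate branches, so the argument genuinely relies on \Cref{lemma:value-substitution-proj}, in both directions, to guarantee that neither $\sigma_1$ nor $\sigma_2$ moves a subexpression across the value/non-value boundary on which the rule branches. A secondary point requiring care is that the annotation sitting atop $\sigma_1(x)$ be consistent with the annotation the elaboration assigns at the substitution site, so that $\denL{\sigma_1(x)}$ really does equal $\sigma_2(x)$ as the premise demands; this is inherited from the annotation discipline and should not introduce new difficulties.
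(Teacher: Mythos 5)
Your proposal is correct and follows essentially the same route as the paper's proof: structural induction on the annotated expression, observing that substitution cannot change annotations (so the case selection in each projection rule is stable), and using \Cref{lemma:value-substitution-proj} (via its contrapositive) to align the value/non-value branching in the $\Let$ case, with the $\If$ case discharged by preservation of $\denL{e_1} = \denL{e_2}$ under substitution. The only cosmetic difference is that the forward direction in your $\Let$ case (projected value stays a value after substitution) follows directly from the inductive hypothesis plus closure of values under value-substitution rather than from the cited lemma, which is exactly how the paper argues it.
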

\begin{proof}
    By induction on $e$.
    We show some cases, where the rest can be inferred.
    \begin{itemize}
        \item ($e = v$)

        The cases when $e$ is a value are all straightforward, noting that the
        action of a substitution does not change any labels in the expression, so
        identical cases are chosen for the left and right-hand projections.

        \item ($\Let x\ty\tplab{\TTau}{\ell'} = e_1 \In e_2$)

            If $\ell \subsetl \ell'$ then by induction
            \begin{align*}
                \denL{(\Let x\ty\tplab{\TTau}{\ell'} = e_1 \In e_2)[\sigma_1]} &= \denL{\Let x\ty\tplab{\TTau}{\ell'} = e_1[\sigma_1] \In e_2[x \mapsto x, y \mapsto \sigma_1(y)]} \\
                &= \Let x\ty\TTau = \denL{e_1[\sigma_1]} \In \denL{e_2[x \mapsto x, y \mapsto \sigma_1(y)]} \\
                &= \Let x\ty\TTau = \denL{e_1}[\sigma_2] \In \denL{e_2}[x \mapsto x, y \mapsto \sigma_2(y)] \\
                &= (\Let x\ty\TTau = \denL{e_1} \In \denL{e_2})[\sigma_2] \\
                &= \denL{\Let x\ty\tplab{\TTau}{\ell'} = e_1 \In e_2}[\sigma_2].
            \end{align*}
            Otherwise suppose $\ell \not\subsetl \ell'$ and $\denL{e_1} = v$.
            By induction, $\denL{e_1[\sigma_1]} = v[\sigma_2]$, which must also be a value.
            Therefore
            \begin{align*}
                \denL{(\Let x\ty\tplab{\TTau}{\ell'} = e_1 \In e_2)[\sigma_1]} &= \denL{e_2[\sigma_1]} \\
                &= \denL{e_2}[\sigma_2] \\
                &= \denL{\Let x\ty\tplab{\TTau}{\ell'} = e_1 \In e_2}[\sigma_2].
            \end{align*}
            Finally, suppose that $\ell \not\subsetl \ell'$ but $\denL{e_1}$ is not a value.
            Then by \Cref{lemma:value-substitution-proj}, $\denL{e_1[\sigma_1]}$ is also not a value, so
            \begin{align*}
                \denL{(\Let x\ty\tplab{\TTau}{\ell'} = e_1 \In e_2)[\sigma_1]} &= \denL{e_1[\sigma_1]} ; \denL{e_2[\sigma_1]} \\
                &= \denL{e_1}[\sigma_2] ; \denL{e_2}[\sigma_2] \\
                &= \denL{\Let x\ty\tplab{\TTau}{\ell'} = e_1 \In e_2}[\sigma_2].
            \end{align*}

        \item ($\If (v)_\annellp \Then e_1 \Else e_2$)
        
            If $\ell \subsetl \annellp$, then
            \begin{align*}
                \denL{(\If (v)_\annellp \Then e_1 \Else e_2)[\sigma_1]} &= \denL{\If (v[\sigma_1])_\annellp \Then e_1[\sigma_1] \Else e_2[\sigma_1]} \\
                &= \If v[\sigma_2] \Then \denL{e_1[\sigma_1]} \Else \denL{e_2[\sigma_1]} \\
                &= \If v[\sigma_2] \Then \denL{e_1}[\sigma_2] \Else \denL{e_2}[\sigma_2] \\
                &= \denL{\If (v)_\annellp \Then e_1 \Else e_2}[\sigma_2].
            \end{align*}
            Otherwise if $\ell \not\subsetl \annellp$ and $\denL{e_1} = \denL{e_2}$, then
            by induction $\denL{e_1[\sigma_1]} = \denL{e_1}[\sigma_2] = \denL{e_2}[\sigma_2] = \denL{e_2[\sigma_1]}$, so
            \begin{align*}
                \denL{(\If (v)_\annellp \Then e_1 \Else e_2)[\sigma_1]} &= \denL{e_1[\sigma_1]} \\
                &= \denL{e_1}[\sigma_2] \\
                &= \denL{\If (v)_\annellp \Then e_1 \Else e_2}[\sigma_2].
            \end{align*}
            Otherwise the projection is undefined, which violates the assumption.
    \end{itemize}
\end{proof}

\begin{corollary}[Substitution Preserves Projections]\label{lemma:lessthan-substitution1}
    For any annotated $e$ and $v$, if $\denL{e}$ is defined, then $\denL{e[x \mapsto v]} = \denL{e}[x \mapsto \denL{v}]$,
    and both sides are defined.
\end{corollary}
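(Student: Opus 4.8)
The plan is to obtain this corollary directly as the single-variable instance of the immediately preceding \emph{Simultaneous Substitution Preserves Projections} lemma, so essentially no new work is required beyond a careful instantiation. First I would set $\sigma_1 \defeq [x \mapsto v]$ and $\sigma_2 \defeq [x \mapsto \denL{v}]$, viewing both as substitutions that act as the identity on every variable other than $x$. Because $v$ is an (annotated) value, its projection $\denL{v}$ is always defined---the value cases of the annotated projection definition send a value either to itself or to $()$, both of which are values---so $\sigma_2$ is itself well-defined.

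Next I would discharge the hypothesis of the lemma, namely that $\denL{\sigma_1(z)} = \sigma_2(z)$ holds for each variable $z$. For $z = x$ this reduces to $\denL{v} = \denL{v}$, which is immediate by the choice of $\sigma_2$; for every other variable, both $\sigma_1$ and $\sigma_2$ act as the identity, so the condition is satisfied under the usual capture-avoiding convention in which only the free occurrences of $x$ are replaced and variables outside the substitution domain are left untouched. With $\denL{e}$ assumed defined, I would then invoke the lemma to conclude $\denL{e[\sigma_1]} = \denL{e}[\sigma_2]$ with both sides defined, and unfolding $\sigma_1$ and $\sigma_2$ gives precisely $\denL{e[x \mapsto v]} = \denL{e}[x \mapsto \denL{v}]$, as required.

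I do not expect a genuine obstacle here, since the statement is a pure specialization; the only point demanding care is reconciling the single-variable substitution notation of the corollary with the total-substitution framing of the simultaneous lemma. The main thing to confirm is that restricting $\sigma_1$ and $\sigma_2$ to the singleton domain $\{x\}$ (identity elsewhere) is a legitimate instance, so that the lemma's side condition $\denL{\sigma_1(z)} = \sigma_2(z)$ is met without appealing to any nontrivial fact about projecting bare variables outside the substitution.
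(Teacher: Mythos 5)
Your proposal is correct and matches the paper exactly: the paper states this as a corollary of the Simultaneous Substitution Preserves Projections lemma with no further proof, i.e., precisely the single-variable instantiation $\sigma_1 = [x \mapsto v]$, $\sigma_2 = [x \mapsto \denL{v}]$ (identity elsewhere) that you carry out. Your attention to the side condition on variables outside the domain is consistent with how the paper itself treats identity-extended substitutions (e.g., its let case extends substitutions with $x \mapsto x$), so no additional argument is needed.
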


\begin{lemma}\label{lemma:raise-lessthan-unit}
    Suppose $e$ is an expression and $\sigma$ is a substitution mapping variables to values. 
    If $\ell' \not\subsetl \ell$ and $\ell \not\subsetl \ell'$ then $\denL{\LiftAnn{\ell'}{e}[\sigma]} = ()$.
\end{lemma}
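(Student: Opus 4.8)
The plan is to factor the statement through a substitution-free version and then transport it across $\sigma$ with the already-proved Simultaneous Substitution Preserves Projections lemma, so that I never reason directly about how substitution interacts with annotations. First I would pin down the labels. The procedure labels form the lattice $\{\Compute, \Prove, \ComputeProve\}$ with $\Compute, \Prove \subsetl \ComputeProve$, in which the only mutually incomparable pair is $\{\Compute, \Prove\}$. Hence $\ell' \not\subsetl \ell$ together with $\ell \not\subsetl \ell'$ forces $\{\ell, \ell'\} = \{\Compute, \Prove\}$; in particular $\ell \neq \ell'$, neither label is $\ComputeProve$, and $\ell \not\subsetl \ell'$. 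I also assume $\LiftAnn{\ell'}{e}$ is defined, as the statement requires, which by \Cref{lemma:lifting-e-preserves-type} reflects that $e$ is liftable at $\ell'$.

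The core of the argument is the substitution-free claim $\denL{\LiftAnn{\ell'}{e}} = ()$, proved by induction on $e$ from the annotated lifting definition (\Cref{app:ann-lifting-def}) and the annotated projection rules (\Cref{app:intermediate-projections}). For values and for the operations whose projection is keyed on the annotation of their principal subterm---references, dereferences, assignments, casts, field accesses, method calls, and verifications---lifting stamps that subterm with $\ell'$, and since $\ell \not\subsetl \ell'$ and $\ell' \neq \ComputeProve$ (so no \ComputeProve-specific branch fires), every such projection rule falls through to its ``otherwise'' clause and returns $()$. The two constructs whose projection is keyed on the target $\ell$ rather than on an annotation, namely $\AllocN$ and $\ProveN$, are typeable only at $\Compute$, so in any lift arising from a well-typed source expression they occur only when $\ell' = \Compute$; then $\ell = \Prove$, and the projection of each at $\Prove$ is $()$. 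Computed-input operations and compute-and-prove blocks lie outside the domain of $\LiftAnn{\ell'}{\cdot}$ in this fragment and so do not arise.

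The two genuinely inductive cases are the let-binding and the conditional. Lifting annotates the let's bound variable (respectively the guard) with $\ell'$, so the projection is not in its first branch because $\ell \not\subsetl \ell'$. For the let, the induction hypothesis gives $\denL{\LiftAnn{\ell'}{e_1}} = ()$, which is a value, so the second branch applies and the result is $\denL{\LiftAnn{\ell'}{e_2}} = ()$, again by the hypothesis. For the conditional, the hypothesis makes both branches project to $()$; they are therefore equal, the defined second branch is selected, and the result is $()$ rather than $\diverges$.

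With $\denL{\LiftAnn{\ell'}{e}} = ()$ in hand, I would conclude by applying the Simultaneous Substitution Preserves Projections lemma to the annotated expression $\LiftAnn{\ell'}{e}$, the substitution $\sigma$, and the induced projection substitution $x \mapsto \denL{\sigma(x)}$; its hypothesis that $\denL{\LiftAnn{\ell'}{e}}$ be defined holds because that projection equals $()$. This yields $\denL{\LiftAnn{\ell'}{e}[\sigma]} = \denL{\LiftAnn{\ell'}{e}}[x \mapsto \denL{\sigma(x)}] = ()$, the last step being immediate since $()$ is closed. I expect the main obstacle to be bookkeeping rather than insight: verifying that each annotation-keyed projection rule really does reach its $()$ clause under $\ell \not\subsetl \ell'$ and $\ell' \neq \ComputeProve$---the assignment and reference rules are the fiddliest, since their fall-through depends on the lifted reference carrying annotation $\ell' \notin \{\ell, \ComputeProve\}$---and correctly discharging the $\AllocN$ and $\ProveN$ cases via the label discipline. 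The let and conditional cases are the only uses of the induction hypothesis, and their one subtlety is simply that $()$ is a value, which is exactly what triggers the collapsing branches of the projection.
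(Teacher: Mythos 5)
Your proof is correct, and it reaches the conclusion by a slightly different decomposition than the paper. The paper proves the lemma by a single direct induction on $e$ in which the substitution $\sigma$ is carried through every case (using that substitution commutes with the syntax constructors and never disturbs annotations), whereas you first establish the substitution-free claim $\denL{\LiftAnn{\ell'}{e}} = ()$ by induction and then discharge $\sigma$ in one step via the previously proven Simultaneous Substitution Preserves Projections lemma, concluding $\denL{\LiftAnn{\ell'}{e}[\sigma]} = ()[\,\cdot\,] = ()$ since $()$ is closed. This is legitimate: that lemma appears before this one in the paper's dependency order, its hypothesis that the projection be defined is exactly what your induction yields, and $\denL{\sigma(x)}$ is always defined for annotated values, so there is no circularity or gap. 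The core induction (values and annotation-keyed operations fall through to the $()$ clause because $\ell \not\subsetl \ell'$; the let and conditional cases use the induction hypothesis together with the collapsing branches, where the key fact is that $()$ is a value and that equal branch projections avoid the $\diverges$ clause) is structurally identical to the paper's. Two points where you are actually more careful than the paper: the paper's blanket claim that all non-let/if constructs are annotation-keyed is not literally true for $\AllocN$ and $\ProveN$ (and compute-and-prove blocks), whose projections are keyed on the target label $\ell$ rather than on the annotation; you correctly observe that these constructs type only at $\Compute$, so under the mutual-incomparability hypothesis they can only appear lifted at $\ell' = \Compute$ with $\ell = \Prove$, where their projections are $()$. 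Your approach buys modularity---substitution reasoning is done once, in the shared lemma---at the cost of invoking a stronger result, while the paper's inlined treatment stays self-contained because the substitution is essentially inert for an annotation-keyed projection.
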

\begin{proof}
    By induction on $e$.

    For all values, the raising operation gives them an annotated label $\ell'$: $\LiftAnn{\ell'}{v} = (\Lift{\ell'}{v}[\sigma])_\annellp$.
    Thus,$\denL{\LiftAnn{\ell'}{v}[\sigma]} = ()$.

    There are no computed input operations since $e$ comes from the outer language.

    For all but $\LetN$ and $\IfN$ the raising operation adds an annotated label to the value that determines projection, so they also project to $()$.

    Suppose $e = \IfThenElse{v}{e_1}{e_2}$, meaning $\LiftAnn{\ell'}{e}[\sigma] = \If (v)_\annellp \Then \LiftAnn{\ell'}{e_1}[\sigma] \Else \LiftAnn{\ell'}{e_2}[\sigma]$.
    By the inductive hypothesis, $\denL{\LiftAnn{\ell'}{e_1}[\sigma]} = \denL{\LiftAnn{\ell'}{e_2}[\sigma]} = ()$.
    Therefore, since $\ell \not\subsetl \annellp$, we get $\denL{\LiftAnn{\ell'}{e}[\sigma]} = ()$.

    Finally, suppose that $e = \LetIn{x\ty\TTau}{e_1}{e_2}$, meaning $\LiftAnn{\ell'}{e}[\sigma] = \Let x\ty\tplab{\TTau}{\ell'} = \LiftAnn{\ell'}{e_1}[\sigma] \In \LiftAnn{\ell'}{e_2}[\sigma]$.
    By the inductive hypothesis, $\denL{\LiftAnn{\ell'}{e_1}[\sigma]} = \denL{\LiftAnn{\ell'}{e_2}[\sigma]} = ()$.
    Therefore, since $\ell \not\subsetl \ell'$, we get $\denL{\LiftAnn{\ell'}{e}[\sigma]} = ()$.
\end{proof}

\begin{lemma}\label{lemma:raise-lessthan-all}
    Suppose $e$ is an expression and $\sigma$ is a substitution mapping variables to values.
    If $\ell = \ell'$ then $\denL{\LiftAnn{\ell'}{e}[\sigma]} = e[\sigma]$.
\end{lemma}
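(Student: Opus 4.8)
The plan is to prove the statement by structural induction on $e$, mirroring the argument for its companion \Cref{lemma:raise-lessthan-unit} but exploiting the reflexivity of $\subsetl$ rather than its failure. Since $\ell = \ell'$ we always have $\ell \subsetl \ell'$, so every clause of the annotated projection in \Cref{app:intermediate-projections} that branches on a test of the form ``$\ell \subsetl \annell$'' takes the \emph{first}, retaining branch. Intuitively, $\LiftAnn{\ell}{e}$ stamps the label $\ell$ onto every node of $e$, and projecting at the same $\ell$ simply strips those stamps back off, leaving the underlying core term, while the substitution $\sigma$ is carried along unchanged. Throughout I would assume, as stated at the start of this subsection, that $e$ contains no (nested) compute-and-prove blocks, since those are compiled away before method calls. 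This restriction is essential: $\denL[\Compute]{\LiftAnn{\Compute}{\CnpN\,\cdots}}$ yields the \emph{compiled} block rather than the block itself, so the equality would otherwise fail outright.

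First I would dispatch the base cases, where $e$ is a value (variable, unit, boolean, object, proof object, or location) or a value-argument redex (reference creation, dereference, assignment, cast, field projection, method call, allocation, and proof verification). For each such form the definition of $\LiftAnn{\ell}{\cdot}$ in \Cref{app:ann-lifting-def} places an outermost $\annell$ annotation and lifts every immediate subvalue at $\ell$; applying $[\sigma]$ and then $\denL{\cdot}$ with $\ell \subsetl \ell$ removes the annotation and recurses into the subvalues, reconstructing exactly the correspondingly-substituted core form. The only subtlety is the value/variable case, where I would note that $\denL{(v)_\annell[\sigma]} = v[\sigma]$ because projection at $\ell$ acts homomorphically on $\ell$-annotated values and commutes with $\sigma$, which is the same reasoning used to thread $\sigma$ through the companion lemma.

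Next come the two genuine inductive cases, $\LetN$ and $\IfN$. For $e = \LetIn{x\ty\TTau}{e_1}{e_2}$, lifting records the bound-variable label as $\ell$, and since $\ell \subsetl \ell$ the let clause of the projection gives $\LetIn{x\ty\TTau}{\denL{\LiftAnn{\ell}{e_1}[\sigma]}}{\denL{\LiftAnn{\ell}{e_2}[\sigma]}}$; the induction hypothesis on $e_1$ and $e_2$ rewrites this to $\LetIn{x\ty\TTau}{e_1[\sigma]}{e_2[\sigma]} = e[\sigma]$. The $\IfN$ case is analogous: the guard is annotated at $\ell$, so the retaining branch yields $\IfThenElse{v[\sigma]}{\denL{\LiftAnn{\ell}{e_1}[\sigma]}}{\denL{\LiftAnn{\ell}{e_2}[\sigma]}}$, and the hypothesis on the two branches finishes it.

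Finally I would treat the $\ProveN$ case, the one form whose annotated projection splits on whether $\ell$ is $\Compute$ or $\Prove$; this is the step I expect to require the most care to state cleanly. The key point is that a $\ProveN$ term can only be reached here with $\ell = \ell' = \Compute$: \ruleref{T-Prove} forces it to carry the label $\Compute$, and (as noted in the proof of \Cref{lemma:lifting-e-preserves-type}) lifting is never applied to a method body containing a $\ProveN$ across the $\Prove$ procedure. Hence the projection lands in its $\Compute$ branch and returns the $\ProveN$ term with $\sigma$ applied, as required. The $\VerifyN$ case needs no such attention, since its projection uses only the reflexive test $\ell \subsetl \annell$ and so is already covered by the value-argument treatment above. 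Assembling all the cases completes the induction.
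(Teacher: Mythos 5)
Your proof is correct and follows essentially the same route as the paper's: induction on $e$, observing that $\ell = \ell'$ forces every projection test of the form $\ell \subsetl \annell$ into the retaining branch, with values and value-argument forms as base cases and $\LetN$/$\IfN$ discharged by the inductive hypothesis. Your explicit handling of the $\ProveN$ case (whose annotated projection branches on $\ell$ itself rather than on the annotation, so you invoke typing to force $\ell = \Compute$) is in fact more careful than the paper's blanket claim that the annotation determines projection for all non-$\LetN$/$\IfN$ forms.
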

\begin{proof}
    By induction on $e$.

    For all values, the raising operation gives them an annotated label $\ell'$: $\LiftAnn{\ell'}{v}[\sigma] = (\Lift{\ell'}{v}[\sigma])_\annellp$.
    Thus,$\denL{\LiftAnn{\ell'}{v}[\sigma]} = v[\sigma]$.

    There are no computed input operations since $e$ comes from the outer language.

    For all but $\LetN$ and $\IfN$ the raising operation adds an annotated label to the value that determines projection, so they also project to non-units.

    Suppose $e = \IfThenElse{v}{e_1}{e_2}$, meaning $\LiftAnn{\ell'}{e}[\sigma] = \If (v)_\annellp \Then \LiftAnn{\ell'}{e_1}[\sigma] \Else \LiftAnn{\ell'}{e_2}[\sigma]$.
    By the inductive hypothesis, $\denL{\LiftAnn{\ell'}{v}[\sigma]} = v[\sigma]$ and $\denL{\LiftAnn{\ell'}{e_1}[\sigma]} = e_1[\sigma]$ and $\denL{\LiftAnn{\ell'}{e_2}[\sigma]} = e_2[\sigma]$.
    Thus, since $\ell' = \ell$, $\denL{\LiftAnn{\ell'}{e}[\sigma]} = \IfThenElse{v[\sigma]}{e_1[\sigma]}{e_2[\sigma]} = (\IfThenElse{v}{e_1}{e_2})[\sigma]$.

    Finally, suppose that $e = \LetIn{x\ty\TTau}{e_1}{e_2}$, meaning $\LiftAnn{\ell'}{e}[\sigma] = \Let x\ty\tplab{\TTau}{\ell'} = \LiftAnn{\ell'}{e_1}[\sigma] \In \LiftAnn{\ell'}{e_2}[\sigma]$.
    By the inductive hypothesis, $\denL{\LiftAnn{\ell'}{e_1}[\sigma]} = e_1[\sigma]$ and $\denL{\LiftAnn{\ell'}{e_2}[\sigma]} = e_2[\sigma]$.
    Thus, since $\ell' = \ell$, $\denL{\LiftAnn{\ell'}{e}[\sigma]} = \LetIn{x\ty\TTau}{e_1[\sigma]}{e_2[\sigma]} = e[\sigma]$.
\end{proof}

\begin{lemma}[Annotated Label Preservation]\label{lemma:annotated-preservation}
    If $\Sigma ; \Gamma ; \Delta ; A \proves e : \tplab{\TTau}{\ell'} \produces A'$ and $\langle e \divi \storetuple \rangle \cpstep^* \langle (v)_\annell \divi \storetuple' \rangle$ then $\annell = \ell'$.
\end{lemma}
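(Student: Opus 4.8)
The plan is to induct on the length of the reduction sequence $\langle e \divi \storetuple \rangle \cpstep^* \langle (v)_\annell \divi \storetuple' \rangle$, carrying along the invariant that the annotated term at every intermediate configuration is \emph{coherent} with a combined-typing derivation assigning it the \emph{same} label $\ell'$ (the underlying type may drift to a supertype, but the label is held fixed). Here coherence means the annotated term is exactly the type-directed annotation $\FAnn{D}$ of some derivation $D$ concluding $\tplab{\TTau}{\ell'}$, which is precisely how the initial $e$ is produced. The first ingredient is a ``label-reading'' observation: for any coherent term, the outermost annotation equals the label $\ell'$ of its derivation's conclusion. This follows by inspecting the definitions of $\FAnn{\cdot}$ and $\LiftAnn{\ell'}{\cdot}$ in the appendix, since both place $\ell'$ at the head of the term (e.g.\ $\LiftAnn{\ell'}{\True} = (\True)_{\annellp}$, and likewise for every value and expression former). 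In particular, a coherent \emph{value} $(v)_\annell$ derived at $\tplab{\TTau}{\ell'}$ has $\annell = \ell'$, which discharges the base case.

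The inductive step reduces to showing that a single $\cpstep$ preserves coherence with the label unchanged. I would obtain this from the unannotated machinery already in hand: the bisimulation between the annotated and unannotated semantics (\Cref{lemma:annotated-bisim}) lines up the annotated step $e \cpstep e_1$ with an unannotated step of the stripped term, and Preservation of Inner \zkstrudel (\Cref{thm:abs-preservation}) shows the stripped reduct is well-typed at $\tplab{\TTau'}{\ell'}$ with the \emph{identical} label $\ell'$ and $\TTau <: \TTau'$ (the configuration-safety and nested-ok hypotheses come along and are themselves maintained by that theorem). Re-annotating the reduct's derivation via $\FAnn{\cdot}$ then yields a coherent annotated term, and the work of the step is to check this term is syntactically the annotated reduct $e_1$. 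Applying the induction hypothesis to $e_1 \cpstep^* (v)_\annell$, now carried by a label-$\ell'$ derivation, gives $\annell = \ell'$ and completes the argument.

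The hard part will be the coherence-preservation check for the substituting rules---most sharply \ruleref{EA-Let2} and \ruleref{EA-Call}---because these re-annotate a value to the \emph{declared} label of the binder rather than to the value's own annotation (the rewrite $\SExp[x_\annell \mapsto \SVal_\annell]$). I would isolate this in a coherent-substitution lemma stating that substituting $(v)_{\ell}$ for the $\ell$-annotated variable $x$ into a coherent body produces exactly the $\FAnn{\cdot}$-image of the substituted derivation, relying on Substitution Preserves \zkstrudel Typing (\Cref{cor:safe-sub}) for the typing side and on the compositional definitions of $\FAnn{\cdot}$ and the annotated lifting operators for the syntactic side. The subtlety is exactly the one flagged in the ``Annotated Source Programs'' discussion: stripping the binder on substitution could let a value meant for one procedure surface in the other, and the re-annotation to the binder's label is what keeps the term coherent. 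Verifying that this re-annotation agrees with $\FAnn{\cdot}$ when the binder is typed via subtyping---so the value's original label need not match the binder's label---is the one place where the invariant could plausibly break and must be handled with care.
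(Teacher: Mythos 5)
Your overall architecture---induction on the length of the reduction, using \Cref{thm:abs-preservation} to hold the label $\ell'$ fixed across steps---is genuinely different from the paper's proof, which is a structural induction on $e$ that reads labels directly off the typing and semantic rules case by case. However, your central invariant is false as stated, and it fails exactly at the spot you flagged. Coherence demands that every intermediate configuration be \emph{syntactically equal} to $\FAnn{D}$ for some derivation $D$, and your coherent-substitution lemma claims \ruleref{EA-Let2} re-establishes this; it does not. \ruleref{EA-Let2} rewrites the body as $\SExp[x_\annell \mapsto \SVal_\annell]$, changing only the \emph{root} annotation of the substituted value to the binder's label, leaving interior annotations untouched. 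Take $e_1 = (\SNew{C}{(\STrue)_\anncp})_\anncp$ typed at $\tplab{C}{\ComputeProve}$ and a binder declared at $\tplab{C}{\Compute}$, which \ruleref{TC-Let} permits through its subtyping premise whenever $\refunreach{C}$. After the step, occurrences of $x$ become $(\SNew{C}{(\STrue)_\anncp})_\annc$: root annotation $\annc$, interior $\anncp$. No typing derivation has this as its $\FAnn$-image: a value typed by \ruleref{TC-LiftVal} at $\Compute$ is annotated by $\LiftAnn{\Compute}{\cdot}$, which stamps the \emph{entire} value uniformly with $\annc$, while \ruleref{TC-LiftValCP} stamps it uniformly with $\anncp$. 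So coherence is destroyed, the induction hypothesis cannot be applied to the reduct, and the induction does not close. (A second, smaller obstruction: rules like \ruleref{EA-Assign-C} leave the result annotation $\annellp$ entirely unconstrained, so a single step cannot even be forced to land on an exact $\FAnn$-image without additionally choosing annotations.)

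The repair is to weaken the invariant from exact coincidence with $\FAnn{\cdot}$ to head-level consistency: each value in evaluation position carries a root annotation equal to the label that some typing derivation assigns it, while interior annotations need only be consistent up to the label subtyping used at binders. This weaker relation \emph{is} preserved by \ruleref{EA-Let2}, since the re-stamped root is by construction the binder's label, which is precisely what the substituted derivation (via \Cref{cor:safe-sub}) assigns; and interior mismatches are harmless because the annotated semantics re-stamps on every destructor---\ruleref{EA-Field} returns $(v_i)_\annell$ with the \emph{object's} annotation, \ruleref{EA-Call} re-annotates arguments, and the lemma's conclusion mentions only the outermost annotation of the final value. This head-level tracking is exactly what the paper's case-by-case structural induction maintains implicitly without naming it. With that reformulation your length-induction-plus-preservation skeleton goes through; without it, the proof has a genuine hole at the substitution step.
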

\begin{proof}
    By induction on $e$.

    \begin{itemize}[itemsep=0.6em]
    
        \item
        If $e$ is a value, then the term takes 0 steps.
        In this case, the annotated label is the type of the value, so we're done.

        \item
        If $e = \Refsub{C} (v)_\annell$, then $\langle e \divi \storetuple \rangle \cpstep \langle (r_\Compute(\iota))_\annc \divi \storetuple' \rangle$.
        Note that $\ell' = \Compute$ by the typing rule for $\Reft$ so $\ell' = \annell$.
        Similar reasoning proves the cases for the other two $\Reft$.

        \item
        If $e = \bang (r_\Compute(\iota))_\annc$ then $\langle e \divi \storetuple \rangle \cpstep \langle (\sigma_\Compute(\iota))_\annc \divi \storetuple \rangle$.
        Note that $\ell' = \tplab{(\Reft \tplab{\TTau}{\Compute})}{\Compute}$ by the typing rule for deref, so $\ell' = \annell$.
        Similar reasoning proves the cases for the other two dereferences and computed input dereferencing.

        \item
        If $e = (r_\Compute(\iota))_\Compute \coloneq (v)_\annell$, then $\langle e \divi \storetuple \rangle \cpstep \langle ()_\annellp \divi \storetuple' \rangle$.
        This $\annellp$ can be anything, so we set $\annellp = \ell' = \Compute$.
        Similar reasoning proves the cases for the other two assignments, and for computed input assignment.

        \item
        If $e = (C)(v)_\annellp$ then $\langle e \divi \storetuple \rangle \cpstep \langle (v)_\annellp \divi \storetuple \rangle$.
        Here, $\ell' = \annellp$ by the casting rule, and $\annellp = \annell$ by annotating the AST.
        Similar reasoning applies for the cases when $e$ is a field access, a proof cast, or verify.

        \item
        If $e = (v)_\annellp.m(\overline{w})$ then $\langle e \divi \storetuple \rangle \cpstep \langle \Lift*{e_m}[\overline{x} \mapsto \overline{w}, \This \mapsto (v)_\annellp] \divi \storetuple \rangle$ where $e_m$ is the method body and $\overline{x}$ are its inputs.
        By \Cref{lemma:raising-type} and the inductive hypothesis, $\annellp = \annell = \ell'$.

        \item
        If $e = \Provet \alpha = \CircuitDefault \With \overline{v}[\overline{w}]$ then $\langle e \divi \storetuple \rangle \cpstep \langle (\pi)_\Compute \divi \storetuple \rangle$.
        The typing rule requires $\ell'$ to be $\Compute$. 

        \item
        If $e = \IfThenElse{(\True)_\annellp}{e_1}{e_2}$ then $\langle e \divi \storetuple \rangle \cpstep \langle e_1 \divi \storetuple \rangle$.
        Note that by the typing rule for $\IfN$ $\Sigma ; \Gamma ; \Delta ; A \proves e_1 : \tplab{\TTau}{\ell'} \produces A'$.
        By the inductive hypothesis, $\langle e_1 \divi \storetuple \rangle \cpstep^* \langle (v)_\annellp \divi \storetuple' \rangle$, meaning $\langle e \divi \storetuple \rangle \cpstep^* \langle (v)_\annellp \divi \storetuple' \rangle$.
        Similar reasoning proves the case when the condition is \False.

        \item
        If $e = \LetIn{x\ty\tplab{\TTau}{\ell'}}{e_1}{e_2}$.
        We know that $\langle e \divi \storetuple \rangle \cpstep^* \langle (v)_\annell \divi \storetuple' \rangle$.
        This means $\langle e_1 \divi \storetuple \rangle \cpstep^* \langle (v_1)_\annellp \divi \storetuple'' \rangle$, meaning $\langle e \divi \storetuple \rangle \cpstep^* \langle \LetIn{x\ty\tplab{\TTau}{\ell'}}{(v_1)_\annellp}{e_2} \divi \storetuple'' \rangle \cpstep \langle e[(x)_\annellp \mapsto (v_1)_\annellp] \divi \storetuple'' \rangle$.
        By the inductive hypothesis, $\langle e[(x)_\annellp \mapsto (v_1)_\annellp] \divi \storetuple'' \rangle \cpstep^* (v)_\annell \divi \storetuple'$, meaning $\ell' = \annell$.

    \end{itemize}
\end{proof}

\begin{theorem}[Single Step Concurrent Completeness]\label{thm:singlestepbisimcomplete}
    If $e$ is an annotated expression where $\cpconf*{e} \cpstep_{{\trace_1}} \cpconf{e'}{\sigmaC'}{\sigmaP'}{\rho'}$, then there is some $e'_\Compute$ and $e'_\Prove$ such that $\pconf*{\denL[\Compute]{e}}{\denL[\Prove]{e}} \pstep^{\{1,2\}}_{{\trace_2}} \pconf{e_\Compute'}{e_\Prove'}{\sigmaC'}{\sigmaP'}{\rho'}$, where $\denC{e'} \lessthan e'_\Compute$ and $\denP{e'} \lessthan e'_\Prove$ and ${\trace_1} \traceq {\trace_2}$.
    In the case where only one step is taken, either $\denP{e}$ or $\denC{e}$ remains the same.
    In the case where two steps are taken, then $\pconf*{\denL[\Compute]{e}}{\denL[\Prove]{e}} \pstep \pconf{e_\Compute'}{\denL[\Prove]{e}}{\sigmaC'}{\sigmaP}{\rho'} \pstep \pconf{e_\Compute'}{e_\Prove'}{\sigmaC'}{\sigmaP'}{\rho'}$.
\end{theorem}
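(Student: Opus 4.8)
The plan is to prove the claim by induction on the structure of the annotated expression $e$, with a case split that follows the annotated operational rule justifying the step $\cpconf*{e} \cpstep_{\trace_1} \cpconf{e'}{\sigmaC'}{\sigmaP'}{\rho'}$. The guiding observation is that every annotated redex falls into one of two regimes. Either it carries a single procedure label $\ell \in \{\Compute, \Prove\}$, so that exactly one of $\denC{e}$ and $\denP{e}$ contains a live operation while the other projects to an inert value; or it is genuinely \emph{two-sided} --- a $\ComputeProve$ reference operation (\ruleref{EA-Ref-CP}, \ruleref{EA-Deref-CP}, \ruleref{EA-Assign-CP}) or a computed-input dereference (\ruleref{EA-InputDeref}), whose reduct carries the $\anncp$ annotation --- in which case \emph{both} projections contain a live operation. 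These two regimes produce, respectively, the one-step and two-step conclusions of the statement, and I would use \Cref{lemma:annotated-proj} freely to reason about projections via the annotations rather than the types.

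For the single-procedure base cases I would step only the relevant projection. A $\Compute$-side redex (e.g.\ \ruleref{EA-Ref-C}, \ruleref{EA-Assign-C}, \ruleref{EA-InputAssign}, or a $\Compute$-annotated \ruleref{EA-Cast}/\ruleref{EA-Field}/\ruleref{EA-Prove}) is simulated by a single \ruleref{EP-StepC} (or \ruleref{EP-WitAssign}) step, after checking that the store update and emitted event agree with those of $\cpstep$; the prove projection is syntactically unchanged, matching the ``$\denP{e}$ remains the same'' clause, and $\trace_1 \traceq \trace_2$ holds because both sides emit the same compute event. The $\Prove$-side redexes are symmetric, using \ruleref{EP-StepP}, which discards the prove-side event so that both traces reduce to $\Nulltrace$. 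For the two-sided redexes, both $\denC{e}$ and $\denP{e}$ present a real operation (for instance a $\ComputeProve$ assignment projects to $\iota_\Compute \Tcoloneq v$ and $\iota_\Prove \Tcoloneq v$), so I would first take the compute step (\ruleref{EP-StepC} or \ruleref{EP-WitDerefC}) and then the prove step (\ruleref{EP-StepP} or \ruleref{EP-WitDerefP}), exactly the ordering demanded by the two-step clause. Since at most the compute half emits a nontrivial event while the prove half emits $\Nulltrace$, the concurrent trace is $\trace_1 :: \Nulltrace$, which is $\traceq$-equivalent to $\trace_1$; and $\denC{e'} \lessthan e_\Compute'$, $\denP{e'} \lessthan e_\Prove'$ both hold as equalities after checking that the joined and lowered values line up.

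The structural cases \ruleref{EA-Let1}, \ruleref{EA-Let2}, \ruleref{EA-IfT}, and \ruleref{EA-IfF} are where the less-sequencing relation $\lessthan$ earns its keep. For a \SLetN binding at label $\ell'$ I would split on whether the projection target $\ell$ satisfies $\ell \subsetl \ell'$. When $\ell \subsetl \ell'$ the projection keeps the let, so an inner \ruleref{EA-Let1} step is simulated by \ruleref{EP-Let1C}/\ruleref{EP-Let1P} applied to the inductive hypothesis, and an \ruleref{EA-Let2} substitution step is simulated by the concurrent let reduction, invoking \Cref{lemma:lessthan-substitution1} to equate $\denL{e_2[x \mapsto v]}$ with $\denL{e_2}[x \mapsto \denL{v}]$; here $\ell' = \ComputeProve$ forces both sides to reduce (the two-step case), while $\ell' \in \{\Compute,\Prove\}$ leaves the opposite projection literally fixed because the bound variable is absent from it. When $\ell \not\subsetl \ell'$ the projection either drops a value prefix or leaves a sequence $\denL{e_1} \Tseq \denL{e_2}$; a step of $e_1$ may cause its projection to collapse to a value (by reasoning dual to \Cref{lemma:value-substitution-proj}), so the concurrent term and $\denL{e'}$ coincide only up to a residual $v \Tseq$ prefix, which is precisely the slack admitted by $e \lessthan v \Tseq e$. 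The if-cases use the guard's annotation to decide whether each projection branches (both sides then select the same branch, inheriting the hypothesis) or has already committed to a fixed branch.

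I expect the main obstacle to be exactly this let/sequencing bookkeeping: determining, for each projection independently, whether it takes zero, one, or two steps, and then reconciling the three projection clauses with the concurrent reductions so that the global step count is $1$ or $2$, the untouched-side and compute-first claims hold, and the $\lessthan$-slack introduced by collapsing inert $v \Tseq$ prefixes is accounted for --- all while maintaining $\trace_1 \traceq \trace_2$ using the fact that prove-side reductions emit only $\Nulltrace$. The remaining cases (\ruleref{EA-Call}, \ruleref{EA-VerifyT}, \ruleref{EA-VerifyF}) are single-procedure and follow the base-case template once \Cref{lemma:raise-lessthan-all} and \Cref{lemma:raise-lessthan-unit} are used to project the lifted method body and \Cref{lemma:annotated-preservation} confirms the reduct annotation is as expected; determinism of $\cpstep$ (\Cref{thm:surface-determinism}) guarantees the reduct $e'$ is the unique one to match.
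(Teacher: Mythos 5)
Your overall strategy is the same as the paper's: case analysis over the rule justifying the annotated step, one-step simulation (via \ruleref{EP-StepC}/\ruleref{EP-StepP}/\ruleref{EP-WitAssign}) when the redex lives in a single procedure, a compute-first two-step simulation when both projections are live, the $\lessthan$ slack to absorb collapsed $v \Tseq$ prefixes in the \ruleref{EA-Let1}/\ruleref{EA-Let2}/if cases, and the same supporting lemmas (substitution-preserves-projection, \Cref{lemma:raise-lessthan-all}, \Cref{lemma:raise-lessthan-unit}, \Cref{lemma:annotated-preservation}).

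There is, however, one concrete slip in your regime classification. You restrict the two-sided regime to the \ComputeProve reference operations and \ruleref{EA-InputDeref}, and then assert that \ruleref{EA-Call}, \ruleref{EA-VerifyT}, and \ruleref{EA-VerifyF} (and implicitly \ruleref{EA-Cast} and \ruleref{EA-Field}) are ``single-procedure.'' That is false: a method call carrying the syntactic label \ComputeProve, a cast or field access on a value annotated $\anncp$, or a verify whose proof value is labeled \ComputeProve projects to a \emph{live} operation in \emph{both} $\denC{e}$ and $\denP{e}$, so a single concurrent step leaves the opposite projection strictly behind and the final configurations do not match up to $\lessthan$ (the leftover term is a call/cast/verify, not a value prefix, so the less-sequencing relation cannot absorb it). The paper handles exactly these instances with the two-step schedule, e.g.\ for \ruleref{EA-Call} ``the case for $\ell = \ComputeProve$ is executing two steps, one on each side, and concluding $\Nulltrace \traceq \Nulltrace :: \Nulltrace$,'' and similarly for cast, field, and both verify rules. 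Your two-step machinery covers these cases without modification---including the trace bookkeeping, since the prove-side step emits $\Nulltrace$---so the fix is only to extend the two-sided regime to every redex whose annotation (or syntactic label) is \ComputeProve, not just reference and computed-input operations.
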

\begin{proof}
    By induction on the semantic rules $\cpstep$.
    We elide the annotated label $(v)_\annell$ when it's not used or is obvious from context.
    \begin{itemize}[itemsep=0.6em]
        \item (Let1)
        
            Suppose that $\langle {\LetIn{x\ty\tplab{\TTau}{\ell}}{e_1}{e_2}} \divi \storetuple \rangle \cpstep_\trace \langle {\LetIn{x\ty\tplab{\TTau}{\ell}}{e_1'}{e_2}} \divi \storetuple' \rangle$ from $\langle {e_1} \divi \storetuple \rangle \cpstep_\trace \langle {e_1'} \divi \storetuple' \rangle$.

            First, consider the case where $\ell = \ComputeProve$.
            In this case, $\denC{{\LetIn{x\ty\tplab{\TTau}{\ComputeProve}}{e_1}{e_2}}} = {\Let x\ty\TTau = } \denC{{e_1}} {\In} \denC{{e_2}}$ and $\denP{{\LetIn{x\ty\tplab{\TTau}{\ComputeProve}}{e_1}{e_2}}} = {\Let x\ty\TTau = } \denP{{e_1}} {\In} \denP{{e_2}}$.
            By the inductive hypothesis, $\pconf*{\denC{e_1}}{\denP{e_2}} \pstep^+_\trace \pconf{e_{1\Compute}'}{e_{1\Prove}'}{\sigmaC'}{\sigmaP'}{\rho'}$, where $\denC{{e_1'}} \lessthan e_{1\Compute}'$ and $\denP{{e_1'}} \lessthan e_{1\Prove}'$ and $\trace \traceq \trace$.
            So $\pconf*{\denC{e}}{\denP{e}} \pstep_\trace^+ \pconf{{\Let x\ty\TTau = } e_{1\Compute}' {\In} \denC{{e_2}}}{{\Let x\ty\TTau = } e_{1\Prove}' {\In} \denP{{e_2}}}{\sigmaC'}{\sigmaP'}{\rho'}$.
            Since $\denC{{e_1'}} \lessthan e_{1\Compute}'$ and $\denP{{e_1'}} \lessthan e_{1\Prove}'$, so too are $\denC{e'} \lessthan e'_\Compute$ and $\denP{e'} \lessthan e'_\Prove$ by definition of $\lessthan$.

            The case where $\ell = \Compute$ follows similar reasoning, except $\denP{e} = \denP{{e_2}}$ if $\denP{{e_1}} = v$; or $\denP{{e_1}} ; \denP{{e_2}}$ if not.
            In the first case, only the compute side takes a step, since the prove side is a value.
            In the second case, the logic follows the same as above if $\denP{{e_1'}}$ is not a value.
            If it is a value, then we apply an additionall $\lessthan$ to get the desired result.

            Similarly, the case where $\ell = \Prove$ is similar to the case of $\ell = \Compute$.

        \item (Let2)
            Suppose that $\langle {\LetIn{x\ty\tplab{\TTau}{\ell}}{v}{e'}} \divi \storetuple \rangle \cpstep_\Nulltrace \langle {e[x \mapsto v]} \divi \storetuple \rangle$.

            First, consider the case where $\ell = \ComputeProve$.
            Then, $\denC{{\LetIn{x\ty\tplab{\TTau}{\ell}}{v}{e'}}} = {\Let x\ty\TTau =} \denC{v} {\In} \denC{e'}$ and $\denP{{\LetIn{x\ty\tplab{\TTau}{\ell}}{v}{e'}}} = {\Let x\ty\TTau =} \denP{v} {\In} \denP{e'}$.
            So $\pconf*{\denC{e}}{\denP{e}} \pstep_\Nulltrace \pconf*{\denC{e'}{[x \mapsto}\denC{v}{]}}{\denP{e}} \pstep_\Nulltrace \pconf*{\denC{e'}{[x \mapsto}\denC{v}{]}}{\denP{e'}{[x \mapsto}\denC{v}{]}}$.
            Finally, $\denC{{e'[x \mapsto v]}} = \denC{e'}{[x \mapsto}\denC{v}{]}$ and $\denC{{e'[x \mapsto v]}} = \denP{e'}{[x \mapsto}\denC{v}{]}$ by \Cref{lemma:lessthan-substitution1}.

            Now, suppose that $\ell = \Compute$, meaning $\denC{{\LetIn{x\ty\tplab{\TTau}{\ell}}{v}{e'}}} = {\Let x\ty\TTau =} \denC{v} {\In} \denC{e'}$ and $\denP{{\LetIn{x\ty\tplab{\TTau}{\ell}}{v}{e'}}} = \denP{e'}$ since $\denP{(v)_\Compute} = ()$ by \Cref{lemma:annotated-preservation}, the type system, and definition of $\denL{\cdot}$.
            So $\pconf*{\denC{e}}{\denP{e}} \pstep_\Nulltrace \pconf*{\denC{e'}{[x \mapsto}\denC{v}{]}}{\denP{e'}}$.
            Note that ${x}$ is not free in $\denP{{e'}}$, so $\denP{{e'}} = \denP{{e'}}[{x \mapsto ()}]$.
            The proof finishes by applying \Cref{lemma:lessthan-substitution1} as in the previous case but just on the compute side.

            The proof follows similarly for $\ell = \Prove$ as with $\ell = \Compute$, where $\Compute$ and $\Prove$ are switched.

            In all cases, $\Nulltrace \traceq \Nulltrace :: \Nulltrace$.
        \item (IfT)
        
            Suppose $\langle {\IfThenElse{(v)_\annell}{e_1}{e_2}} \divi \storetuple \rangle \cpstep_\Nulltrace \langle {e_1} \divi \storetuple \rangle$.

            First, consider the case where $\annell = \ComputeProve$, meaning $\denC{{\IfThenElse{(v)_\annell}{e_1}{e_2}}} = {\If v \Then} \denC{{e_1}} {\Else} \denC{{e_2}}$ and $\denP{{\IfThenElse{(v)_\annell}{e_1}{e_2}}} = {\If v \Then} \denP{{e_1}} {\Else} \denP{{e_2}}$.
            Then, $\pconf*{\denC{e}}{\denP{e}} \pstep_\Nulltrace \pconf*{\denC{e_1}}{\denP{e}} \pstep_\Nulltrace \pconf*{\denC{e_1}}{\denP{e_1}}$.

            Next, consider when $\annell = \Compute$, meaning $\denC{{\IfThenElse{(v)_\annell}{e_1}{e_2}}} = {\If v \Then} \denC{{e_1}} {\Else} \denC{{e_2}}$ and $\denP{{\IfThenElse{(v)_\annell}{e_1}{e_2}}} = \denP{{e_1}}$.
            Then $\pconf*{\denC{e}}{\denP{e}} \pstep_\Nulltrace \pconf*{\denC{e_1}}{\denP{e_1}}$.

            The case when $\annell = \Prove$ is similar to the one for $\annell = \Compute$.

            In all cases, $\Nulltrace \traceq \Nulltrace :: \Nulltrace$.

        \item (IfF)
        
            Similar logic to IfT case, but with $e_2$ instead of $e_1$.
        \item (RefC)
        
        Suppose $\langle {\Refsub{\Compute} v} \divi \storetuple \rangle \cpstep_{\opalloc(\iota, v)} \langle {r_\Compute(\iota)} \divi \storetuple' \rangle$ with premises $\iota \not\in \dom(\sigma_\Compute) \cup \dom(\sigma_\Prove)$ and $\storetuple' = \storetuple[\Compute \mid \iota \mapsto v]$.
        Here, $\denC{{\Refsub{\Compute} v}} = {\Reft v}$ and $\denP{{\Refsub{\Compute} v}} = {()}$.
        Thus, $\pconf*{\denC{e}}{\denP{e}} \pstep_{\opalloc(\iota, v)} \pconf{\iota}{()}{\sigmaC[\iota \mapsto v]}{\sigmaP}{\rho}$.

        \item (RefP)
        
        Similar logic to RefC case, except traces produce $\Nulltrace$ instead of $\opalloc(\iota, v)$.

        \item (RefCP)
        
        Suppose $\langle {\Refsub{\ComputeProve} v} \divi \storetuple \rangle \cpstep_{\opalloc(\iota_\Compute, v)} \langle {r_\ComputeProve(\iota_\Compute, \iota_\Prove)} \divi \storetuple' \rangle$ with premises $\iota_\Compute, \iota_\Prove \not\in \dom(\sigma_\Compute) \cup \dom(\sigma_\Prove)$ and $\storetuple' = \storetuple[\Compute \mid \iota_\Compute \mapsto v][\Prove \mid \iota_\Prove \mapsto v]$.
        Then $\denC{{\Refsub{\ComputeProve} v}} = \denP{{\Refsub{\ComputeProve} v}} = {\Reft v}$, so $\pconf*{\denC{e}}{\denP{e}} \pstep_{\opalloc(\iota_\Compute, v)} \pconf{\iota_\Compute}{\Reft{v}}{\sigmaC[\iota_\Compute \mapsto v]}{\sigmaP}{\rho} \pstep_\Nulltrace \pconf{\iota_\Compute}{\iota_\Prove}{\sigmaC[\iota_\Compute \mapsto v]}{\sigmaP[\iota_\Prove \mapsto v]}{\rho}$.
        Finally, $\opalloc(\iota_\Compute, v) \traceq \opalloc(\iota_\Compute, v) :: \Nulltrace$.

        \item (DerefC)
        
        Suppose $\langle {\bang r_\Compute(\iota)} \divi \storetuple \rangle \cpstep_\Nulltrace \langle v \divi \storetuple \rangle$ where $\iota \in \dom(\sigma_\Compute)$ and $v = \sigma_\Compute(\iota)$.
        We know $\denC{{\bang r_\Compute(\iota)}} = {\bang \iota}$ and $\denP{{\bang r_\Compute(\iota)}} = {()}$, so $\pconf*{\denC{e}}{\denP{e}} \pstep_\Nulltrace \pconf*{\sigmaC(\iota)}{()}$.

        \item (DerefP)
        
        Similar logic to DerefC case.

        \item (DerefCP)
        
        Similar logic to DerefC case, but take two steps (one for each of \Compute and \Prove) and finally $\Nulltrace \traceq \Nulltrace :: \Nulltrace$.

        \item (AssignC)
        
        Suppose $\langle {r_\Compute(\iota) \coloneq v} \divi \storetuple \rangle \cpstep_{\opset(\iota, v)} \langle {()} \divi \storetuple' \rangle$ where $\iota \in \dom(\sigma_\Compute)$ and $\storetuple' = \storetuple[\Compute \mid \iota \mapsto v]$.
        We know that $\denC{{r_\Compute(\iota) \coloneq v}} = {\iota \coloneq v}$ and $\denP{{r_\Compute(\iota) \coloneq v}} = {()}$, so $\pconf*{\denC{e}}{\denP{e}} \pstep_{\opset(\iota, v)} \pconf{()}{()}{\sigmaC[\iota \mapsto v]}{\sigmaP}{\rho}$.

        \item (AssignP)
        
        Similar logic to AssignC case, except traces produce $\Nulltrace$ instead of $\opset(\iota, v)$.

        \item (AssignCP)
        
        Suppose $\langle {r_\ComputeProve(\iota_\Compute, \iota_\Prove) \coloneq v} \divi \storetuple \rangle \cpstep_{\opset(\iota_\Compute, v)} \langle {()} \divi \storetuple' \rangle$ where $\iota_\Compute \in \dom(\sigma_\Compute)$ and $\iota_\Prove \in \dom(\sigma_\Prove)$ and $\storetuple' = \storetuple[\Compute \mid \iota_\Compute \mapsto v][\Prove \mid \iota_\Prove \mapsto v]$.
        Note that $\denC{{r_\ComputeProve(\iota_\Compute, \iota_\Prove) \coloneq v}} = {\iota_\Compute \coloneq v}$ and $\denP{{r_\ComputeProve(\iota_\Compute, \iota_\Prove) \coloneq v}} = {\iota_\Prove \coloneq v}$.
        Thus, $\pconf*{\denC{e}}{\denP{e}} \pstep_{\opset(\iota_\Compute, v)} \pconf{()}{\denP{e}}{\sigmaC[\iota_\Compute \mapsto v]}{\sigmaP}{\rho} \pstep_\Nulltrace \pconf{()}{()}{\sigmaC[\iota_\Compute \mapsto v]}{\sigmaP[\iota_\Prove \mapsto v]}{\rho}$.
        Finally, $\opset(\iota_\Compute, v) \traceq \opset(\iota_\Compute, v) :: \Nulltrace$.

        \item (Cast)
        
        Suppose $\langle {(C)(\Newl D(\overline{v}))} \divi \storetuple \rangle \cpstep_\Nulltrace \langle {\Newl D(\overline{v})} \divi \storetuple \rangle$ where $D <: C$.
        If $\ell = \Compute$, then $\denC{{(C)(\Newl D(\overline{v}))}} = {(C)(\Newl D(\overline{v}))}$ and $\denP{{(C)(\Newl D(\overline{v}))}} = {()}$, and so $\pconf*{\denC{e}}{\denP{e}} \pstep_\Nulltrace \pconf*{\Newl D(\overline{v})}{()}$.
        Similar logic proves the case where $\ell = \Prove$.
        And if $\ell = \ComputeProve$ then $\pconf*{\denC{e}}{\denP{e}} \pstep_\Nulltrace \pconf*{\Newl D(\overline{v})}{\denP{e}} \pstep_\Nulltrace \pconf*{\Newl D(\overline{v})}{\Newl D(\overline{v})}$, and $\Nulltrace \traceq \Nulltrace :: \Nulltrace$.
        
        \item (Field)
        
        Follows similar logic to (Cast).

        \item (Call)
        
        Suppose $\langle {(\Newl C(\overline{v})).m(\overline{w})} \divi \storetuple \rangle \cpstep_\Nulltrace \langle {\Lift*{e}[\overline{x} \mapsto (\overline{w})_\annell, \This \mapsto (\Newsub{\ell} C(\overline{v}))_\annell]} \divi \storetuple \rangle$.
        If $\ell = \Compute$, then $\denC{{(\Newl C(\overline{v})).m(\overline{w})}} = {\New C(\overline{v}).m(\overline{w})}$ and $\denP{{(\Newl C(\overline{v})).m(\overline{w})}} = ()$.
        Thus, $\pconf*{\denC{e}}{\denP{e}} \pstep_\Nulltrace \pconf*{e[\overline{x} \mapsto \overline{w}, \This \mapsto \New C(\overline{v})]}{()}$.
        Finally, \Cref{lemma:raise-lessthan-all} proves the terms match on the \Compute side and \Cref{lemma:raise-lessthan-unit} proves they match on the \Prove side. Similar logic proves the case for $\ell = \Prove$ (swapping which lemmas are used on which procedure), and the case for $\ell = \ComputeProve$ is executing two steps, one on each side, and concluding $\Nulltrace \traceq \Nulltrace :: \Nulltrace$.

        \item (Input Assignment) 

        Suppose $\langle {x \witassign v} \divi \storetuple \rangle \cpstep_{\opset(x,v)} \langle {()} \divi \storetuple' \rangle$ from the premise that $\rho(x) = (v',0)$ and $\storetuple' = \storetuple[\rho \mid {x} \mapsto ({v}, 1)]$.
        We know $\denC{{x \witassign v}} = {x \witassign v}$ and $\denP{{x \witassign v}} = {()}$, so $\pconf*{\denC{e}}{\denP{e}} \pstep_{\opset(x,v)} \pconf{()}{()}{\sigmaC}{\sigmaP}{\rho[x \mapsto v]}$.

        \item (Input Deref)
        
        Suppose $\langle {\witderef x} \divi \storetuple \rangle \cpstep_\Nulltrace \langle v \divi \storetuple \rangle$ with the premise $\rho({x}) = (v, 1)$.
        We know $\denC{{\witderef x}} = \denP{{\witderef x}} = {\witderef x}$, so $\pconf*{\denC{e}}{\denP{e}} \pstep^2_\Nulltrace \pconf*{\rho(x)}{\rho(x)}$. 
        Finally, $\Nulltrace \traceq \Nulltrace :: \Nulltrace$.

        \item (Prove)
        
        Suppose $\langle {\Provet \alpha = \CircuitDefault \With \overline{v}[\overline{w}]} \divi \storetuple \rangle \cpstep_{\opgen(\alpha, \overline{v}, \overline{w})} \langle {\ProofOfN~\alpha~\UsingN~\overline{v}} \divi \storetuple \rangle$.
        Since $\Provet$ statements only exist on the compute side, $\denC{{\Provet \alpha = \CircuitDefault \With \overline{v}[\overline{w}]}}$ projects to itself, and $\denP{{\Provet \alpha = \CircuitDefault \With \overline{v}[\overline{w}]}} = {()}$.
        Then, $\pconf*{\denC{e}}{\denP{e}} \pstep_{\opgen(\alpha, \overline{v}, \overline{w})} \pconf*{\ProofOfN~\alpha~\UsingN~\overline{v}}{()}$.
        The proof finishes by noting the traces are equivalent.

        \item (VerifyT)
        
        Suppose $\langle {\Verifyt (\ProofOfN~\alpha~\UsingN~\overline{v}) \Proves \alpha \With \overline{v}} \divi \storetuple \rangle \cpstep_{\opverif(\alpha, \overline{v})} \langle \True \divi \storetuple \rangle$.
        If $\pi$ is labeled with $\Compute$, then $\pconf*{\denC{e}}{\denP{e}} \pstep_{\opverif(\alpha, \overline{v})} \pconf*{\True}{()}$.
        Similar reasoning applies to the $\pi$ labeled with $\Prove$ case.
        If $\pi$ is labeled with $\ComputeProve$, then $\pconf*{\denC{e}}{\denP{e}} \pstep_{\opverif(\alpha, \overline{v})} \pconf*{\True}{\denP{e}} \pstep_\Nulltrace \pconf*{\True}{\True}$.
        Finally, note that $\opverif(\alpha, \overline{v}) \traceq \opverif(\alpha, \overline{v}) :: \Nulltrace$.

        \item (VerifyF)
        
        Follows similar logic to (VerifyT), except the expressions evaluate to false rather than true.
    \end{itemize}
\end{proof}

\begin{theorem}[Concurrent Single-Step Completeness]\label{thm:single-bisim-complete}
  If $e$ is an annotated expression where $\langle e \divi (\sigma_\Compute, \sigma_\Prove, \rho) \rangle \cpstep \langle e' \divi (\sigma_\Compute', \sigma_\Prove', \rho') \rangle$,
  then there is some $e'_\Compute$ and $e'_\Prove$ such that
  $\pconf*{\denC{e}}{\denP{e}} \pstep^{\{1, 2\}} \pconf{e_\Compute'}{e_\Prove'}{\sigmaC'}{\sigmaP'}{\rho'}$, where
  $\denC{e'} \lessthan e'_\Compute$ and $\denP{e'} \lessthan e'_\Prove$.
\end{theorem}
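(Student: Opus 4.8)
The plan is to obtain this statement as an immediate corollary of the stronger single-step completeness result proved immediately above, \Cref{thm:singlestepbisimcomplete}. That theorem establishes precisely the simulation claimed here: from one source step $\cpconf*{e} \cpstep \cpconf{e'}{\sigmaC'}{\sigmaP'}{\rho'}$ it produces residual expressions $e_\Compute'$ and $e_\Prove'$ together with a concurrent reduction $\pconf*{\denL[\Compute]{e}}{\denL[\Prove]{e}} \pstep^{\{1, 2\}} \pconf{e_\Compute'}{e_\Prove'}{\sigmaC'}{\sigmaP'}{\rho'}$ satisfying $\denC{e'} \lessthan e_\Compute'$ and $\denP{e'} \lessthan e_\Prove'$. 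The only extra content of \Cref{thm:singlestepbisimcomplete} is that it additionally records a trace $\trace_2$ for the concurrent reduction together with the equivalence $\trace_1 \traceq \trace_2$, where $\trace_1$ is the trace of the given source step. Since the present statement asserts nothing about traces, I would invoke \Cref{thm:singlestepbisimcomplete} on the given step and then simply forget the trace-equivalence conjunct, retaining the same witnesses $e_\Compute'$ and $e_\Prove'$, the same updated stores $\sigmaC'$, $\sigmaP'$, $\rho'$, and both $\lessthan$ facts verbatim.

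A minor bookkeeping point to confirm is that the two projection notations coincide: by the definitions of $\denC{\cdot}$ and $\denL{\cdot}$ we have $\denC{e} = \den{e}_\Compute = \denL[\Compute]{e}$ and likewise $\denP{e} = \den{e}_\Prove = \denL[\Prove]{e}$, so the projected pairs on the two sides are literally identical and no renaming is needed. Every component appearing in the conclusion here---the two residual expressions, the three updated stores, and the two less-sequencing relations---occurs verbatim in \Cref{thm:singlestepbisimcomplete}, so the implication is direct.

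Because the heavy lifting is entirely subsumed by the already-completed case analysis, there is essentially no residual obstacle, and this lemma is exactly the form needed to drive the multi-step completeness result (\Cref{thm:bisim-complete}) by induction on the number of source steps. For completeness I record where the genuine difficulty would lie were one to give a self-contained argument rather than cite the trace-aware version: a from-scratch proof proceeds by induction on the derivation of $\cpstep$, and the two delicate families of cases are (i) the let- and if-expression cases, where projection collapses inert $v \Tseq e$ prefixes so that the resulting concurrent configuration relates to the projection of $e'$ only up to $\lessthan$ rather than by equality; and (ii) the \ComputeProve operations---paired reference creation, \ruleref{EC-DerefCP}, \ruleref{EC-AssignCP}, \ComputeProve method calls and verification, and \ruleref{EC-InputDeref}---where a single in-order step updates both heaps at once and must therefore be matched by \emph{two} concurrent steps, one per procedure, which is exactly why the conclusion permits $\pstep^{\{1, 2\}}$. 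Both families are precisely the points that \Cref{thm:singlestepbisimcomplete} already dispatches, so here they demand no further reasoning.
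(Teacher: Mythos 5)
Your proposal is correct and matches the paper exactly: the paper's proof of \Cref{thm:single-bisim-complete} is literally ``Corollary of \Cref{thm:singlestepbisimcomplete},'' which is precisely your move of invoking the trace-aware single-step theorem and discarding the trace-equivalence conjunct. Your supplementary remarks on where a self-contained proof would be delicate are accurate but not needed for the derivation itself.
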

\begin{proof}
    Corollary of \Cref{thm:singlestepbisimcomplete}.
\end{proof}

\begin{lemma}[$\lessthan$ is Reflexive]\label{lemma:less-reflexive}
    For all (annotated) expressions $e$, $e \lessthan e$.
\end{lemma}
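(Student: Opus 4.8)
The plan is to prove reflexivity by straightforward structural induction on the parallel projected expression~$e$, using the inductive definition of~$\lessthan$ given in \Cref{app:lessthan-def}. Recall that the prose earlier describes $\lessthan$ as ``the smallest structurally compatible pre-order generated by admitting the rule $e \lessthan v \Tseq e$,'' and being a pre-order requires reflexivity; the appendix presents $\lessthan$ via an explicit rule set, so the task is to confirm that reflexivity is derivable from those rules rather than assumed. The key observation driving the whole argument is that the rule set contains a dedicated axiom for each non-compound syntactic form (e.g.\ $v \lessthan v$, $\Reft v \lessthan \Reft v$, $\bang v \lessthan \bang v$, $v_1 \coloneq v_2 \lessthan v_1 \coloneq v_2$, and analogously for computed-input assignment and dereference, casts, field access, method calls, allocation, and the prove and verify terms), together with \emph{congruence} rules for the two forms with expression subterms, namely $\LetN$ and $\IfN$.

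First I would set up the induction and dispatch on the top-level constructor of~$e$. For every base form in the parallel projected grammar (values and all the one-level operations listed above), the corresponding appendix axiom immediately yields $e \lessthan e$ with no appeal to the inductive hypothesis. The only two cases that require the inductive hypothesis are the compound forms: if $e = \LetIn{x\ty\tau}{e_1}{e_2}$, I would apply the inductive hypothesis to obtain $e_1 \lessthan e_1$ and $e_2 \lessthan e_2$ and then conclude via the $\LetN$ congruence rule; symmetrically, if $e = \IfThenElse{v}{e_1}{e_2}$, the inductive hypotheses $e_1 \lessthan e_1$ and $e_2 \lessthan e_2$ feed the $\IfN$ congruence rule to give $e \lessthan e$. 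Since every production of the parallel projected syntax is covered by exactly one of these cases, the induction goes through.

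The step I expect to be the ``main obstacle'' is genuinely minor: the only real content is the bookkeeping check that the rule set is \emph{structurally compatible}, i.e.\ that no grammar production is missing a matching reflexivity axiom or congruence rule. In particular I would note explicitly that the less-sequencing generator rule $e_1 \lessthan v \Tseq e_2$ (from $e_1 \lessthan e_2$) plays \emph{no} role here—it is what makes $\lessthan$ strictly larger than syntactic equality and is needed only for the later substitution and relating-after-reduction lemmas—so reflexivity follows purely from the reflexive axioms and congruence rules. This makes the proof a direct structural induction with no delicate side conditions, and it establishes the reflexivity half of the pre-order structure that the subsequent bisimulation arguments (\Cref{thm:bisim-complete,thm:bisim-sound}) rely on.
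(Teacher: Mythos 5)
Your proposal is correct and matches the paper's proof, which is exactly a structural induction on $e$: the base forms are discharged by the reflexive axioms in \Cref{app:lessthan-def}, and the $\LetN$ and $\IfN$ cases follow from their congruence rules plus the inductive hypothesis. Your additional observation that the sequencing generator rule plays no role in reflexivity is a nice (if unnecessary) clarification, but the argument is the same one the paper gives.
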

\begin{proof}
    Follows directly by induction on $e$.
\end{proof}

\begin{lemma}[$\lessthan$ is Transitive]\label{lemma:less-transitive}
    If $e_1 \lessthan e_2$ and $e_2 \lessthan e_3$ then $e_1 \lessthan e_3$.
\end{lemma}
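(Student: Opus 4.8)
The plan is to prove transitivity by rule induction, inducting on the derivation of the second hypothesis $e_2 \lessthan e_3$ and performing inversion on the derivation of $e_1 \lessthan e_2$ where the shape of $e_2$ permits more than one possibility. Throughout, I would freely invoke reflexivity of $\lessthan$ (\Cref{lemma:less-reflexive}) when recombining subderivations. The generating rules split into three groups that drive the case analysis: the leading-sequence rule $e \lessthan v \Tseq e$, the reflexive base rules (for $v$, $\Reft v$, $\bang v$, assignments, $x \witassign v$, $\witderef x$, casts, field accesses, method calls, $\Alloc{\tau}$, $\ProveN$, and $\VerifyN$), and the two congruence rules for let- and if-expressions.

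First I would dispatch the routine cases. If $e_2 \lessthan e_3$ ends in the leading-sequence rule, then $e_3 = v \Tseq e_3'$ with $e_2 \lessthan e_3'$ a strict subderivation; the induction hypothesis on $e_1 \lessthan e_2$ and $e_2 \lessthan e_3'$ gives $e_1 \lessthan e_3'$, and one application of the leading-sequence rule yields $e_1 \lessthan v \Tseq e_3' = e_3$. If it ends in a reflexive base rule, then $e_2 = e_3$ syntactically, so the goal $e_1 \lessthan e_3$ is literally the hypothesis $e_1 \lessthan e_2$. The if-congruence case is also clean: if $e_2 = \IfThenElse{v}{a_2}{b_2}$ and $e_3 = \IfThenElse{v}{a_3}{b_3}$ with $a_2 \lessthan a_3$ and $b_2 \lessthan b_3$, then because an if-expression is never of the form $v \Tseq e$, inverting $e_1 \lessthan e_2$ can only use the if-congruence; two appeals to the induction hypothesis on the branch subderivations recombine through the if-congruence to give $e_1 \lessthan e_3$.

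The substance lies in the let-congruence case, where $e_2 = \LetN\, x = a_2\, \InN\, b_2$ and $e_3 = \LetN\, x = a_3\, \InN\, b_3$ with $a_2 \lessthan a_3$ and $b_2 \lessthan b_3$. Here inversion of $e_1 \lessthan e_2$ is genuinely two-way: a let whose bound expression is a value with a fresh binder is precisely a leading sequence $v \Tseq b_2$, so $e_1 \lessthan e_2$ may have been derived either by the let-congruence (giving $e_1 = \LetN\, x = a_1\, \InN\, b_1$ with $a_1 \lessthan a_2$, $b_1 \lessthan b_2$, which recombines exactly as in the if case) or by the leading-sequence rule (giving $a_2$ a value, $x$ fresh, and $e_1 \lessthan b_2$). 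I expect this overlap to be the main obstacle. In the leading-sequence subcase the induction hypothesis on $e_1 \lessthan b_2$ and $b_2 \lessthan b_3$ delivers $e_1 \lessthan b_3$, and the crux is to reattach this to the binder $a_3$ so as to conclude $e_1 \lessthan e_3$.

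To carry out that reattachment I would first prove an auxiliary inversion lemma characterizing the shape forced by $v \lessthan a_3$ for a value $v$—namely that $a_3$ is $v$ guarded by some prefix of leading sequenced values—and then reconstruct the target derivation by a subsidiary induction on $v \lessthan a_3$, using the let-congruence together with reflexivity to push the prefix into the binder position. Making this recombination well-founded is the delicate part: the reconstructed steps must themselves be justified by the induction hypothesis, so I would strengthen the induction measure to the sum of the heights of the two input derivations (rather than a single derivation), which lets the congruence case appeal to the hypothesis on the smaller recombined pairs. I would also need to track the variable convention carefully, ensuring that freshness of the let binder $x$ is preserved under $\lessthan$ so that a reattached value binding still constitutes a legitimate leading sequence. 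The remaining administrative situations—nested leading sequences appearing inside the bound expression—then reduce to this same lemma plus reflexivity, closing the induction.
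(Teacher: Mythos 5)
Your sequencing, reflexive, and if-congruence cases are exactly the paper's entire proof: the paper argues by induction over the rules of \Cref{app:lessthan-def} and observes that every rule is either reflexive or closes under a direct appeal to the inductive hypothesis. The gap is in your let case, and it stems from a misreading of the relation. You identify $v \Tseq e$ with a let that binds a value to a fresh variable, so that the sequencing rule and the let congruence overlap on the same expressions. In the paper's development $\Tseq$ is a distinct syntactic form, not sugar for $\LetN$: the generating rules are syntax-directed, and \Cref{lemma:less-redex} explicitly treats ``$e' = v \Tseq e''$'' and ``the heads of $e$ and $e'$ match'' as mutually exclusive alternatives. Under that reading, inverting $e_1 \lessthan e_2$ when $e_2$ is a let can only have used the let congruence, your ``genuinely two-way'' inversion never arises, and the lemma follows by the short induction you already gave in your first two paragraphs.

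Worse, under your reading the statement you are trying to prove is false, so no auxiliary inversion lemma, freshness bookkeeping, or sum-of-heights measure can repair the let case. Take $e_1 = \bang v_0$, $e_2 = \LetIn{x}{u}{\bang v_0}$ with $x$ fresh (i.e., $u \Tseq \bang v_0$ in your reading), and $e_3 = \LetIn{x}{(w \Tseq u)}{\bang v_0}$. Then $e_1 \lessthan e_2$ by the sequencing rule, and $e_2 \lessthan e_3$ by the let congruence, using $u \lessthan w \Tseq u$ and reflexivity on $\bang v_0$. But $e_1 \lessthan e_3$ is not derivable: $e_1$ is not a let, so the congruence is unavailable, and the sequencing rule requires the binding of $e_3$ to be a value, which $w \Tseq u$ is not. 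This is precisely the subcase you flag as ``the crux'': the prefix that $a_2 \lessthan a_3$ inserts into the binder position destroys the value-binding shape on which your leading-sequence reading of $e_2$ depended, and no rule lets an arbitrary $e_1$ absorb a let with a non-value binding. (Your plan also relies on freshness of the binder being preserved by $\lessthan$, which fails as well: the sequencing rule may insert the bound variable itself as the sequenced value.) The fix is not a cleverer induction but the correct reading of the relation---keep $\Tseq$ and $\LetN$ distinct---after which the difficult case disappears and the proof is the paper's direct one.
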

\begin{proof}
    By induction on the definition of $\lessthan$ in \Cref{app:lessthan-def}. 
    Note all the rules are either reflexive, or (for let statements, if statements, and sequencing) a direct application of the inductive hypothesis gives the desired result.
\end{proof}

\begin{lemma}\label{lemma:less-redex}
    If $e \lessthan e'$, then $e' = v ; e''$ and $e \lessthan e''$, or $e = \Eval{e_1}$ and $e' = \Eval{e_2}$ and $e_1 = e_2$ (i.e. their head matches).
\end{lemma}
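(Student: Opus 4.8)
The plan is to proceed by induction on the derivation of $e \lessthan e'$, casing on the final rule of \Cref{app:lessthan-def}. First I would record an auxiliary fact, proved by its own induction on the derivation: if $v$ is a value and $v \lessthan e'$, then $e'$ is a (possibly empty) chain of leading value-sequences ending in $v$, i.e.\ $e' = u_1 \seq (u_2 \seq (\cdots \seq v))$ for values $u_i$. This holds because the only rules whose conclusion places a value on the left are the reflexivity axiom $v \lessthan v$ and the sequencing rule that derives $e \lessthan v' \seq e_2$ from $e \lessthan e_2$; in particular there is no structural rule descending into a $\NewN$ or $\ProofOfN$ value. This characterization is what pins down the shape of the bound subexpression in the delicate let-case below.

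The base and sequencing cases are then immediate. For every reflexivity axiom---both $v \lessthan v$ and the single-redex axioms such as $\bang v \lessthan \bang v$, $\Reft v \lessthan \Reft v$, $v.m(\overline{v}) \lessthan v.m(\overline{v})$, and the \ProveN/\VerifyN axioms---we have $e = e'$, so I take the empty context $E = [\cdot]$ and conclude case (b) with literally $e_1 = e_2 = e$. For the sequencing rule, $e' = v \seq e_2$ with a subderivation $e \lessthan e_2$, which is exactly case (a) with $e'' = e_2$. For the \IfN rule, $e$ and $e'$ are conditionals sharing the same scrutinee $v$; since the evaluation context $E \Coloneqq [\cdot] \mid \LetIn{x}{E}{e}$ never descends into branches, the whole conditional is the head redex of each in the empty context, and the two heads agree on everything \ruleref{E-IfT}/\ruleref{E-IfF} inspects, namely $v$. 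This is case (b), with the understanding that ``$e_1 = e_2$'' means agreement of the head redexes up to $\lessthan$ in the positions a single reduction step does not read (here, the branches, which are only $\lessthan$-related).

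The crux is the \LetN rule, where $e = \LetIn{x}{a}{b}$ and $e' = \LetIn{x}{a'}{b'}$ with $a \lessthan a'$ and $b \lessthan b'$. I would split on whether the bound subexpression $a$ is a value. If $a$ is not a value, I apply the induction hypothesis to $a \lessthan a'$: its case (b) returns matching head redexes inside the bound position, which I reassemble under the legal evaluation contexts $\LetIn{x}{E_a}{b}$ and $\LetIn{x}{E_{a'}}{b'}$ to obtain a common head redex for $e$ and $e'$; its case (a) gives $a' = v'' \seq a''$, so the bound position of $e'$ carries a leading value-sequence. If $a$ is a value, the auxiliary fact forces $a'$ to be a chain of value-sequences ending in $a$: when the chain is empty, $a' = a$ and the outer let is itself the head redex (\ruleref{E-Let}) on both sides, matching up to $\lessthan$ in the ignored body; when the chain is nonempty, $e'$ carries a collapsed value-sequence nested inside the let binding.

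The main obstacle is precisely this nesting: when a collapsed value-sequence sits in the bound position of a let (rather than at the very top of $e'$), the operational head redex of $e'$ lies at a different syntactic depth than that of $e$, so the decomposition into matching contexts is not a literal syntactic equality. I would resolve it by treating case (a) as the witness that $e'$ can take a single \ruleref{E-Let} step that eliminates one leading value-sequence while remaining $\lessthan$-above the same $e$---using the value-chain characterization together with \Cref{lemma:less-reflexive} and \Cref{lemma:less-transitive} to re-associate the exposed sequence to the head of the enclosing let. Tracking that each such peeling is a genuine reduction is exactly what makes the lemma usable in the surrounding bisimulation arguments (\Cref{thm:bisim-sound,thm:bisim-complete}).
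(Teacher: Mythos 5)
Your proposal is not the paper's proof, and it has a genuine gap in exactly the case you call the crux. The paper's proof is a one-line case analysis (inversion) on the \emph{last rule} of the derivation in \Cref{app:lessthan-def}, with no induction and no auxiliary lemma: if the last rule is the sequencing rule, the first disjunct holds by definition; for every other rule, the two sides share the same outermost form, which is all the (admittedly informal) phrase ``their head matches'' is meant to assert---literal equality for the reflexivity axioms, and for the \LetN/\IfN congruence rules, agreement on the outermost constructor with the remaining subterms related by $\lessthan$. This loose reading is confirmed by how the lemma is consumed downstream (the lemmas behind \Cref{cor:lessthan-complete,cor:lessthan-sound}): after inverting a let, those proofs obtain only $v \lessthan e_1'$ for the bound position---not equality---and then \emph{re-apply} the lemma inside the binder to expose a value-chain there. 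Under that reading, your ``crux'' case is immediate: $\LetIn{x\ty\TTau}{a}{b} \lessthan \LetIn{x\ty\TTau}{a'}{b'}$ with $a \lessthan a'$ and $b \lessthan b'$ already \emph{is} the second disjunct, and no split on whether $a$ is a value is needed.

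The gap is that your insistence on literal head-redex agreement leads you to a true observation---for $e = \LetIn{x\ty\TTau}{v}{b}$ and $e' = \LetIn{x\ty\TTau}{u \seq v}{b}$ (derivable via the let congruence from $v \lessthan u \seq v$), the head redex of $e$ is the outer let while that of $e'$ is the nested sequence, and $e'$ has no top-level leading value-sequence, so neither disjunct holds literally---but your repair does not prove the lemma. Recasting case (a) as ``$e'$ can take a single \ruleref{E-Let} step while remaining $\lessthan$-above $e$'' establishes an operational statement, not the syntactic decomposition the lemma asserts; and the ``re-association'' of $\LetIn{x\ty\TTau}{u \seq v}{b}$ into $u \seq \LetIn{x\ty\TTau}{v}{b}$ is neither a syntactic identity nor derivable in $\lessthan$, since the relation only inserts leading value-sequences and has no rule moving a sequence across a binder (\Cref{lemma:less-reflexive,lemma:less-transitive} do not help here). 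Consequently, in the sub-cases where your induction hypothesis or your chain fact reports a sequence nested in the binder, your argument concludes neither disjunct, so the induction does not close. Two things are worth salvaging: your auxiliary fact that $v \lessthan e'$ forces $e'$ to be a chain of value-sequences ending in $v$ is correct (only the reflexivity axiom and the sequencing rule put a value on the left) and is precisely the iterated form in which the paper uses this lemma; and your counterexample is a fair critique showing the lemma should be stated as rule inversion rather than head-redex equality.
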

\begin{proof}
    By cases on the definition of $\lessthan$ in \Cref{app:lessthan-def}.
    The first condition of $e' = v ; e''$ and $e \lessthan e''$ applies to the very first sequencing rule, and the second condition of head equality applies to all the other rules.
\end{proof}

\begin{lemma}
    If $e \lessthan e'$, then for all substitutions $\sigma$ mapping variables to values, $e[\sigma] \lessthan e'[\sigma]$.
\end{lemma}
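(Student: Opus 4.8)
The plan is to prove the lemma by structural induction on the derivation of $e \lessthan e'$, using the inductive rules defining $\lessthan$ in \Cref{app:lessthan-def}. The crucial preliminary observation is that substitution by $\sigma$ preserves values: since $\sigma$ maps variables to values, substituting into any value again yields a value (variables are replaced by values, and value constructors such as $\New C(\overline{v})$ or $\ProofOfN$ objects remain values when values are plugged into their subterms). This fact is exactly what allows each rule of $\lessthan$ to remain applicable after applying $\sigma$, and in particular preserves the side conditions demanding that certain subterms be values.

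For the reflexive-style leaf rules --- $v \lessthan v$, $\Reft v \lessthan \Reft v$, $\bang v \lessthan \bang v$, together with the rules for assignment, witness operations, casts, field access, method calls, allocation, $\Provet$, and $\Verifyt$ --- the two related expressions are syntactically identical, differing only in the embedded values. Applying $\sigma$ to both sides substitutes into those values uniformly, and since substitution distributes over each syntactic form, the resulting pair again matches the same rule's conclusion. For instance, $(\Reft v)[\sigma] = \Reft (v[\sigma])$ on both sides, $v[\sigma]$ is still a value by the observation above, and so the rule for $\Reft$ applies directly.

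For the generating sequencing rule, suppose $e_1 \lessthan v \seq e_2$ was derived from $e_1 \lessthan e_2$. Applying $\sigma$ gives $(v \seq e_2)[\sigma] = v[\sigma] \seq e_2[\sigma]$, where $v[\sigma]$ is again a value. The inductive hypothesis yields $e_1[\sigma] \lessthan e_2[\sigma]$, so reapplying the sequencing rule gives $e_1[\sigma] \lessthan v[\sigma] \seq e_2[\sigma]$, as required. The congruence cases for $\LetN$ and $\IfN$ follow the same pattern: substitution commutes with the binding and branching structure, leaving the bound variable and its type unchanged (under the standard capture-avoiding convention, so that $\sigma$ does not act on the bound $x$), and the inductive hypotheses on the subexpressions recombine under the matching congruence rule.

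I expect the main obstacle to be purely a matter of bookkeeping rather than conceptual difficulty. The two points requiring care are (i) confirming that $\sigma$ sends values to values, so that the side condition ``$v$ is a value'' in the sequencing rule survives substitution, and (ii) ensuring substitution into a $\LetN$ preserves the bound-variable matching demanded by the congruence rule. Both are immediate consequences of the compositional, capture-avoiding definition of substitution, so no genuinely hard step arises.
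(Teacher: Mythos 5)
Your proposal is correct and follows essentially the same route as the paper's proof: induction on the derivation of $e \lessthan e'$, with the reflexive axiom cases immediate and the sequencing, $\LetN$, and $\IfN$ rules discharged by the inductive hypothesis. Your additional bookkeeping observations (that $\sigma$ sends values to values, and that capture-avoiding substitution preserves the binding structure) are sound refinements of the same argument, which the paper leaves implicit.
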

\begin{proof}
    By induction on the definition of $\lessthan$ in \Cref{app:lessthan-def}.
    Most cases are reflexive axioms and follow directly.
    The $\IfN$ and $\LetN$ and sequencing cases are inductively defined, and follow from applying the inductive hypothesis.
\end{proof}

\begin{corollary}\label{lemma:less-substitution}
    If $e \lessthan e'$, then $e[x \mapsto v] \lessthan e'[x \mapsto v]$.
\end{corollary}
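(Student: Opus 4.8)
The plan is to read this off immediately from the general simultaneous-substitution lemma proved just above, which shows that $e \lessthan e'$ implies $e[\sigma] \lessthan e'[\sigma]$ for every substitution $\sigma$ from variables to values. A single-variable substitution $[x \mapsto v]$ is one such $\sigma$---the one acting as the identity on all variables except $x$, which it sends to $v$---so instantiating the lemma at this $\sigma$ yields exactly $e[x \mapsto v] \lessthan e'[x \mapsto v]$. Thus the corollary requires no argument of its own beyond this specialization.

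If a self-contained proof is preferred, I would instead induct directly on the derivation of $e \lessthan e'$ using the generating rules of \Cref{app:lessthan-def}, writing $[x \mapsto v]$ for the substitution and reserving a fresh metavariable $u$ for the leading value in the sequencing rule to avoid a name clash. Every reflexive or structural axiom (values, references, dereferences, assignments, computed-input assignment and dereference, casts, field accesses, method calls, allocations, prove, and verify) relates two terms with identical heads, so substitution acts the same on both sides and the same axiom re-applies. For the compound $\LetN$ and $\IfN$ rules, substitution distributes into the immediate subterms, and the inductive hypotheses on the related subexpressions reassemble under the same rule. The only generative case is the sequencing rule deriving $e_1 \lessthan u \seq e_2$ from $e_1 \lessthan e_2$: here $(u \seq e_2)[x \mapsto v] = u[x \mapsto v] \seq e_2[x \mapsto v]$, the substituted head $u[x \mapsto v]$ is still a value, the inductive hypothesis gives $e_1[x \mapsto v] \lessthan e_2[x \mapsto v]$, and one application of the same generating rule closes the case.

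The only potential obstacle is purely bookkeeping: confirming that capture-avoiding substitution commutes with the syntactic decomposition each $\lessthan$ rule performs---in particular the identity $(u \seq e)[x \mapsto v] = u[x \mapsto v] \seq e[x \mapsto v]$, together with the fact that substituting a value into a value yields a value. Both are immediate from the compositional definition of substitution, and since the preceding lemma already discharges precisely this reasoning for arbitrary $\sigma$, no genuine additional difficulty arises.
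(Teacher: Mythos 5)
Your proposal is correct and matches the paper exactly: the paper states this as a corollary of the immediately preceding lemma that $e \lessthan e'$ implies $e[\sigma] \lessthan e'[\sigma]$ for every substitution $\sigma$ mapping variables to values, with the single-variable case obtained by specialization, just as you argue. Your backup induction is essentially a restatement of that lemma's own proof, so nothing further is needed.
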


\begin{lemma}
    Suppose $e_\Compute \lessthan e_\Compute'$ and $e_\Prove \lessthan e_\Prove'$.
    If $\pconf*{e_\Compute}{e_\Prove} \pstep_{\trace_1} \pconf{e_\Compute''}{e_\Prove}{\sigmaC'}{\sigmaP}{\rho'}$, then there exists $e_\Compute'''$ such that $\pconf*{e_\Compute'}{e_\Prove'} \pstep_{\trace_2}^+ \pconf{e_\Compute'''}{e_\Prove'}{\sigmaC'}{\sigmaP}{\rho'}$ and $e_\Compute'' \lessthan e_\Compute'''$ and $\trace_1 \traceq \trace_2$.
    Alternatively, if $\pconf*{e_\Compute}{e_\Prove} \pstep_{\trace_1} \pconf{e_\Compute}{e_\Prove''}{\sigmaC}{\sigmaP'}{\rho}$, then there exists $e_\Prove'''$ such that $\pconf*{e_\Compute'}{e_\Prove'} \pstep_{\trace_2}^+ \pconf{e_\Compute'}{e_\Prove'''}{\sigmaC}{\sigmaP'}{\rho}$ and $e_\Prove'' \lessthan e_\Prove'''$ and $\trace_1 \traceq \trace_2$.
\end{lemma}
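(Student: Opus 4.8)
The plan is to prove this by induction on the derivation of $e_\Compute \lessthan e_\Compute'$ (from the rules in \Cref{app:lessthan-def}), inverting the assumed compute-side step in each case to show how the larger program $e_\Compute'$ mimics it after first silently discarding any leading value-sequences that $\lessthan$ has inserted. The ``alternative'' prove-side statement is the mirror image, obtained by swapping \Compute and \Prove and using \ruleref{EP-Let1P}, \ruleref{EP-StepP}, and \ruleref{EP-WitDerefP}; I would only write out the compute case. A preliminary observation that I rely on throughout is that $e_\Prove$ is a spectator: every compute-side rule (\ruleref{EP-Let1C}, \ruleref{EP-StepC}, \ruleref{EP-WitAssign}, \ruleref{EP-WitDerefC}) inspects only $\sigmaC$ and $\rho$, so replacing $e_\Prove$ by $e_\Prove'$ never changes which rule fires or its result.

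Before the induction I would establish a small helper: if $v \lessthan e'$ for a value $v$, then $\pconf*{e'}{e_\Prove'} \pstep_\Nulltrace^* \pconf*{v}{e_\Prove'}$. This follows by a sub-induction on $v \lessthan e'$ in the spirit of \Cref{lemma:less-redex}: the reflexive value axiom gives $e' = v$ in zero steps, and the generating rule $v \lessthan w \seq e''$ lets \ruleref{E-Let} (lifted by \ruleref{EP-StepC}) collapse $w \seq e''$ to $e''$, after which the sub-hypothesis applies. With this in hand, the generating-rule case $e_\Compute \lessthan w \seq e_2$ (premise $e_\Compute \lessthan e_2$) is handled as follows: $e_\Compute' = w \seq e_2$ first steps silently to $e_2$ by \ruleref{E-Let}, and then the inductive hypothesis on $e_\Compute \lessthan e_2$ supplies a run $\pstep^+$ from $e_2$ reaching some $e_\Compute'''$ with $e_\Compute'' \lessthan e_\Compute'''$ and a trace $\traceq$-equivalent to $\trace_1$; prepending a $\Nulltrace$ step preserves the equivalence.

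The reflexive axioms for the redex forms (assignment, dereference, cast, field, method call, allocation, \SProveN, \SVerifyN, and the computed-input operations) all have $e_\Compute' = e_\Compute$, so the identical rule fires by the spectator observation and reflexivity of $\lessthan$ (\Cref{lemma:less-reflexive}) closes the case in a single step. The if-congruence $\IfThenElse{v}{e_1}{e_2} \lessthan \IfThenElse{v}{e_1'}{e_2'}$ is similar: the guard is already a value, so both sides reduce by \ruleref{E-IfT} or \ruleref{E-IfF} to their branches, which remain related by the congruence premises.

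The crux is the let-congruence $\LetIn{x\ty\tau}{e_1}{e_2} \lessthan \LetIn{x\ty\tau}{e_1'}{e_2'}$, which I split by inverting the compute step. If $e_1$ is not a value, the step is \ruleref{EP-Let1C} acting inside $e_1$; the inductive hypothesis on $e_1 \lessthan e_1'$ yields a matching $\pstep^+$ run, which \ruleref{EP-Let1C} lifts back to the whole let, and the resulting lets stay related by the congruence rule. If $e_1$ is a value, the step is the binding collapse \ruleref{E-Let} to $e_2[x \mapsto e_1]$; here I would use the helper lemma to drive $e_1'$ down to the same value via silent \ruleref{EP-Let1C} steps, collapse with \ruleref{E-Let}, and then invoke substitution-compatibility of $\lessthan$ (\Cref{lemma:less-substitution}) to conclude $e_2[x \mapsto e_1] \lessthan e_2'[x \mapsto e_1]$. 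I expect this value-collapse sub-case to be the main obstacle: it is the only place where the simulating side may genuinely need more than one step, where the silent-reduction helper and the substitution lemma must be combined, and where one must verify that every inserted step carries $\Nulltrace$ so that $\trace_1 \traceq \trace_2$ is maintained.
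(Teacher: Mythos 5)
Your proof is correct, and its content is essentially the same as the paper's: both arguments rest on the spectator property of the idle side, silent stripping of value prefixes, identical firing of matched redexes, lifting through the let congruence, and substitution-compatibility of $\lessthan$ (\Cref{lemma:less-substitution}) for the binding-collapse case. The organizational difference is the induction scheme. The paper inducts on the derivation of the step $\pstep$ and, in each case, invokes the decomposition lemma \Cref{lemma:less-redex} to write $e_\Compute'$ as a finite value-sequencing prefix $v_1 \seq \cdots \seq v_k$ followed by a term whose head matches $e_\Compute$; the prefix is discharged by $k$ silent steps and the head by the matching rule (with the inner argument for the let-collapse case that $v \lessthan e_1'$ forces $e_1'$ to reduce to $v$ itself). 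You instead induct on the derivation of $e_\Compute \lessthan e_\Compute'$, so the prefix stripping becomes the generating-rule case handled by your induction hypothesis, and the paper's inline observation about $v \lessthan e_1'$ becomes your stand-alone value-reduction helper; you never need \Cref{lemma:less-redex} at all. What the paper's scheme buys is uniformity with the surrounding completeness/soundness lemmas, which are also inductions on the step relation, and a single reusable decomposition lemma; what yours buys is self-containment and a cleaner treatment of the value prefix as an inductive case rather than an appeal to an external characterization. One point to make explicit when writing it up: your helper is stated for top-level configurations, so in the let-collapse case each of its silent steps must be re-derived under the let context via \ruleref{EP-Let1C} (as you note), and the final count of steps must be checked to be at least one so that the $\pstep^+$ in the statement is met---both routine, and both implicitly present in the paper's version as well.
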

\begin{proof}
    By induction on the rules of $\pstep$. 
    For notation, we shorthand $\Nulltrace^k$ to mean a sequence of $\Nulltrace$ repeated $k$ times: $\Nulltrace :: \Nulltrace :: \cdots :: \Nulltrace$.

    \begin{itemize}[itemsep=0.6em]
        \item (E-Step-C)
            We'll consider three cases: that of (E-Ref) and (E-IfT) and (E-Let2).
            All other cases follow similar logic to (E-Ref).
            In all of these subcases, $\pconf*{e_\Compute}{e_\Prove} \pstep \pconf{e_\Compute''}{e_\Prove}{\sigmaC'}{\sigmaP}{\rho'}$, i.e. where only the compute side takes a step.

            Suppose $e_\Compute = \Reft v$, meaning $\pconf*{e_\Compute}{e_\Prove} \pstep_{\trace_1} \pconf{\iota}{e_\Prove}{\sigmaC'}{\sigmaP}{\rho}$.
            Since $e_\Compute \lessthan e_\Compute'$, by \Cref{lemma:less-redex}, $e_\Compute' = v_1 ; v_2 ; \cdots ; v_k ; \Reft v$ for some finite $k \geq 0$.
            So $\pconf*{e_\Compute'}{e_\Prove'} \pstep_{\Nulltrace^k}^k \pconf*{\Reft{v}}{e_\Prove'} \pstep_{\trace_1} \pconf{\iota}{e_\Prove'}{\sigmaC'}{\sigmaP}{\rho}$.
            Thus $e_\Compute''' = e_\Compute'' = \iota$ and $t_2 = \Nulltrace^k :: \trace_1 \traceq t_1$.

            Suppose instead that $e_\Compute = \IfThenElse{\True}{e_1}{e_2}$, meaning  $\pconf*{e_\Compute}{e_\Prove} \pstep_{\Nulltrace} \pconf*{e_1}{e_\Prove}$.
            Since $e_\Compute \lessthan e_\Compute'$, by \Cref{lemma:less-redex}, $e_\Compute' = v_1 ; v_2 ; \cdots ; v_k ; \IfThenElse{\True}{e_1'}{e_2'}$ where $e_1 \lessthan e_1'$ and $e_2 \lessthan e_2'$.
            So $\pconf*{e_\Compute'}{e_\Prove'} \pstep_{\Nulltrace^k}^k \pconf*{\IfThenElse{\True}{e_1'}{e_2'}}{e_\Prove'} \pstep_{\Nulltrace} \pconf*{e_1'}{e_\Prove'}$.
            Thus $e_\Compute'' = e_1 \lessthan e_1' = e_\Compute'''$ and $t_2 = \Nulltrace^k :: \trace_1 \traceq t_1$.
            
            Finally, suppose instead that $e_\Compute = \LetIn{x\ty\TTau}{v}{e}$, meaning $\pconf*{e_\Compute}{e_\Prove} \pstep_{\Nulltrace} \pconf*{e[x \mapsto v]}{e_\Prove}$.
            Since $e_\Compute \lessthan e_\Compute'$, by \Cref{lemma:less-redex}, $e_\Compute' = v_1 ; \cdots ; v_k ; \LetIn{x\ty\TTau}{e_1'}{e'}$ where $v \lessthan e_1'$ and $e \lessthan e'$.
            So $\pconf*{e_\Compute'}{e_\Prove'} \pstep_{\Nulltrace^k}^k \pconf*{\LetIn{x\ty\TTau}{e_1'}{e'}}{e_\Prove'}$.
            Notice that if $v \lessthan e_1'$ then by \Cref{lemma:less-redex} that $e_1' = v_1' ; \cdots ; v_j' ; v'$ where $v \lessthan v'$ meaning $v = v'$ by definition of $\lessthan$.
            Thus, $\pconf*{\LetIn{x\ty\TTau}{e_1'}{e'}}{e_\Prove'} \pstep_{\Nulltrace^j}^j \pconf*{\LetIn{x\ty \TTau}{v}{e'}}{e_\Prove'} \pstep_\Nulltrace \pconf*{e'[x \mapsto v]}{e_\Prove'}$.
            Finally, by \Cref{lemma:less-substitution}, $e[x\mapsto v] \lessthan e'[x \mapsto v]$, and $\Nulltrace^k :: \Nulltrace^j :: \Nulltrace \traceq \Nulltrace$.

        \item (E-Step-P)
        
            Follows the same logic as the (E-Step-C) case, but now on the prove side and where the traces are all sequences of $\Nulltrace$.
        \item (E-Let1-C)
        
            Suppose $e_\Compute = \LetIn{x\ty\TTau}{e_1}{e}$, meaning  $\pconf*{e_\Compute}{e_\Prove} \pstep_{\trace_1} \pconf{\LetIn{x\ty\TTau}{e_2}{e}}{e_\Prove}{\sigmaC'}{\sigmaP}{\rho'}$ from $\pconf*{e_1}{e_\Prove} \pstep_{\trace_1} \pconf{e_2}{e_\Prove}{\sigmaC'}{\sigmaP}{\rho'}$.
            Since $e_\Compute \lessthan e_\Compute'$, by \Cref{lemma:less-redex}, $e_\Compute' = v_1 ; \cdots ; v_k ; \LetIn{x\ty\TTau}{e_1'}{e'}$ where $e_1 \lessthan e_1'$ and $e \lessthan e'$ by definition of $\lessthan$.
            So $\pconf*{e_\Compute'}{e_\Prove'} \pstep_{\Nulltrace^k}^k \pconf*{\LetIn{x\ty\TTau}{e_1'}{e'}}{e_\Prove'}$.
            By the inductive hypothesis, there is some $e_2'$ such that $\pconf*{e_1'}{e_\Prove'} \pstep_{\trace_2} \pconf{e_2'}{e_\Prove'}{\sigmaC'}{\sigmaP}{\rho'}$ where $e_2 \lessthan e_2'$ and $\trace_1 \traceq \trace_2$.
            Then by applying the Let rule, we get $\pconf*{\LetIn{x\ty\TTau}{e_1'}{e'}}{e_\Prove'} \pstep_{\trace_2} \pconf{\LetIn{x\ty\TTau}{e_2'}{e'}}{e_\Prove'}{\sigmaC'}{\sigmaP}{\rho'}$.
            By definition, $\LetIn{x\ty\TTau}{e_2}{e} \lessthan \LetIn{x\ty\TTau}{e_2'}{e'}$.
        \item (E-Let1-P)
        
            Follows the same logic as (E-Let1-C), but now on the prove side and where the traces are all sequences of $\Nulltrace$. 
        \item (E-WitAssign)

            Follows similar logic as the (E-Ref) subcase of (E-Step-C).
        \item (E-WitDeref-C)
        
            Follows similar logic as the (E-Ref) subcase of (E-Step-C), but without changes to the store.
        \item (E-WitDeref-P)
        
            Follows similar logic as the (E-Ref) subcase of (E-Step-C), but without changes to the store, and all traces are sequences of $\Nulltrace$ on the prove side.
    \end{itemize}
\end{proof}

\begin{corollary}[Less-Than Completeness]\label{cor:lessthan-complete}
    Suppose $e_\Compute \lessthan e_\Compute'$ and $e_\Prove \lessthan e_\Prove'$.
    If $\pconf*{e_\Compute}{e_\Prove} \pstep_{\trace_1} \pconf{e_\Compute''}{e_\Prove''}{\sigmaC'}{\sigmaP'}{\rho'}$, then there exists $e_\Compute''', e_\Prove'''$ such that $\pconf*{e_\Compute'}{e_\Prove'} \pstep_{\trace_2}^* \pconf{e_\Compute'''}{e_\Prove'''}{\sigmaC'}{\sigmaP'}{\rho'}$ and $e_\Compute'' \lessthan e_\Compute'''$ and $e_\Prove'' \lessthan e_\Prove'''$ and $\trace_1 \traceq \trace_2$.
\end{corollary}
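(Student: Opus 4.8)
The plan is to observe that the preceding lemma already performs essentially all of the work, so the corollary reduces to a case split on which procedure takes the single concurrent step. First I would note that every rule of the concurrent semantics~$\pstep$ (\Cref{app:target-conc-lang}) alters the components of exactly one procedure: a compute-side step---by \ruleref{EP-Let1C}, \ruleref{EP-StepC}, \ruleref{EP-WitAssign}, or \ruleref{EP-WitDerefC}---leaves $e_\Prove$ and $\sigmaP$ untouched, whereas a prove-side step---by \ruleref{EP-Let1P}, \ruleref{EP-StepP}, or \ruleref{EP-WitDerefP}---leaves $e_\Compute$, $\sigmaC$, and $\rho$ untouched. Hence the single step $\pconf*{e_\Compute}{e_\Prove} \pstep_{\trace_1} \pconf{e_\Compute''}{e_\Prove''}{\sigmaC'}{\sigmaP'}{\rho'}$ necessarily matches one of the two conclusion shapes already handled by the lemma.

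In the compute-side case we have $e_\Prove'' = e_\Prove$ and $\sigmaP' = \sigmaP$, so the configuration reached is exactly $\pconf{e_\Compute''}{e_\Prove}{\sigmaC'}{\sigmaP}{\rho'}$, matching the first branch of the lemma. Applying that branch with the standing hypotheses $e_\Compute \lessthan e_\Compute'$ and $e_\Prove \lessthan e_\Prove'$ produces some $e_\Compute'''$ with $\pconf*{e_\Compute'}{e_\Prove'} \pstep_{\trace_2}^+ \pconf{e_\Compute'''}{e_\Prove'}{\sigmaC'}{\sigmaP}{\rho'}$, together with $e_\Compute'' \lessthan e_\Compute'''$ and $\trace_1 \traceq \trace_2$. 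I would then take $e_\Prove''' = e_\Prove'$; since $e_\Prove'' = e_\Prove \lessthan e_\Prove'$ by assumption, the required $e_\Prove'' \lessthan e_\Prove'''$ holds immediately. The prove-side case is entirely symmetric, instead invoking the second branch of the lemma and taking $e_\Compute''' = e_\Compute'$.

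Finally, the lemma yields a nonempty reduction $\pstep_{\trace_2}^+$, whereas the corollary asks only for $\pstep_{\trace_2}^*$; since any nonempty reduction is in particular a (possibly empty) one, this weakening is immediate and concludes the argument. There is no real obstacle here: the single conceptual point to verify is that each concurrent rule is genuinely one-sided and preserves the opposite procedure's stores verbatim, so that the post-step configuration literally instantiates one of the lemma's two conclusions. Everything else---propagating the $\lessthan$ relation on the untouched side from the standing hypothesis and relaying $\trace_1 \traceq \trace_2$ unchanged---is routine bookkeeping.
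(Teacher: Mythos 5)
Your proposal is correct and takes essentially the same route as the paper: the paper gives no separate proof, treating the statement as an immediate consequence of the preceding lemma, and your explicit argument---noting that every rule of $\pstep$ is one-sided, applying the corresponding branch of the lemma, taking the untouched side's witness to be the primed expression (so its $\lessthan$ obligation is exactly the standing hypothesis), and weakening $\pstep^+$ to $\pstep^*$---is precisely the omitted bookkeeping.
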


\begin{theorem}[Concurrent Completeness with Traces]\label{thm:fullbisimcomplete}
    If $e$ is an annotated expression where $\langle e \divi (\sigmaC, \sigmaP, \rho) \rangle \cpstep^*_{{\trace_1}} \langle e' \divi (\sigmaC', \sigmaP', \rho') \rangle$, then there is some $e'_\Compute$ and $e'_\Prove$ such that $\pconf*{\denC{e}}{\denP{e}} \pstep^*_{{\trace_2}} \pconf{e_\Compute'}{e_\Prove'}{\sigmaC'}{\sigmaP'}{\rho'}$, where $\denC{e'} \lessthan e'_\Compute$ and $\denP{e'} \lessthan e'_\Prove$ and ${\trace_1} \traceq {\trace_2}$.
\end{theorem}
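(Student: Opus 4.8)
The plan is to proceed by induction on the number~$n$ of steps in the source reduction $\cpconf*{e} \cpstep^*_{\trace_1} \cpconf{e'}{\sigmaC'}{\sigmaP'}{\rho'}$, feeding each individual source step to the single-step result \Cref{thm:singlestepbisimcomplete}. The essential difficulty to keep in view is that \Cref{thm:singlestepbisimcomplete} relates the projection of the stepped-to term to the reached concurrent configuration only \emph{up to} the less-sequencing relation~$\lessthan$, not by equality. Hence the single-step lemma cannot simply be iterated: after processing the first $n-1$ steps the induction leaves us at a pair of terms that are merely $\lessthan$-above the projections of the intermediate source term, whereas the single-step lemma for the $n$th step speaks about those projections themselves. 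For the base case $n = 0$ we take zero concurrent steps with $e_\Compute' = \denC{e}$ and $e_\Prove' = \denP{e}$, discharging the obligations via reflexivity of~$\lessthan$ (\Cref{lemma:less-reflexive}) and $\epsilon \traceq \epsilon$.

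For the inductive step, write the reduction as $\cpconf*{e} \cpstep^*_{\trace_1^0} \cpconf{\hat e}{\sigmaC''}{\sigmaP''}{\rho''} \cpstep_{s} \cpconf{e'}{\sigmaC'}{\sigmaP'}{\rho'}$ with $\trace_1 = \trace_1^0 :: s$. The inductive hypothesis on the first $n-1$ steps supplies a concurrent reduction $\pconf*{\denC{e}}{\denP{e}} \pstep^*_{\trace_2^0} \pconf{e_\Compute^0}{e_\Prove^0}{\sigmaC''}{\sigmaP''}{\rho''}$ with $\denC{\hat e} \lessthan e_\Compute^0$, $\denP{\hat e} \lessthan e_\Prove^0$, and $\trace_1^0 \traceq \trace_2^0$. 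Instantiating the generic stores of \Cref{thm:singlestepbisimcomplete} to $\sigmaC'', \sigmaP'', \rho''$ and applying it to the final step yields a one- or two-step concurrent reduction $\pconf{\denC{\hat e}}{\denP{\hat e}}{\sigmaC''}{\sigmaP''}{\rho''} \pstep^*_{s'} \pconf{e_\Compute''}{e_\Prove''}{\sigmaC'}{\sigmaP'}{\rho'}$ with $\denC{e'} \lessthan e_\Compute''$, $\denP{e'} \lessthan e_\Prove''$, and $s \traceq s'$.

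These two reductions do not compose as written, since the second begins at the projections of~$\hat e$ while the first ends at the $\lessthan$-larger pair $e_\Compute^0, e_\Prove^0$. This is exactly the gap that \Cref{cor:lessthan-complete} closes: because $\denC{\hat e} \lessthan e_\Compute^0$ and $\denP{\hat e} \lessthan e_\Prove^0$, we transport the second reduction (applying the corollary once per concurrent step, hence at most twice) to start from $e_\Compute^0, e_\Prove^0$, obtaining $\pconf{e_\Compute^0}{e_\Prove^0}{\sigmaC''}{\sigmaP''}{\rho''} \pstep^*_{s''} \pconf{e_\Compute'}{e_\Prove'}{\sigmaC'}{\sigmaP'}{\rho'}$ with $e_\Compute'' \lessthan e_\Compute'$, $e_\Prove'' \lessthan e_\Prove'$, $s' \traceq s''$, and identical final stores. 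Transitivity of~$\lessthan$ (\Cref{lemma:less-transitive}) then gives $\denC{e'} \lessthan e_\Compute'$ and $\denP{e'} \lessthan e_\Prove'$. Concatenating with the reduction from the inductive hypothesis establishes $\pconf*{\denC{e}}{\denP{e}} \pstep^*_{\trace_2} \pconf{e_\Compute'}{e_\Prove'}{\sigmaC'}{\sigmaP'}{\rho'}$ for $\trace_2 = \trace_2^0 :: s''$, and $\trace_1 \traceq \trace_2$ follows by combining $\trace_1^0 \traceq \trace_2^0$ with $s \traceq s' \traceq s''$ through the congruence of~$\traceq$ under concatenation.

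I expect the main obstacle to lie in managing the $\lessthan$ relation across the induction rather than in any single calculation. Because reduction does not preserve syntactic equality between a program and its projection---the phenomenon motivating~$\lessthan$ in \Cref{sec:concurrent-bisimilarity}---the whole argument hinges on \Cref{cor:lessthan-complete} absorbing this discrepancy while preserving both the reached stores and the trace up to~$\traceq$, so that \Cref{lemma:less-transitive} can re-establish the invariant at each step. The remaining obligations---that $\traceq$ is a congruence for trace concatenation and that the stores thread through unchanged---are routine consequences of the defining rules.
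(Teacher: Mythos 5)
Your proposal is correct and matches the paper's own proof in all essentials: both proceed by induction on the step sequence, using \Cref{thm:singlestepbisimcomplete} for the newly peeled-off source step, \Cref{cor:lessthan-complete} to transport its simulation across the $\lessthan$ gap left by the preceding steps, and \Cref{lemma:less-transitive} to restore the invariant, with traces handled via the congruence of $\traceq$ under concatenation. The only difference is cosmetic---you peel off the last step while the paper's diagram peels off the first and applies the inductive hypothesis to the tail---which does not change the argument.
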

\begin{proof}
    By induction.
    Specifically, combining \Cref{thm:singlestepbisimcomplete} and \Cref{cor:lessthan-complete}, then applying transitivity from \Cref{lemma:less-transitive} as shown:

    \begin{center}
        \begin{tikzpicture}[
            double_arrow/.style = {-{Straight Barb}{Straight Barb}}
            ]
            \node (Surf1) at (0,0) {$e$\strut};
            \node (Targ1) at (0,-6) {$\den{e}$\strut};
            \node (Surf2) at (3,0) {$e_1$\strut};
            \node (Targ2) at (3,-4.5) {$\den{e_1}$\strut};
            \node (Less2) at (3,-6) {$(e_{\Compute1}, e_{\Prove1})$\strut};
            \node (Surf3) at (6,0) {$e_2$\strut};
            \node (Targ3) at (6,-3) {$\den{e_2}$\strut};
            \node (Less3) at (6,-4.5) {$(e_{\Compute2}, e_{\Prove2})$\strut};
            \node (Bott3) at (6,-6) {$(e_{\Compute1}', e_{\Prove1}')$\strut};
            \node (SurfN) at (10,0) {$e'$\strut};
            \node (TargN) at (10,-1.5) {$\den{e'}$\strut};
            \node (DownN) at (10,-3) {$(e_{\Compute3}', e_{\Prove3}')$\strut};
            \node (LessN) at (10,-4.5) {$(e_{\Compute2}', e_{\Prove2}')$\strut};
            \node (BottN) at (10,-6) {$(e_{\Compute1}'', e_{\Prove1}'')$\strut};

            \draw[|->] (Surf1.south) -- (Targ1.north) node[midway,left]{$\den{\cdot}$};
            \draw[|->] (Surf2.south) -- (Targ2.north) node[midway,left]{$\den{\cdot}$};
            \draw[|->] (Surf3.south) -- (Targ3.north) node[midway,left]{$\den{\cdot}$};
            \draw[|->] (SurfN.south) -- (TargN.north) node[midway,left]{$\den{\cdot}$};
            \draw[|->] (Targ2.south) -- (Less2.north) node[midway,left]{$\lessthan$};
            \draw[|->] (Targ3.south) -- (Less3.north) node[midway,left]{$\lessthan$};
            \draw[|->] (Less3.south) -- (Bott3.north) node[midway,left]{$\lessthan$};
            \draw[|->] (TargN.south) -- (DownN.north) node[midway,left]{$\lessthan$};
            \draw[|->] (DownN.south) -- (LessN.north) node[midway,left]{$\lessthan$};
            \draw[|->] (LessN.south) -- (BottN.north) node[midway,left]{$\lessthan$};

            \draw[-{{Straight Barb}[scale=0.9]{Straight Barb}[scale=0.9]},shorten >=3pt,sourcecolor,semithick,dashed] ($(Surf1.east)$) -- ($(Surf2.west)$);
            \draw[-{{Straight Barb}[scale=0.9]{Straight Barb}[scale=0.9]},shorten >=3pt,sourcecolor,semithick,dashed] ($(Surf2.east)$) -- ($(Surf3.west)$);
            \draw[-{{Straight Barb}[scale=0.9]{Straight Barb}[scale=0.9]},shorten >=3pt,sourcecolor,semithick,dashed] ($(Surf3.east)$) -- ($(SurfN.west)$) node[pos=0.94,right]{\textcolor{black}{${}^*$}};

            \draw[-{{Straight Barb}[scale=0.9]},shorten >=3pt,targetcolor,semithick,dashed] ($(Targ1.east) + (0,0.09)$) -- ($(Less2.west) + (0.1,0.09)$) node[midway,above]{\textcolor{black}{\Cref{thm:singlestepbisimcomplete}}};
            \draw[-{{Straight Barb}[scale=0.9]},shorten >=3pt,targetcolor,semithick,dashed] ($(Targ1.east) - (0,0.09)$) -- ($(Less2.west) + (0.1,-0.09)$);

            \draw[-{{Straight Barb}[scale=0.9]},shorten >=3pt,targetcolor,semithick,dashed] ($(Targ2.east) + (0,0.09)$) -- ($(Less3.west) + (0.1,0.09)$) node[midway,above]{\textcolor{black}{\Cref{thm:singlestepbisimcomplete}}};
            \draw[-{{Straight Barb}[scale=0.9]},shorten >=3pt,targetcolor,semithick,dashed] ($(Targ2.east) - (0,0.09)$) -- ($(Less3.west) + (0.1,-0.09)$);

            \draw[-{{Straight Barb}[scale=0.9]},shorten >=3pt,targetcolor,semithick,dashed] ($(Targ3.east) + (0,0.09)$) -- ($(DownN.west) + (0.1,0.09)$) node[midway,above]{\textcolor{black}{Induction}} node[pos=0.92,right]{\textcolor{black}{${}^*$}};
            \draw[-{{Straight Barb}[scale=0.9]},shorten >=3pt,targetcolor,semithick,dashed] ($(Targ3.east) - (0,0.09)$) -- ($(DownN.west) + (0.1,-0.09)$);

            \draw[-{{Straight Barb}[scale=0.9]},shorten >=3pt,targetcolor,semithick,dashed] ($(Less3.east) + (0,0.09)$) -- ($(LessN.west) + (0.1,0.09)$) node[midway,above]{\textcolor{black}{\Cref{cor:lessthan-complete}}} node[pos=0.92,right]{\textcolor{black}{${}^*$}};
            \draw[-{{Straight Barb}[scale=0.9]},shorten >=3pt,targetcolor,semithick,dashed] ($(Less3.east) - (0,0.09)$) -- ($(LessN.west) + (0.1,-0.09)$);

            \draw[-{{Straight Barb}[scale=0.9]},shorten >=3pt,targetcolor,semithick,dashed] ($(Less2.east) + (0,0.09)$) -- ($(Bott3.west) + (0.1,0.09)$) node[midway,above]{\textcolor{black}{\Cref{cor:lessthan-complete}}} node[pos=0.86,right]{\textcolor{black}{${}^*$}};
            \draw[-{{Straight Barb}[scale=0.9]},shorten >=3pt,targetcolor,semithick,dashed] ($(Less2.east) - (0,0.09)$) -- ($(Bott3.west) + (0.1,-0.09)$);

            \draw[-{{Straight Barb}[scale=0.9]},shorten >=3pt,targetcolor,semithick,dashed] ($(Bott3.east) + (0,0.09)$) -- ($(BottN.west) + (0.1,0.09)$) node[midway,above]{\textcolor{black}{\Cref{cor:lessthan-complete}}} node[pos=0.92,right]{\textcolor{black}{${}^*$}};
            \draw[-{{Straight Barb}[scale=0.9]},shorten >=3pt,targetcolor,semithick,dashed] ($(Bott3.east) - (0,0.09)$) -- ($(BottN.west) + (0.1,-0.09)$);
        \end{tikzpicture}
    \end{center}
\end{proof}

\bisimcomplete*
\begin{proof}
    Corollary of~\Cref{thm:fullbisimcomplete}.
\end{proof}

\begin{lemma}
    Suppose $e_\Compute \lessthan e_\Compute'$ and $e_\Prove \lessthan e_\Prove'$.
    If $\pconf*{e_\Compute'}{e_\Prove'} \pstep_{\trace_1} \pconf{e_\Compute''}{e_\Prove'}{\sigmaC'}{\sigmaP}{\rho'}$ then there exists a $e_\Compute'''$ such that $\pconf*{e_\Compute}{e_\Prove} \pstep_{\trace_2}^{\{0,1\}} \pconf{e_\Compute'''}{e_\Prove}{\sigmaC'}{\sigmaP}{\rho'}$ where $e_\Compute''' \lessthan e_\Compute''$ and $\trace_1 \traceq \trace_2$.
    Alternatively, if $\pconf*{e_\Compute'}{e_\Prove'} \pstep_{\trace_1} \pconf{e_\Compute'}{e_\Prove''}{\sigmaC}{\sigmaP'}{\rho}$ then there exists a $e_\Prove'''$ such that $\pconf*{e_\Compute}{e_\Prove} \pstep_{\trace_2}^{\{0,1\}} \pconf{e_\Compute}{e_\Prove'''}{\sigmaC}{\sigmaP'}{\rho}$ where $e_\Prove''' \lessthan e_\Prove''$ and $\trace_1 \traceq \trace_2$.
\end{lemma}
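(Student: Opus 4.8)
The plan is to prove the compute-side statement by induction on the derivation of $e_\Compute \lessthan e_\Compute'$, using the structural rules of \Cref{app:lessthan-def}; the prove-side claim then follows by the symmetric argument, since the concurrent semantics treats the two procedures identically except that $\Prove$ never writes $\rho$ (an asymmetry that never obstructs us, because we only ever need to \emph{replay} a reduction that has already fired on the larger expression). The governing intuition is that $e_\Compute'$ is just $e_\Compute$ with extra leading sequencing prefixes of the form $v \seq (\cdot)$. Whenever $e_\Compute'$ consumes such a prefix, the smaller $e_\Compute$ should idle (zero steps); whenever $e_\Compute'$ fires a genuine redex, $e_\Compute$ fires the matching one (one step). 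This is exactly what yields the $\{0,1\}$ step bound, in contrast to the $+$ bound of the completeness direction (the lemma immediately preceding \Cref{cor:lessthan-complete}).

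First I would dispatch the two base shapes. If the $\lessthan$ derivation ends in the sequencing rule, then $e_\Compute' = v \seq e''$ with $e_\Compute \lessthan e''$; the only reduction available to $v \seq e''$ is \ruleref{E-Let} (lifted through \ruleref{EP-StepC}), which drops the prefix to $e''$ with unchanged stores and a $\Nulltrace$ trace. I then let $e_\Compute$ take \emph{zero} steps and set $e_\Compute''' = e_\Compute$, so that $e_\Compute''' \lessthan e'' = e_\Compute''$ holds directly, and the empty trace is related to $\Nulltrace$ by $\traceq$. If instead the derivation ends in a reflexive leaf axiom, then $e_\Compute = e_\Compute'$ is itself a redex; I replay the identical reduction from $e_\Compute$ (one step, identical trace and store effects) and conclude $e_\Compute'' \lessthan e_\Compute''$ by \Cref{lemma:less-reflexive}.

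Next I would handle the two congruence rules. For the $\mathsf{let}$ congruence, $e_\Compute$ and $e_\Compute'$ are lets with the same binder and $\lessthan$-related components $e_1 \lessthan e_1'$, $e_2 \lessthan e_2'$. If $e_\Compute'$ reduces its bound subterm (\ruleref{EP-Let1C}), I apply the induction hypothesis to $e_1 \lessthan e_1'$, obtaining a matching reduction of $e_1$ in $\{0,1\}$ steps, and reassemble the let via the $\lessthan$ congruence rule, inheriting the store changes and trace from the hypothesis. If instead $e_1'$ is already a value, a short inversion of $\lessthan$ (its only rule producing a value on the right is the reflexive value axiom) forces $e_1 = e_1'$, so $e_\Compute$ also fires \ruleref{E-Let}, and the resulting substitution step is matched because $\lessthan$ is preserved under substitution (\Cref{lemma:less-substitution}). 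The $\mathsf{if}$ congruence is analogous: the guard $v$ is shared between $e_\Compute$ and $e_\Compute'$, so both fire the same branch rule (\ruleref{E-IfT} or \ruleref{E-IfF}), and I close the case with the corresponding $\lessthan$ premise.

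I expect the main obstacle to be precisely the $\mathsf{let}$-congruence value sub-case: arguing that a value on the right-hand side of $\lessthan$ forces equality on the left. This inversion is what guarantees the smaller expression is \emph{also} a redex---rather than requiring further internal reduction first---and thereby pins the step count at exactly one while letting substitution-compatibility close the case. Everything else is bookkeeping: each one-step case reuses the larger expression's trace verbatim, the single zero-step case pairs a $\Nulltrace$-emitting reduction against the empty trace (related through the $\Nulltrace$-absorbing rule of $\traceq$), and the store components agree because every replayed reduction touches exactly the same locations, or none at all.
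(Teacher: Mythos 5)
Your proposal is correct and matches the paper's own argument in all essential respects: prefix-dropping steps on the larger expression are answered by idling (zero steps), genuine redex steps are replayed verbatim (one step), the let/if congruences go through the induction hypothesis, and the let-with-value case rests on exactly the same two facts you identify---inversion of $\lessthan$ at values and preservation of $\lessthan$ under substitution (\Cref{lemma:less-substitution}). The only difference is organizational: you induct on the $\lessthan$ derivation, whereas the paper inducts on the $\pstep$ rules and recovers the same case split via the head-decomposition lemma (\Cref{lemma:less-redex}), which plays the role of your inversion principle.
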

\begin{proof}
    By induction on the rules of $\pstep$.

    \begin{itemize}[itemsep=0.6em]
        \item (E-Step-C)
            We'll consider three cases: that of (E-Ref) and (E-IfT) and (E-Let2).
            All other cases follow similar logic to (E-Ref).
            In all of these subcases, $\pconf*{e_\Compute'}{e_\Prove'} \pstep \pconf{e_\Compute''}{e_\Prove'}{\sigmaC'}{\sigmaP}{\rho'}$, i.e. where only the compute side takes a step.

            Suppose $e_\Compute = \Reft v$, meaning by \Cref{lemma:less-redex}, $e_\Compute' = v_1 ; v_2 ; \cdots ; v_k ; \Reft v$ for some finite $k \geq 0$.
            If $k = 0$ then $e_\Compute = e_\Compute'$ so the exact same step can be taken and $\trace_1 = \trace_2$.
            If $k > 0$ then we are done by simply taking 0 steps, as $\Reft v \lessthan v_2 ; \cdots v_k ; \Reft v$ and $\Nulltrace = \trace_1 \traceq \Nulltrace$.

            Suppose instead that $e_\Compute = \IfThenElse{\True}{e_1}{e_2}$, which means by \Cref{lemma:less-redex} that $e_\Compute' = v_1 ; v_2 ; \cdots ; v_k ; \IfThenElse{\True}{e_1'}{e_2'}$ where $e_1 \lessthan e_1'$ and $e_2 \lessthan e_2'$.
            If $k = 0$ then $\pconf*{e_\Compute'}{e_\Prove'} \pstep_{\Nulltrace} \pconf*{e_1'}{e_\Prove'}$ and so we can take $\pconf*{e_\Compute}{e_\Prove} \pstep_{\Nulltrace} \pconf*{e_1}{e_\Prove}$.
            By the inductive hypothesis, $e_1 \lessthan e_1'$, and $\Nulltrace \traceq \Nulltrace$.
            If $k > 0$ then we are done by simply taking 0 steps, as $\IfThenElse{\True}{e_1}{e_2} \lessthan v_2 ; \cdots ; v_k ; \IfThenElse{\True}{e_1'}{e_2'}$ since $e_1 \lessthan e_1'$ and $e_2 \lessthan e_2'$.
            
            Finally, suppose instead that $e_\Compute = \LetIn{x\ty\TTau}{v}{e}$, meaning by \Cref{lemma:less-redex}, $e_\Compute' = v_1 ; \cdots ; v_k ; \LetIn{x\ty\TTau}{e_1'}{e'}$ where $v \lessthan e_1'$ and $e \lessthan e'$.
            If $k = 0$ and $e_1'$ is a value (which means it's equal to $v$) then $\pconf*{e_\Compute'}{e_\Prove'} \pstep_{\Nulltrace} \pconf*{e'[x \mapsto v]}{e_\Prove'}$ and so we can take $\pconf*{e_\Compute}{e_\Prove} \pstep_{\Nulltrace} \pconf*{e[x \mapsto v]}{e_\Prove}$, and since $e \lessthan e'$, we know $e[x \mapsto v] \lessthan e'[x \mapsto v]$ by \Cref{lemma:less-substitution}.
            Otherwise, we are done by simply taking 0 steps.

        \item (E-Step-P)
        
            Follows the same logic as the (E-Step-C) case, but now on the prove side and where the traces are all sequences of $\Nulltrace$.
        \item (E-Let1-C)
        
            Suppose $e_\Compute = \LetIn{x\ty\TTau}{e_1}{e}$, which by \Cref{lemma:less-substitution} means $e_\Compute' = v_1 ; \cdots ; v_k ; \LetIn{x\ty\TTau}{e_1'}{e'}$ where $e_1 \lessthan e_1'$ and $e \lessthan e'$.
            If $k = 0$ then $e_\Compute' = \LetIn{x\ty\TTau}{e_1'}{e'}$, so $\pconf*{e_\Compute'}{e_\Prove'} \pstep_{\trace_1} \pconf{\LetIn{x\ty\TTau}{e_2'}{e'}}{e_\Prove'}{\sigmaC'}{\sigmaP}{\rho}$ from $\pconf*{e_1'}{e_\Prove'} \pstep_{\trace_1} \pconf{e_2'}{e_\Prove'}{\sigmaC'}{\sigmaP}{\rho'}$.
            By the inductive hypothesis there is some $e_2$ such that $\pconf*{e_1}{e_\Prove} \pstep_{\trace_2} \pconf{e_2}{e_\Prove}{\sigmaC'}{\sigmaP}{\rho'}$ where $e_2 \lessthan e_2'$ and $\trace_1 \traceq \trace_2$.
            The, by applying the $\LetN$ rule, we get $\pconf*{\LetIn{x\ty\TTau}{e_1}{e}}{e_\Prove} \pstep_{\trace_2} \pconf{\LetIn{x\ty\TTau}{e_2}{e}}{e_\Prove}{\sigmaC'}{\sigmaP}{\rho'}$.
            By definition, $\LetIn{x\ty\TTau}{e_2}{e} \lessthan \LetIn{x\ty\TTau}{e_2'}{e'}$.
        \item (E-Let1-P)
        
            Follows the same logic as (E-Let1-C), but now on the prove side and where the traces are all sequences of $\Nulltrace$. 
        \item (E-WitAssign)

            Follows similar logic as the (E-Ref) subcase of (E-Step-C).
        \item (E-WitDeref-C)
        
            Follows similar logic as the (E-Ref) subcase of (E-Step-C), but without changes to the store.
        \item (E-WitDeref-P)
        
            Follows similar logic as the (E-Ref) subcase of (E-Step-C), but without changes to the store, and all traces are sequences of $\Nulltrace$ on the prove side.
    \end{itemize}
\end{proof}

\begin{corollary}[Less-Than Soundness]\label{cor:lessthan-sound}
    Suppose $e_\Compute \lessthan e_\Compute'$ and $e_\Prove \lessthan e_\Prove'$.
    If $\pconf*{e_\Compute'}{e_\Prove'} \pstep_{\trace_1} \pconf{e_\Compute''}{e_\Prove''}{\sigmaC'}{\sigmaP'}{\rho'}$ then there exists a $e_\Compute''', e_\Prove'''$ such that $\pconf*{e_\Compute}{e_\Prove} \pstep_{\trace_2}^{\{0,1\}} \pconf{e_\Compute'''}{e_\Prove'''}{\sigmaC'}{\sigmaP'}{\rho'}$ where $e_\Compute''' \lessthan e_\Compute''$ and $e_\Prove''' \lessthan e_\Prove''$ and $\trace_1 \traceq \trace_2$.
\end{corollary}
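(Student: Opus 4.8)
The plan is to derive this corollary directly from the preceding lemma by a case split on the single concurrent step, so that essentially no new reasoning is required beyond the lemma itself. The key enabling observation is that every rule of the concurrent semantics (\Cref{app:target-conc-lang}) advances exactly one of the two procedures: the rules \ruleref{EP-StepC}, \ruleref{EP-Let1C}, \ruleref{EP-WitAssign}, and \ruleref{EP-WitDerefC} rewrite only the compute expression and modify only $\sigmaC$ and $\rho$, leaving the prove expression and $\sigmaP$ literally unchanged, while \ruleref{EP-StepP}, \ruleref{EP-Let1P}, and \ruleref{EP-WitDerefP} symmetrically touch only the prove side. Hence the hypothesized step $\pconf*{e_\Compute'}{e_\Prove'} \pstep_{\trace_1} \pconf{e_\Compute''}{e_\Prove''}{\sigmaC'}{\sigmaP'}{\rho'}$ must have one of exactly two shapes, each matching one clause of the lemma.

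In the first case the compute side steps, so $e_\Prove'' = e_\Prove'$ and $\sigmaP' = \sigmaP$. I would feed the hypotheses $e_\Compute \lessthan e_\Compute'$ and $e_\Prove \lessthan e_\Prove'$, together with the step rewritten as $\pconf*{e_\Compute'}{e_\Prove'} \pstep_{\trace_1} \pconf{e_\Compute''}{e_\Prove'}{\sigmaC'}{\sigmaP}{\rho'}$, into the first clause of the preceding lemma. This yields some $e_\Compute'''$ with $\pconf*{e_\Compute}{e_\Prove} \pstep_{\trace_2}^{\{0,1\}} \pconf{e_\Compute'''}{e_\Prove}{\sigmaC'}{\sigmaP}{\rho'}$, together with $e_\Compute''' \lessthan e_\Compute''$ and $\trace_1 \traceq \trace_2$. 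Taking $e_\Prove''' = e_\Prove$ discharges the remaining obligation: $e_\Prove''' = e_\Prove \lessthan e_\Prove' = e_\Prove''$ holds by hypothesis, and the resulting store tuple coincides with the one demanded by the corollary precisely because the compute step left $\sigmaP$ fixed. The second case, in which the prove side steps, is entirely symmetric, applying the second clause of the lemma and taking $e_\Compute''' = e_\Compute$.

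I expect the only point requiring care to be the store bookkeeping: one must confirm that a single step of one procedure genuinely leaves the other procedure's store untouched, so that the store appearing in the lemma's conclusion is exactly the one required by the corollary (e.g., that $\sigmaP$ in the first clause equals the $\sigmaP'$ of the corollary). This is why I would establish the one-sided-modification property of the concurrent rules up front; with it in hand, the remainder of the argument is immediate.
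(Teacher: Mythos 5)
Your proposal is correct and takes essentially the same route as the paper: the paper states this result as an immediate corollary of the preceding lemma with no separate proof, relying on exactly the case split you make explicit---every rule of the concurrent semantics advances only one procedure, so the hypothesized step matches one clause of the lemma, and the untouched side's expression serves as the remaining witness since $e_\ell \lessthan e_\ell' = e_\ell''$ holds by hypothesis. Your attention to the store bookkeeping (that a compute step leaves $\sigma_\Prove$ fixed and a prove step leaves $\sigma_\Compute$ and $\rho$ fixed) is precisely what makes the lemma's conclusion line up with the corollary's, so there is no gap.
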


\begin{lemma}[Termination of Source Programs]\label{lemma:source-termination}
    If~$\SExp$ is an annotated well-typed expression where~$\denC{\SExp}$ and~$\denP{\SExp}$ terminate with respect to $\tstep$, then~$\SExp$ must also terminate with respect to~$\cpstep$.
\end{lemma}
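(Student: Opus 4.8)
The plan is to argue by contradiction: assuming $\SExp$ admits an infinite $\cpstep$ reduction, I would transport it into an infinite \emph{concurrent} reduction and then show the concurrent semantics cannot diverge when both projections terminate. First I would invoke completeness. By \Cref{thm:singlestepbisimcomplete} each single source step $\cpconf*{\SExp} \cpstep \cpconf{\SExp'}{\sigmaC'}{\sigmaP'}{\rho'}$ is matched by $\pconf*{\denC{\SExp}}{\denP{\SExp}} \pstep^{\{1,2\}} \pconf{e_\Compute'}{e_\Prove'}{\sigmaC'}{\sigmaP'}{\rho'}$ with $\denC{\SExp'} \lessthan e_\Compute'$ and $\denP{\SExp'} \lessthan e_\Prove'$, and the $\lessthan$-slack accumulated as the source keeps stepping is absorbed by \Cref{cor:lessthan-complete}. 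Splicing these matchings together exactly as in the diagram of \Cref{thm:fullbisimcomplete} turns an infinite source reduction into an infinite $\pstep$ reduction out of $\pconf*{\denC{\SExp}}{\denP{\SExp}}$, since each source step contributes at least one concurrent step. It therefore suffices to prove that the concurrent semantics terminates whenever $\denC{\SExp}$ and $\denP{\SExp}$ do.

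For concurrent termination I would exploit that every $\pstep$ rule acts on exactly one component. The rules \ruleref{EP-StepC}, \ruleref{EP-Let1C}, \ruleref{EP-WitAssign}, and \ruleref{EP-WitDerefC} advance the compute expression $e_\Compute$ and its store $\sigmaC$ (possibly writing $\rho$), while \ruleref{EP-StepP}, \ruleref{EP-Let1P}, and \ruleref{EP-WitDerefP} advance $e_\Prove$ and $\sigmaP$ and only \emph{read} $\rho$. The key structural observation is that the compute component evolves entirely independently of the prove component: a compute step never inspects $e_\Prove$ or $\sigmaP$, and no prove step ever mutates $e_\Compute$, $\sigmaC$, or $\rho$. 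Hence the compute-side steps, read in isolation, form a deterministic $\tstep$ reduction of $e_\Compute$ (determinism by \Cref{thm:target-determinism}) coinciding with the reduction of $\denC{\SExp}$; by the hypothesis that $\denC{\SExp}$ terminates, this reduction has finite length $n_\Compute$, and every compute-side step strictly decreases $n_\Compute$.

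The prove side is the delicate case, because $e_\Prove$ reads the shared store $\rho$, which the compute side fills in over time. Here I would use the single-assignment invariant enforced by the type system and recorded in configuration safety ($\Delta; A \cproves \rho$ with $\dom(\rho) = A$ and each computed input assigned at most once): every $\witderef x$ performed by the prove side returns the unique value ever stored at $x$, so the observed value is independent of the interleaving point. Consequently the prove component's reduction is insensitive to scheduling and matches the standalone terminating reduction of $\denP{\SExp}$, yielding a finite remaining-step count $n_\Prove$ that each prove-side step strictly decreases. The measure $n_\Compute + n_\Prove$ then strictly decreases on every $\pstep$ step and is bounded below, so no infinite concurrent reduction exists, contradicting the transported reduction and establishing termination of $\SExp$.

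I expect the main obstacle to be making the two step-count measures genuinely well defined and monotone in the presence of two complications. First, projection introduces $\lessthan$-slack---inert prefixes of the form $v \seq e$---so a faithful accounting must observe that each such prefix is discarded in a single step and that only finitely many are ever created, so they cannot inflate $n_\Compute$ or $n_\Prove$ unboundedly. Second, although the compute side may write $\rho$ while the prove side reads it, I must rule out any feedback loop that could let the two sides drive each other forever; the independence argument together with single-assignment is precisely what closes this, and verifying that these invariants survive \emph{every} interleaved step, rather than only the in-order schedule, is the technically demanding part.
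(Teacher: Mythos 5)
Your route is genuinely different from the paper's. The paper proves this lemma by structural induction on~$\SExp$: every construct except a method call reduces in a number of steps bounded by the term's structure, so recursive method calls are the only possible source of divergence; method bodies are \corelang terms (they contain no computed-input operations), their projections are either the call itself or a unit value, and termination of the projected call transfers to the lifted body. Your transport step instead goes through the concurrent semantics: an infinite $\cpstep$ run yields an infinite $\pstep$ run by completeness. That half of your argument is sound and non-circular---\Cref{thm:singlestepbisimcomplete}, the $\lessthan$-lifting lemma (which gives $\pstep^+$, strengthening the $\pstep^*$ stated in \Cref{cor:lessthan-complete}), and \Cref{thm:fullbisimcomplete} are all proved without appeal to this lemma, and each source step is matched by at least one concurrent step.

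The genuine gap is in how you discharge the contradiction. Your measure $n_\Compute$ is the length of ``the deterministic $\tstep$ reduction of $\denC{\SExp}$,'' but the compute component of a concurrent run is \emph{not} a $\tstep$ reduction: the annotated projection $\denC{\SExp}$ retains computed-input operations $x \witassign v$ and $\witderef x$, which have no $\tstep$ rules and are executed concurrently by \ruleref{EP-WitAssign} and \ruleref{EP-WitDerefC}. Under $\tstep$ itself, $\denC{\SExp}$ simply gets stuck at its first computed-input operation, so the hypothesis ``$\denC{\SExp}$ terminates with respect to $\tstep$'' imposes no bound whatsoever on the number of compute-side concurrent steps, and your measure is not well defined. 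The problem is worse on the prove side: besides containing $\witderef x$, the prove component is schedule-dependent in a way you underestimate, since \ruleref{EP-WitDerefP} blocks until \Compute has populated~$\rho$, so under a given interleaving the prove run is only a prefix of any ``standalone'' run. To repair the argument you must route both components through the computed-input translation---\Cref{thm:witness-compute-bisimulation} for \Compute and \Cref{thm:witness-prove-bisimulation} for \Prove---which converts witness operations into reference operations and turns component runs into genuine $\tstep$ runs of $\witdenC{\denC{\SExp}}$ and $\witdenP{\denP{\SExp}}$, and you must then reinterpret the termination hypothesis as being about these translated expressions; this is a substantial reworking, not a bookkeeping detail. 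The paper's structural induction avoids the issue entirely because it only ever invokes the termination hypothesis for method bodies, which are witness-free \corelang terms where the identification with $\tstep$ is exact.
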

\begin{proof}
    By structural induction on $e$.

    Base case: The cases where a projected expression evaluates to a value in zero or one steps.
    In the former case, we are done.
    All semantic rules that evaluate to a value in one step (e.g. everything except let, if, method calls) terminate by definition.
    The only interesting case is $e = (\New C(\overline{v})).m_\ell(\overline{u})$, as recursive method calls is the only way in the language to lead to non-termination.
    Note that its projection $\denL{e} = (\New C(\overline{v})).m(\overline{u})$ or $()$.
    In the latter case, we are done.
    In the former case, by the premises that the projections terminate, the method body terminates.
    This means that raising the method body to \zkstrudel will also terminate by applying the related sequence of steps.

    Inductive case: Consider an expression $e$ inductvely built from either let or if statements.
    In both cases, the inductive hypothesis applies on the evaluated subexpression $e_1$ and $e_2$, and by \Cref{thm:abs-progress} we know these can take a step and therefore must terminate.
\end{proof}

\begin{theorem}[Single Step Soundness with Traces]\label{thm:singlestepbisimsound}
    If $e$ is an annotated expression where $\pconf*{\denC{e}}{\denP{e}} \pstep_{{\trace_1}} \pconf{e_\Compute'}{e_\Prove'}{\sigmaC'}{\sigmaP'}{\rho'}$, then there is some $e''$, $e''_\Compute$, and $e''_\Prove$ such that $\langle e \divi \storetuple \rangle \cpstep^*_{{\trace_2}} \langle e'' \divi \storetuple'' \rangle$ and $\pconf{e_\Compute'}{e_\Prove'}{\sigmaC'}{\sigmaP'}{\rho'} \pstep^*_{{\trace_3}} \pconf{e_\Compute''}{e_\Prove''}{\sigmaC''}{\sigmaP''}{\rho''}$, where $\denC{e''} \lessthan e''_\Compute$ and $\denP{e''} \lessthan e''_\Prove$ and ${\trace_2} \traceq {\trace_1} :: {\trace_3}$.
\end{theorem}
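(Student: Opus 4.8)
The plan is to prove this by strong induction on a structural measure of the annotated expression~$e$ that counts its operation-, let-, and if-nodes (so that substituting a value for a variable never increases it), with a case split on which concurrent rule fires from the projected configuration $\pconf*{\denC{e}}{\denP{e}}$. Because the shapes of~$\denC{e}$ and~$\denP{e}$ are completely determined by~$e$ and its annotations, each shape of~$e$ admits only a handful of possible $\pstep$ redexes, which I invert. The governing intuition, dual to \Cref{thm:singlestepbisimcomplete}, is that one concurrent step is mirrored by \emph{at most one} ``real'' source step, but the statement must allow a multi-step source run ($\cpstep^*_{\trace_2}$) and a multi-step concurrent catch-up ($\pstep^*_{\trace_3}$), related only up to~$\lessthan$, to absorb (i)~the second half of operations the source performs atomically and (ii)~the binder/guard collapses that projection introduces.

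First I would dispatch the genuinely single-sided operations. If the fired rule reduces an operation living in a single procedure --- a $\Compute$- or $\Prove$-labelled reference, dereference, assignment, cast, field access, $\SProveN$, $\SVerifyN$, or a computed-input operation --- then exactly one projection is a redex while the other is inert, and the source replays the same reduction in one $\cpstep$ step with empty catch-up ($\trace_3 = \epsilon$) and $\trace_2 \traceq \trace_1$. The atomic two-sided cases are the $\ComputeProve$ operations: a $\ComputeProve$ allocation, dereference, assignment, method call, or $\SVerifyN$ makes \emph{both} projections redexes performing dual halves of a single source reduction (e.g.\ allocation produces $\SLocCP*$ from two separate concurrent allocations). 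Here I let the source take its atomic step, then use confluence (\Cref{thm:confluence-conc-target}) to fire the complementary concurrent step on the opposite side as $\trace_3$; confluence also makes the argument symmetric in which side the fired rule chose. Since every $\Prove$-side primitive emits the empty trace, $\trace_3$ is a string of $\Nulltrace$ events and $\trace_2 \traceq \trace_1 :: \trace_3$ follows from $\trace \traceq \trace :: \Nulltrace$. For method calls I additionally invoke \Cref{lemma:raise-lessthan-all,lemma:raise-lessthan-unit} so that the lifted, substituted method body projects to the matching expression on each side, and \Cref{lemma:annotated-preservation} to read off the annotation (hence the label) of each value.

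The subtle cases --- and the reason the statement needs the multi-step slack and the $\lessthan$ relation --- are $\SLetIn{\cdot}{\cdot}{\cdot}$ and $\SIfThenElse{\cdot}{\cdot}{\cdot}$. When the concurrent step lies strictly inside a subexpression, I recurse with the induction hypothesis (as in the Let1 cases of \Cref{thm:singlestepbisimcomplete}) and reassemble the enclosing binder, preserving~$\lessthan$ by its congruence rules. The hard part is the \emph{collapse mismatch}: projection rewrites $v \seq e$ to $e$, so a binder present on one side may already be erased on the other. Concretely, for a $\Compute$-labelled let holding a value, the $\Prove$ projection has collapsed to the body's projection while the $\Compute$ projection still carries the binder, so a fired $\Prove$-side step has no matching redex in the un-collapsed source term. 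To realign I have the source first perform its binder-elimination reduction (which carries no trace and strictly decreases the induction measure), then appeal to the hypothesis on the collapsed body to mirror the concurrent step; \Cref{lemma:lessthan-substitution1}, together with the fact that the erased $\Compute$-variable is not free in the $\Prove$ projection, shows the substituted term's projections agree with the concurrent configuration. The concurrent side then performs its own collapse step inside $\trace_3$, again justified by confluence, and the residual $v \seq (\cdot)$ prefixes are exactly what~$\lessthan$ absorbs.

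I expect the collapse bookkeeping to be the main obstacle: keeping the four terms $\denC{e''}$, $\denP{e''}$, $e''_\Compute$, and $e''_\Prove$ in the correct $\lessthan$-relationship while the source and the concurrent semantics take differing numbers of steps, and checking $\trace_2 \traceq \trace_1 :: \trace_3$ under either firing order. I would lean on reflexivity and transitivity of~$\lessthan$ (\Cref{lemma:less-reflexive,lemma:less-transitive}) and its redex characterization (\Cref{lemma:less-redex}), and I would model the let/if arguments on the companion less-than soundness corollary (\Cref{cor:lessthan-sound}), which already shows how a single step on the ``larger'' term is absorbed by the ``smaller'' one up to~$\lessthan$.
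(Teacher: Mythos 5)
Your proposal takes a genuinely different route from the paper---the paper's proof is not an induction or a case analysis at all---but your route has a gap that the paper's global strategy exists precisely to avoid. The fatal case is a binder whose bound expression is \emph{not} a value and is visible to only one procedure: take $e = \SLetIn{x\ty\tplab{\TTau}{\Compute}}{e_1}{e_2}$ where $e_1$ is compute-only and not yet a value. Then $\denP{e_1} = \TUnitVal$ \emph{is} a value, so projection collapses $\denP{e} = \denP{e_2}$, and the concurrent semantics may fire a \Prove-side step deep inside $\denP{e_2}$ while the in-order source semantics is still obligated to run all of $e_1$ first (only \ruleref{EC-Eval} applies to the let). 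Mirroring that single concurrent step therefore forces the source to take \emph{arbitrarily many} steps of real computation emitting real traces---not the traceless ``binder-elimination reduction'' your collapse case supplies, which only covers lets already holding values. Two consequences follow. First, your structural induction does not go through: the source run needed for realignment passes through method-call unfoldings and is not bounded by any structural measure of $e$, and there is no single structurally smaller subterm to which the induction hypothesis can be applied. Second, if $e_1$ diverges, no finite source run can ever satisfy the conclusion (while the source is inside $e_1$ we have $\denP{e''} = \denP{e_2}$, and no expression reachable from a strict reduct of $\denP{e_2}$ can sit $\lessthan$-above $\denP{e_2}$), so the statement is unprovable without a termination assumption---which your proposal never invokes.

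This is exactly why the paper proves the theorem globally: it uses \Cref{lemma:source-termination} to obtain a full terminating source run $\langle e \divi \storetuple \rangle \cpstep^* \langle v \divi \storetuple'' \rangle$, lifts that entire run to the concurrent semantics by completeness (\Cref{thm:fullbisimcomplete}), and then joins the given single step with that run using confluence (\Cref{thm:confluence-conc-target}), reading off $\trace_2 \traceq \trace_1 :: \trace_3$ from the join. Your single-sided and \ComputeProve cases are fine as far as they go, and your instinct to lean on confluence and $\lessthan$ is sound, but to repair the proposal you would have to import the termination hypothesis and replace structural induction with reasoning over the finite terminating run---at which point you have essentially reconstructed the paper's proof.
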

\begin{proof}
    The proof follows the strategy outlined in \Cref{sec:adequacy}.

    Suppose that $\pconf*{\denC{e}}{\denP{e}} \pstep_{\trace_1} \pconf{e_\Compute'}{e_\Prove'}{\sigmaC'}{\sigmaP'}{\rho'}$.
    By \Cref{lemma:source-termination}, we know $\langle e \divi \storetuple \rangle \cpstep^* \langle v \divi \storetuple'' \rangle$.

    In the first case, if $e$ is a value, then the theorem if vacuously true.
    So assume we take at least one step in $\langle e \divi \storetuple \rangle \cpstep^*_{\trace_a} \langle v \divi \storetuple'' \rangle$.
    By completeness (\Cref{thm:fullbisimcomplete}), we know $\pconf*{\denC{e}}{\denP{e}} \pstep^*_{\trace_b} \pconf{e_\Compute''}{e_\Prove''}{\sigmaC''}{\sigmaP''}{\rho''}$ where $\denC{v} \lessthan e_\Compute''$ and $\denP{v} \lessthan e_\Prove''$ and $\trace_a \traceq \trace_b$.
    Then, by confluence (\Cref{thm:confluence-conc-target}), $\pconf{e_\Compute'}{e_\Prove'}{\sigmaC'}{\sigmaP'}{\rho'} \pstep^*_{\trace_c} \pconf{e_\Compute''}{e_\Prove''}{\sigmaC''}{\sigmaP''}{\rho''}$ and $\trace_1 :: \trace_c = \trace_b \traceq \trace_a$.
\end{proof}

\begin{theorem}[Concurrent Soundness with Traces]\label{thm:fullbisimsoundness}
    If $e$ is an annotated expression where $\pconf*{\denC{e}}{\denP{e}} \pstep^*_{{\trace_1}} \pconf{e_\Compute'}{e_\Prove'}{\sigmaC'}{\sigmaP'}{\rho'}$, then there is some $e''$, $e''_\Compute$, and $e''_\Prove$ such that $\langle e \divi \storetuple \rangle \cpstep^*_{{\trace_2}} \langle e'' \divi \storetuple'' \rangle$ and $\pconf{e_\Compute'}{e_\Prove'}{\sigmaC'}{\sigmaP'}{\rho'} \pstep^*_{{\trace_3}} \pconf{e_\Compute''}{e_\Prove''}{\sigmaC''}{\sigmaP''}{\rho''}$, where $\denC{e''} \lessthan e''_\Compute$ and $\denP{e''} \lessthan e''_\Prove$ and ${\trace_2} \traceq {\trace_1} :: {\trace_3}$.
\end{theorem}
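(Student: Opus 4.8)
The plan is to prove \Cref{thm:fullbisimsoundness} by re-running the strategy of its single-step counterpart, \Cref{thm:singlestepbisimsound}, after observing that the latter proof never genuinely depended on its hypothesis being a \emph{single} $\pstep$ transition: it drives the source program all the way to a value via termination and uses the transition only to seed a confluence argument. First, under the termination assumption inherited from \Cref{thm:bisim-sound}, I would invoke \Cref{lemma:source-termination} to obtain a terminating source run $\langle e \divi \storetuple \rangle \cpstep^*_{\trace_a} \langle v \divi \storetuple'' \rangle$. The case where $e$ is already a value is immediate, since then $\denC{e}$ and $\denP{e}$ are values admitting no $\pstep$ transition, forcing $\trace_1$ to be empty and $(e_\Compute', e_\Prove') = (\denC{e}, \denP{e})$, so taking zero steps on both sides with reflexivity of $\lessthan$ (\Cref{lemma:less-reflexive}) suffices.

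In the main case I would apply completeness with traces (\Cref{thm:fullbisimcomplete}) to the terminating source run, yielding a concurrent execution $\pconf*{\denC{e}}{\denP{e}} \pstep^*_{\trace_b} \pconf{e_\Compute''}{e_\Prove''}{\sigmaC''}{\sigmaP''}{\rho''}$ with $\denC{v} \lessthan e_\Compute''$, $\denP{v} \lessthan e_\Prove''$, and $\trace_a \traceq \trace_b$. This leaves two concurrent reduction sequences out of the common start state $\pconf*{\denC{e}}{\denP{e}}$: the given one along $\trace_1$ to the target endpoint $\pconf{e_\Compute'}{e_\Prove'}{\sigmaC'}{\sigmaP'}{\rho'}$, and the completeness one along $\trace_b$ to the value-matching state. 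The central step is to join them. Where the single-step proof used \Cref{thm:confluence-conc-target} to absorb one transition into a run, here I would first lift that single-step diamond to a full Church--Rosser property for $\pstep^*$ by the standard strip/tiling construction, and then apply it to produce a common reduct $D$ with the value-matching state reducing to $D$ along some $\trace_d$ and the given endpoint reducing to $D$ along some $\trace_3$. Setting $e'' = v$ and letting $e''_\Compute, e''_\Prove$ be the two program components of $D$ discharges the existence obligations.

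Two loose ends remain, and I expect the confluence lifting together with the trace bookkeeping to be the main obstacle. First, I must show $D$ still matches $v$'s projection up to $\lessthan$: because $\denC{v}$ and $\denP{v}$ are values, the catch-up transitions along $\trace_d$ out of the value-matching state can only shed the inert $\lessthan$-prefixes (sequencing reductions of the form $v \seq e \to e$), which I would track via \Cref{cor:lessthan-sound} to conclude $\denC{v} \lessthan e''_\Compute$ and $\denP{v} \lessthan e''_\Prove$. Second, those shedding transitions emit only $\Nulltrace$ events, so $\trace_d$ is trace-equivalent to the empty trace; combining this with $\trace_a \traceq \trace_b$ from completeness and the trace equation $\trace_1 \mathbin{{:}{:}} \trace_3 \traceq \trace_b \mathbin{{:}{:}} \trace_d$ supplied by confluence, and using transitivity of $\traceq$ (built into its definition), yields the required $\trace_2 \traceq \trace_1 \mathbin{{:}{:}} \trace_3$ with $\trace_2 = \trace_a$. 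The principal care point is establishing the multi-step Church--Rosser property cleanly from the single-step diamond and verifying that every catch-up transition introduced by the tiling is null-trace, so that reconciling $\trace_1$ with the source trace does not manufacture any spurious observable events.
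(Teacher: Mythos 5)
Your proposal is correct, and it reaches the result by a route that is organized differently from the paper's. The paper proves \Cref{thm:fullbisimsoundness} by induction on the length of the given concurrent run, invoking the single-step theorem (\Cref{thm:singlestepbisimsound}) at each transition and patching the $\lessthan$-slack between successive invocations with \Cref{cor:lessthan-sound}; each invocation of the single-step theorem internally re-runs termination, completeness, and a one-versus-many (strip-style) use of confluence. You instead invoke termination (\Cref{lemma:source-termination}) and completeness (\Cref{thm:fullbisimcomplete}) exactly once, and discharge the join of the given run with the value-matching run by an explicit many-versus-many Church--Rosser property obtained by tiling the single-step diamond (\Cref{thm:confluence-conc-target}). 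The two arguments rest on the same pillars, so the mathematical content is shared; the difference is where the tiling lives. Your factorization isolates it as a standalone, reusable lemma and makes the trace arithmetic transparent ($\trace_2 \traceq \trace_b \traceq \trace_b :: \trace_d \traceq \trace_1 :: \trace_3$, using that $\trace_d$ consists only of $\Nulltrace$ events), whereas the paper's one-line induction hides the same tiling inside the step where each catch-up run must be reconciled with the remaining given transitions. Your endgame is also handled correctly: because $\denC{v}$ and $\denP{v}$ are values, iterating \Cref{cor:lessthan-sound} along the catch-up run from the value-matching state forces the zero-step case throughout, which simultaneously yields preservation of $\lessthan$, preservation of the stores, and nullity of $\trace_d$. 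The only caveats are ones you already flag: the multi-step confluence lemma (including the degenerate same-step case of the diamond and the per-tile trace commutation) must actually be stated and proved, and the termination hypothesis you import does not appear in the theorem statement---but the paper's own single-step proof uses \Cref{lemma:source-termination} under the same implicit assumption, so on that point you match rather than deviate from the paper.
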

\begin{proof}
    By induction, applying \Cref{thm:singlestepbisimsound} and \Cref{cor:lessthan-sound}.
\end{proof}

\bisimsound*
\begin{proof}
    Corollary of \Cref{thm:fullbisimsoundness} by ignoring reasoning about traces.
\end{proof}

\subsection{Adequacy}
\begin{theorem}[Reordering of Interleaving is Valid with Traces]\label{thm:bubblesort-trace}
    If $\pconf*{e_\Compute}{e_\Prove} \pstep^*_{\trace_1} \pconf{v_\Compute}{v_\Prove}{\sigmaC'}{\sigmaP'}{\rho'}$ then $\pconf*{e_\Compute}{e_\Prove} \pstep^*_{\trace_2} \pconf{v_\Compute}{e_\Prove}{\sigmaC'}{\sigmaP}{\rho'} \pstep^*_{\trace_3} \pconf{v_\Compute}{v_\Prove}{\sigmaC'}{\sigmaP'}{\rho'}$ where $\trace_1 \traceq \trace_2$ and $\trace_3 \traceq \Nulltrace$.
\end{theorem}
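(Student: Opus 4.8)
The plan is to prove this by a ``bubble sort'' on the step sequence, with confluence (\Cref{thm:confluence-conc-target}) as the engine that commutes adjacent steps. First I would observe that every single $\pstep$ transition is tagged as either a \Compute-step (one of \ruleref{EP-StepC}, \ruleref{EP-Let1C}, \ruleref{EP-WitAssign}, \ruleref{EP-WitDerefC}) or a \Prove-step (\ruleref{EP-StepP}, \ruleref{EP-Let1P}, \ruleref{EP-WitDerefP}), so the given run $\pstep^*_{\trace_1}$ is a finite word over these two alphabets. A direct inspection of the rules yields the crucial \emph{non-interference} property: \Compute-steps read and write only $e_\Compute$, $\sigmaC$, and (for \ruleref{EP-WitAssign}) the shared $\rho$, whereas \Prove-steps touch only $e_\Prove$ and $\sigmaP$ and merely \emph{read} $\rho$. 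In particular no \Prove-step ever modifies $\sigmaC$, $\rho$, or $e_\Compute$.

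Next I would establish a commutation lemma: any adjacent pair $X \pstep X' \pstep Y$ consisting of a \Prove-step followed by a \Compute-step can be reordered into a \Compute-step followed by a \Prove-step reaching the \emph{same} $Y$. The argument runs as follows. Because the intervening \Prove-step $X \pstep X'$ leaves $e_\Compute$, $\sigmaC$, and $\rho$ unchanged, the very \Compute-redex that fires at $X'$ is already enabled at $X$, giving $X \pstep Z$ for some $Z$. Applying \Cref{thm:confluence-conc-target} to the divergent pair $X \pstep Z$ (compute) and $X \pstep X'$ (prove) produces a common reconvergence $Y'$ with $X' \pstep Y'$; determinism of a single-sided $\pstep$ step (a routine lift of \Cref{thm:target-determinism}, using the syntactic disjointness of the rules) then identifies $Y' = Y$. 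Since \Prove-steps emit only $\Nulltrace$, the trace-equivalence supplied by the confluence conclusion degenerates to the trivial $\trace_C :: \Nulltrace \traceq \Nulltrace :: \trace_C$, so the swap preserves the total trace up to $\traceq$.

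I would then run the sort by induction on the number of \emph{inversions}---pairs of positions $(i,j)$ with $i<j$, step $i$ a \Prove-step, and step $j$ a \Compute-step. Whenever this count is positive there is an adjacent \Prove-then-\Compute pair, and applying the commutation lemma to it decreases the inversion count by exactly one while preserving the global endpoints and, up to $\traceq$, the total trace. When no inversions remain, every \Compute-step precedes every \Prove-step. Because the commutation lemma never reorders two same-side steps relative to one another, the \Compute-steps appear in their original order and (by non-interference) see their original evolving $\sigmaC,\rho$, so the \Compute-prefix is a valid run driving $(e_\Compute,\sigmaC,\rho)$ to $(v_\Compute,\sigmaC',\rho')$ while leaving $e_\Prove$ and $\sigmaP$ fixed---exactly the intermediate configuration $\pconf{v_\Compute}{e_\Prove}{\sigmaC'}{\sigmaP}{\rho'}$---and the \Prove-suffix then drives $(e_\Prove,\sigmaP)$ to $(v_\Prove,\sigmaP')$. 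Writing $\trace_2,\trace_3$ for the prefix and suffix traces, every \Prove-step is silent so $\trace_3 \traceq \Nulltrace$, and the sort invariant gives $\trace_2 :: \trace_3 \traceq \trace_1$, whence $\trace_2 \traceq \trace_1$.

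The step I expect to be the main obstacle is the soundness of the commutation lemma, and specifically the \emph{absence of a read--write hazard on $\rho$}. The only genuinely shared state is $\rho$, so when the \Compute-step is \ruleref{EP-WitAssign} (writing an input $w$) and the \Prove-step is \ruleref{EP-WitDerefP} (reading an input $w'$), reordering them is safe only if $w \neq w'$. This holds because \ruleref{EP-WitAssign} carries the premise $w \notin \dom(\rho)$ while a non-stuck \ruleref{EP-WitDerefP} requires $w' \in \dom(\rho)$, which is precisely the single-assignment invariant that \Cref{thm:confluence-conc-target} already internalizes. Consequently the delicate work is less a new hazard analysis than the bookkeeping of importing confluence correctly---verifying the enabledness transfer and the determinism side condition it needs---and confirming that the per-swap trace reasoning composes, which is immediate once prove steps are known to be silent.
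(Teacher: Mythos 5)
Your proposal is correct and follows essentially the same route as the paper's proof: both partition the $\pstep$ rules into the same compute-side and prove-side groups, bubble-sort the step sequence by swapping adjacent prove-then-compute pairs using enabledness transfer plus \Cref{thm:confluence-conc-target}, and conclude the trace claims from the fact that prove-side steps emit only $\Nulltrace$. The extra details you supply---the inversion-counting termination measure, the explicit determinism step identifying the reconvergence point, and the $\rho$ read--write hazard resolved by the single-assignment premises of \ruleref{EP-WitAssign} and \ruleref{EP-WitDerefP}---are all either implicit in the paper's argument or already discharged inside its confluence proof, so they refine rather than diverge from it.
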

\begin{proof}
    We first group the semantic rules of $\pstep$ into two groups.
    We will then define a comparator order on the semantic rules of $\pstep$ for the purposes of reordering the rules.

    The first group consists of the following set of rules: \ruleref{EP-Let1C}, \ruleref{EP-StepC}, \ruleref{EP-WitAssign}, and \ruleref{EP-WitDerefC}.
    The second group consists of: \ruleref{EP-Let1P}, \ruleref{EP-StepP}, and \ruleref{EP-WitDerefP}.
    Notice how the steps in the first group only modify $e_\Compute, \sigma_\Compute,$ and $\rho$ in $\langle e_\Compute \divi e_\Prove \divi (\sigma_\Compute, \sigma_\Prove, \rho) \rangle$, while the steps in the second group only modify $e_\Prove$ and $\sigma_\Prove$.
    Now, suppose we have a pair of semantic rules $(r_1, r_2)$.
    We say that $r_1 \sqsubseteq r_2$ if $r_1$ is a rule in the first group and $r_2$ is a rule in the second group, or if they're both in the same group.

    By the theorem premise, $\pconf*{e_\Compute}{e_\Prove} \pstep^*_{\trace_1} \pconf{v_\Compute}{v_\Prove}{\sigmaC'}{\sigmaP'}{\rho'}$, so we take this sequence of $k \geq 0$ semantic steps of $\pstep$ and bubble sort this sequence according to the comparator order $\sqsubseteq$ we defined.
    Specifically, if we find any pair of consecutive steps $(r_1, r_2)$ that violates the order, meaning $r_1 \not\sqsubseteq r_2$, then we swap the two steps: $(r_2, r_1)$.
    
    Suppose the steps are $\pconf{e_\Compute'}{e_\Prove'}{\sigmaC'}{\sigmaP'}{\rho'} \pstep \pconf{e_\Compute''}{e_\Prove''}{\sigmaC''}{\sigmaP''}{\rho''} \pstep \pconf{e_\Compute'''}{e_\Prove'''}{\sigmaC'''}{\sigmaP'''}{\rho'''}$ where $r_1$ is the first step and $r_2$ is the second step. 
    From the starting state of $r_1$, it's obvious that $r_1$ is a possible semantic execution.
    We also know $r_2$ is a possible semantic execution, since the same rule of $r_2$ can be applied, but where $e_\Compute'$ is used in place of $e_\Compute''$, $\sigma_\Compute'$ in place of $\sigma_\Compute''$, and $\rho'$ in place of $\rho''$.
    Then by \Cref{thm:confluence-conc-target}, executing $r_1$ after $r_2$ gives the same result as the original execution ordering.
    This means that swapping the two semantic steps still results in the same output $\pconf{e_\Compute'}{e_\Prove'}{\sigmaC'}{\sigmaP'}{\rho'} \pstep^2 \pconf{e_\Compute'''}{e_\Prove'''}{\sigmaC'''}{\sigmaP'''}{\rho'''}$.

    Now after completely bubble sorting the sequence of semantic steps, we are left with all semantic steps in the first group, followed by all semantic steps in the second group.
    Note that the steps in the first group only modify $e_\Compute, \sigma_\Compute,$ and $\rho$ in $\langle e_\Compute \divi e_\Prove \divi (\sigma_\Compute, \sigma_\Prove, \rho) \rangle$, while the steps in the second group only modify $e_\Prove$ and $\sigma_\Prove$.

    Finally, recall that only empty trace events are emitted on the prove side, meaning $\trace_3 \traceq \Nulltrace$.
    Since we did not reorder the execution between compute-side operations, we can therefore conclude that $\trace_1 \traceq \trace_2$.
\end{proof}

\begin{theorem}[Reordering of Interleaving is Valid]\label{thm:bubblesort}
  If $\langle e_\Compute \divi e_\Prove \divi (\sigma_\Compute, \sigma_\Prove, \rho) \rangle \pstep^* \langle v_\Compute \divi v_\Prove \divi (\sigma_\Compute', \sigma_\Prove', \rho') \rangle$, then $\langle e_\Compute \divi e_\Prove \divi (\sigma_\Compute, \sigma_\Prove, \rho) \rangle \pstep^* \langle v_\Compute \divi e_\Prove \divi (\sigma_\Compute', \sigma_\Prove, \rho') \rangle \pstep^* \langle v_\Compute \divi v_\Prove \divi (\sigma_\Compute', \sigma_\Prove', \rho') \rangle$.
\end{theorem}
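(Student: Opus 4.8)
The plan is to obtain this statement as an immediate corollary of its trace-refined counterpart, \Cref{thm:bubblesort-trace}, which has already been established. The trace-free relation $\pstep^*$ is precisely the trace-annotated relation with the emitted trace forgotten, so the only work is to bridge between the two formulations rather than to re-run any rearrangement argument.

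First I would unfold the hypothesis. A derivation $\pconf*{e_\Compute}{e_\Prove} \pstep^* \pconf{v_\Compute}{v_\Prove}{\sigmaC'}{\sigmaP'}{\rho'}$ is a finite sequence of single concurrent steps, each of which carries some trace event under the rules of \Cref{app:target-conc-lang}. Concatenating these per-step events yields a single trace $\trace_1$ witnessing $\pconf*{e_\Compute}{e_\Prove} \pstep^*_{\trace_1} \pconf{v_\Compute}{v_\Prove}{\sigmaC'}{\sigmaP'}{\rho'}$, so the trace-annotated premise of \Cref{thm:bubblesort-trace} is met.

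Next I would apply \Cref{thm:bubblesort-trace} to this execution. It produces traces $\trace_2$ and $\trace_3$ together with the two-phase reordered derivation
\[
  \pconf*{e_\Compute}{e_\Prove} \pstep^*_{\trace_2} \pconf{v_\Compute}{e_\Prove}{\sigmaC'}{\sigmaP}{\rho'} \pstep^*_{\trace_3} \pconf{v_\Compute}{v_\Prove}{\sigmaC'}{\sigmaP'}{\rho'},
\]
in which the entire compute procedure runs first --- leaving $e_\Prove$ and $\sigmaP$ untouched --- before the prove procedure executes. Erasing the trace subscripts gives exactly the desired conclusion; the accompanying equivalences $\trace_1 \traceq \trace_2$ and $\trace_3 \traceq \Nulltrace$ are simply discarded, since the trace-free statement does not mention them.

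The main obstacle here is essentially nonexistent, and this is by design. All of the genuine difficulty --- the repeated swapping of adjacent prove-then-compute steps, each swap justified by the single-step diamond confluence of \Cref{thm:confluence-conc-target}, plus the observation that the prove side emits only $\Nulltrace$ --- is already discharged inside \Cref{thm:bubblesort-trace}. The present theorem merely repackages that result in the trace-free form needed for the adequacy argument (\Cref{thm:adequacy}), where trace information is irrelevant, so the only thing left to verify is the harmless forgetting of trace data.
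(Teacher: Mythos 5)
Your proposal is correct and matches the paper's proof exactly: the paper establishes \Cref{thm:bubblesort} as an immediate corollary of \Cref{thm:bubblesort-trace} by ignoring the trace annotations, which is precisely the argument you give. The extra detail you provide about unfolding the hypothesis into a trace-annotated execution and then discarding the trace equivalences is a harmless elaboration of the same one-line reduction.
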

\begin{proof}
    Corollary of \Cref{thm:bubblesort-trace} by ignoring reasoning about traces.
\end{proof}

\begin{lemma}\label{lemma:equal-nowit-witnessproject}
    If $e$ is an expression that doesn't have any computed inputs, then $\witdenL{e} = e$.
\end{lemma}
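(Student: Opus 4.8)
The plan is to proceed by straightforward structural induction on $e$. The essential observation is that the computed-input translation $\witdenL{\cdot}$, as defined in \Cref{app:witness-translation-def}, is literally the identity on every syntactic form \emph{except} the two computed-input operations $x \witassign v$ and $\witderef x$; those are precisely the clauses whose right-hand sides invoke $\varphi(x)$ rather than returning the term unchanged. Since the hypothesis states that $e$ contains no computed inputs, these two forms never occur, so we only ever encounter defining equations that are identities (either directly or after recursing).

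Concretely, first I would dispatch the base cases. For $e$ a value, and for the value-operating forms $\Reft v$, $\bang v$, $v_1 \coloneq v_2$, $(C)v$, $v.f$, $v.m(\overline{w})$, $\Alloc{\TTau}$, $\Provet \alpha = \CircuitDefault \With \overline{v}[\overline{u}]$, and $\Verifyt v \Proves \alpha \With \overline{u}$, the corresponding clause reads $\witdenL{e} = e$ verbatim, closing each case immediately with no appeal to the inductive hypothesis. Next I would handle the two compound forms $\LetIn{x\ty\tau}{e_1}{e_2}$ and $\IfThenElse{v}{e_1}{e_2}$, where $\witdenL{\cdot}$ descends into the subexpressions $e_1$ and $e_2$. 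Here the only point to note is that ``having no computed inputs'' is inherited by subterms: the computed-input operations form a syntactic class, so any subterm of an expression free of them is itself free of them. Thus the inductive hypothesis applies to give $\witdenL{e_1} = e_1$ and $\witdenL{e_2} = e_2$, and the form reassembles to exactly $e$.

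Finally, the clauses for $x \witassign v$ and $\witderef x$ are vacuous under the hypothesis, since an expression containing either of these would, by definition, contain a computed input. This exhausts the syntax and completes the induction. I do not anticipate a genuine obstacle in this lemma; the only thing requiring a moment's care is making explicit that the ``no computed inputs'' condition propagates to the recursive subexpressions in the $\LetN$ and $\IfN$ cases, which is what licenses the inductive hypothesis there. The result is purely a bookkeeping fact about the definition of $\witdenL{\cdot}$, and it is the kind of identity lemma one expects to cite when reasoning about programs outside the scope of any compute-and-prove block.
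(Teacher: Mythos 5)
Your proof is correct and matches the paper's argument, which is exactly this structural induction on $e$ using the relevant clause of the definition of $\witdenL{\cdot}$: identity clauses for all non--computed-input forms, recursion through \LetN{} and \IfN{} via the inductive hypothesis, and vacuity of the $\witassign$ and $\witderef$ cases under the hypothesis. The paper states this in one line; your write-up is simply a fuller elaboration of the same routine induction.
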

\begin{proof}
    By induction on $e$, using the relevant case from the definition of $\witdenL{\cdot}$.
\end{proof}

\begin{lemma}\label{lemma:equal-subs-witnessproject}
    Given an expression $e$, if $x$ is not a computed input variable then $\witdenL{e[x \mapsto v]} = \witdenL{e}[x \mapsto v]$.
\end{lemma}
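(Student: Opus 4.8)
The plan is to prove this by straightforward structural induction on $e$, following the definition of the computed input translation $\witdenL{\cdot}$ in \Cref{app:witness-translation-def}. The essential observation is that $\witdenL{\cdot}$ is compositional on the syntactic skeleton: it recurses into the subexpressions of $\LetIn{y\ty\tau}{e_1}{e_2}$ and $\IfThenElse{w}{e_1}{e_2}$, rewrites the computed-input operations $z \witassign w$ and $\witderef z$ using the location map $\varphi$, and acts as the identity on every other form. Because $x$ is \emph{not} a computed input variable, the substitution $[x \mapsto v]$ can never rewrite a computed-input name, so it commutes cleanly with the translation.

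First I would dispatch the cases where $\witdenL{\cdot}$ leaves the syntactic form unchanged, namely values, $\Reft w$, $\bang w$, assignment $w_1 \coloneq w_2$, casts, field access, method calls, $\Alloc{\tau}$, $\Provet$-terms, and $\Verifyt$-terms. In each of these, $\witdenL{e}$ equals $e$ on the skeleton, so both sides of the desired equality reduce to pushing the substitution into the (value) subterms; they are therefore syntactically identical. The only thing to confirm is that capture-avoiding substitution distributes over the subterms in the same way on both sides, which is immediate from its definition.

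Next I would handle the two inductive cases. For $\LetIn{y\ty\tau}{e_1}{e_2}$ and $\IfThenElse{w}{e_1}{e_2}$, assuming as usual that the bound variable $y$ is distinct from $x$ and not free in $v$, the substitution pushes into each subterm, the inductive hypothesis gives $\witdenL{e_i[x \mapsto v]} = \witdenL{e_i}[x \mapsto v]$ for $i \in \{1,2\}$, and reassembling the $\LetN$ or $\IfN$ from the translated pieces yields the claim. These are entirely routine once the skeleton cases are in place.

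The cases that actually use the hypothesis on $x$, and which I expect to be the main (though still light) obstacle, are the computed-input operations $z \witassign w$ and $\witderef z$, where $z$ denotes a computed input name, distinct from $x$ by hypothesis. Since $x \neq z$, substitution leaves the name $z$ untouched and acts only on $w$. For the $\Compute$ branch, translating after substituting produces $\varphi(z) \coloneq w[x \mapsto v]$ and $\bang \varphi(z)$, while translating first and then substituting produces $(\varphi(z) \coloneq w)[x \mapsto v]$ and $(\bang \varphi(z))[x \mapsto v]$. Here I would appeal to the fact that the reference $\varphi(z)$ is a freshly allocated memory location containing no free variables, so $\varphi(z)[x \mapsto v] = \varphi(z)$, and the two sides coincide. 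The $\Prove$ branch is analogous: $z \witassign w$ translates to the inert $()$, which is unaffected by substitution, and $\witderef z$ translates to $\varphi(z)$, which is again closed. This closedness of $\varphi(z)$ is the sole nontrivial ingredient, and it follows directly from $\varphi$ mapping computed inputs to locations.
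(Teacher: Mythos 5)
Your proof is correct and takes essentially the same approach as the paper's, which is stated in one line as structural induction on $e$ using the relevant case of the definition of the computed input translation. Your expansion — identity skeleton cases, the $\LetN$/$\IfN$ inductive cases, and the computed-input cases resting on $x$ being distinct from computed input names and $\varphi(z)$ being a closed location — faithfully fills in exactly the details that one-line proof leaves implicit.
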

\begin{proof}
    By induction on $e$, using the relevant case from the definition of $\witdenL{\cdot}$.
\end{proof}

\begin{lemma}[Single Step Compute-Side Computed Input Translation Completeness]\label{lemma:witness-compute-complete}
    If $\pconf*{e_\Compute}{e_\Prove} \pstep_\trace \pconf{e_\Compute'}{e_\Prove}{\sigmaC'}{\sigmaP}{\rho'}$, and $\varphi$ is a mapping from all computed input variables $x$ to locations $\iota$, then $\langle \witdenC{e_\Compute} \divi \sigma_\Compute \cup \witdenR{\rho} \rangle \tstep_{\witdenR{\trace}} \langle \witdenC{e'_\Compute} \divi \sigma_\Compute' \cup \witdenR{\rho'} \rangle$.
\end{lemma}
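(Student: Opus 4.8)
The plan is to induct on the derivation of the compute-side step $\pconf*{e_\Compute}{e_\Prove} \pstep_\trace \pconf{e_\Compute'}{e_\Prove}{\sigmaC'}{\sigmaP}{\rho'}$ and analyze the last rule applied. Because the prove expression $e_\Prove$ and prove store $\sigmaP$ are unchanged, and every prove-side rule (\ruleref{EP-StepP}, \ruleref{EP-Let1P}, \ruleref{EP-WitDerefP}) alters $e_\Prove$, the final rule must be one of the compute-side rules \ruleref{EP-StepC}, \ruleref{EP-Let1C}, \ruleref{EP-WitAssign}, or \ruleref{EP-WitDerefC}. I read $\witdenR{\rho}$ as the heap fragment $\{\varphi(x) \mapsto \rho(x) \mid x \in \dom(\rho)\}$, so the combined store $\sigmaC \cup \witdenR{\rho}$ materializes each assigned computed input at its location $\varphi(x)$. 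Two structural facts, both immediate from the definition of $\witdenC{\cdot}$ in \Cref{app:witness-translation-def}, drive the argument: the translation is the identity on every pure \corelang construct (only $\witassign$ and $\witderef$ are rewritten), and it commutes with let-evaluation-contexts, $\witdenC{\LetIn{x\ty\tau}{e_1}{e_2}} = \LetIn{x\ty\tau}{\witdenC{e_1}}{\witdenC{e_2}}$.

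Next I would dispatch the four cases. For \ruleref{EP-StepC}, the premise is a genuine \corelang step $\conf{e_\Compute}{\sigmaC}\tstep_\trace\conf{e_\Compute'}{\sigmaC'}$ whose redex is a \corelang term; since $\witdenC{\cdot}$ neither touches that redex nor inspects the heap, the same \corelang rule fires on $\witdenC{e_\Compute}$ over $\sigmaC \cup \witdenR{\rho}$, giving $\witdenC{e_\Compute'}$ and heap $\sigmaC' \cup \witdenR{\rho}$ with $\rho' = \rho$ and $\witdenR{\trace} = \trace$ on the \corelang alphabet. \ruleref{EP-Let1C} is the inductive case: apply the induction hypothesis to the head sub-step, then re-wrap via context-commutation and \ruleref{E-Eval}. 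The computed-input cases are the substance. For \ruleref{EP-WitAssign}, $\witdenC{w \witassign v} = \varphi(w) \coloneq v$ steps by \ruleref{E-Assign} to $() = \witdenC{()}$ emitting $\topset(\varphi(w),v) = \witdenR{\opset(w,v)}$, and the resulting heap $(\sigmaC\cup\witdenR{\rho})[\varphi(w)\mapsto v]$ equals $\sigmaC \cup \witdenR{\rho[w\mapsto v]}$ because $\varphi$ is injective and $w \notin \dom(\rho)$. For \ruleref{EP-WitDerefC}, $\witdenC{\witderef w} = \bang\varphi(w)$ steps by \ruleref{E-Deref} to $(\sigmaC\cup\witdenR{\rho})(\varphi(w)) = \rho(w)$, with empty trace and no store change.

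The main obstacle is discharging the domain side-conditions of the \corelang rules, which the concurrent semantics does not state explicitly. In particular, \ruleref{E-Assign} in the \ruleref{EP-WitAssign} case requires $\varphi(w) \in \dom(\sigmaC\cup\witdenR{\rho})$ even though $w \notin \dom(\rho)$ (so $\varphi(w)\notin\dom(\witdenR{\rho})$). This forces the invariant, established by the leading $\TAlloc{\TTau}$ in the compilation of a compute-and-prove block, that every as-yet-unassigned input location $\varphi(x)$ is already present in $\sigmaC$ mapped to $\bot$; I expect to thread this pre-allocation invariant (together with $\witdenR{\rho}$ taking precedence on overlapping locations) through the enclosing adequacy argument rather than re-proving it here. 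A lighter secondary point is freshness in the allocating \ruleref{EP-StepC} subcases (\ruleref{E-Ref}, \ruleref{E-Alloc}): the location chosen fresh for $\sigmaC$ must also avoid the $\varphi$-range occupied by $\witdenR{\rho}$, which is resolved by the up-to-renaming convention on allocated locations already used for \Cref{thm:target-determinism}. With these bookkeeping conditions in hand, each case exhibits the matching single \corelang step and closes the induction.
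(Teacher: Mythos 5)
Your proof takes essentially the same route as the paper's: induction on the concurrent step, matching each compute-side rule to the corresponding \corelang rule through $\witdenC{\cdot}$, with the same supporting facts (the translation is the identity on terms containing no computed inputs, \Cref{lemma:equal-nowit-witnessproject}, and commutes with substitution, \Cref{lemma:equal-subs-witnessproject}) and identical handling of the $\witassign$ and $\witderef$ cases. The one substantive addition is your discussion of the \ruleref{E-Assign} domain side-condition: the concern is legitimate, since $w \notin \dom(\rho)$ forces $\varphi(w) \notin \dom(\witdenR{\rho})$, and the paper's own proof of that case silently asserts the step and the store equation $(\sigmaC \cup \witdenR{\rho})[\varphi(x) \mapsto v] = \sigmaC \cup \witdenR{\rho[x \mapsto v]}$ without discharging it; your plan to thread the pre-allocation invariant (every $\varphi$-location allocated to $\bot$ by the compiled block's leading $\TAlloc{\TTau}$, with $\witdenR{\rho}$ taking precedence on overlap) from the enclosing adequacy argument, and to resolve allocation freshness up to renaming of locations, is exactly the bookkeeping the paper leaves implicit.
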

\begin{proof}
    By induction on $\pstep$.

    \begin{itemize}[itemsep=0.6em]
        \item Suppose $e_\Compute = {\Reft v}$, and so $\pconf*{e_\Compute}{e_\Prove} \pstep_{\opalloc(\iota, v)} \pconf{\iota}{e_\Prove}{\sigmaC[\iota \mapsto v]}{\sigmaP}{\rho}$.
        Note that $\witdenC{e_\Compute} = {\Reft v}$, and $\langle {\Reft v} \divi \sigma_\Compute \cup \witdenR{\rho} \rangle \tstep_{\opalloc(\iota, v)} \langle {\iota} \divi (\sigma_\Compute \cup \witdenR{\rho})[\iota \mapsto v] \rangle$.
        The proof for this case finishes by observing that $\witdenC{{\iota}} = {\iota}$ and $(\sigma_\Compute \cup \witdenR{\rho})[\iota \mapsto v] = \sigma_\Compute[\iota \mapsto v] \cup \witdenR{\rho} = \sigma_\Compute' \cup \witdenR{\rho}$.

        Many other cases follow similar logic: deref, assign, cast, field, prove, proof cast, and verify.

        \item Suppose $e_\Compute = {v.m(\overline{w})}$, meaning $\pconf*{e_\Compute}{e_\Prove} \pstep_\Nulltrace \pconf*{e[\overline{x'} \mapsto \overline{w}, \ThisN \mapsto v]}{e_\Prove}$, where $e$ is the method body and $\overline{x'}$ are the input names.
        Note that $\witdenC{e_\Compute} = {v.m(\overline{w})}$, and $\langle {v.m(\overline{w})} \divi \sigma_\Compute \cup \witdenR{\rho} \rangle \tstep_\Nulltrace \langle {e[\overline{x'}\mapsto\overline{w}, \This \mapsto v]} \divi \sigma_\Compute \cup \witdenR{\rho} \rangle$.
        Since $e$ is a method body defined using the outer language, meaning it cannot define or use any computed input variables, we can apply \Cref{lemma:equal-nowit-witnessproject} and conclude that $\witdenC{{e[\overline{x'}\mapsto\overline{w}, \This \mapsto v]}} = {e[\overline{x'}\mapsto\overline{w}, \This \mapsto v]}$.
        
        \item Suppose $e_\Compute = {\LetIn{x\ty\TTau}{e_1}{e_2}}$, where $e_1$ is not a value.
        This means $\pconf*{e_\Compute}{e_\Prove} \pstep_\trace \pconf{\LetIn{x\ty\TTau}{e_1'}{e_2}}{e_\Prove}{\sigmaC'}{\sigmaP}{\rho'}$ from $\pconf*{e_1}{e_\Prove} \pstep_\trace \pconf{e_1'}{e_\Prove}{\sigmaC'}{\sigmaP}{\rho'}$.
        By the inductive hypothesis, we know $\langle \witdenC{{e_1}} \divi \sigma_\Compute \cup \witdenR{\rho} \rangle \tstep_\trace \langle \witdenC{{e_1'}} \divi \sigma_\Compute' \cup \witdenR{\rho'} \rangle$.
        Thus, $\langle {\LetIn{x\ty\TTau}{\witdenC{{e_1}}}{\witdenC{{e_2}}}} \divi \sigma_\Compute \cup \witdenR{\rho} \rangle \tstep_\trace \langle {\LetIn{x\ty\TTau}{\witdenC{{e_1'}}}{\witdenC{{e_2}}}} \divi \sigma_\Compute' \cup \witdenR{\rho'} \rangle$.
        Finally, ${\LetIn{x\ty\TTau}{\witdenC{{e_1'}}}{\witdenC{{e_2}}}} = \witdenC{{\LetIn{x\ty\TTau}{e_1'}{e_2}}}$.

        \item Suppose $e_\Compute = {\LetIn{x\ty\TTau}{v}{e}}$.
        Then $\pconf*{\sigmaC}{\sigmaP} \pstep_\Nulltrace \pconf*{e[x \mapsto v]}{e_\Prove}$.
        Note that $\witdenC{e_\Compute} = {\LetIn{x\ty\TTau}{v}{\witdenC{e}}}$, and so $\langle \witdenC{e_\Compute} \divi \sigma_\Compute \cup \witdenR{\rho} \rangle \tstep_\Nulltrace \langle {\witdenC{e}[x \mapsto v]} \divi \sigma_\Compute \cup \witdenR{\rho} \rangle$.
        The proof finishes by applying \Cref{lemma:equal-subs-witnessproject}.
        
        \item Suppose $e_\Compute = {\IfThenElse{\True}{e_1}{e_2}}$, meaning $\pconf*{e_\Compute}{e_\Prove} \pstep_\Nulltrace \pconf*{e_1}{e_\Prove}$.
        Note that $\witdenC{e_\Compute} = {\IfThenElse{\True}{\witdenC{{e_1}}}{\witdenC{{e_2}}}}$, so $\langle \witdenC{e_\Compute} \divi \sigma_\Compute \cup \witdenR{\rho} \rangle \tstep_\Nulltrace \langle \witdenC{{e_1}} \divi \sigma_\Compute \cup \witdenR{\rho} \rangle$.

        Similar reasoning proves the case for $e_\Compute = {\IfThenElse{\False}{e_1}{e_2}}$.

        \item If $e_\Compute = {x \witassign v}$, then $\pconf*{e_\Compute}{e_\Prove} \pstep_{\opset(x,v)} \pconf{()}{e_\Prove}{\sigmaC}{\sigmaP}{\rho[x \mapsto v]}$.
        Here, $\witdenC{e_\Compute} = {\varphi(x) \coloneq v}$, and so $\langle {\varphi(x) \coloneq v} \divi \sigma_\Compute \cup \witdenR{\rho} \rangle \tstep_{\opset(\varphi(x), v)} \langle {()} \divi (\sigma_\Compute \cup \witdenR{\rho})[\varphi(x) \mapsto v] \rangle$.
        Now, $\witdenC{{()}} = {()}$, and $\sigma_\Compute' \cup \witdenR{\rho'} = \sigma_\Compute \cup \witdenR{\rho[x \mapsto v]} = (\sigma_\Compute \cup \witdenR{\rho})[\varphi(x) \mapsto v]$.
        Finally, the trace $\witdenR{\opset(x,v)} = \opset(\varphi(x),v)$.

        \item If $e_\Compute = {\witderef x}$ then $\pconf*{e_\Compute}{e_\Prove} \pstep_\Nulltrace \pconf*{\rho(x)}{e_\Prove}$.
        Here, $\witdenC{{\witderef x}} = {\bang \varphi(x)}$, and so $\langle {\bang \varphi(x)} \divi \sigma_\Compute \cup \witdenR{\rho} \rangle \tstep_\Nulltrace \langle (\sigma_\Compute \cup \witdenR{\rho}){(\varphi(x))} \divi \sigma_\Compute \cup \witdenR{\rho} \rangle$.
        Finally, $\witdenC{{\varphi(x)}} = \witdenR{\rho}{(\varphi(x))}$.
    \end{itemize}
\end{proof}

\begin{lemma}[Single Step Compute-Side Computed Input Translation Soundness]\label{lemma:witness-compute-sound}
    Suppose $e_\Compute$ is an expression, and $\varphi$ is a mapping from all computed input variables $x$ to locations $\iota$.
    If $\langle \witdenC{e_\Compute} \divi \sigma' \rangle \tstep_{\trace_1} \langle e' \divi \sigma_\Compute' \cup \witdenR{\rho'} \rangle$, then there exists $e'_\Compute,\sigma_\Compute', \rho', \trace_2$ such that $\pconf*{e_\Compute}{e_\Prove} \pstep_{\trace_2} \pconf{e_\Compute'}{e_\Prove}{\sigmaC'}{\sigmaP}{\rho'}$ where $\witdenC{e'_\Compute} = e'$ and $\sigma' = \sigma_\Compute' \cup \witdenR{\rho'}$ and $\trace_1 = \witdenR{\trace_2}$.
\end{lemma}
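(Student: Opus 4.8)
The plan is to prove this statement as the exact dual of the preceding completeness lemma (\Cref{lemma:witness-compute-complete}): whereas that proof inducts on the concurrent step $\pstep$, here I would proceed by structural induction on $e_\Compute$, determine the translated redex $\witdenC{e_\Compute}$, and invert the target step $\tstep$ to recover a matching concurrent step. Because $\tstep$ is deterministic (\Cref{thm:target-determinism}), the target step is uniquely pinned down by $\witdenC{e_\Compute}$, so ``inverting'' amounts to reading off which source operation produced it. Throughout I would carry the decomposition $\sigma' = \sigmaC \cupdot \witdenR{\rho}$ as an invariant, where the image of $\varphi$ occupies exactly the $\witdenR{\rho}$ component and is disjoint from $\dom(\sigmaC)$; this is what lets each store access or update performed by the target be read back as an access or update to either $\sigmaC$ or $\rho$.

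First I would dispatch the ``regular'' redexes: references, casts, field projection, method call, conditionals, prove/verify, and let-beta. On all of these the translation $\witdenC{\cdot}$ acts as the structural identity (its value subterms are unchanged, since $\witdenC{v} = v$), so $\witdenC{e_\Compute}$ fires exactly the $\tstep$ rule dictated by the head of $e_\Compute$, and I can lift that step to the compute side with \ruleref{EP-StepC}. Matching the translation of the result is immediate except in two cases: the method-call body comes from the outer language and so contains no computed inputs, whence \Cref{lemma:equal-nowit-witnessproject} gives $\witdenC{e[\dots]} = e[\dots]$; and let-beta requires commuting the substitution through the translation, which is exactly \Cref{lemma:equal-subs-witnessproject}. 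Since none of these operations touch $\rho$, the store and trace correspondences hold verbatim, reusing the computations from the completeness proof in reverse. The let-congruence case, $e_\Compute = \LetIn{x}{e_1}{e_2}$ with $e_1$ not a value, is handled by applying the inductive hypothesis to the subderivation reducing $\witdenC{e_1}$ under \ruleref{E-Eval}, and then re-wrapping the resulting concurrent step with \ruleref{EP-Let1C}.

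The two cases that genuinely exploit the structure of the translation are the computed-input operations. For $e_\Compute = \witderef x$, the translation $\bang \varphi(x)$ steps by reading $\sigma'(\varphi(x))$; the decomposition gives $\sigma'(\varphi(x)) = \witdenR{\rho}(\varphi(x)) = \rho(x)$, so I fire \ruleref{EP-WitDerefC} to obtain $\rho(x)$ and observe $\witdenC{\rho(x)} = \rho(x) = e'$ with an empty trace on both sides. For $e_\Compute = x \witassign v$, the translation $\varphi(x) \coloneq v$ performs a reference assignment at $\varphi(x)$; I fire \ruleref{EP-WitAssign}, reconcile the traces via the defining equation $\witdenR{\sopset(x,v)} = \topset(\varphi(x),v)$, and conclude that the target store update localizes to the $\rho$-component precisely because $\varphi(x) \notin \dom(\sigmaC)$.

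The hard part will be this store-decomposition invariant and the single-assignment bookkeeping it must support. The target's generic reference rules carry no single-assignment guard, whereas \ruleref{EP-WitAssign} requires $x$ to be unassigned; bridging the two forces me to track, inside $\witdenR{\rho}$, which computed inputs are still uninitialized versus assigned, and to argue that injectivity of $\varphi$ together with disjointness of its image from $\dom(\sigmaC)$ keeps the two store components from interfering as a step fires. Establishing that this invariant is both available at the start (inherited from the surrounding compute-and-prove allocation, which mentions $\sigma'$ only in its decomposed form $\sigmaC \cupdot \witdenR{\rho}$) and preserved across the step is the crux; once it is in place, every case reduces to a mechanical inversion matched against the dual case in the completeness proof.
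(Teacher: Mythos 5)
Your proposal matches the paper's proof of this lemma: the paper likewise argues by mirroring the case analysis of the completeness lemma, inverting the (deterministic) target step for each syntactic form of $e_\Compute$ and replaying it with \ruleref{EP-StepC}, \ruleref{EP-Let1C}, \ruleref{EP-WitAssign}, and \ruleref{EP-WitDerefC}, reusing \Cref{lemma:equal-nowit-witnessproject} and \Cref{lemma:equal-subs-witnessproject} exactly as you describe. If anything you are more careful than the paper, which only writes out the reference-allocation and let-congruence cases and leaves the store-decomposition/single-assignment invariant you identify as the crux entirely implicit.
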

\begin{proof}
    The proof follows similarly in structure to the one in \Cref{lemma:witness-compute-complete}, where the same semantic rules are to prove the simulation.

    \begin{itemize}
        \item Suppose $e_\Compute = {\Reft v}$, meaning $\witdenC{e_\Compute} = {\Reft v}$.
        Then $\langle {\Reft v} \divi \sigma_\Compute \cup \witdenR{\rho} \rangle \tstep_{\opalloc(\iota, v)} \langle {\iota} \divi \sigma_\Compute[\iota \mapsto v] \cup \witdenR{\rho} \rangle$.
        We can simulate this step with $\pconf*{e_\Compute}{e_\Prove} \pstep_{\opalloc(\iota, v)} \pconf{\iota}{e_\Prove}{\sigmaC[\iota \mapsto v]}{\sigmaP}{\rho}$.
        Finally, $\witdenC{{\iota}} = {\iota}$ and $\sigma_\Compute[\iota \mapsto v] \cup \witdenR{\rho} = \sigma_\Compute' \cup \witdenR{\rho}$.

        \item Suppose $e_\Compute = {\LetIn{x\ty\TTau}{e_1}{e_2}}$, meaning $\witdenC{e_\Compute} = {\LetIn{x\ty\TTau}{\witdenC{{e_1}}}{\witdenC{{e_2}}}}$.
        Then $\langle \witdenC{e_\Compute} \divi \sigma_\Compute \cup \witdenR{\rho} \rangle \tstep_\trace \langle {\LetIn{x\ty\TTau}{e'}{\witdenC{{e_2}}}} \divi \sigma' \rangle$ from $\langle \witdenC{{e_1}} \divi \sigma_\Compute \cup \witdenR{\rho} \rangle \tstep_\trace \langle {e'} \divi \sigma' \rangle$.
        By the inductive hypothesis, we know that $\pconf*{e_1}{e_\Prove} \pstep_{\trace'} \pconf{e_1'}{e_\Prove}{\sigmaC'}{\sigmaP'}{\rho'}$ where $\witdenC{{e_1'}} = {e'}$ and $\sigma' = \sigma_\Compute' \cup \witdenR{\rho'}$ and $\trace = \witdenR{\trace'}$.
        This then means that $\pconf*{e_\Compute}{e_\Prove} \pstep_{\trace'} \pconf{\LetIn{x\ty\TTau}{e_1'}{e_2}}{e_\Prove}{\sigmaC'}{\sigmaP'}{\rho'}$.
        
    \end{itemize}
\end{proof}

\begin{theorem}[Compute-Side Bisimulation of Computed Input Translation]\label{thm:witness-compute-bisimulation}
    If $\varphi$ is a mapping from all computed input variables $x$ to locations $\iota$, then $\pconf*{e_\Compute}{e_\Prove} \pstep^*_\trace \pconf{e_\Compute'}{e_\Prove}{\sigmaC'}{\sigmaP}{\rho'}$ if and only if $\langle \witdenC{e_\Compute} \divi \sigma_\Compute \cup \witdenR{\rho} \rangle \tstep_{\witdenR{\trace}}^* \langle \witdenC{e'_\Compute} \divi \sigma_\Compute' \cup \witdenR{\rho'} \rangle$.
\end{theorem}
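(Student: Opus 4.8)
The plan is to prove both directions of the biconditional by a straightforward induction on the length of the reduction sequence, using the two single-step lemmas (\Cref{lemma:witness-compute-complete} and \Cref{lemma:witness-compute-sound}) as the per-step engine and the fact that trace projection distributes over concatenation, $\witdenR{\semit :: \strace} = \witdenR{\semit} :: \witdenR{\strace}$, to assemble the traces. The first observation I would make is that because the prove-side expression $e_\Prove$ and store $\sigmaP$ are identical in the initial and final configurations of the $\pstep^*$ derivation, every step must be a compute-side step: each prove-side rule (\ruleref{EP-Let1P}, \ruleref{EP-StepP}, and \ruleref{EP-WitDerefP}) strictly reduces $e_\Prove$, and reduction is irreversible, so no prove-side step can occur without changing the endpoint. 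This restricts attention to exactly the compute-side rules that the two single-step lemmas already handle, and---crucially---each such rule corresponds one-to-one with a single $\tstep$ step of the translated program, so the induction is a clean lock-step correspondence rather than a many-to-one simulation.

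For the forward direction I would induct on the number of $\pstep$ steps. The zero-step case is immediate by reflexivity, since projection preserves the empty trace. For the inductive case I peel off the first (necessarily compute-side) step $\pconf*{e_\Compute}{e_\Prove} \pstep_{\trace_1} \pconf{e_\Compute''}{e_\Prove}{\sigmaC''}{\sigmaP}{\rho''}$ and apply \Cref{lemma:witness-compute-complete} to obtain the matching single step $\langle \witdenC{e_\Compute} \divi \sigmaC \cup \witdenR{\rho} \rangle \tstep_{\witdenR{\trace_1}} \langle \witdenC{e_\Compute''} \divi \sigmaC'' \cup \witdenR{\rho''} \rangle$. The resulting configuration is again in translated form, so the induction hypothesis applies to the remainder of the trajectory; concatenating the two $\tstep$ sequences and applying the trace homomorphism to $\trace_1$ together with the remaining trace yields exactly $\witdenR{\trace}$.

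The backward direction is symmetric, inducting on the number of $\tstep$ steps and invoking \Cref{lemma:witness-compute-sound} at each step. Here the one point requiring genuine care is preserving the structural invariant that every intermediate target configuration has the form $\langle \witdenC{e_{\Compute,i}} \divi \sigmaC^{i} \cup \witdenR{\rho_i} \rangle$---that its store decomposes into a compute heap plus the $\varphi$-image of a computed-input store, and that its expression is the translation of some compute-side expression. This is precisely the shape \Cref{lemma:witness-compute-sound} re-establishes after each step, so the invariant threads through the induction without extra work. I expect this invariant maintenance to be the only real subtlety: all the substantive reasoning about how $\SWitAssign*$ and $\SWitDeref*$ are realized as concrete heap operations through $\varphi$ is discharged inside the single-step lemmas, leaving the multi-step statement a routine gluing argument that introduces no new reasoning about the computed-input machinery itself.
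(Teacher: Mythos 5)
Your proposal takes exactly the paper's route: the paper proves this theorem by induction on the number of steps, invoking \Cref{lemma:witness-compute-complete} for the forward direction and \Cref{lemma:witness-compute-sound} for the backward direction, with the trace-concatenation homomorphism and the preservation of the translated shape of target configurations threaded through implicitly, just as you describe.

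The one place you go beyond the paper is also the one place your argument is wrong: the claim that equal prove-side endpoints force \emph{every} interleaved step to be compute-side because ``reduction is irreversible.'' Irreversibility fails in this calculus. Under \ruleref{E-Call}, mutually recursive methods produce cycles---take $m$ with body $\ThisN.n()$ and $n$ with body $\ThisN.m()$, so that $\TNew{C}{}.m() \tstep \TNew{C}{}.n() \tstep \TNew{C}{}.m()$ with an unchanged store---and hence \ruleref{EP-StepP} can fire in the middle of a sequence while both endpoints still exhibit the same $e_\Prove$ and $\sigmaP$. In that situation your induction breaks down: after peeling off such a prove-side step, the remaining sequence no longer has equal prove-side endpoints, so your induction hypothesis does not apply. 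The theorem itself survives---those prove-side steps emit only empty traces and touch neither $e_\Compute$, $\sigmaC$, nor $\rho$, so they can be excised from the sequence---but excising them requires an explicit argument, not an appeal to irreversibility. To be fair, the paper's own two-line proof silently assumes the sequence consists solely of compute-side steps, which is how the theorem is actually used in \Cref{thm:adequacy-of-combined}: it is applied only to the compute-side prefix produced by the reordering result (\Cref{thm:bubblesort-trace}). So your instinct to justify the restriction is more careful than the paper, but the justification you give does not hold; either add the excision argument or restate the hypothesis as ranging over sequences of compute-side steps.
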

\begin{proof}
    By induction on the number of steps, using \Cref{lemma:witness-compute-complete} for the forward direction ($\Rightarrow$), and \Cref{lemma:witness-compute-sound} for the backwards direction ($\Leftarrow$).
\end{proof}

\begin{lemma}[Single Step Prove-Side Computed Input Translation Completeness]\label{lemma:witness-prove-complete}
    If $\pconf*{e_\Compute}{e_\Prove} \pstep_\Nulltrace \pconf{e_\Compute}{e_\Prove'}{\sigmaC}{\sigmaP'}{\rho}$, and $\varphi$ is a mapping from all computed input variables $x$ to locations $\iota$, denoting $\overline{\varphi} = \varphi(\dom(\varphi))$, then $\langle \witdenP{e_\Prove}[\overline{\varphi} \mapsto \bang \overline{\varphi}] \divi \sigma_\Prove \cup \witdenR{\rho} \rangle \tstep_{\Nulltrace} \langle \witdenP{e'_\Prove}[\overline{\varphi} \mapsto \bang \overline{\varphi}] \divi \sigma_\Prove' \cup \witdenR{\rho} \rangle$.
\end{lemma}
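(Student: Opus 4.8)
The plan is to prove this single-step prove-side completeness lemma by case analysis on the one concurrent step taken by the prove component, exactly mirroring the compute-side argument of \Cref{lemma:witness-compute-complete}. Only three of the concurrent rules can fire on the prove component: \ruleref{EP-Let1P}, \ruleref{EP-StepP}, and \ruleref{EP-WitDerefP} (the rules \ruleref{EP-WitAssign} and \ruleref{EP-WitDerefC} act only on compute, which is also why $\rho$ is unchanged in the statement). Throughout I would lean on the two structural facts that drive the translation: $\witdenP{\cdot}$ is the identity on values and a homomorphism on every syntactic form except the computed-input operations, so after translation the only occurrences of a location $\varphi(x)$ come from images of $\witderef x$ subterms, which is precisely what the substitution $[\overline{\varphi}\mapsto\bang\overline{\varphi}]$ targets.

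For \ruleref{EP-StepP}, the prove component takes an ordinary \corelang step $\conf{e_\Prove}{\sigmaP}\tstep\conf{e'_\Prove}{\sigmaP'}$, so its redex is a pure \corelang redex containing no computed-input operation. I would argue that $\witdenP{\cdot}[\overline{\varphi}\mapsto\bang\overline{\varphi}]$ commutes with this step: the redex is translated and substituted homomorphically, and the augmenting heap fragment $\witdenR{\rho}=\{\varphi(x)\mapsto\rho(x)\}$ is disjoint from $\dom(\sigmaP)$ and untouched by the step, so the same \corelang rule applies to the translated term over $\sigmaP\cup\witdenR{\rho}$ and re-forms $\witdenP{e'_\Prove}[\overline{\varphi}\mapsto\bang\overline{\varphi}]$. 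The let-reduction subcase uses \Cref{lemma:equal-subs-witnessproject} to push the translation through the value substitution, together with the observation that the substituted value is an ordinary prove value and hence free of $\varphi$-locations. For \ruleref{EP-Let1P}, I would apply the induction hypothesis to the sub-step of the first let-component and re-embed the resulting target step under the let evaluation context via \ruleref{E-Eval}, using homomorphy of the translation on the let binding.

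The crucial case is \ruleref{EP-WitDerefP}, where $e_\Prove=\witderef w$ steps to $\rho(w)$. Here $\witdenP{\witderef w}=\varphi(w)$, so after the substitution the translated term is the dereference $\bang\varphi(w)$, a genuine \corelang redex. Since the rule's applicability gives $w\in\dom(\rho)$, the heap $\sigmaP\cup\witdenR{\rho}$ maps $\varphi(w)$ to $\rho(w)$, so \ruleref{E-Deref} steps $\bang\varphi(w)$ to $\rho(w)$ without altering the heap. It then remains to check the target matches $\witdenP{\rho(w)}[\overline{\varphi}\mapsto\bang\overline{\varphi}]$; this holds because computed inputs are constrained by \refunreachN, so $\rho(w)$ is location-free and both the translation and the $\varphi$-substitution act as the identity on it (using \Cref{lemma:equal-nowit-witnessproject}). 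The trace obligation is immediate throughout: every prove-side concurrent step emits $\Nulltrace$ and each \corelang rule invoked here (dereference, let, congruence) also emits $\Nulltrace$, so $\witdenR{\trace}=\Nulltrace$ on both sides.

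The main obstacle I anticipate is the bookkeeping around the substitution $[\overline{\varphi}\mapsto\bang\overline{\varphi}]$: I must maintain a clean separation between the $\varphi$-locations introduced by translating $\witderef$ and the ordinary prove-heap locations in $\sigmaP$, and verify that the dereferences injected by the substitution never coincide with or interfere with a \corelang redex in the \ruleref{EP-StepP} case. Establishing this cleanly likely requires a small auxiliary commutation lemma---analogous to \Cref{lemma:equal-subs-witnessproject} but for the $\overline{\varphi}\mapsto\bang\overline{\varphi}$ substitution---stating that this substitution commutes with both the homomorphic translation and with value substitution on non--computed-input variables, under the invariant that $\varphi$-locations are disjoint from $\dom(\sigmaP)$. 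Once that commutation is in hand, each case reduces to applying the matching \corelang rule, exactly as on the compute side.
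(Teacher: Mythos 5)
Your proposal is correct and takes essentially the same route as the paper's proof: a case split over the three prove-side concurrent rules (\ruleref{EP-Let1P} by the induction hypothesis under the let context, \ruleref{EP-StepP} by commuting the translation-plus-substitution with the ordinary \corelang step, and \ruleref{EP-WitDerefP} as the key case where $\witdenP{\witderef x}[\overline{\varphi}\mapsto\bang\overline{\varphi}] = \bang\varphi(x)$ becomes a genuine dereference redex over $\sigmaP \cup \witdenR{\rho}$), invoking \Cref{lemma:equal-nowit-witnessproject} and \Cref{lemma:equal-subs-witnessproject} in the same places the paper does. The disjointness and commutation bookkeeping you flag as an obstacle is precisely what the paper leaves implicit (it dismisses it with the remark that the type system forbids $\varphi$-locations from arising anywhere except images of $\witderef$), so your plan is, if anything, slightly more careful than the published argument.
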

\begin{proof}
    The proof follows similarly in structure to the one in \Cref{lemma:witness-compute-complete}, where the same semantic rules are to prove the simulation.
    We show the logic for three cases here, and the rest can be easily inferred.
    Note that certain rules (such as computed input assign and method calls) from \Cref{lemma:witness-compute-complete} do not appear in the cases of this proof.

    \begin{itemize}
        \item Suppose $e_\Prove = {\Reft v}$ and $\pconf*{e_\Compute}{e_\Prove} \pstep_{\opalloc(\iota, v)} \pconf{e_\Compute}{\iota}{\sigmaC}{\sigmaP[\iota \mapsto v]}{\rho}$.
        Then $\witdenP{e_\Prove}[\overline{\varphi} \mapsto \bang \overline{\varphi}] = {\Reft v}[\overline{\varphi} \mapsto \bang \overline{\varphi}] = {\Reft v}$.
        This last equality comes from the fact that it's not possible in our language to define a reference of the computed input (the type system prevents this).
        Finally, $\langle \witdenP{e_\Prove}[\overline{\varphi} \mapsto \bang \overline{\varphi}] \divi \sigma_\Prove \cup \witdenR{\rho} \rangle \tstep_{\opalloc(\iota, v)} \langle {\iota} \divi \sigma_\Prove[\iota \mapsto v] \cup \witdenR{\rho} \rangle$.

        \item Suppose $e_\Prove = {\LetIn{x\ty\TTau}{e_1}{e_2}}$ and $\pconf*{e_\Compute}{e_\Prove} \pstep_\trace \pconf{e_\Compute}{\LetIn{x\ty\TTau}{e_1'}{e_2}}{\sigmaC}{\sigmaP'}{\rho}$ from $\pconf*{e_\Compute}{e_1} \pstep_\trace \pconf{e_\Compute}{e_1'}{\sigmaC}{\sigmaP'}{\rho}$.
        By the inductive hypothesis, $\langle \witdenP{{e_1}}[\overline{\varphi} \mapsto \bang \overline{\varphi}] \divi \sigma_\Prove \cup \witdenR{\rho} \rangle \tstep_{\witdenR{\trace}} \langle \witdenP{{e_1'}}[\overline{\varphi} \mapsto \bang \overline{\varphi}] \divi \sigma_\Prove' \cup \witdenR{\rho} \rangle$.
        Then, $\langle {\LetIn{x\ty\TTau}{\witdenP{{e_1}}{\color{black}[\overline{\varphi} \mapsto \bang \overline{\varphi}]}}{\witdenP{{e_2}}}}[\overline{\varphi} \mapsto \bang \overline{\varphi}] \divi \sigma_\Prove \cup \witdenR{\rho} \rangle \tstep_{\witdenR{\rho}} \langle {\LetIn{x\ty\TTau}{\witdenP{{e_1'}}{\color{black} [\overline{\varphi} \mapsto \bang \overline{\varphi}]}}{\witdenP{{e_2}}}}[\overline{\varphi} \mapsto \bang \overline{\varphi}] \divi \sigma_\Prove' \cup \witdenR{\rho} \rangle$.

        \item Suppose $e_\Prove = {\witderef x}$ and $\pconf*{e_\Compute}{e_\Prove} \pstep_\Nulltrace \pconf*{e_\Compute}{\rho(x)}$.
        Then $\witdenP{e_\Prove}[\overline{\varphi} \mapsto \bang \overline{\varphi}] = {\varphi(x)}[\overline{\varphi} \mapsto \bang \overline{\varphi}] = {\bang \varphi(x)}$.
        Finally, $\langle \witdenP{e_\Prove}[\overline{\varphi} \mapsto \bang \overline{\varphi}] \divi \sigma_\Prove \cup \witdenR{\rho} \rangle \tstep_\Nulltrace \langle {\rho(x)} \divi \sigma_\Prove \cup \witdenR{\rho} \rangle$. 
    \end{itemize}
\end{proof}

\begin{lemma}[Single Step Prove-Side Computed Input Translation Soundness]\label{lemma:witness-prove-sound}
    Suppose $e_\Prove$ is an expression, and $\varphi$ is a mapping from all computed input variables $x$ to locations $\iota$.
    If $\langle \witdenP{e_\Prove}[\overline{\varphi} \mapsto \bang \overline{\varphi}] \divi \sigma_\Prove \cup \witdenR{\rho} \rangle \tstep_{\Nulltrace} \langle e' \divi \sigma_\Prove' \cup \witdenR{\rho} \rangle$, denoting $\overline{\varphi} = \varphi(\dom(\varphi))$, then $\pconf*{e_\Compute}{e_\Prove} \pstep_\Nulltrace \pconf{e_\Compute}{e_\Prove'}{\sigmaC}{\sigmaP'}{\rho}$ where $\witdenP{e'_\Prove}[\overline{\varphi} \mapsto \bang \overline{\varphi}] = e'$.
\end{lemma}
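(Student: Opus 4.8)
The plan is to prove this by structural induction on $e_\Prove$, mirroring the case analysis of its completeness counterpart (\Cref{lemma:witness-prove-complete}) but running the simulation in reverse. Given the hypothesized $\tstep$ step out of $\witdenP{e_\Prove}[\overline{\varphi}\mapsto\bang\overline{\varphi}]$, I would first invert that step to recover the syntactic shape of $\witdenP{e_\Prove}[\overline{\varphi}\mapsto\bang\overline{\varphi}]$. Since $\witdenP{\cdot}$ and the substitution $[\overline{\varphi}\mapsto\bang\overline{\varphi}]$ both act homomorphically on syntax, and no raw $\varphi$-location occurs in a well-formed prove expression, this inversion is unambiguous and pins down the shape of $e_\Prove$ itself (a dereference $\bang\varphi(x)$ of a computed-input location is distinguished from an ordinary $\bang\iota$ by whether the location lies in $\overline{\varphi}$). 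For each shape I would then exhibit the matching concurrent step via one of \ruleref{EP-StepP}, \ruleref{EP-Let1P}, or \ruleref{EP-WitDerefP}---the $e_\Compute$, $\sigma_\Compute$, and $\rho$ components stay fixed throughout, and the prove side emits only $\Nulltrace$, consistent with $\witdenR{\Nulltrace}=\Nulltrace$---and finally check the translation identity $\witdenP{e_\Prove'}[\overline{\varphi}\mapsto\bang\overline{\varphi}]=e'$.

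First I would dispatch the cases where $\witdenP{\cdot}$ is the identity: references ($\Reft v$, $\bang v$, $v_1 \coloneq v_2$), casts, field access, and \VerifyN. Each steps in lockstep---the inverted $\tstep$ rule lifts directly to a prove-side step through \ruleref{EP-StepP}, and the translation identity holds because the result's translation is again the identity. The one point of care in this group is fresh allocation ($e_\Prove=\Reft v$): the location $\iota$ chosen by \ruleref{E-Ref} is fresh for $\sigma_\Prove\cup\witdenR{\rho}$, hence distinct from every $\varphi(x)\in\overline{\varphi}$, so it cannot alias a computed-input location; I would also note that the substitution leaves $\Reft v$ untouched, since the type system's $\refunreachN$ restriction forbids computed inputs of reference type, so no $\varphi(x)$ occurs inside $v$.

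The genuinely interesting case is $e_\Prove=\witderef x$. Here $\witdenP{\witderef x}=\varphi(x)$, and applying $[\overline{\varphi}\mapsto\bang\overline{\varphi}]$ turns it into $\bang\varphi(x)$, which steps by dereference to $(\sigma_\Prove\cup\witdenR{\rho})(\varphi(x))=\witdenR{\rho}(\varphi(x))=\rho(x)$ with no store change. I would match this with \ruleref{EP-WitDerefP}, which steps $\witderef x$ to $\rho(x)$ while leaving $\sigma_\Prove$ and $\rho$ fixed, and then observe that $\rho(x)$ is a closed, reference-free value, so $\witdenP{\rho(x)}[\overline{\varphi}\mapsto\bang\overline{\varphi}]=\rho(x)=e'$. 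This step hinges on the bookkeeping tying $\witdenR{\rho}$ to $\rho$ through $\varphi$, exactly the relationship recorded by the trace-projection clause $\witdenR{\sopset(x,v)}=\topset(\varphi(x),v)$ and the disjointness of $\sigma_\Prove$ from $\witdenR{\rho}$.

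Finally, the inductive cases \LetN and \IfN follow the completeness proof's structure: for a \LetN whose head is not yet a value, I would invert the $\tstep$ step on the translated head, apply the induction hypothesis to obtain the prove-side step on the subexpression, and reassemble with \ruleref{EP-Let1P}; for the \LetN-substitution and \IfN-branch-selection steps I would use \Cref{lemma:equal-subs-witnessproject,lemma:equal-nowit-witnessproject} to commute translation past substitution (being careful that the bound variable is not a computed input and that this commutation stays aligned with the outer $[\overline{\varphi}\mapsto\bang\overline{\varphi}]$ substitution). The main obstacle I anticipate is not any single case but precisely this alignment: verifying in each step that the $\witdenR{\rho}$ portion of the combined store is never written---which holds because only \Compute can assign computed inputs and dereference never mutates---so that the store decomposition $\sigma_\Prove'\cup\witdenR{\rho}$ assumed in the statement is genuinely preserved.
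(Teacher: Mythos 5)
Your proposal is correct and takes essentially the same approach as the paper: the paper's own proof is a one-liner deferring to the structure of \Cref{lemma:witness-compute-sound} instantiated with the case analysis of \Cref{lemma:witness-prove-complete}, which is exactly the inversion-plus-simulation argument over the cases (lockstep reference/cast/field/verify steps via \ruleref{EP-StepP}, the $\witderef x$ case via \ruleref{EP-WitDerefP}, and the \LetN/\IfN cases reassembled with \ruleref{EP-Let1P}) that you spell out. Your supporting observations---freshness of allocated locations, $\refunreachN$ excluding computed-input references, and the prove side never writing the $\witdenR{\rho}$ portion of the store---are precisely the facts those referenced proofs rely on implicitly.
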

\begin{proof}
    The proof follows similarly in structure to the one in \Cref{lemma:witness-compute-sound} with the semantic rules from \Cref{lemma:witness-prove-complete}.
\end{proof}

\begin{theorem}[Prove-Side Bisimulation of Computed Input Translation]\label{thm:witness-prove-bisimulation}
    If $\varphi$ is a mapping from all computed input variables $x$ to locations $\iota$, denoting $\overline{\varphi} = \varphi(\dom(\varphi))$, then $\pconf*{e_\Compute}{e_\Prove} \pstep^*_\Nulltrace \pconf{e_\Compute}{e_\Prove'}{\sigmaC}{\sigmaP'}{\rho}$ if and only if $\langle \witdenP{e_\Prove}[\overline{\varphi} \mapsto \bang \overline{\varphi}] \divi \sigma_\Prove \rangle \tstep^*_{\Nulltrace} \langle \witdenP{e'_\Prove}[\overline{\varphi} \mapsto \bang \overline{\varphi}] \divi \sigma_\Prove' \rangle$.
\end{theorem}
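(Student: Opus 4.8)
The plan is to establish both directions of the biconditional by a single induction on the number of steps, exactly mirroring the compute-side result (\Cref{thm:witness-compute-bisimulation}). The two single-step lemmas proved immediately above---\Cref{lemma:witness-prove-complete} for the forward direction and \Cref{lemma:witness-prove-sound} for the backward direction---already carry all of the semantic content, so the remaining task is purely to chain individual steps together and confirm that the intermediate configurations keep the right syntactic shape.

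First I would dispatch the base case of zero steps, where both configurations are unchanged and the equivalence is immediate. For the forward direction ($\Rightarrow$), I would split a reduction $\pconf*{e_\Compute}{e_\Prove} \pstep^*_\Nulltrace \pconf{e_\Compute}{e_\Prove'}{\sigmaC}{\sigmaP'}{\rho}$ into a leading step $\pconf*{e_\Compute}{e_\Prove} \pstep_\Nulltrace \pconf{e_\Compute}{e_\Prove''}{\sigmaC}{\sigmaP''}{\rho}$ followed by the rest. Applying \Cref{lemma:witness-prove-complete} to the leading step produces a matching target step $\langle \witdenP{e_\Prove}[\overline{\varphi} \mapsto \bang \overline{\varphi}] \divi \sigma_\Prove \cup \witdenR{\rho} \rangle \tstep_\Nulltrace \langle \witdenP{e''_\Prove}[\overline{\varphi} \mapsto \bang \overline{\varphi}] \divi \sigma_\Prove'' \cup \witdenR{\rho} \rangle$, whose right-hand configuration is again of the form $\witdenP{\cdot}[\overline{\varphi} \mapsto \bang \overline{\varphi}]$. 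The inductive hypothesis then simulates the remaining steps, and concatenating the two target reductions closes the case. The backward direction ($\Leftarrow$) is entirely symmetric, invoking \Cref{lemma:witness-prove-sound} to reflect each target step back to a single prove-side concurrent step and again appealing to the inductive hypothesis for the tail.

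I do not expect a genuine obstacle here; the difficulty has been fully localized into the single-step lemmas. The only bookkeeping point is the computed-input store: because \Prove can only read computed inputs (via $\SWitDeref*$) and can never assign to them, $\rho$ is invariant across every prove-side concurrent step, which is exactly why the projected portion $\witdenR{\rho}$ of the heap appears unchanged on both sides of each single-step lemma. Threading this fixed $\witdenR{\rho}$ unmodified through the induction is trivial, and unlike the soundness argument for the concurrent semantics there are no termination or confluence subtleties to manage. The single remaining thing to verify is that the substitution $[\overline{\varphi} \mapsto \bang \overline{\varphi}]$ commutes with reduction at each step so that intermediate states stay in the projected form---but this is precisely the content of the single-step lemmas, so it requires no further argument.
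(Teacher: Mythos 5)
Your proposal is correct and matches the paper's proof exactly: the paper also proves this theorem by induction on the number of steps, invoking \Cref{lemma:witness-prove-complete} for the forward direction and \Cref{lemma:witness-prove-sound} for the backward direction. Your additional bookkeeping remarks (the invariance of $\rho$ and threading $\witdenR{\rho}$ through the induction) are consistent with the single-step lemmas and add nothing that conflicts with the paper's argument.
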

\begin{proof}
    By induction on the number of steps, using \Cref{lemma:witness-prove-complete} for the forward direction ($\Rightarrow$), and \Cref{lemma:witness-prove-sound} for the backwards direction ($\Leftarrow$).
\end{proof}

\begin{lemma}[Nesting Sub-Projections Match Projection]\label{lemma:projections-match}
    For any $e$ that is well-typed and where projection is defined, if $\varphi$ is a mapping from all computed input variables $x$ to locations $\iota$, denoting $\overline{\varphi} = \varphi(\dom(\varphi))$, then $\witdenL{\denL{e}} = \denFull{e[\dom(\varphi) \mapsto \overline{\varphi}]}$.
\end{lemma}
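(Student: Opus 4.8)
The plan is to prove the identity by structural induction on $e$, computing both sides of $\witdenL{\denL{e}} = \denFull{e[\dom(\varphi)\mapsto\overline{\varphi}]}$ for each syntactic form and appealing to the induction hypothesis on the immediate subterms. Because the witness translation $\witdenL{\cdot}$ and the projection $\denL{\cdot}$ both behave differently at \Compute and \Prove, I would split on the procedure $\ell$ in every case that touches a computed input. The two helper lemmas immediately preceding the statement are set up for exactly this induction: \Cref{lemma:equal-nowit-witnessproject} discharges any subterm drawn from the computed-input-free outer language, where $\witdenL{\cdot}$ is the identity and the substitution $[\dom(\varphi)\mapsto\overline{\varphi}]$ is vacuous, and \Cref{lemma:equal-subs-witnessproject} lets me commute the witness translation past the substitution of ordinary (non-computed-input) variables.

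\textbf{Routine and base cases.} The operations whose immediate subterms are values --- references, dereferences, assignments, casts, field accesses, method calls, and the $\TProveN$ and $\TVerifyN$ forms --- carry no computed-input operation at their head, so $\witdenL{\cdot}$ passes through unchanged and the substitution either does not apply or commutes into the value subterms; for method calls the body is outer-language code, so \Cref{lemma:equal-nowit-witnessproject} applies directly. The genuinely new base cases are $\SWitAssign{x}{v}$ and $\SWitDeref{x}$. For these I would evaluate both sides explicitly: the left side first projects (retaining the computed-input operation) and then witness-translates it into an operation on the backing location $\varphi(x)$, while the right side first rewrites $x$ to $\varphi(x)$, turning the computed-input operation into an ordinary reference operation, and then projects. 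Checking that these coincide amounts to matching the defining clauses of $\witdenL{\cdot}$ against the corresponding clauses of the projection, separately for $\ell=\Compute$ and $\ell=\Prove$.

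\textbf{Inductive cases.} The $\SLetN$ and $\SIfN$ cases apply the induction hypothesis to each subexpression and then argue that the two transformations select the same clause of the conditional projection rules. The static label test $\ell \subseteq \ell'$ in \ruleref{TC-Let} is untouched by either transformation, so the difficulty is confined to the value test that collapses inert sequencing (the clause ``$\denFull{e_1}=v$''). The intended argument is that, by the induction hypothesis, the projected-and-translated subterm on the left equals the substituted-and-projected subterm on the right, so one is a value exactly when the other is and the same clause fires on both sides; the residual bookkeeping --- that the let-bound variable is an ordinary variable that may be pushed past $\witdenL{\cdot}$ --- is supplied by \Cref{lemma:equal-subs-witnessproject}.

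\textbf{Main obstacle.} I expect the real work to be precisely this alignment of the value-collapse clause in the $\SLetN$ and $\SIfN$ cases, because the witness translation can alter the syntactic value-hood of a computed-input dereference on the \Prove side (where $\witdenL{\SWitDeref{x}}$ becomes the bare value $\varphi(x)$ rather than a dereference). I would therefore need to verify carefully that, under the substitute-then-project reading on the right, each computed-input operation resolves to something of the same value-hood that $\witdenL{\denL{\cdot}}$ produces on the left, so that the middle and else branches of the projection are never mismatched. This almost certainly requires an auxiliary value-preservation fact in the spirit of \Cref{lemma:value-substitution-proj}; once it is in hand, the branch selection agrees and the equality follows compositionally from the induction hypothesis.
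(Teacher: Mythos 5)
Your proposal takes essentially the same approach as the paper: the paper's entire proof of this lemma is the single sentence ``By induction on $e$, considering each projection case defined for $\denL{\cdot}$,'' which is exactly the structural induction over projection cases that you outline. Everything beyond that in your write-up---the appeal to \Cref{lemma:equal-nowit-witnessproject} and \Cref{lemma:equal-subs-witnessproject}, the explicit computation of the $\SWitAssign{x}{v}$ and $\SWitDeref{x}$ cases, and the value-collapse alignment obstacle you flag for the $\SLetN$/$\SIfN$ clauses---is detail the paper's one-line proof never supplies, so your attempt is, if anything, more careful than the published argument.
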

\begin{proof}
    By induction on $e$, considering each projection case defined for $\denL{\cdot}$.
\end{proof}

\begin{lemma}[Substitution Preserves Adequacy]\label{lemma:subst-adequacy}
    If for any $e_c, v_c, \hat{e_c}, \overline{\sigma}, \overline{\sigma_1}$ we have $\langle e_c \divi \overline{\sigma_1} \rangle \cpstep^* \langle v_c \divi \overline{\sigma_1}' \rangle$ if and only if $\langle \hat{e_c} \divi \overline{\sigma_1} \rangle \cpstep^* \langle v_c \divi \overline{\sigma_1}' \rangle$, then $\langle e \divi \overline{\sigma} \rangle \cpstep^* \langle v \divi \overline{\sigma}' \rangle$ if and only if $\langle e[e_c \mapsto \hat{e_c}] \divi \overline{\sigma} \rangle \cpstep^* \langle v \divi \overline{\sigma}' \rangle$.
\end{lemma}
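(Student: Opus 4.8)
The plan is to read the hypothesis as an \emph{observational equivalence}: $e_c$ and $\hat{e_c}$ agree on final outcomes, i.e.\ for every store tuple $\overline{\sigma_1}$, value $v_c$, and result store $\overline{\sigma_1}'$ we have $\langle e_c \divi \overline{\sigma_1}\rangle \cpstep^* \langle v_c \divi \overline{\sigma_1}'\rangle$ iff $\langle \hat{e_c}\divi\overline{\sigma_1}\rangle\cpstep^*\langle v_c\divi\overline{\sigma_1}'\rangle$. The goal is then precisely that this equivalence is a \emph{congruence}: substituting $e_c$ (resp.\ $\hat{e_c}$) into an arbitrary surrounding expression $e$ yields configurations with identical terminating behaviour. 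I would prove the forward implication and obtain the reverse by symmetry, since the hypothesis is itself an iff and the two subterms play interchangeable roles (the reverse direction inducts on the reduction of the substituted configuration and appeals to the reverse half of the hypothesis).

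For the forward direction I would induct on the length $n$ of the reduction $\langle e\divi\overline{\sigma}\rangle\cpstep^n\langle v\divi\overline{\sigma}'\rangle$, with a case analysis on $e$. The leaf case $e = e_c$ is immediate from the hypothesis (instantiated at $\overline{\sigma_1} = \overline{\sigma}$, $v_c = v$) and uses no induction. Because $e_c$ is a compute-and-prove block and hence never a value, it cannot occur as a proper value-subterm; consequently, for every syntactic form whose non-head subexpressions are all values (references, dereferences, assignments, casts, field accesses, calls, allocations, $\SProveN$, $\SVerifyN$, and the computed-input operations) either $e = e_c$ (already handled) or $e_c$ does not occur in $e$, so that $e[e_c\mapsto\hat{e_c}] = e$ and the claim is trivial. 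In the non-nested setting of these proofs the only genuinely recursive evaluation-position forms are thus $\SLetIn{x\ty\tau}{e_1}{e_2}$ and $\SIfThenElse{v}{e_1}{e_2}$.

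In the let case, \Cref{thm:surface-determinism} together with the evaluation-context structure forces the reduction to factor as $\langle e_1\divi\overline{\sigma}\rangle\cpstep^{k}\langle v_1\divi\overline{\sigma}''\rangle$ for some value $v_1$ with $k<n$, one let step, and then $\langle e_2[x\mapsto v_1]\divi\overline{\sigma}''\rangle\cpstep^{m}\langle v\divi\overline{\sigma}'\rangle$ with $m = n-k-1 < n$. Applying the induction hypothesis to the $k$-step reduction of $e_1$ and to the $m$-step reduction of $e_2[x\mapsto v_1]$ (both strictly shorter than $n$) replaces $e_c$ by $\hat{e_c}$ inside each while preserving the final value and store; reassembling through the same let step produces the substituted reduction of $e$. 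The if case is analogous but shorter: the guard is a value, the first step selects a branch, and the induction hypothesis applies to that branch at a strictly smaller length, using that $[e_c\mapsto\hat{e_c}]$ acts identically on both branches.

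The main obstacle is the let case, and specifically the continuation $e_2[x\mapsto v_1]$: it is a \emph{substitution instance} of $e_2$ rather than a syntactic subterm, which is exactly why I induct on reduction length rather than on the structure of $e$. Using length as the measure lets me invoke the hypothesis on $e_2[x\mapsto v_1]$ directly, since its remaining reduction is shorter than $n$. The one technical point to discharge is that subterm replacement commutes with the let's value substitution, $(e_2[x\mapsto v_1])[e_c\mapsto\hat{e_c}] = (e_2[e_c\mapsto\hat{e_c}])[x\mapsto v_1]$, so that the reassembled reduction is genuinely a reduction of $e[e_c\mapsto\hat{e_c}]$. This commutation is where I would be careful: it holds because $e_c$ and $\hat{e_c}$ are the (closed) compute-and-prove block and its compilation, so $[x\mapsto v_1]$ neither rewrites an occurrence of $e_c$ nor is captured by it, and the universal quantification over stores in the hypothesis absorbs any dependence on the surrounding state at the point where $e_c$ reaches head position.
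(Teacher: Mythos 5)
Your choice of measure is well-motivated, and it fixes a real weakness in the paper's own argument: the paper inducts structurally on $e$ and disposes of sequencing by "applying the inductive hypothesis on $e_1$ and $e_2$," but evaluation of a let actually continues on the substitution instance $e_2[x \mapsto v_1]$, which is not a subterm; your reduction-length induction plus the substitution-commutation observation handles exactly this point rigorously. However, your case analysis has a genuine gap. You claim that $\SProveN$ statements are among the forms "whose non-head subexpressions are all values," and you omit compute-and-prove blocks from the recursive cases entirely. Both claims are wrong about the syntax: a $\SProveN$ statement carries a predicate body $e$ inside its circuit, and a compute-and-prove block $\SCnpN$ carries a body $e$, so $e_c$ can occur properly inside either, and for such $e$ your conclusion $e[e_c \mapsto \hat{e_c}] = e$ is simply false. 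Your justification (that $e_c$, being a block, is never a value) only rules out occurrences in value positions, which these are not.

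This is not a corner case; it is the case the lemma exists to serve. Its only use, the inductive step of \Cref{thm:adequacy-of-combined}, instantiates $e_c$ as the \emph{innermost} compute-and-prove block of an expression with nested blocks, so $e_c$ sits inside the body of an outer compute-and-prove block of $e$ --- precisely the occurrence your induction never reaches, since you recurse only through let and if (your aside "in the non-nested setting" concedes this, but the non-nested setting is where the lemma is not needed). The paper's proof, terse as it is, closes these cases explicitly: "prove and compute-and-prove follow from applying the inductive hypothesis to its body." To repair your argument, keep the length measure but add cases for a block and its administrative form: an outer block steps by \ruleref{EC-ComputeAndProveInit}, then each \ruleref{EC-ComputeAndProveStep} wraps exactly one inner step of the body on the inner stores, then \ruleref{EC-ComputeAndProveTrue} fires; the inner reduction is strictly shorter than the outer one, so your induction hypothesis applies under the wrapper. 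For $\SProveN$ you must either run the same argument through the premise of \ruleref{E-Prove} (the body's reduction to $\TTrue$) or argue separately --- via labels, not syntax --- that a well-typed $\Prove$-labeled predicate body cannot contain the $\Compute$-labeled block $e_c$.
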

\begin{proof}
    By induction on $e$.

    For any $e$ that does not have nested subexpressions, the conclusion follows immediately from the premise: either it substitutes an equivalent value $v_c$, substitutes the whole expression, or it leaves $e$ unchanged.

    That leaves sequencing, branching, prove, and compute-and-prove.
    Prove and compute-and-prove follow from applying the inductive hypothesis to its body $e$; sequencing and branching follow by applying the inductive hypothesis on its subexpressions $e_1$ and $e_2$.
\end{proof}

\begin{lemma}[Bisimulation of Lifting]\label{lemma:lifting-bisim}
    For any $e$ in \corelang, then $\conf{e}{\sigma} \tstep_\trace^* \conf{e'}{\sigma'}$ if and only if $\cpconf{\Lift{\Compute}{e}}{\sigma}{\sigmaP}{\rho} \cpstep^*_{\witdenR[\cdot]{\trace}} \cpconf{\Lift{\Compute}{e'}}{\sigma'}{\sigmaP}{\rho}$.
\end{lemma}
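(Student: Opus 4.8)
This lemma is a lock-step bisimulation between \corelang reduction and the combined reduction of the compute-lifted expression, sharing the heap ($\sigma = \sigmaC$) while leaving $\sigmaP$ and $\rho$ untouched. The plan is to first prove the single-step version in both directions and then lift to the multi-step statement by induction on the number of steps, exactly as is done for the computed-input bisimulation in \Cref{thm:witness-compute-bisimulation}. For the single step, I would proceed by case analysis on which \corelang reduction rule fires (equivalently, by induction on $e$, with the \ruleref{E-Eval} evaluation-context case handled by \ruleref{EC-Eval} and the inductive hypothesis on the subexpression inside the \TLetN context).

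\textbf{Key steps.} The cases split cleanly into two families. First, the \emph{heap-preserving, reference-free} rules---\ruleref{E-Let}, \ruleref{E-IfT}, \ruleref{E-IfF}, \ruleref{E-Cast}, \ruleref{E-Field}, \ruleref{E-Call}, \ruleref{E-Prove}, and the two \TVerifyN rules---each step $\conf{e}{\varnothing} \tstep \conf{e'}{\varnothing}$ with empty heap before and after, so \ruleref{EC-Lift} simulates them directly, producing $\cpconf{\Lift{\Compute}{e}}{\sigma}{\sigmaP}{\rho} \cpstep \cpconf{\Lift{\Compute}{e'}}{\sigma}{\sigmaP}{\rho}$ without disturbing any store. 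For \ruleref{E-Call} and \ruleref{E-Let} this requires the easy structural fact that $\Lift{\Compute}{\cdot}$ commutes with value substitution (so that lifting the substituted method body or let-continuation agrees with substituting into the lifted form); note there is no dedicated combined call rule, so the syntactic \Compute label that lifting attaches to method calls is exactly what makes \ruleref{EC-Lift} applicable. Second, the \emph{heap} rules \ruleref{E-Ref}, \ruleref{E-Deref}, \ruleref{E-Assign}, and \ruleref{E-Alloc} are matched by the dedicated compute-side rules \ruleref{EC-RefC}, \ruleref{EC-DerefC}, \ruleref{EC-AssignC}, and \ruleref{EC-Alloc}. Here I would invoke the round-trip identities for the lifting/lowering operators of \Cref{app:lifting-lowering-def}: on any \corelang value $v$ we have $\Lower{\Compute}{\Lift{\Compute}{v}} = v$, so the lowered value stored by \ruleref{EC-RefC}/\ruleref{EC-AssignC} into $\sigmaC = \sigma$ coincides with the value \corelang writes; and since the heaps are literally equal, the read $\Lift{\Compute}{\sigmaC(\iota)}$ produced by \ruleref{EC-DerefC} is exactly $\Lift{\Compute}{\sigma(\iota)}$, i.e.\ the compute-lift of what \ruleref{E-Deref} returns. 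The trace bookkeeping is then immediate: because a lifted \corelang expression contains no computed inputs, the $\sopset(x,v)$ case of trace projection never arises, and the remaining events ($\topalloc$, $\topset$, $\topgen$, $\topverif$) are carried across unchanged under $\witdenR[\cdot]{\cdot}$.

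\textbf{Main obstacle.} The genuinely delicate part is the reference family: I must confirm that the combined semantics' wrapping of locations in $\SLocC*$ together with its lowering-on-write and lifting-on-read faithfully reproduces \corelang's raw-heap behavior when the two heaps are identified, and that the lifted redex can fire \emph{only} via the intended compute-side rule. The latter uses the unique-label property behind \Cref{lemma:step-preserves-lifting} and determinism (\Cref{thm:surface-determinism}) to rule out \Prove or \ComputeProve variants of the reference rules, which is what makes the backward direction a true inverse rather than merely a simulation. Once both directions of the single-step claim are established, the multi-step statement and the trace equation follow by a routine induction on path length, concatenating the per-step trace correspondences.
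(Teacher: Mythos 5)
Your proposal is correct and takes essentially the same approach as the paper's (one-sentence) proof: induction on $e$, directly matching each \corelang rule that can fire with its corresponding compute-side rule on the lifted expression. Your extra detail---the split between \ruleref{EC-Lift}-simulated heap-free rules and the dedicated reference rules, the round-trip identity $\Lower{\Compute}{\Lift{\Compute}{v}} = v$, and the determinism argument securing the backward direction---simply fleshes out what the paper leaves implicit.
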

\begin{proof}
    By induction on $e$, where you directly link each syntactic form that can take a step in \corelang with its matching compute-side rule on the lifted expression in \zkstrudel.
\end{proof}

\begin{theorem}[Adequacy of Combined Blocks]\label{thm:adequacy-of-combined}
    For any compute-and-prove block $e_{cnp} = \TCnp*$ that is well-typed, then $\langle {e_{cnp}} \divi \sigma \rangle \tstep^*_{\trace_1} \langle {\TProofOfUsing*} \divi \sigma' \rangle$ if and only if $\langle \fullComp{{e_{cnp}}} \divi \sigma \rangle \tstep_{\trace_2}^* \langle {\TProofOfUsing*} \divi \sigma' \rangle$.
    In either direction, $\trace_1 \traceq \trace_2$.
\end{theorem}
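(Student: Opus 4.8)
The plan is to derive this statement as the base case of adequacy for combined blocks, assembling it from the concurrent-semantics bisimulation of \Cref{sec:concurrent-bisimilarity}, the reordering result of \Cref{thm:bubblesort-trace}, and the computed-input translation bisimulations. First I would invert the in-order run of $e_{cnp}$: any evaluation to a proof must begin with a single \ruleref{E-ComputeAndProveInit} step, which allocates a fresh location $\varphi(x)$ for each computed input (initialized to $\bot$) and produces the administrative term with empty ghost heap and empty $\rho$ after substituting $\overline{y_p},\overline{y_s} \mapsto \overline{v_p},\overline{v_s}$; continue with a block of \ruleref{E-ComputeAndProveStep} steps driving the body via $\cpstep$ to true; and finish with \ruleref{E-ComputeAndProveTrue}, which emits $\TProofOfUsing{\alpha}{\overline{v_p}{::}\rho(\overline{x_p})}$ and writes the final computed-input values back into $\sigma$ at the $\varphi$ locations. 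This reduces the claim to relating the combined body run $\cpconf{e}{\sigmaC}{\varnothing}{\varnothing} \cpstep^* \cpconf{\STrue}{\sigmaC'}{\sigmaP'}{\rho'}$ to the compute-then-prove structure of $\fullComp{e_{cnp}}$, whose leading $\TAlloc{\cdot}$ allocates exactly the same $\varphi$ locations.

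Working with the type-directed annotation of the body (so that the bisimulation applies without disturbing projections), I would lift this combined run to the concurrent semantics by \Cref{thm:fullbisimcomplete}, obtaining $\pconf*{\denC{e}}{\denP{e}} \pstep^* \pconf{e_\Compute'}{e_\Prove'}{\sigmaC'}{\sigmaP'}{\rho'}$ with $() \lessthan e_\Compute'$ and $\STrue \lessthan e_\Prove'$ (these being the two projections of the terminal body value) and equivalent traces. After extending to genuine values, \Cref{thm:bubblesort-trace} reorders the interleaving so the compute side runs to completion first: $\pconf*{\denC{e}}{\denP{e}} \pstep^* \pconf{()}{\denP{e}}{\sigmaC'}{\varnothing}{\rho'} \pstep^* \pconf{()}{\STrue}{\sigmaC'}{\sigmaP'}{\rho'}$, with the prove segment emitting only $\Nulltrace$. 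The compute-first segment corresponds by \Cref{thm:witness-compute-bisimulation} to a $\tstep$ run of $\witdenC{\denC{e}}$ on $\sigmaC \cup \witdenR{\rho}$, and \Cref{lemma:projections-match} identifies this translated code with $\denC{e[\dom(\varphi) \mapsto \overline{\varphi}]}$ --- precisely the compute body inside $\fullComp{e_{cnp}}$ after the $\TAlloc{\cdot}$. Symmetrically, \Cref{thm:witness-prove-bisimulation} turns the prove segment into a $\tstep$ run of the predicate $\witdenP{\denP{e}}[\overline{\varphi} \mapsto \bang\overline{\varphi}]$ reaching true on the (initially empty) ghost heap.

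I would then reassemble the compiled execution and read off equality of results. Running $\fullComp{e_{cnp}}$, the $\TAlloc{\cdot}$ produces the $\varphi$ references at $\bot$ (matching the empty $\witdenR{\rho}$); executing the compute body replays the compute-first segment above, leaving each reference holding $\rho'(x)$, where \Cref{lemma:subst-adequacy} lets me commute the $\overline{y}$- and $\overline{x}$-substitutions that \ruleref{E-ComputeAndProveInit} and $\fullComp{\cdot}$ perform at different points (and \Cref{lemma:lifting-bisim} handles any lifted nested block surfaced by the compute projection); and the trailing $\TProveN$ dereferences these references and evaluates $\denP{e}$ on an empty heap under \ruleref{E-Prove}, which reaches true by the prove segment. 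The final heaps agree because the $\varphi$ locations persist in $\sigma$ holding the same values \ruleref{E-ComputeAndProveTrue} writes, and the emitted proof values coincide since prove steps leave $\rho$ fixed. For traces, every nontrivial event is a compute-side event, the translation sends $\sopset(x,v)$ to $\topset(\varphi(x),v)$ and preserves all others, and proof generation is silent, so $\trace_1 \traceq \trace_2$. The converse direction runs the same chain in reverse using \Cref{thm:fullbisimsoundness}; its termination hypothesis is discharged because the compiled program is assumed to terminate at the proof value.

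The step I expect to be the main obstacle is reconciling proof generation across the two semantics. In-order, \ruleref{E-ComputeAndProveTrue} produces the proof atomically using the fact that the body already reduced to $\STrue$, whereas the compiled program actually re-evaluates the predicate $\denP{e}$ under \ruleref{E-Prove}. Bridging these requires showing that the prove portion of the combined body run --- the very computation that witnessed the body reaching $\STrue$ --- coincides with the stand-alone predicate evaluation after the computed inputs change representation from entries of the shared map $\rho$ (read via $\witderef$) to dereferenced heap references passed as predicate arguments. The facts that make this go through are that \ruleref{E-ComputeAndProveInit} initializes the prove side with an empty ghost heap, matching the empty heap demanded by \ruleref{E-Prove}, and that \Cref{thm:bubblesort-trace} guarantees the prove segment interacts with compute only through $\rho$, so the values the predicate dereferences are exactly $\rho'(\overline{x_p}),\rho'(\overline{x_s})$.
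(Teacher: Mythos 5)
Your reconstruction of the base case is essentially the paper's own argument: you invert the run into the \ruleref{E-ComputeAndProveInit} / \ruleref{E-ComputeAndProveStep} / \ruleref{E-ComputeAndProveTrue} phases, pass to annotated programs, apply \Cref{thm:fullbisimcomplete} (resp.\ \Cref{thm:fullbisimsoundness} for the converse), reorder with \Cref{thm:bubblesort-trace}, and then use \Cref{thm:witness-compute-bisimulation}, \Cref{thm:witness-prove-bisimulation}, and \Cref{lemma:projections-match} to identify the two segments with the compute body and the \TProveN predicate inside $\fullComp{e_{cnp}}$; your closing discussion of how the atomic \ruleref{E-ComputeAndProveTrue} step is reconciled with the re-evaluation of $\denP{e}$ under \ruleref{E-Prove} is also exactly how the paper resolves it.

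The genuine gap is nesting. The theorem quantifies over \emph{any} well-typed compute-and-prove block, whose body may itself contain nested compute-and-prove blocks, but every result you invoke---\Cref{thm:fullbisimcomplete}, \Cref{thm:fullbisimsoundness}, \Cref{thm:bubblesort-trace}, and the computed-input translation bisimulations---is proved in the paper only for expressions drawn from the fragment \emph{without} nested blocks (the paper states this restriction explicitly before the bisimulation proofs). Applying them directly to the body $e$, as you do, is therefore unjustified whenever $e$ nests a block. The paper closes this by inducting on the maximum nesting depth: the innermost block (nest-free by definition) is shown adequate to its lifted projection $\Lift{\Compute}{\fullComp{e_{in}}}$, with \Cref{lemma:lifting-bisim} relating the lifted \corelang code to $\cpstep$ steps; then \Cref{lemma:subst-adequacy} justifies replacing that inner block by its compilation inside the surrounding expression, which strictly decreases nesting depth so the inductive hypothesis applies. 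Your proposal only gestures at this with a parenthetical mention of \Cref{lemma:lifting-bisim}, and it assigns \Cref{lemma:subst-adequacy} a different job (commuting the $\overline{y}$- and $\overline{x}$-substitutions), which is not what that lemma provides---it replaces a subexpression by a behaviorally equivalent one, and it is precisely the tool needed for the missing induction step. As a minor point, the terminal body value carries an annotation $\annell$ with $\Prove \subsetl \annell$, so its compute projection is some value $v$ (possibly $\TTrue$), not necessarily $\TUnitVal$ as you assert.
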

\begin{proof}
    Because compute-and-prove blocks can nest, we prove the theorem via induction on the maximum nesting depth of compute-and-prove blocks in $e$.
    The base case corresponds to the case outline in \Cref{sec:adequacy}, and follows the general strategy in the following diagram:
    \begin{center}
        \begin{tikzpicture}[
            node/.style={outer sep=2pt,draw,black},
            ]
            \node[node] (SurfSequential) {$\SExp \cpstep^* \SVal$\strut};
            \node[node,right=12em of SurfSequential] (Sequential) {$\den{\SExp} \tstep^* \den{\SVal}$\strut};
            \coordinate (midpoint) at ($(SurfSequential.north)!.5!(Sequential.north)$);
            \node[node,above=2.5em of midpoint] (Concurrent) {$\den{\SExp} \pstep^* \den{\SVal}$\strut};

            \draw[implies-implies,double equal sign distance] (SurfSequential) to[out=90,in=180] node[midway,left = 5pt]{1: \Cref{thm:fullbisimcomplete,thm:fullbisimsoundness}} (Concurrent);
            \draw[implies-implies,double equal sign distance] (Concurrent) to[out=0,in=90] node[midway,right = 5pt]{2: \Cref{thm:bubblesort}} (Sequential);
            \draw[implies-implies,double equal sign distance,dashed] (SurfSequential) -- (Sequential) node[midway,below = 4pt]{3: Base Case for Adequacy};
        \end{tikzpicture}
    \end{center}
    The inductive cases will then prove that any expression with $n+1$ nestings can be adequately translated to an expression with $n$ nestings.

    The following proofs prove the relation between the annotated semantics and the target semantics.
    The proof is complete by applying \Cref{lemma:annotated-bisim} and \Cref{lemma:annotated-proj} to then transitively get adequacy between the original surface and target semantics.

    Base Case: suppose $k=0$ meaning there is no nested blocks anywhere within execution.
    The proof is split into two parts for completeness $(\Rightarrow)$ and soundness $(\Leftarrow)$.
    
    \newlist{bidir}{itemize}{1}
    \begin{bidir}
        \item[($\Rightarrow$)]
            For this direction, suppose that $\langle {e_{cnp}} \divi \sigma \rangle \tstep^*_{\trace_1} \langle {\TProofOfUsing*} \divi \sigma''' \rangle$.
            This can only happen (by determinism \Cref{thm:target-determinism} and \Cref{thm:surface-determinism}) by the following sequence of steps, where $e' = e[\overline{y_p} \mapsto \overline{v_p}, \overline{y_s} \mapsto \overline{v_s}]$:
            \begin{align*}
                \langle e_{cnp} \divi \sigma \rangle & \tstep_{\topalloc(\overline{\varphi}, \bot)} \langle \TCnpAdmin*{e'}{\varnothing}{\varnothing} \divi \sigma[\overline{\varphi} \mapsto \bot] \rangle \\
                & \tstep^*_{\witdenR{\strace}} \langle \TCnpAdmin*{\STrue}{\sigmaP}{\rho} \divi \sigma'' \rangle \\
                & \tstep_{\topgen(\alpha, \overline{v}, \overline{u})} \langle \TProofOfUsing* \divi \sigma''[\overline{\varphi} \mapsto \rho(\dom(\varphi))] \rangle
            \end{align*}
            The first step is via \ruleref{E-ComputeAndProveInit}, then \ruleref{E-ComputeAndProveStep} for the middle steps, and the last one is via \ruleref{E-ComputeAndProveTrue}.

            Recall the definition of compiling a combined block:
            \[\begin{array}{rl}
                & \fullComp{\TCnp*[e]} = \\
                &\hspace{2em} \TLetIn{\overline{w_p}, \overline{w_s} \ty \Reft \overline{\TTau_{x_p}}, \Reft \overline{\TTau_{x_s}}}{\TAlloc{\overline{\TTau_{x_p}}, \overline{\TTau_{x_s}}}}{\big( 
                    \subst*{\denC{\subst*{e}{{\overline{y_p}}{\overline{v_p}}{\overline{y_s}}{\overline{v_s}}}}}{{\overline{x_p}}{\overline{w_p}}{\overline{x_s}}{\overline{w_s}}} \Tseq
                } \\
                &\hspace{4em} \TProveUsing{\alpha}{\exists \overline{y_p}, \overline{x_p} [\overline{y_s}, \overline{x_s}]. \denP{e}}{\overline{v_p}, \TDeref{\overline{w_p}}[\overline{v_s}, \TDeref{\overline{w_s}}]} \big)
            \end{array}\]
            Notice the \ruleref{E-ComputeAndProveInit} is simulated by evaluating the $\TLetN$ statement that allocates a set of location $\overline{\varphi}$ and keeps them uninitialized to $\bot$.
            
            The main portion of proof is with the \ruleref{E-ComputeAndProveStep} steps in the middle, which follows the diagram from \Cref{sec:adequacy}.
            Specifically, these steps evaluate $\cpconf{e'}{\sigma'}{\varnothing}{\varnothing} \cpstep^* \cpconf{\TTrue}{\sigma''}{\sigmaP}{\rho}$ where $\sigma' = \sigma[\overline{\varphi} \mapsto \bot]$.
            From these steps, completeness between the annotated semantics and concurrent semantics (\Cref{thm:fullbisimcomplete}) gives that $\pconf{\denL[\Compute]{e'}}{\denL[\Prove]{e'}}{\sigma'}{\varnothing}{\varnothing} \pstep^* \pconf{e_\Compute'}{e_\Prove'}{\sigma''}{\sigmaP}{\rho}$ where $\denL[\Compute]{\STrue_\annell} \lessthan e_\Compute'$ and $\denL[\Prove]{\STrue_\annell} \lessthan e_\Prove'$.
            Note that $\Prove \subsetl \annell$ by the combined block being well-typed and \Cref{thm:abs-soundness}, meaning $e_\Prove' = \TTrue$ and $e_\Compute' = v$ for some value $v$.
            Then, by the reordering theorem (\Cref{thm:bubblesort-trace}), we get $\pconf{\denL[\Compute]{e'}}{\denL[\Prove]{e'}}{\sigma'}{\varnothing}{\varnothing} \pstep^* \pconf{v}{\denL[\Prove]{e'}}{\sigma''}{\varnothing}{\rho} \pstep^* \pconf{v}{\TTrue}{\sigma''}{\sigmaP}{\rho}$.

            We can now separate reasoning based on the first set of steps (compute-side) and the second set of steps (prove-side).
            Using \Cref{thm:witness-compute-bisimulation} along with the compute-side steps proves that $\langle \witdenC{\denL[\Compute]{e'}} \divi \sigma' \rangle \tstep^* \langle \witdenC{\denL[\Compute]{v}} \divi \sigma'' \cup \{\overline{\varphi} \mapsto \rho(\dom(\varphi))\} \rangle$.
            Then, by \Cref{lemma:projections-match}, we get that $\langle \denC{e'[\dom(\varphi) \mapsto \overline{\varphi}]} \divi \sigma' \rangle \tstep^* \langle v \divi \sigma'' \cup \{\overline{\varphi} \mapsto \rho(\dom(\varphi))\} \rangle$.
            Note that this corresponds to the expression $\subst*{\denC{\subst*{e}{{\overline{y_p}}{\overline{v_p}}{\overline{y_s}}{\overline{v_s}}}}}{{\overline{x_p}}{\overline{w_p}}{\overline{x_s}}{\overline{w_s}}}$ in the full compilation, as $e' = \subst*{e}{{\overline{y_p}}{\overline{v_p}}{\overline{y_s}}{\overline{v_s}}}$ and $\overline{w_p}$ and $\overline{w_s}$ are exactly the raw locations from $\varphi$, after evaluating the $\TLetN$ statement and performing the substitution.

            Similarly, using \Cref{thm:witness-prove-bisimulation} along with the prove-side steps gives $\langle \witdenP{\denL[\Prove]{e'}}[\dom(\varphi) \mapsto \overline{\varphi}] \divi \varnothing \rangle \tstep^* \langle \witdenP{\denL[\Prove]{\TTrue}} \divi \sigmaP \rangle$.
            Then, by \Cref{lemma:projections-match}, we get that $\langle \denP{e'}[\dom(\varphi) \mapsto \TDeref{\overline{\varphi}}] \divi \varnothing \rangle \tstep^* \langle \TTrue \divi \sigmaP \rangle$.

            After executing the $\TLetN$ statements and the compute-side steps, we first get $\langle \fullComp{{e_{cnp}}} \divi \sigma \rangle \tstep^* \langle \TProveUsing{\alpha = \exists \overline{y_p},\overline{x_p}[\overline{y_s}, \overline{x_s}]}{\denP{e}}{\overline{v_p}, \TDeref{\overline{\iota_{xp}}}[\overline{v_s}, \TDeref{\overline{\iota_{xp}}}]} \divi \sigma''' \rangle$, where $\sigma'' \cup \{\overline{\iota_{xp}} \mapsto \rho(\overline{x_p}), \overline{\iota_{xs}} \mapsto \rho(\overline{x_s})\}$ and $\varphi = \{\overline{x_p} \mapsto \overline{\iota_{xp}}, \overline{x_s} \mapsto \overline{\iota_{xs}}\}$.
            Note that in these middle steps, the traces through this line of reasoning are $\traceq$ to $\witdenR{\strace}$ due to the invoked theorems and lemmas.
            Finally, the prove-side steps mean we can apply \ruleref{T-Prove} and finish with $\langle \TProofOfUsing* \divi \sigma''' \rangle$, where $\overline{v} = \overline{v_p}, \sigma(\overline{\iota_{xp}})$, which matches with the one from executing the surface semantics.
            Also note that at this step the stores match.
            To finish the proof, we observe the traces match: first $\topalloc(\overline{\varphi}, \bot)$, followed by a trace equivalent to ${\witdenR{\strace}}$, and finally $\topgen(\alpha, \overline{v}, \overline{u})$ where the public and private inputs $\overline{v}, \overline{u}$ are equal in the trace just like in the produced proof.
        \item[($\Leftarrow$)]
            For this direction, suppose that $\langle \fullComp{{e_{cnp}}} \divi \sigma \rangle \tstep^* \langle \TProofOfUsing* \divi \sigma''' \rangle$.
            The proof for this direction follows very similarly, and applying soundness instead of completeness.
            This can only happen (by determinism \Cref{thm:target-determinism} and \Cref{thm:surface-determinism}) by the following sequence of steps, where $e' = e[\overline{y_p} \mapsto \overline{v_p}, \overline{y_s} \mapsto \overline{v_s}]$ and $e'' = e'[\overline{x_p} \mapsto \overline{\iota_{xp}}, \overline{x_s} \mapsto \overline{\iota_{xs}}]$ and $\varphi = \{\overline{x_p} \mapsto \overline{\iota_{xp}}, \overline{x_s} \mapsto \overline{\iota_{xs}}\}$:
            \begin{align*}
                \langle \fullComp{{e_{cnp}}} \divi \sigma \rangle & \tstep^*_{\topalloc(\overline{\varphi}, \bot)} \langle \denC{e''} ; \TProveUsing{\alpha=\exists \overline{y_p}, \overline{x_p}[\overline{y_s}, \overline{x_s}]}{\denP{e}}{\overline{v_p}, \TDeref{\overline{\iota_{xp}}}[\overline{v_s}, \TDeref{\overline{\iota_{xs}}}]} \divi \sigma' \rangle \\
                & \tstep^*_{\trace} \langle \TProveUsing{\alpha=\exists \overline{y_p}, \overline{x_p}[\overline{y_s}, \overline{x_s}]}{\denP{e}}{\overline{v_p}, \TDeref{\overline{\iota_{xp}}}[\overline{v_s}, \TDeref{\overline{\iota_{xs}}}]} \divi \sigma''' \rangle \\
                & \tstep_{\topgen(\alpha, \overline{v}, \overline{u})} \langle \TProofOfUsing{\alpha}{\overline{v_p}, \sigma'''(\overline{\iota_{xs}})} \divi \sigma''' \rangle
            \end{align*}
            Like with the forward direction, the \ruleref{E-ComputeAndProveInit} rule simulates that first set of steps involving allocating uninitialized references.
            Then combining \Cref{lemma:projections-match} with \Cref{thm:witness-compute-bisimulation} gives for any $e_\Prove$ and $\sigma$ that $\pconf{\denL[\Compute]{e'}}{e_\Prove}{\sigma'}{\sigma}{\varnothing} \pstep^* \pconf{\denL[\Compute]{v}}{e_\Prove}{\sigma''}{\sigma}{\rho}$ where $\rho = \dom(\varphi) \mapsto \overline{\varphi}$.
            This $\sigma''$ looks a lot like $\sigma'''$ except the recently allocated locations for computed inputs are still $\bot$.
            Similarly, combining \Cref{lemma:projections-match} with \Cref{thm:witness-prove-bisimulation} gives for any $e_\Compute, \sigma, \rho$ that $\pconf{e_\Compute}{\denL[\Prove]{e'}}{\sigma}{\varnothing}{\rho} \pstep^* \pconf{e_\Compute}{\denL[\Prove]{\TTrue}}{\sigma}{\sigmaP}{\rho}$.
            Combining these two sequentially by setting the univerally quantified expressions and stores to agree gives $\pconf{\denL[\Compute]{e'}}{\denL[\Prove]{e'}}{\sigma'}{\varnothing}{\varnothing} \pstep^* \pconf{\denL[\Compute]{v}}{\denL[\Prove]{\TTrue}}{\sigma''}{\sigmaP}{\rho}$.
            Then by \Cref{thm:fullbisimsoundness} we get that $\cpconf{e}{\sigma'}{\varnothing}{\varnothing} \cpstep^* \cpconf{e''}{\sigma'''}{\sigmaP}{\rho}$ where $\denL[\Prove]{\TTrue} \lessthan e''$, meaning $e'' = \TTrue_\annell$ where $\Prove \subsetl \annell$.
            This then means the inner expression of the $\TCnpN$ term evaluates to $\TTrue$, which means it can take a final step via \ruleref{E-ComputeAndProveTrue} to reach a $\TProofOfUsing{\alpha}{\overline{v_p}, \sigma''(\overline{\iota_{xs}})}$ value.
            Note also the store changes from $\sigma''$ to $\sigma'''$ by setting the computed inputs to their computed value.
            Finally, like in the forward directions, the traces do match up: first $\topalloc(\overline{\varphi}, \bot)$, followed by a trace ${\witdenR{\strace}}$ equivalent to $\trace$, and finally $\topgen(\alpha, \overline{v}, \overline{u})$ where the public and private inputs $\overline{v}, \overline{u}$ are equal in the trace just like in the produced proof.
    \end{bidir}

    Inductive Case: suppose the expression $e$ has $n$ nested compute-and-prove blocks.
    The inner-most one, by definition, does not have any nested compute-and-prove blocks.
    Proving this inner-most compute-and-prove is adequate to its projection $\Lift{\Compute}{\fullComp{e_{in}}}$ follows very similar logic to proving the base case: follow the three sets of compute-and-prove semantic rules to evaluate the combined block and link to the full projection (through the parallel semantics) accordingly.
    One key difference is that $e_{cnp}$ is in \zkstrudul, so the simulation uses those combined rules (e.g. \ruleref{EC-ComputeAndProveInit}) as opposed to the ones in full \zkstrudul.
    Additionally, because $\denL{\cdot}$ returns a \corelang program, we also lift $\Lift{\Compute}{\cdot}$ to the compute side to turn it into a \zkstrudul expression, which by \Cref{lemma:lifting-bisim} transitively gives us the result we want.

    Then, by \Cref{lemma:subst-adequacy}, we know that substituting that nested compute-and-prove block with its lifted projection but otherwise running the expression down to a proof is adequate to just executing the inner combined block via its semantics in \zkstrudul.
    However, this resulting expression after substitution removes one layer of nesting, meaning we can apply the inductive hypothesis to get our desired result.
\end{proof}

\begin{theorem}[Adequacy with Traces]\label{thm:adequacy-with-traces}
    For any expression $e$ that may contain a combined block, $\langle e \divi \sigma \rangle \tstep_{\trace_1}^* \langle v \divi \sigma' \rangle$ if and only if~$\langle \fullComp{e} \divi \sigma \rangle \tstep_{\trace_2}^* \langle \fullComp{v} \divi \sigma' \rangle$ and $\trace_1 \traceq \trace_2$.
\end{theorem}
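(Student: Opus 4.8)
The plan is to exploit the fact that the full compilation $\fullComp{\cdot}$ is a syntactic congruence that is the identity on every construct of \corelang and performs real work only on compute-and-prove blocks, where \Cref{thm:adequacy-of-combined} already establishes the exact correspondence (up to $\traceq$) between a block's in-order evaluation to a proof and the evaluation of its compiled let-expression. So the theorem reduces to two ingredients: a step-for-step simulation of the ordinary (non-block) semantics under $\fullComp{\cdot}$, and the macro-step correspondence for block firings supplied by \Cref{thm:adequacy-of-combined}. Throughout I assume $e$ is well-typed; by preservation (\Cref{thm:extended-preservation}) every intermediate configuration---and hence every block that reaches head position---remains well-typed, so \Cref{thm:adequacy-of-combined} applies to each such block.

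First I would establish the congruence lemmas that make $\fullComp{\cdot}$ commute with the ordinary semantics: $\fullComp{v} = v$ for values (values contain only values, never blocks), $\fullComp{E[e']} = \fullComp{E}[\fullComp{e'}]$ for the let-evaluation contexts, and $\fullComp{e[x \mapsto v]} = \fullComp{e}[x \mapsto v]$ for value substitution (needed at let-firing and method-call steps, where the substituted body may contain a block). Together these give that whenever $e$'s head redex is an ordinary \corelang construct, $\fullComp{e}$ steps by the identical rule to $\fullComp{e'}$, emitting the same trace event. I would then prove the forward direction by induction on the length of $\langle e \divi \sigma \rangle \tstep^* \langle v \divi \sigma' \rangle$, splitting on the head redex: ordinary redexes are handled by the congruence lemmas and the inductive hypothesis with identical traces; when a surface block reaches head position (the only position from which it can fire, with the administrative term remaining in head position by \Cref{lemma:cnp-always-head}), the block evaluates under its let-context to a proof $p$, and \Cref{thm:adequacy-of-combined} matches this prefix to $\langle \fullComp{e} \divi \sigma \rangle \tstep^* \langle p \divi \sigma_a \rangle$ up to $\traceq$; I then recurse on the residual reduction from $\langle E[p] \divi \sigma_a \rangle$. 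The trace $\trace_1$ is assembled from identical ordinary segments and block segments that are $\traceq$ to their compiled counterparts, so $\trace_1 \traceq \trace_2$ follows from $\traceq$ being a congruence for concatenation.

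The backward direction is symmetric and driven by determinism of $\tstep$ (\Cref{thm:target-determinism}). Inducting on the length of $\langle \fullComp{e} \divi \sigma \rangle \tstep^* \langle \fullComp{v} \divi \sigma' \rangle$, I use the structural correspondence of head positions: if $e$'s head is ordinary, $\fullComp{e}$'s head is the same compiled rule and the source mirrors it; if $e$'s head is a surface block $e_{cnp}$, then $\fullComp{e}$'s head is the compiled let-expression $\fullComp{e_{cnp}}$, whose evaluation to a proof forms a uniquely-determined prefix of the deterministic compiled reduction (the enclosing let cannot fire until its binding is a value), and the backward half of \Cref{thm:adequacy-of-combined} supplies the matching source evaluation $\langle e_{cnp} \divi \sigma \rangle \tstep^* \langle p \divi \sigma_a \rangle$. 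Combining both directions yields the biconditional with $\trace_1 \traceq \trace_2$, and the untraced \Cref{thm:adequacy} follows immediately by erasing trace annotations.

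I expect the main obstacle to be the bookkeeping that glues block boundaries to the surrounding context without breaking the invariant that the compiled configuration equals $\fullComp{\cdot}$ of the source configuration. Concretely, the substitution-commutation lemma for $\fullComp{\cdot}$ must hold even when a block appears inside the substituted body, which requires that the type-directed projections $\denC{\cdot}$ and $\denP{\cdot}$ commute with substitution of the block's free (non-bound) variables; this parallels the projection-substitution lemmas used in the bisimulation proofs. The second delicate point is the backward direction's claim that the compiled block's evaluation is a clean, identifiable prefix of the compiled reduction: this leans essentially on determinism and on the let-context never being able to advance past an unevaluated block, so I would isolate it as an explicit decomposition lemma before running the induction.
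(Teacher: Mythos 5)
Your proposal identifies the same crux as the paper: everything reduces to \Cref{thm:adequacy-of-combined}, because $\fullComp{\cdot}$ is the identity on every other construct. The organization differs, though. The paper's proof is a structural induction on $e$: most cases are trivial since $\fullComp{e} = e$; $\LetN$ and $\IfN$ follow from the inductive hypothesis on subterms; and the gluing of blocks into surrounding contexts---including blocks inside method bodies, where your substitution worry is most acute---is done by a purely \emph{semantic} replacement lemma, \Cref{lemma:subst-adequacy}: if $e_c$ and $\hat{e_c}$ evaluate to the same values from the same stores, then replacing $e_c$ by $\hat{e_c}$ anywhere inside $e$ preserves evaluation. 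Instantiating $e_c$ with a combined block and $\hat{e_c}$ with its compilation (equivalent by \Cref{thm:adequacy-of-combined}) closes every case without any syntactic commutation claims. You instead induct on the length of the reduction sequence and maintain the invariant that the target configuration is \emph{literally} $\fullComp{\cdot}$ of the source configuration, which forces the lemma $\fullComp{e[x \mapsto v]} = \fullComp{e}[x \mapsto v]$.

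That lemma is where your proof is fragile, and you half-sense it. $\fullComp{\cdot}$ on a block is defined through the type-directed projections $\denC{\cdot}$ and $\denP{\cdot}$, which are functions of a typing \emph{derivation}, not of raw syntax; substitution destroys label information. This is exactly the phenomenon that forced the paper to introduce annotated programs for the bisimulation proofs (the example $\SLetIn{x\ty\tplab{\Bool}{\Prove}}{\STrue}{(\SIfThenElse{x}{2}{2})}$ projects differently before and after substitution), and the projection-substitution results you appeal to (e.g.\ \Cref{lemma:lessthan-substitution1}) are proved for \emph{annotated} syntax where labels are frozen---they do not transfer to $\fullComp{\cdot}$ applied to raw, re-typed terms, so the commutation may simply fail as a syntactic equality. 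The fix is to weaken your invariant from syntactic equality to adequacy-preserving equivalence, i.e.\ to prove and use precisely \Cref{lemma:subst-adequacy}; with that substitution, your head-redex analysis for the forward direction and your determinism-plus-prefix-decomposition argument (\Cref{thm:target-determinism}) for the backward direction both go through, and the trace bookkeeping via congruence of $\traceq$ under concatenation is fine. In short: right decomposition, workable structure, but the syntactic commutation invariant must be replaced by the paper's semantic replacement lemma or the induction will not close.
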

\begin{proof}
    By induction on $e$.
    For most terms, $\fullComp{e} = e$, and therefore are trivial.
    For $\LetN$ and $\IfN$, we apply the $\fullComp{\cdot}$ definition and apply the inductive hypothesis to the subterms to get the result.
    For method calls, \Cref{lemma:subst-adequacy} and \Cref{thm:adequacy-of-combined} allows us to replace all combined blocks with their compilation and still preserve adequacy.
    Finally, \Cref{thm:adequacy-of-combined} gives us the non-trivial adequacy result for combined blocks.
\end{proof}

\adequacy*
\begin{proof}
    Corollary of \Cref{thm:adequacy-with-traces} without reasoning about traces.
\end{proof}

\begin{theorem}[Traceful Adequacy of Whole Programs]\label{thm:whole-adequacy-with-traces}
    Suppose $\SFullProg$ is a whole program containing a list of class definitions and an expression $e$ to execute.
    Then if $e'$ is the expression to execute from $\fullComp{\SFullProg}$, the following is true: $\langle e \divi \sigma \rangle \tstep[\trace_1]^* \langle v \divi \sigma' \rangle$ if and only if~$\langle e' \divi \sigma \rangle \tstep[\trace_2]^* \langle v \divi \sigma' \rangle$ and $\trace_1 \traceq \trace_2$.
\end{theorem}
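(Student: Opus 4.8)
The plan is to reduce this statement almost entirely to \Cref{thm:adequacy-with-traces} (Traceful Adequacy), whose conclusion already relates an expression to its compilation. The only structural difference is that the whole-program statement compiles the ambient class table as well: by the full-compilation definition, $\fullComp{\cdot}$ is the identity on the class header, fields, and constructor of every class and applies $\fullComp{\cdot}$ only to each method body and to the main expression. Hence $e' = \fullComp{e}$, and the source run of $e$ executes against the source class table $CT$ while the target run of $e'$ executes against the compiled class table $\fullComp{CT}$. Since $\fullComp{v} = v$ on values, the target value reached by \Cref{thm:adequacy-with-traces} (namely $\fullComp{v}$) coincides with the $v$ demanded here, so no reconciliation is needed at the value level.

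First I would isolate the genuine gap: the mismatch between $CT$ and $\fullComp{CT}$ in the method-call rule. I would show that method-body lookup commutes with compilation, i.e.\ that $\mbody(m,C)$ computed against $\fullComp{CT}$ is exactly the compilation of $\mbody(m,C)$ computed against $CT$. This is immediate from the compilation definition, because compilation leaves the class/method structure fixed and only rewrites bodies. Consequently a target method call substitutes $\fullComp{e_m}$ where the corresponding source call substitutes $e_m$. To transport adequacy across this swap, I would invoke \Cref{thm:adequacy-of-combined}, which gives traceful adequacy for each individual combined block, together with \Cref{lemma:subst-adequacy}, which lets one replace a sub-expression by an adequacy-equivalent one without disturbing the adequacy of the enclosing expression. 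Since $\fullComp{e_m}$ is precisely $e_m$ with every combined block replaced by its compilation (and the identity elsewhere), substituting $\fullComp{e_m}$ is adequacy- and $\traceq$-equivalent to substituting $e_m$.

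With these pieces in hand I would assemble the argument in two hops. Fixing the ambient class table to be $\fullComp{CT}$, \Cref{thm:adequacy-with-traces} yields that $e$ executed under $\fullComp{CT}$ is traceful-adequate to $\fullComp{e} = e'$ executed under $\fullComp{CT}$. It then remains to close the single residual equivalence between $e$ run under $CT$ and $e$ run under $\fullComp{CT}$; this is exactly the commutation argument above, carried out by induction on the length of the reduction (rather than structurally on $e$) so that each method-call step that unfolds a body can have its combined blocks replaced in lock-step via \Cref{lemma:subst-adequacy}. Chaining the two equivalences and composing their trace witnesses using transitivity of $\traceq$ (from the trace-equivalence definition) gives the desired $\trace_1 \traceq \trace_2$.

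The hard part will be the class-table swap in the presence of recursion and nesting. Because method calls — including the recursive calls that support recursive proofs — can re-introduce arbitrarily deep combined blocks \emph{during} execution, one cannot pre-compile everything by a naive structural induction on the static term $e$; instead the bookkeeping must track, at each unfolding step, that the body produced by $\fullComp{CT}$ is exactly the compilation of the body produced by $CT$, and must confirm that the single-assignment and store-synchronization invariants relied upon by \Cref{thm:adequacy-of-combined} survive substitution of a \emph{compiled} body. I expect the cleanest realization to be a step-indexed transport that moves each $\tstep$/$\cpstep$ step across the change of class table one method call at a time; the trace reasoning rides along for free, since every transformation used — body compilation, combined-block compilation, and $\lessthan$-collapsing — alters traces only by inserting or removing $\Nulltrace$ events, which is precisely what $\traceq$ absorbs.
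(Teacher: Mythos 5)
Your proposal is correct and takes essentially the same route as the paper: the paper's (one-sentence) proof likewise reduces the statement to \Cref{thm:adequacy-with-traces}, observing that method bodies in the class table are the only place outside $e$ where combined blocks can occur and that method calls are themselves adequate, with the reconciliation of source versus compiled bodies resting on exactly the machinery you cite (\Cref{lemma:subst-adequacy} and \Cref{thm:adequacy-of-combined}). Your explicit two-hop decomposition, the $\mbody$-commutation observation, and the step-indexed transport across the class-table swap simply spell out the bookkeeping that the paper's terse argument leaves implicit.
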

\begin{proof}
    Follows directly from \Cref{thm:adequacy-with-traces}, noting that method calls are themselves adequate and are the only expressions outside of $e$ that could contain compute-and-prove blocks.
\end{proof}

\begin{lemma}\label{lemma:compilation-over-context}
    For any source context $\SContextN$ in the subset \corelang of \zkstrudul, and for any partial program $\PartialP$ in \zkstrudul, $\fullComp{\PartialP} \link \SContextN$ is syntactically equivalent to $\fullComp{\PartialP \link \SContextN}$.
\end{lemma}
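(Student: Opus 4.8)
The plan is to prove the identity by unfolding both sides into their two components---the combined class list and the plugged-in top-level expression---and showing each component agrees syntactically. Write $\SContextN$ as a \corelang class list $CL_C$ together with a holed expression $E_C[\cdot]$, and $\PartialP$ as a \zkstrudul class list $CL_P$ together with an expression $e_P$, so that linking combines the class lists and substitutes the partial program's expression into the context's hole. Then $\fullComp{\PartialP} \link \SContextN$ has class list $\fullComp{CL_P} \cup CL_C$ and top-level expression $E_C[\fullComp{e_P}]$, whereas $\fullComp{\PartialP \link \SContextN}$ has class list $\fullComp{CL_P \cup CL_C}$ and top-level expression $\fullComp{E_C[e_P]}$. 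It therefore suffices to establish two facts: that $\fullComp{\cdot}$ distributes over the union of class lists, and that it commutes with filling a \corelang context's hole.

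First I would record that $\fullComp{\cdot}$ is defined componentwise on whole programs and class lists---it rewrites each class by rewriting each of its method bodies---so $\fullComp{CL_P \cup CL_C} = \fullComp{CL_P} \cup \fullComp{CL_C}$ is immediate from its definition. Because $\SContextN$ lies in the \corelang fragment, none of the method bodies in $CL_C$ contain a compute-and-prove block, and inspecting the compilation clauses shows $\fullComp{\cdot}$ is the identity on every \corelang form (it either returns its argument unchanged or recurses structurally into \corelang subterms). Hence $\fullComp{CL_C} = CL_C$, reconciling the class-list components of the two sides.

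The heart of the argument is the expression component: I would prove $\fullComp{E_C[e]} = E_C[\fullComp{e}]$ for the fixed \corelang context $E_C[\cdot]$ and arbitrary $e$, by structural induction on $E_C[\cdot]$. The base case is the hole itself, where both sides equal $\fullComp{e}$. In each inductive case the outermost node is some \corelang construct (a \TLetN, an \TIfN, a method call, and so on); since $\fullComp{\cdot}$ on that construct simply reconstructs the node after recursively compiling its subterms, the induction hypothesis applied to whichever subterm contains the hole yields the result. Taking $e = e_P$ and combining with the class-list reconciliation gives the desired syntactic equality.

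The main obstacle---and the reason the hypothesis that $\SContextN$ is \corelang is essential---is precisely this commutation step. Compilation is compositional on every syntactic form except the compute-and-prove block, whose clause does not merely recurse but restructures the body by projecting it into separate \Compute and \Prove procedures via $\denC{\cdot}$ and $\denP{\cdot}$, introducing fresh reference bindings and substitutions. If the hole could sit inside such a block, hole-filling and compilation would not commute: compiling the surrounding block would split and duplicate the plugged-in code rather than leave a single copy. Restricting $\SContextN$ to the \corelang fragment guarantees the hole never occurs under a compute-and-prove block, so the one non-compositional clause is never traversed on the path to the hole, and the induction goes through; I would make this invariant explicit at the start of the induction. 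Finally, because the claim is purely syntactic, the fresh names introduced when compiling the blocks inside $e_P$ are determined by those blocks alone and are unaffected by the surrounding context, so no alpha-renaming subtleties arise.
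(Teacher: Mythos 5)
Your proposal is correct and follows essentially the same route as the paper's own (much terser) proof: induction on the structure of the linked program, with the key observation that $\fullComp{\cdot}$ acts as the identity on the \corelang context since compute-and-prove blocks cannot occur there, and is structurally compositional on every form lying on the path to the hole. Your elaboration of the class-list component and the explicit commutation-with-hole-filling induction are just spelled-out versions of what the paper compresses into one sentence.
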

\begin{proof}
    By induction on the full program, noting that $\fullComp{\SContextN} = \SContextN$ since compute-and-prove blocks do not appear in \corelang.
\end{proof}

\rrhp*
\begin{proof}
    Since the context $\TContextN$ comes from \corelang, and \corelang is a subset of full \zkstrudul, we can simply define the backtranslation to $\SContextN$ as the identity function: $\SContextN = \TContextN$.

    The bidirectional comes from compilation being well-typed and preserves types of all target expressions and, for the sake of reasoning about behaviors, proves the linked whole programs can be typed checked--and so compiling a partial program is meaningful--or both behaviors are undefined.
    We define the whole program as $\SFullProg = \PartialP \link \SContextN$ and using \Cref{lemma:compilation-over-context}, $\fullComp{\SFullProg} = \fullComp{\PartialP \link \SContextN} = \fullComp{\PartialP} \link \TContextN$.
    With this, by the forward direction of \Cref{thm:whole-adequacy-with-traces}, we know $\SBehav{\SFullProg} \subseteq \TBehav{\fullComp{\SFullProg}}$, and from the backwards direction we know $\SBehav{\SFullProg} \supseteq \TBehav{\fullComp{\SFullProg}}$, meaning we can prove they are equal.
    Note that the two sets $\SBehav{\SFullProg}$ and $\TBehav{\fullComp{\SFullProg}}$ are considered equal if for every corresponding pair of traces are equivalent up to stuttering, i.e. according to $\traceq$.
\end{proof}

\end{document}